\newtheorem*{theorem*}{Theorem}
\newtheorem{theorem}{Theorem}[section]
\newtheorem{lem}[theorem]{Lemma}
\newtheorem{lemma}[theorem]{Lemma}
\newtheorem{propos}[theorem]{Proposition}
\newtheorem{defin}[theorem]{Definition}
\newtheorem{cor}[theorem]{Corollary}
\DeclareMathOperator{\hm}{Hom}
\DeclareMathOperator{\LC}{LC}
\DeclareMathOperator{\cp}{CP}
\DeclareMathOperator{\LL}{LC}
\DeclareMathOperator{\col}{Col}
\newcommand{\ind}{1{\hskip -2.5 pt}\hbox{I}}
\newcommand{\bicon}{co-connected}
\newcommand{\Bicon}{Co-connected}
\DeclareMathOperator{\mt}{Mod_3}
\newcommand{\dist}{\mathrm{dist}}
\def\Z{\mathbb Z}
\def\ZZ{\mathbb Z^d}
\def\RR{\mathbb R^d}
\def\N{\mathbb N}
\def\C{\mathcal C}
\def\W{\mathcal W}
\def\V{\mathcal V}
\def\U{\mathcal U}
\def\T{\mathcal T}
\def\TT{\mathbb T_n^d}
\def\L{\mathcal L}
\DeclareMathOperator\Ty{Type}
\DeclareMathOperator\QP{QP}
\def\E{\mathbb E}
\renewcommand\P{\mathbb P}
\def\zero{\mathbf{0}}
\def\PC{\text{PC}}
\def\intb{\partial_\bullet}
\def\extb{\partial_\circ}
\newcommand\heading[1]{\vspace{5pt}\noindent{\bf #1.}}
\newcommand{\ronthanks}{School of Mathematics, Raymond and Beverly Sackler Faculty of Exact
Sciences, Tel Aviv University, Tel Aviv, Israel. E-mail: {\tt peledron@post.tau.ac.il}.
Research Supported by an ISF grant and an IRG grant.}
\newcommand{\ohadthanks}{School of Mathematics, Raymond and Beverly Sackler Faculty of Exact
Sciences, Tel Aviv University, Tel Aviv, Israel. E-mail:
{\tt ohad\_f@netvision.net.il}. Research supported by an ERC advanced grant.}
\title{Rigidity of 3-colorings of the discrete torus}
\author{Ohad N. Feldheim\thanks{\ohadthanks} \and Ron Peled\thanks{\ronthanks}}
\begin{document}
\maketitle
\begin{abstract}
We prove that a uniformly chosen proper $3$-coloring of the
$d$-dimensional discrete torus has a very rigid structure when the
dimension $d$ is sufficiently high. We show that with high
probability the coloring takes just one color on almost all of
either the even or the odd sub-torus. In particular, one color
appears on nearly half of the torus sites. This model is the zero
temperature case of the $3$-state anti-ferromagnetic Potts model
from statistical physics.

Our work extends previously obtained results for the discrete torus
with specific boundary conditions. The main challenge in this
extension is to overcome certain topological obstructions which
appear when no boundary conditions are imposed on the model.
Locally, a proper $3$-coloring defines the discrete gradient of an
integer-valued height function which changes by exactly one between
adjacent sites. However, these locally-defined functions do not
always yield a height function on the entire torus, as the gradients
may accumulate to a non-zero quantity when winding around the torus.
Our main result is that in high dimensions, a global height function
is well defined with high probability, allowing to deduce the rigid
structure of the coloring from previously known results. Moreover,
the probability that the gradients accumulate to a vector $m$,
corresponding to the winding in each of the $d$ directions, is at
most exponentially small in the product of $\|m\|_\infty$ and the
area of a cross-section of the torus.

In the course of the proof we develop discrete analogues of notions
from algebraic topology. This theory is developed in some generality
and may be of use in the study of other models.

\let\thefootnote\relax\footnotetext{\emph{MSC2010 Subject classification.} 82B20, 82B26, 82B41, 60C05, 60D05, 60K35, 05A16.}
\let\thefootnote\relax\footnotetext{\emph{Keywords.} 3-colorings, Potts model, rigidity, discrete topology, discrete cohomology, 3-states.}
\end{abstract}
\section{Introduction}\label{sec: intro}
We study proper $3$-colorings of $\TT$, the d-dimensional discrete
torus $(\Z/n\Z)^d$, whose side length $n$ is even. Our main theorem
is that in high dimensions, a uniformly chosen proper $3$-coloring
of $\TT$ is nearly constant on one of the two bipartition classes of
$\TT$. Precisely, denote the partite classes of $\TT$ by $V^0$ and
$V^1$. A \emph{proper $3$-coloring} of $\TT$ is a function
$f\colon\TT\to\{0,1,2\}$ satisfying $f(v)\neq f(w)$ whenever $v$ and
$w$ are adjacent in $\TT$. Denote by $\cp_{i,k}(f)$ the proportion
of color $k$ on $V^i$, that is,
\begin{equation*}
\cp_{i,k}(f)\,:=\frac{|\{v\in V^i\ :\ f(v)= k\}|}{|V^i|}.
\end{equation*}
\begin{theorem}\label{thm: main}
There exist $d_0, c>0$ such that for every integer $d\ge d_0$ and every
even integer $n$, a uniformly chosen proper $3$-coloring $f\colon\TT\to\{0,1,2\}$
satisfies
\begin{equation*}
\E \bigg(\min_{i\in\{0,1\}} \cp_{i,k}(f)\bigg)\le
\exp\left(-\frac{cd}{\log^2 d}\right)\quad\text{for all
}k\in\{0,1,2\}.
\end{equation*}
\end{theorem}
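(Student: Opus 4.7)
The plan is to use a Peierls-type argument adapted to the unbounded (torus) setting. First I would reduce the theorem to showing that for each color $k$ and each threshold $r>0$, the probability of the event
\begin{equation*}
E_{k,r} = \Big\{f \text{ a proper 3-coloring} : \min_{i\in\{0,1\}} \cp_{i,k}(f) \ge r\Big\}
\end{equation*}
decays exponentially in $d/\log^2 d$ (uniformly in $r$ above $\exp(-cd/\log^2 d)$); a dyadic decomposition and summation then yields the expectation bound. The heuristic underlying this reduction is that the six ``pure phases'' of the anti-ferromagnetic 3-state model are parametrized by an ordered pair (dominant color on $V^0$, dominant color on $V^1$), and if instead color $k$ has appreciable density on both sublattices the coloring must contain extensive interfaces between different pure-phase regions.

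Next I would encode these interfaces as contours. Following the usual recipe for the $3$-state AF Potts model, I would associate to $f$ the set $B(f)$ of sites (or edges) whose local environment is incompatible with any single pure phase; the complement decomposes into ``phase clusters'' and $B(f)$ separates them as a $(d-1)$-dimensional object. On a box with fixed boundary conditions, each connected component of $B(f)$ would be a closed bounding hypersurface, and one could Peierls-bound its probability by a shift map (locally rotating colors) against a standard isoperimetric/entropy count in $\TT$, gaining a factor $\exp(-cA \cdot d/\log^2 d)$ against a $\exp(C A \log d)$ entropy of contours of area $A$.

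The main obstacle, and the reason for the discrete algebraic topology developed earlier in the paper, is that on $\TT$ a component of $B(f)$ need not bound. It can be a non-contractible $(d-1)$-cycle wrapping the torus, dividing it into two toroidal slabs rather than into the inside and outside of a bubble. Such ``global'' contours can have minimal area $n^{d-1}$ and cannot be killed by an interior shift. To handle this I would classify contours by their class in a discrete analogue of $H_{d-1}(\TT)$: for contours in the trivial class the standard local shift and Peierls counting apply, while for each nontrivial homology class one needs a global shift (a coloring-level operation that changes the winding by one in the chosen direction) and an estimate, via the discrete topology machinery, that the number of nontrivial homology classes supporting contours of area $A$ does not swamp the energy gain.

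Finally, I would assemble the pieces: for each homology class, show that a suitable shift produces at most a constant-to-one map from colorings with a contour of area $A$ in that class to colorings without it, with a multiplicative gain of $\exp(-c A d / \log^2 d)$; combine with an entropy bound of the form $\exp(C A \log d)$ times a polynomial-in-$n$ count of homology classes; and sum the resulting geometric series. The main technical difficulty I expect is the construction and analysis of the global shift map across a topologically nontrivial contour, where one must be careful that the modified configuration is still a legal proper 3-coloring and that the preimage multiplicity remains controlled. Once this is in place, the density of the smallest admissible contours, together with the sublattice structure of $\TT$, yields exactly the $\exp(-cd/\log^2 d)$ rate claimed in the theorem.
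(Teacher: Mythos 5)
Your plan is a direct Peierls/contour argument in the coloring language (in the spirit of \cite{GR,GKRS}), which is a genuinely different route from the paper -- the paper never sums over contours of colorings on $\TT$; it lifts colorings to quasi-periodic height functions, proves Theorem~\ref{thm: almost bijection} by an injection $\Psi_m\colon\QP_m\to\QP_\zero$, and then deduces Theorem~\ref{thm: main} by conditioning on the coloring being the modulo~$3$ of some $h\in\hm(\TT)$, using the pointwise fluctuation bound of Lemma~\ref{lem: ron} (imported from \cite{PHom}) together with a Cauchy--Schwarz step. As written, however, your proposal has two genuine gaps. First, the quantitative core is assumed rather than proved, and with the wrong exponents: you posit an energy gain $\exp(-cA\,d/\log^2 d)$ per contour of area $A$ beating an entropy $\exp(CA\log d)$. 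In this model the attainable gain per unit interface area is of order $1/(d\log^2 d)$ (this is exactly the shape of the bound from \cite{PHom} that the paper imports in Corollary~\ref{cor: long level set estimate}), so a naive $\exp(CA\log d)$ contour count is \emph{not} beaten; the whole difficulty in \cite{PHom,GKRS,GK04} is the refined counting of the relevant odd cutsets in high dimension, which your sketch dismisses as a ``standard isoperimetric/entropy count''. Moreover the final rate $\exp(-cd/\log^2 d)$ comes from the minimal interface area $\asymp d^2$ surrounding a fixed vertex combined with the per-area cost $c/(d\log^2 d)$, not from the mechanism you describe.

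Second, the step you yourself flag as the main difficulty -- a ``global shift'' across a non-contractible contour with a proper $3$-coloring as output and controlled preimage multiplicity -- is precisely the obstruction this paper was written to overcome, and your proposal contains no construction of it. The paper's resolution is not a shift of the coloring across a contour at all: it passes to the universal cover, shows via the translation trichotomy (Theorem~\ref{thm: main trichotomy}) and the level-component theory of Section~\ref{sec: QP properties} that a nonzero slope forces nested type-$0$ level components $U_0\subsetneq W_0$ with prescribed height gaps, and reverses the gradient between their translates to define an \emph{injective} map $\Psi_m$ into $\QP_\zero$ whose image consists of functions having a level component of boundary size at least $n^{d-1}$; invertibility hinges on recovering $U_0,W_0,\Delta,\delta$ from the image, which again uses the level-set machinery. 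Without either this construction or a substitute for it, and without the refined entropy estimates, your outline does not yet constitute a proof; even the contractible part of your argument would need either the cutset counting of \cite{GKRS} or the height-function transfer (Theorem~\ref{thm: almost bijection} plus Lemma~\ref{lem: ron}) that the paper actually uses to get the expectation bound.
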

Thus, the theorem asserts that typically in high dimensions, for
each color there is a partite class on which the color hardly
appears. Equivalently, one of the partite classes is dominated by a
single color.

The next section describes the main idea of the proof. More precise
definitions are given in Section~\ref{sec: prelim}.

\subsection{Relation with height functions}\label{sec:idea_of_proof}
Our proof of Theorem~\ref{thm: main} exploits a connection between
proper $3$-colorings and height functions, which we now describe. It
is convenient to introduce the required notions on a general graph.
Suppose $G$ is a connected, bipartite graph with a fixed vertex
$v_0\in V(G)$. Let $\col(G,v_0)$ be the set of all proper
$3$-colorings of $G$ taking the value $0$ at $v_0$. That is,
\begin{equation}\label{eq:col_G_def}
  \col(G,v_0):=\{f:V(G)\to\{0,1,2\}\,:\, f(v_0)=0,\, f(v)\neq
  f(w)\text{ when } (v,w)\in E(G)\}.
\end{equation}
An integer-valued function on $V(G)$ is called a \emph{homomorphism
height function} on $G$, or simply height function or HHF, if it
differs by exactly one between adjacent vertices of $G$. Let
$\hm(G,v_0)$ be the set of all homomorphism height functions on $G$
which take the value $0$ at $v_0$. Precisely,
\begin{equation}\label{eq:hom_G_def}
  \hm(G,v_0):=\{f:V(G)\to\Z\,:\, f(v_0)=0,\, |f(v)-f(w)|=1\text{ when } (v,w)\in E(G)\}.
\end{equation}

In this paper, we always take $G$ to be either $\TT$ or $\ZZ$ for
some $n$ and $d$. We consider both $\TT$ and $\ZZ$ to come with a
fixed coordinate system and denote by $\zero$ the vector
$(0,0,\ldots,0)$ in that system. For these graphs, we abbreviate
$\col(G,\zero)$ to $\col(G)$ and $\hm(G,\zero)$ to $\hm(G)$.

The connection we need between proper colorings and height functions
is summarized by the following two facts:
\begin{enumerate}
\item For any graph $G$, $v_0\in V(G)$ and $h\in\hm(G,v_0)$, the
function $g:V(G)\to\{0,1,2\}$ defined by
\begin{equation*}
  g(v):= h(v)\bmod 3
\end{equation*}
belongs to $\col(G,v_0)$.
\item When $G=\ZZ$, the above correspondence defines a \emph{bijection}
    between $\hm(\ZZ)$ and $\col(\ZZ)$.
\end{enumerate}
The first fact is straightforward and the second fact appears to be
folklore in the field (see Proposition~\ref{prop:col height
bijection}).

Our goal in this work is to use the above correspondence to transfer
known results on height functions, proved in \cite{PHom}, to results
on colorings, thereby obtaining Theorem~\ref{thm: main}. Our task
is, however, made complicated by the following obstruction. The
above correspondence is not a bijection when $G=\TT$. In other
words, there exist colorings in $\col(\TT)$ which are not the modulo
3 of any height function in $\hm(\TT)$. For instance, the coloring
$012012$ of $\mathbb{T}_6^1$ provides one such example. The source
of this problem is of a \emph{topological} nature, stemming from the
fact that the torus has non-contractible cycles. This poses a major
difficulty, preventing a direct use of the known results on height
functions. The following theorem, whose proof occupies most of this
paper, provides a way around this difficulty. It shows that the
above correspondence is, nonetheless, close to being bijective when
the dimension $d$ is sufficiently high.
\begin{theorem}\label{thm: almost bijection}
There exist $d_0$ and $c>0$ such that for every integer $d\ge d_0$ and every
even integer $n$, a uniformly chosen proper $3$-coloring of $\TT$ satisfies
\begin{equation*}
\P(f\text{ is not the modulo $3$ of some HHF on }\TT)\le \exp(-c_d
n^{d-1}),
\end{equation*}
with $c_d=\frac{c}{d\log^2 d}$.
\end{theorem}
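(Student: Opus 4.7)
The first step is to pin down the topological obstruction to lifting. For a proper $3$-coloring $f$ of $\TT$ and each edge $(v,w)$, define a signed increment $\sigma_f(v,w)\in\{+1,-1\}$ by $\sigma_f(v,w)=+1$ iff $f(w)\equiv f(v)+1\pmod 3$. A direct check shows that $\sigma_f$ sums to zero around every basic $4$-cycle of $\TT$: the sum is both even (four terms of $\pm 1$) and divisible by $3$ (the telescoping $f$-differences modulo $3$ around a closed loop), hence lies in $6\Z\cap[-4,4]=\{0\}$. Consequently, for each coordinate direction $j$ the winding number $W_j(f)\in 3\Z$ obtained by summing $\sigma_f$ along any basic loop in direction $j$ is independent of the loop chosen (two parallel loops differ by a telescoping sum over a strip of $4$-cycles). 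Moreover, $f$ is the mod $3$ reduction of some $h\in\hm(\TT)$ iff $W_j(f)=0$ for every $j$: necessity is immediate, and when all $W_j$ vanish, $\sigma_f$ is exact, so $\sigma_f=dh$ for an integer function $h$, which is then an HHF whose mod $3$ reduction matches $f$ after a global color shift.

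With this reduction, a union bound over $j\in\{1,\ldots,d\}$ together with the coordinate symmetry of $\TT$ reduces the theorem to the single-direction estimate $\P(W_1(f)\neq 0)\le \tfrac{1}{d}\exp(-c_d n^{d-1})$. Partitioning $\col(\TT)=\bigsqcup_{k}\col_k$ according to the value of $W_1$ and using $|W_1(f)|\le n$, this is equivalent to the combinatorial comparison $|\col_k|/|\col_0|\le (dn)^{-1}\exp(-c_d n^{d-1})$ for every $k\neq 0$.

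I would establish this by constructing, for each $k\neq 0$, an \emph{unwinding} map $\Phi_k\colon\col_k\to\col_0$ that is at most $M_k$-to-one with $\log M_k$ of smaller order than $c_d n^{d-1}$. The map locates, canonically from $f$, a $(d-1)$-dimensional dual surface $S\subset\TT$ transverse to direction $1$ along which the twist can be absorbed, and then modifies $f$ on one side of $S$ by a combination of a cyclic color permutation and local repairs in a neighborhood of $S$ so that the result is a proper coloring with $W_1=0$. The multiplicity $M_k$ is controlled by the data required to invert the surgery (the choice of $S$ and the repair data along it); since $S$ can be taken of area $O(n^{d-1})$ and the repairs at each dual edge range over a bounded alphabet, $\log M_k$ is $O(n^{d-1})$ with an entropy constant strictly smaller than the free-energy cost of the twist, the latter supplied by the rigidity of $\hm(\TT)$ from \cite{PHom} (passed through the bijection-like correspondence on $\col_0$).

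The main obstacle is the construction and control of the unwinding map. A naive color permutation across a flat slice almost never yields a proper coloring, since the color constraint is violated exactly at the edges crossing the slice where $f$ does not happen to match its permutation; routing around these violations is precisely what forces the dual surface $S$ to be non-flat. Carrying this out cleanly is what requires developing the discrete cohomology and Poincar\'e-duality machinery on $\TT$ announced in the abstract: one needs canonical representatives for the Poincar\'e dual of a basic loop and a quantitative understanding of their minimum size. The Peierls-type estimate on the distribution of such surfaces under the uniform measure on $\col_0$, together with the HHF rigidity input, is what yields the ratio $|\col_k|/|\col_0|\le \exp(-c_d n^{d-1})$, and is the source of the $n^{d-1}$ exponent.
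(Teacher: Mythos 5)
Your first reduction (winding numbers $W_j(f)$ are well defined because basic $4$-cycles span all cycles, and $f$ lifts to an HHF iff all $W_j$ vanish) is sound and matches the paper's setup, which encodes the same information as the slope $m$ of a quasi-periodic lift to $\ZZ$ (Proposition~\ref{prop: QP2}); the union bound over the possible windings is also how the paper passes from Theorem~\ref{thm: QPm QP0 embedding} to Theorem~\ref{thm: almost bijection}. The gap is in the heart of the argument, the ``unwinding'' map $\Phi_k$, which you only describe at the level of intentions. Your plan to control its multiplicity — $\log M_k=O(n^{d-1})$ coming from a bounded alphabet of ``repair data'' per dual plaquette along a surface $S$ of area $O(n^{d-1})$, to be beaten by ``the free-energy cost of the twist'' supplied by the rigidity of $\hm(\TT)$ — cannot work with the input actually available. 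The rigidity estimate from \cite{PHom} (the level-component bound, Theorem~\ref{thm: long level set estimate}) yields a cost of only $\Theta\bigl(\frac{1}{d\log^2 d}\bigr)$ per unit of surface area, and indeed the target constant is $c_d=\frac{c}{d\log^2 d}$, which is much smaller than $\log 2$. So any surgery that genuinely needs even one bit of repair information per boundary plaquette produces an entropy term of order $n^{d-1}$ with constant $\Theta(1)$ that swamps the available cost $\Theta(n^{d-1}/(d\log^2 d))$, and the comparison $|\col_k|/|\col_\zero|\le \exp(-c_d n^{d-1})$ is lost. The assertion that the entropy constant is ``strictly smaller than the free-energy cost'' is therefore not a technicality to be checked later; as stated it is false in the relevant regime.

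What the paper does to escape this is precisely to make the unwinding map \emph{exactly} one-to-one, so that no repair entropy appears at all: it lifts to quasi-periodic HHFs on $\ZZ$, develops the trichotomy for translation-respecting sets (Theorems~\ref{thm: pair_trich} and~\ref{thm: main trichotomy}) to extract canonical type-$0$ level components $W_0,U_0$, and defines $\Psi_m$ by reversing the height gradient between translated copies of these sets. Because the surgery is performed along \emph{level sets}, where the height is constant on $\intb$ and $\extb$, the reflected function is automatically an HHF with no local violations to repair, and injectivity is proved by explicitly recovering $W_0,U_0,\Delta,\delta$ from the image (Proposition~\ref{prop: psi_m one to one}); the image then lies in the set of periodic HHFs possessing a level component with boundary of size at least $n^{d-1}$, to which the \cite{PHom} estimate applies. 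Your ``naive color permutation across a slice fails, so the surface must be non-flat'' observation points in the right direction, but identifying a canonical surface along which \emph{no} repairs are needed, and proving the surgery is invertible, is exactly the content of Sections~\ref{sec: structure of sets}--\ref{sec: proof of 2.1}; without it your proposal does not yield the theorem.
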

In the next section we explain how Theorem~\ref{thm: main} follows
from the above theorem and a result on height functions proved in
\cite{PHom}. In Section~\ref{sec: BG} we present some background.
The rest of the paper is devoted to the proof of Theorem~\ref{thm:
almost bijection}. Section~\ref{sec: prelim} contains the first part
of the proof and a proof overview. The proof is inspired by ideas
from algebraic topology but the necessary tools are developed
completely in the discrete setting. We believe that some of these
tools could prove useful in other models as well, especially the
trichotomy theorems of Section~\ref{sec: structure of sets},
Theorem~\ref{thm: pair_trich} and Theorem~\ref{thm: main
trichotomy}, which deal with discrete counterparts of manifolds of
codimension one. The connection between our work and algebraic
topology is expounded upon in Section~\ref{subs: topology}.
Section~\ref{sec: Rem Op} is dedicated to remarks and open problems.

\subsection{Remarks and
extensions}\label{sec:remarks_and_extensions}
We point out that the bound presented in Theorem~\ref{thm: main} is
near optimal. Perhaps surprisingly, Theorem~\ref{thm: main} itself
implies the following claim.
\begin{propos}\label{prop:rigidity_lower_bound}
There exist $d_0, c>0$ such that for every integer $d\ge d_0$ and
every even integer $n$, a uniformly chosen proper $3$-coloring
$f\colon\TT\to\{0,1,2\}$ satisfies
\begin{equation*}
\E \bigg(\min_{i\in\{0,1\}} \cp_{i,k}(f)\bigg)\ge
\exp\left(-cd\right)\quad\text{for all }k\in\{0,1,2\}.
\end{equation*}
\end{propos}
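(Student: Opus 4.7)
The plan is to combine Theorem~\ref{thm: main} with a local Gibbs (Peierls-type) estimate inside a suitably chosen basin. By the color-permutation symmetry of the uniform measure, $\E[\cp_{i,k}] = 1/3$ for every $i,k$, and the claim is symmetric in $k$, so we may fix $k = 0$. For $(i,k') \in \{0,1\} \times \{0,1,2\}$ set $\Omega_{i,k'} := \{\cp_{i,k'} \ge 1/2\}$. Applying Theorem~\ref{thm: main} to each color $k'$ together with Markov's inequality, and using both the color-permutation symmetry and the unit translation $f \mapsto f(\cdot + e_1)$ (which swaps $V^0$ and $V^1$ since $n$ is even), shows that these six events are pairwise disjoint for $d$ large and each has probability $(1-o(1))/6$. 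Fix $\Omega^* := \Omega_{1,1}$, the basin where color 1 dominates $V^1$. The swap $0 \leftrightarrow 2$ preserves $\Omega^*$, so $\cp_{0,0}$ and $\cp_{0,2}$ are equidistributed on $\Omega^*$ with conditional mean close to $1/2$; consequently $\min_i \cp_{i,0}(f) = \cp_{1,0}(f)$ on $\Omega^*$ outside an $\exp(-cd/\log^2 d)$-exceptional set.

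It therefore suffices to show $\P(f(v) = 0,\,\Omega^*) \ge c\cdot 2^{-2d}$ for any fixed $v \in V^1$; by translation invariance within $V^1$ this yields $\E[\cp_{1,0}\,\mathbf{1}_{\Omega^*}] \ge c \cdot 2^{-2d}$. Let $B_v$ denote the event ``all $2d$ neighbors of $v$ in $V^0$ are colored $2$''. On $B_v$ we have $f(v) \in \{0,1\}$, and the measure-preserving bijection that toggles $f(v)$ between $0$ and $1$ (valid on $B_v$ since $0,1 \ne 2$) gives $\P(f(v)=0 \mid B_v) = 1/2$. Since this toggle changes $\cp_{1,1}$ by only $1/|V^1|$, it essentially preserves $\Omega^*$, giving $\P(f(v)=0,\,B_v,\,\Omega^*) \ge \tfrac12\P(B_v \cap \Omega^*) - o(\exp(-cd))$. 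It remains to prove $\P(B_v \cap \Omega^*) \ge c\cdot 2^{-2d}$.

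For this, condition on $f|_{V^1}$. Since $V^0$ is an independent set, the Gibbs property makes $f|_{V^0}$ a product of independent uniform distributions on the locally admissible color sets. Let $\mathcal E$ be the event that $f(v) = 1$ and $f(w) = 1$ for every second-neighbor $w$ of $v$ in $V^1$; on $\mathcal E$ each neighbor $u_i$ of $v$ has admissible set exactly $\{0,2\}$, so the $f(u_i)$ are i.i.d.\ $\mathrm{Unif}(\{0,2\})$ and $\P(B_v \mid f|_{V^1},\,\mathcal E) = 2^{-2d}$. Theorem~\ref{thm: main} applied to color $1$, together with translation invariance, gives $\P(f(w) \ne 1 \mid \Omega^*) \le 6\exp(-cd/\log^2 d)$ for each $w \in V^1$, so a union bound over the at most $2d^2+1$ vertices appearing in $\mathcal E$ yields $\P(\mathcal E \mid \Omega^*) \ge 1/2$ for $d$ large. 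Combining, $\P(B_v \cap \Omega^*) \ge \tfrac12 \cdot 2^{-2d} \cdot \P(\Omega^*) \ge c \cdot 2^{-2d}$, whence $\E[\min_i \cp_{i,0}] \ge \exp(-cd)$ with a suitable constant.

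The main obstacle is the Peierls lower bound: once Theorem~\ref{thm: main} pins down a basin of constant probability, one must show that the density of color-$0$ excitations on $V^1$ inside $\Omega^*$ is at least $\exp(-cd)$. Via the Gibbs product structure of $f|_{V^0}$ given $f|_{V^1}$, this reduces to a union bound over the $O(d^2)$ second-neighbors of $v$ --- just feasible because the rigidity bound $\exp(-cd/\log^2 d)$ of Theorem~\ref{thm: main} comfortably beats any polynomial in $d$.
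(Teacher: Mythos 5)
The decisive step --- ``it therefore suffices to show $\P(f(v)=0,\,\Omega^*)\ge c\,2^{-2d}$'' --- is not valid, and the claims feeding into it are unjustified. Your basin $\Omega^*=\{\cp_{1,1}\ge 1/2\}$ only forces the density of color $1$ on $V^1$ to be at least one half; it does not force near-domination. In particular, configurations in which $V^0$ is almost entirely colored $2$ (so that $V^1$ is roughly half $0$, half $1$) have $\cp_{1,1}\approx 1/2$ and nothing in your argument excludes them from $\Omega^*$ at better than constant order; on such configurations $\cp_{1,0}\approx 1/2$ while $\min_i\cp_{i,0}=\cp_{0,0}\approx 0$. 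Consequently the assertions that $\min_i\cp_{i,0}=\cp_{1,0}$ on $\Omega^*$ up to an $\exp(-cd/\log^2 d)$-exceptional set, and that $\P(f(w)\ne 1\mid\Omega^*)\le 6\exp(-cd/\log^2 d)$, are unsupported (for the $1/2$-threshold basin they fail on a set of constant probability), and $\P(f(v)=0,\,\Omega^*)$ is in fact of constant order for a trivial reason that carries no information about the minimum. (The preliminary claim that the six events $\Omega_{i,k'}$ are pairwise disjoint with probability $(1-o(1))/6$ each is also wrong: $V^0\equiv 1$, $V^1\equiv 2$ is a proper coloring lying in $\Omega_{0,1}\cap\Omega_{1,2}$, and under the typical rigid structure such intersections have constant probability; a constant lower bound on $\P(\Omega^*)$ is recoverable, but not by the route you state.)

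There is also a quantitative obstruction that persists even if you redefine $\Omega^*$ as strong domination, say $\{\cp_{1,1}\ge 0.9\}$: every exceptional set you discard (where the min is not $\cp_{1,0}$, the threshold-crossing of the toggle, the failure of $\mathcal E$) is controlled only at scale $\exp(-cd/\log^2 d)$, or by an unexplained ``$o(\exp(-cd))$'', and all of these dwarf the main term $2^{-2d}=\exp(-2d\log 2)$, so subtracting them annihilates the bound. The proposition requires exhibiting one event of non-negligible probability on which color $0$ appears on \emph{both} bipartition classes simultaneously at the right densities; a bound on $\E\big(\cp_{1,0}\ind_{\Omega^*}\big)$ alone cannot deliver that. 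This is exactly what the paper's proof is organized around: it conditions on the colors at distance $2$ from many $4$-separated centers $x\in T\subset V^0$, uses the domain Markov property to make the balls $B_x$ conditionally independent, and bounds $\min_i\cp_{i,1}$ from below by $\min(X,Y)/|V^0|$ with independent $X\sim\mathrm{Bin}\left(\lfloor S/2\rfloor,2^{-2d}\right)$ (the rare color at the centers, one class) and $Y\sim\mathrm{Bin}\left(\lceil S/2\rceil,1/2\right)$ (the color at a chosen neighbor, the other class), so both classes are handled on the same event of probability at least $1/49$. Your local computation $\P(B_v\mid f|_{V^1},\mathcal E)=2^{-2d}$ is the right Peierls-type ingredient, but as organized it reaches only one class, and the reduction to it does not hold.
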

This proposition is proved in Section~\ref{sec:near_optimality}.

We also emphasize that Theorem~\ref{thm: almost bijection} serves as
a bridge between results on uniformly sampled homomorphism height
functions on $\TT$ and uniformly sampled proper $3$-colorings. Thus,
results on the former may be transferred easily to the latter, as is
illustrated by the deduction of Theorem~\ref{thm: main} in
Section~\ref{sec: proof of main}. One expects it to be possible to
upgrade Theorem~\ref{thm: main} by showing that the quantity
$\min_{i\in\{0,1\}} \cp_{i,k}(f)$ is not only small on average,
but also small with high probability as $n$ tends to infinity. To
use Theorem~\ref{thm: almost bijection} to this end would require
extending the corresponding results on height functions. While we
believe such extensions are possible, we do not delve further in
this direction as our main concern in this paper is to establish the
relation between the models.

As explained in Section~\ref{sec: prelim} below, we approach
Theorem~\ref{thm: almost bijection} by identifying the set of proper
$3$-colorings with a set of quasi-periodic height functions. Each
such height function has a well-defined \emph{slope}, a vector which
measures the amount by which it changes when going around the torus
in each direction. Homomorphism height functions on $\TT$ can be
identified with quasi-periodic height functions with zero slope. The
proof of Theorem~\ref{thm: almost bijection} proceeds by finding a
one-to-one map between quasi-periodic functions of a given non-zero
slope, and a tiny subset of the quasi-periodic functions with zero
slope, see Theorem~\ref{thm: QPm QP0 embedding} below. In fact, more
can be deduced from our techniques. As we show in Theorem~\ref{thm: QP steep QP0 embedding},
the size of the set of quasi-periodic functions with a given slope
may be estimated in terms of this slope, yielding stronger bounds
for steeper slopes. For instance, the chance of sampling a proper
$3$-coloring whose corresponding height function changes by a linear
amount when going around the torus, is exponentially small in $n^d$
rather than the $n^{d-1}$ appearing in Theorem~\ref{thm: almost
bijection}.

While Theorem~\ref{thm: almost bijection} is proved in high
dimensions, the main ingredient in its proof, the above-mentioned one-to-one mapping of quasi-periodic height functions with a
given slope to quasi-periodic height functions with zero slope, is developed in all dimensions. The
part which is missing in low dimensions is a counterpart of
\cite[Theorem~2.8]{PHom}, which would show that the probability
that a low-dimensional HHF on $\TT$ has a long level line is exponentially small in this length. This result is not expected in two dimensions (see discussion in Section~\ref{sec: BG} below), but may be valid already in dimensions $d\ge 3$. Theorem~\ref{thm: almost bijection}
would immediately extend to any dimension in which this result is established. Appropriate analogs of Theorem~\ref{thm: main} in dimensions $d\ge 3$ may also be valid, as the proof of Theorem~\ref{thm: main} relies on Theorem~\ref{thm: almost bijection} and input on the fluctuations of homomorphism height functions provided in \cite{PHom} in high dimensions (see Section~\ref{sec: proof of main} below).

\subsection{Background and related works}\label{sec: BG}
Our work is not the first to establish rigidity of proper
$3$-colorings in high dimensions. Previously, a result analogous to
Theorem~\ref{thm: main} in which the proper $3$-coloring is sampled
from the set of colorings with `zero boundary conditions' was
established in \cite{PHom}, and also by Galvin, Kahn, Randall and
Sorkin in \cite{GKRS}. The restriction to such `zero boundary
conditions' makes the problem simpler from a topological point of
view since it essentially removes the non-trivial cycles of $\TT$,
rendering the correspondence described in
Section~\ref{sec:idea_of_proof} into a bijection of height functions
and proper $3$-colorings with these boundary conditions. The results
of \cite{PHom} and \cite{GKRS} imply Roman Koteck\'y's conjecture
(see \cite{K85} for context and \cite{GKRS} for additional details),
that the proper $3$-coloring model admits at least 6 different Gibbs
states in high dimensions.

Galvin and Randall \cite[theorem 2.1]{GR} established a related
result in the same setting as Theorem~\ref{thm: main}. They showed
that for each color $k$, with probability at least $1 -
\exp(-c_dn^{d-1} / \log^2 n)$, the proportions of the color on the
two bipartite classes differ by at least $\rho$, where $\rho\approx
0.22$. In terms of the quantities $\cp_{i,k}(f)$ used in
Theorem~\ref{thm: main}, this means that $|\cp_{0,k}(f) -
\cp_{1,k}(f)|\ge \rho$ with high probability. Taking into account
that each color may appear on at most half of the vertices of the
torus, this implies that $\min_{i\in\{0,1\}} \cp_{i,k}(f)\le
\frac{1 - \rho}{2}\approx 0.39$ with high probability. In contrast,
Theorem~\ref{thm: main} shows that $\E\left(\min_{i\in\{0,1\}}
\cp_{i,k}(f)\right)\le\exp(-c d / \log^2(d))$, a bound which is
near optimal by Proposition~\ref{prop:rigidity_lower_bound}. As
discussed in Section~\ref{sec:remarks_and_extensions}, we believe
this bound may be shown to hold not only on average but with high
probability as $n$ tends to infinity by extending the corresponding
results on height functions. The techniques of \cite{GR} are rather
different from ours. While we proceed by developing the topological
theory of discrete height functions, the work \cite{GR} stays fully
in the realm of $3$-colorings.

Other related results include torpid mixing of the Glauber dynamics
for proper $3$-colorings of $\TT$ \cite{GR} and the fact that
homomorphism height functions have bounded range on the hypercube
graph $\{0,1\}^d$, as proved by Kahn \cite{K01} and Galvin
\cite{G03}.

In statistical physics terminology, the proper $3$-coloring model is
the same as the zero temperature case of the antiferromagnetic
3-state Potts model. It is expected that the analog of our result
continues to hold for small, positive temperature, but this remains
unproven. In two dimensions, the model is equivalent to the uniform
six-vertex, or square ice, model (this was pointed out by Andrew
Lenard, see \cite{L67}). It is expected that the analog of
Theorem~\ref{thm: main} fails in two dimensions, as the square ice
model is conjectured to be in a disordered phase, in the sense that
the model should have a unique Gibbs state when $d=2$. However, it
may well be that multiple Gibbs states exist already for any $d\ge
3$. Investigating other graphs, Koteck\'y, Sokal and Swart
\cite{KSS12} have shown that the model has multiple Gibbs states on
certain \emph{planar} lattices. This result was extended by Huang
et. al. \cite{HCD+} who have shown that for every $q\ge 3$, there
are planar lattices on which the proper $q$-coloring model has
multiple Gibbs states.

The fact that a uniformly chosen $3$-coloring on the torus is the
modulo $3$ of a height function with high probability
(Theorem~\ref{thm: almost bijection}) is also expected to fail in
two dimensions. Some evidence for this phenomenon is provided by the
study of the dimer model. In the dimer model, one samples uniformly
a perfect matching of an underlying graph. On suitable graphs, the
perfect matching defines locally the gradient of an integer-valued
height function and one may study similar questions to those studied
here. Boutillier and de Tili\`ere \cite{BdT09} (see also Kenyon
\cite[Section 4.17]{K09}) considered the dimer model on a piece of
the hexagonal lattice wrapped around a torus. They showed that the
random height differences accumulated when winding around the torus
tend to a non-degenerate limit distribution (a discrete
Gaussian-type distribution) as the side length of the torus
increases.

It is conjectured that the rigidity phenomenon described by
Theorem~\ref{thm: main} has an analog for proper colorings with more
than $3$ colors. Specifically, that for any $q\ge 4$ there exists a
$d_0(q)$ such that a uniformly sampled proper $q$-coloring of $\TT$,
$d\ge d_0(q)$, has the following structure with high probability.
The colors split into two sets of sizes $\lfloor q/2\rfloor$ and
$\lceil q/2\rceil$, with the even sublattice colored predominantly
by colors from one set and the odd sublattice colored predominantly
by colors from the other set. While this conjecture remains open,
several related results have appeared. Galvin and Tetali
\cite{GT04}, following work of Kahn \cite{K01-2}, gave approximate
counts for the number of \emph{graph homomorphisms} from $d$-regular
graphs to arbitrary finite graphs. Specializing to proper
$q$-colorings of $\TT$, their results support the above conjecture.
Meyerovitch and Pavlov \cite{MP14} analyzed, so called, axial
products of shifts of finite type, a more general model than graph
homomorphisms on $\Z^d$, and found explicit expressions for the
limiting topological entropy of such models as $d$ tends to
infinity. Their results are also in agreement with the above
conjecture. Galvin and Engbers \cite{EG12} established the analog of
the conjecture, and more general rigidity results for graph
homomorphisms, in the limit when $n$ is fixed and $d$ tends to
infinity. Similar rigidity results on expander and tree graphs are
established in \cite{PSY13, PSY13-2, PSY13-3}.

Of related interest is the hard-core model in $\TT$. In this model,
one samples an independent set $I$ of $\TT$ with probability
proportional to $\lambda^{|I|}$. It is expected that there exists
some $\lambda_c = \lambda_c(d)$ satisfying that, with high
probability, if $\lambda>\lambda_c$ the sampled independent set
resides predominantly in one of the two sublattices, whereas if
$\lambda<\lambda_c$ no such structure appears. While the existence
of $\lambda_c$ is still open (and there are examples of graphs for
which it does not exist, see \cite{BHW99}) one may still define
$\lambda_c'=\lambda_c'(d)$ as the infimum over $\lambda$ for which
the model admits multiple Gibbs states. Dobrushin \cite{D68} proved
that $\lambda_c'<\infty$ in every dimension $d\ge 2$, with an upper
bound growing to infinity with $d$. Galvin and Kahn \cite{GK04}
significantly improved this result by showing that $\lambda_c'$
tends to zero with $d$. The quantitative bound obtained in
\cite{GK04} was further improved in \cite{PS13}. The main technical
ingredient in both \cite{GK04, PS13}, as well as the aforementioned
\cite{PHom, GKRS}, is a careful analysis of the structure of certain
special cutsets in $\TT$, when the dimension $d$ is sufficiently
high. This is in contrast to this work, in which discrete analogs of
topological considerations constitute the bulk of the argument.

\subsection{Proof of Theorem 1.1}\label{sec: proof of main} We end the introduction by explaining how to deduce Theorem~\ref{thm: main} from Theorem~\ref{thm:
almost bijection} and a result of \cite{PHom} on the fluctuations of
typical homomorphism height functions on $\TT$.

We start with the following lemma, which states the required result
on the typical behavior of height functions.
\begin{lemma}\label{lem: ron}
There exist $c>0$ and $d_0$ such that in all dimensions $d\ge d_0$,
if $h$ is uniformly sampled from $\hm(\TT)$ then
\begin{equation*}
  \P(|h(u)-h(v)|\ge 3)\le \exp\left(-\frac{cd}{\log^2 d}\right)
  \qquad\forall u,v\in\TT.
\end{equation*}
\end{lemma}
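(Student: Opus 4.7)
The plan is to derive this concentration inequality via the Peierls-type level-set shift method developed in \cite{PHom}. The statement asserts a local rigidity property for uniformly random HHFs on $\TT$: on any fixed pair of vertices, such a function is concentrated on three consecutive integer values with high probability as $d\to\infty$.

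First, I would reduce the problem using the symmetries of the uniform measure on $\hm(\TT)$. That measure is invariant under translations of $\TT$ and under the sign flip $h\mapsto -h$; moreover every $h\in\hm(\TT)$ satisfies the parity constraint $h(w)\equiv\dist(\zero,w)\pmod 2$. Combining these, it suffices to bound the probability of the one-sided event $\{h(u)-h(v)\ge 3\}$ for a fixed pair $u,v$ of the appropriate parity.

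The main step is a shift map on $\hm(\TT)$. If $h(u)-h(v)\ge 3$, then for each integer $k$ in the interval $(h(v),h(u)]$ the super-level set $A_k:=\{w:h(w)\ge k\}$ contains $u$ but not $v$, and its edge boundary $\partial A_k$ is a cutset separating $u$ from $v$ whose edges all carry $h$-difference exactly $+1$ from $A_k^c$ into $A_k$. Hence $\tilde h:=h-2\cdot\ind_{A_k}$ is again a valid HHF, with $\tilde h(u)-\tilde h(v)\ge 1$. Choosing a canonical such $A_k$ (for example, the innermost minimum cutset around $u$) makes $h\mapsto\tilde h$ into a map whose preimages in the event are enumerated by separating cutsets in $\tilde h$, yielding a Peierls-type bound
\begin{equation*}
\P(h(u)-h(v)\ge 3)\le \sum_{C} w(C),
\end{equation*}
where the sum is over separating cutsets $C$ in $\TT$ and $w(C)$ accounts for the combinatorial cost of "undoing" the shift along $C$ within the uniform measure on $\hm(\TT)$.

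The main obstacle, and the real content of the lemma, is the sharp cutset counting that turns this sum into the specific bound $\exp(-cd/\log^2 d)$. A crude bound on the number of cutsets of size $L$ through a given edge in $\TT$ is far too weak to match the per-edge weight obtained from the shift. The refined estimate of \cite{PHom}, which shows that minimal cutsets in high-dimensional tori are extremely structured and admits a two-scale "interior/interface" decomposition, produces an entropy of order $(\log d)/d$ per edge rather than the naive $\log d$; it is this improvement that accounts for the $\log^2 d$ factor in the denominator of the final exponent. Combined with the vertex-cut lower bound $|C|\ge 2d$ for any cutset separating two vertices of $\TT$, summing the resulting geometric series yields the claimed estimate. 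Since the underlying counting and shift arguments are executed in full generality in \cite{PHom}, the deduction of the lemma here is essentially a matter of quoting the appropriate theorem from that paper and specializing to a pair of points.
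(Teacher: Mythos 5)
Your proposal ultimately lands on the same route as the paper: the lemma is obtained by quoting the ready-made height-function estimate of \cite{PHom} rather than by a self-contained argument. The paper's proof is exactly two steps: Theorem~2.1 of \cite{PHom} bounds $\P(|h(v)|\ge 3)$ by $\exp(-cd/\log^2 d)$ when $h$ is uniform on $\hm(\TT,u)$, and this is transferred to $h$ uniform on $\hm(\TT)$ via the recentering bijection $T_u(h)(\cdot):=h(\cdot)-h(u)$, since $|h(u)-h(v)|=|T_u(h)(v)|$. Your symmetry reduction is the same idea in spirit, but note that the uniform measure on $\hm(\TT)$ (normalized at $\zero$) is not literally translation invariant; what is true, and what the paper uses, is precisely this recentering bijection.

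There is, however, a concrete quantitative flaw in your sketch of the underlying Peierls argument, which would matter if the sketch had to stand on its own. The refined estimate of \cite{PHom} gives a cost of order $\exp\bigl(-cL/(d\log^2 d)\bigr)$ for a separating level-set cutset of size $L$; combining this with your stated lower bound $|C|\ge 2d$ only yields $\exp(-c/\log^2 d)$, not the claimed $\exp(-cd/\log^2 d)$. To reach the stated exponent one needs $L\gtrsim d^2$, and this is exactly what \cite{PHom} supplies: the cutsets arising here are boundaries of \emph{odd} (level-set) components at a height strictly between $h(u)$ and $h(v)$ --- this is where the threshold $3$ in the event $|h(u)-h(v)|\ge 3$ is used, as it guarantees such an intermediate level exists --- and these obey an isoperimetric inequality of order $d^2$ (\cite[Theorem~5.1]{PHom}; the same fact is invoked in Section~\ref{sec: Steep slopes} of this paper). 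With that correction your outline matches the internal argument of \cite{PHom}, and since the lemma itself only requires citing that theorem and recentering, the deduction you propose is otherwise sound.
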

\begin{proof}
Theorem~2.1 in \cite{PHom} gives, in particular, that there exist
$c>0$ and $d_0$ such that in all dimensions $d\ge d_0$ and for every
$u,v\in\TT$, if $h$ is uniformly sampled from $\hm(\TT,u)$, then
\begin{equation*}
  \P(|h(v)|\ge 3)\le \exp\left(-\frac{cd}{\log^2 d}\right).
\end{equation*}
The lemma follows from this by using the fact that the mapping
$T_u\colon\hm(\TT)\to\hm(\TT, u)$ defined by $T_u(h)(v):=h(v)-h(u)$
is a bijection.
\end{proof}

We are now ready to prove Theorem~\ref{thm: main}. First, observe
that by symmetry, it suffices to prove the theorem for a uniformly
chosen coloring in $\col(\TT)$, i.e., a coloring normalized at
$\zero$.

Let $f$ be uniformly chosen from $\col(\TT)$. Recall that
$$\cp_{i,k}(f)=\frac{|\{v\in V^i\ :\ f(v)= k\}|}{|V^i|},$$
where $V^0$ and $V^1$ are the partite classes of $\TT$. Fix
$k\in\{0,1,2\}$ and let
\begin{equation*}
X:=\min_{i\in\{0,1\}}\cp_{i,k}.
\end{equation*}
We need to show that $\E(X)\le\exp(-cd/\log^2d)$ for some $c>0$ and
all sufficiently high $d$.

Fix $d$ sufficiently high and $c>0$ sufficiently small for the
following arguments. Define the event
\begin{equation*}
  A:=\{\text{$f$ is the modulo $3$ of some HHF in
$\hm(\TT)$}\}.
\end{equation*}
By symmetry again, Theorem~\ref{thm: almost bijection} implies that
\begin{equation*}
  \P(A^c)\le \exp\left(-\frac{c}{d\log^2d}n^{d-1}\right).
\end{equation*}
Hence,
\begin{equation}\label{eq: expectancy argu}
\E(X)=\E(X\ind_A)+\E(X\ind_{A^c})\le\E(X|A)+\exp\left(-\frac{c}{d\log^2d}n^{d-1}\right).
\end{equation}
Thus we focus on estimating $\E(X|A)$. Conditioning on $A$, there
exists some $h\in\hm(\TT)$ for which $f\equiv h\pmod3$. Moreover,
since distinct functions in $\hm(\TT)$ give rise to distinct
colorings in $\col(\TT)$ under the modulo 3 operation, it follows
that, conditioned on $A$, $h$ is uniformly distributed in
$\hm(\TT)$.

Now note that if $u,v\in\TT$ are vertices in different partite
classes of $\TT$ then $h(u)$ and $h(v)$ have different parity. Thus,
for such vertices, we have the following containment of events,
\begin{equation*}
\{f(u)=f(v)\} = \{h(u)\equiv h(v)\!\pmod 3\} \subseteq
\{|h(u)-h(v)|\ge 3\}.
\end{equation*}
We conclude that $X$ satisfies the following relation.
\begin{align*}
X^2&=\frac{1}{|V^0|^2}\min_{i\in\{0,1\}}|\{v\in V^i\ :\
f(v)=k\}|^2\le\frac{1}{|V^0|^2} |\{v\in V^0\ :\
f(v)=k\}|\cdot|\{v\in V^1\ :\ f(v)=k\}|\le\\
&\le \frac{1}{|V^0|^2}\sum_{u\in V^0,\,v\in
V^1}\hspace{-3pt}\ind_{(f(u)=f(v))}\le \frac{1}{|V^0|^2}\sum_{u\in
V^0,\,v\in V^1}\hspace{-3pt}\ind_{(|h(u)-h(v)|\ge3)}.
\end{align*}
Hence, we may use Lemma~\ref{lem: ron} to deduce that
\begin{equation}\label{eq: expectency calc}
\E(X|A)\le\sqrt{\E(X^2|A)}\le\frac{1}{|V^0|}\sqrt{\sum_{\substack{u\in
V^0,\, v\in V^1}}\P(|h(u)-h(v)|\ge3)}\le \exp\left(-\frac{cd}{\log^2
d}\right).
\end{equation}
Together with \eqref{eq: expectancy argu}, this establishes
Theorem~\ref{thm: main}.

\section{Preliminaries and Overview}\label{sec: prelim}
This section is divided into an introduction to the objects and notation of
the paper, and to a reduction of Theorem \ref{thm: almost bijection} to a
statement concerning quasi-periodic functions on the integer lattice. At the
end of the section we give a glimpse into the ideas of the proof, and discuss
the relation between our work and algebraic topology.

\subsection{Preliminary definitions}\label{subs: predef}

\heading{Lattice and Torus} We write $\ZZ$ for the nearest-neighbor
graph of the standard $d$-dimensional integer lattice, and
$\TT=(\Z/n\Z)^d$ for the graph of the $d$-dimensional discrete torus
with side length $n$. We assume $n$ is an even integer greater or
equal than 4, fixing it throughout the paper. We also assume both
graphs come with a fixed coordinate system, letting $e_i\in\ZZ$ be
the $i$th standard basis vector for $1\le i\le d$. In both graphs,
two vertices are adjacent if they differ by one in exactly one
coordinate. As $n$ is even, both graphs are bipartite. In both we
thus refer to the vertices in the bipartition class of
$\zero=(0,\dots,0)$ as \emph{even}, and to the rest of the vertices
as \emph{odd}. For a vector $v\in\ZZ$, and a set $U\in\ZZ$ we write
$U+v$ to denote $\{u+v :\ u\in U\}$.

\heading{Distance and boundary} Let $G$ be a connected graph. We
write $u\sim v$ to denote that a pair of vertices $u,v\in V(G)$ are
adjacent. For a set of vertices $U\subseteq V(G)$ we define the
\emph{boundary} of $U$ to be the set of edges
\begin{equation*}
\partial U:=\{ e\in E(G) :  e\cap U\neq \emptyset\text{ and } e\cap U^c\neq \emptyset\}.
\end{equation*}
We use $\dist(u,v)$ for the shortest-path distance between $u$ and
$v$, and extend this notion to non-empty sets $U,V\subseteq V(G)$,
defining
\begin{equation*}
\dist(U,V):=\min\{\dist(u,v)\ :\ u\in U, v\in V\}.
\end{equation*}
If one of the sets $U,V$ is empty, we write $\dist(U,V)=\infty$. For
a set of vertices $U$, we denote
\begin{align*}
U^+&:=\{u \in V(G) : \dist(\{u\},U)\le 1\},\\
U^-&:= \{u \in V(G) : \dist(\{u\},U^c)>1\}.
\end{align*}

Note that $U^- = ((U^c)^+)^c$. We also abbreviate $U^{++}:=(U^+)^+$
and $U^{--}:=(U^-)^-$. The following simple relations hold for any
two sets $U,V\subseteq V(G)$:
\begin{gather}\label{eq: Basic dist eq1}
U^+ \subseteq V \Longleftrightarrow U\subseteq V \text{ and } \partial U \cap
\partial V = \emptyset,\\
\dist(U^+,V)=\max(\dist(U,V)-1,0),\label{eq: Basic dist eq2}\\
U \subseteq V \Longleftrightarrow \forall \,W\subset
V(G),\,\dist(U,W)\ge \dist(V,W).\label{eq: Basic dist eq3}
\end{gather}

For a set of vertices $U$, we define the \emph{internal vertex
boundary} of $U$ to be
\begin{equation*}
\intb U:=U\setminus U^-.
\end{equation*}
Similarly we define the \emph{external vertex boundary} of $U$ to be
\begin{equation*}
\extb U:=U^+\setminus U.
\end{equation*}
In both $\ZZ$ and $\TT$, we call a set of vertices $U$ \emph{odd} if
all the vertices of $\intb{U}$ have the same parity (in \cite{PHom}
a different convention is used, calling a set $U$ odd if all
vertices of $\intb{U}$ are odd). The internal and external vertex
boundaries of an odd set of vertices $U\subsetneq T_{10}^2$, as well
as $U^+$ and $U^-$, are depicted in Figure~\ref{fig: U stuff}.

\begin{figure}[htb!]
\centering%
    \rput(9.25,4.3){$U^+$}
    \rput(5.4,4.3){$U$}
    \rput(1.58,4.3){$U^-$}
      \rput(3.55,0.15){$\intb U$}
      \rput(7.3,0.15){$\extb U$}
\includegraphics[scale=0.25]{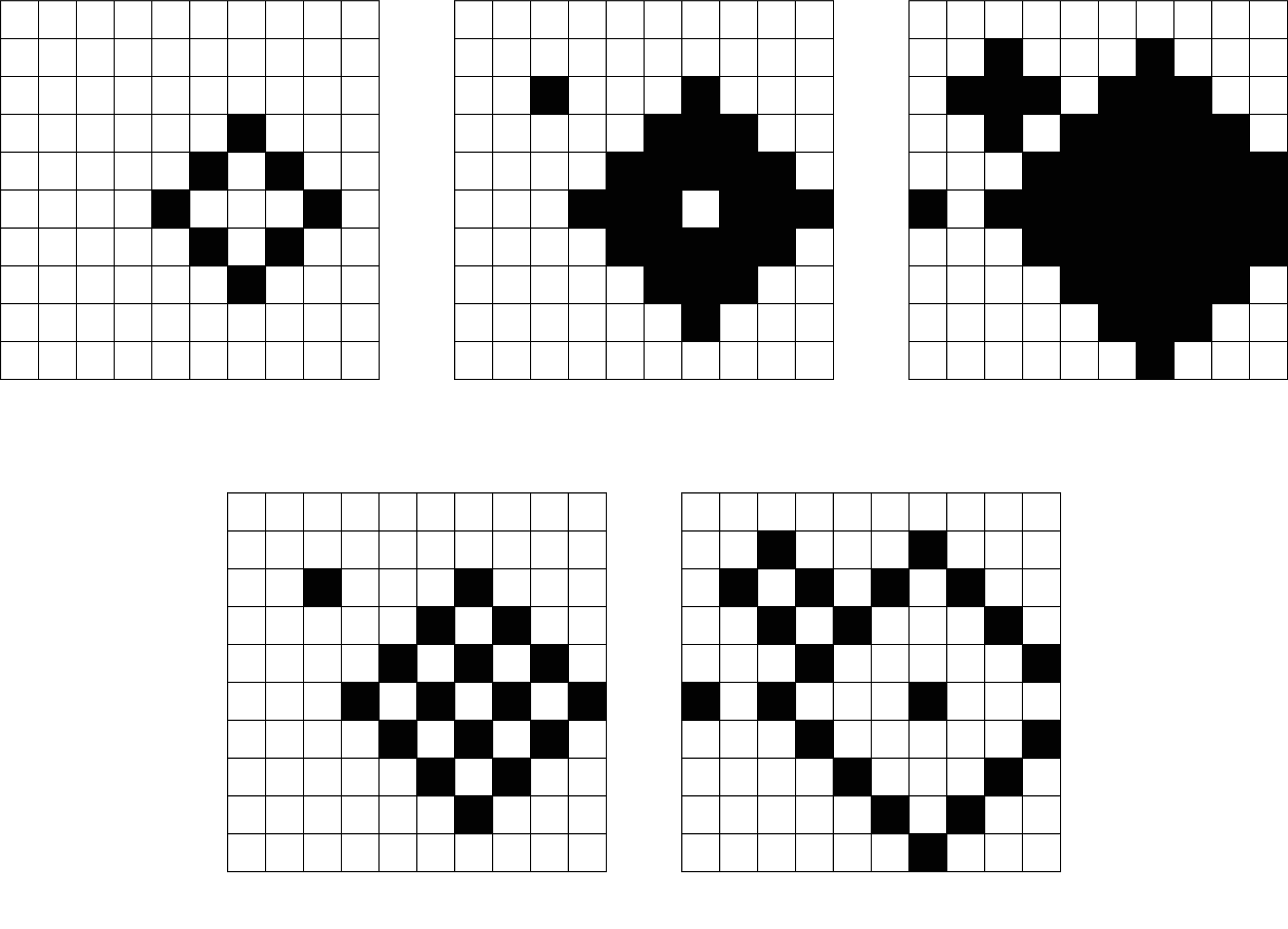}\\
\caption{Boundary operations on some odd set $U$ in $T_{10}^2$.} \label{fig: U stuff}
\end{figure}

\heading{Homomorphism height functions, $3$-colorings and
quasi-periodic functions} A proper $3$-coloring of a graph $G$ is a
function $f:V(G)\to\{0,1,2\}$ satisfying $f(v)\neq f(w)$ when
$(v,w)\in E(G)$. An integer-valued function on $V(G)$ is called a
\emph{homomorphism height function} on $G$, or simply height
function or HHF, if it differs by exactly one between adjacent
vertices of $G$. We usually work with $\col(G, v_0)$ and $\hm(G,
v_0)$, the sets of colorings and height functions normalized to take
the value $0$ at the vertex $v_0$, as defined in
\eqref{eq:col_G_def} and \eqref{eq:hom_G_def}. When $G=\TT$ or $\ZZ$
we abbreviate $\col(G,\zero)$ to $\col(G)$ and $\hm(G,\zero)$ to
$\hm(G)$.

Let $V$ be either $\Z$ or $\{0,1,2\}$. We say that a function
\begin{equation*}
  \text{$f\colon\ZZ\to V$ is \emph{periodic} if $f(v)=f(w)$ whenever
$v-w=n e_i$ for some $i$}.
\end{equation*}
We denote by $\PC$ the set of periodic proper $3$-colorings in
$\col(\ZZ)$. Similarly, for an integer vector
$m=(m_1,\dots,m_d)\in\ZZ$, we say that a function
\begin{equation*}
  \text{$h\colon\ZZ\to \Z$ is \emph{quasi-periodic} with slope $m$ if $f(v)=f(w)+m_i$ whenever $v-w=n e_i$ for some $i$}.
\end{equation*}
We write $\QP_m$ for the set of quasi-periodic HHFs with slope $m$
in $\hm(\ZZ)$. Note that for an HHF, being periodic is equivalent to
being quasi-periodic with slope $\zero$. We remark that our
definition of slope is not completely standard and it may be equally
natural to say that a quasi-periodic function with slope $m$,
according to our definition, has, in fact, slope $\frac{1}{n}\cdot
m$. Our definition is chosen as it is convenient to work with
integer vectors, keeping in mind that $n$ is fixed throughout the
paper.

Observe that, in fact,
\begin{equation}\label{eq:QP_m_restrictions}
  \text{$\QP_m=\emptyset$ if $m\notin 2\ZZ$ or if $|m_i|>n$ for some
  $i$}.
\end{equation}
To see this, note that any $h\in\hm(\ZZ)$ must take even values on
even vertices, and satisfy $|h(v)|\le\dist(v,0)$, since $h$ changes
by one between adjacent vertices. Thus, we must have that $m_i=h(n
e_i)$ is even and $|h(n e_i)|\le n$ for all $i$. The quasi-periodic
functions whose slope is not a multiple of $6$ will not play a role
in our work, as we show in Proposition~\ref{prop: QP2}. Thus we
define
\begin{equation}\label{eq: bound on slopes}
\QP:= \bigcup_{m\in 6\ZZ\cap [-n,n]^d}\QP_m.
\end{equation}

Denote by $\pi\colon\ZZ\to\TT$ the natural projection from the integer
lattice to the torus, defined by
\begin{equation*}
\pi((v_1,\ldots, v_d)) = (v_1\bmod n,\ldots, v_d\bmod n)
\end{equation*}
(where we identify the coordinate system of the torus with
$\{0,\ldots,n-1\}^d$). Observe that $\pi$ extends naturally to a
bijection between periodic proper $3$-colorings (of $\ZZ$) and
proper $3$-colorings of $\TT$, as well as to a bijection between
periodic HHFs (on $\ZZ$) and HHFs on $\TT$. With a slight abuse of
notation we also denote these extensions by $\pi$.

\heading{Relations between HHFs and $3$-colorings} It is not
difficult to see that the mapping $\mt$, which takes an HHF $h$ to
the function defined by
\begin{equation*}
\mt(h)(v) := h(v)\bmod 3,
\end{equation*}
maps every HHF to a proper $3$-coloring. As mentioned in the
introduction, it is a known fact that $\mt$ defines a bijection
between $\hm(\ZZ)$ and $\col(\ZZ)$, that is between the set of HHFs
on $\ZZ$ normalized at $\zero$ and the set of proper $3$-colorings
of $\ZZ$ normalized at $\zero$. As we could not locate a reference
for this fact, we provide a short proof now.

\begin{propos}\label{prop:col height bijection}
The map $\mt$ defines a bijection between $\hm(\ZZ)$ and
$\col(\ZZ)$.
\end{propos}
\begin{proof}
We first check that $\mt$ is an injective map. Suppose $h_1,
h_2\in\hm(\ZZ)$ are two distinct height functions with $\mt(h_1) =
\mt(h_2)$. As $h_1(\zero) = h_2(\zero) = 0$, it follows that there
exist two adjacent vertices $v,w\in\ZZ$ satisfying that
$h_1(v)=h_2(v)$ but $h_1(w)\neq h_2(w)$. However, as $|h_1(v) -
h_1(w)| = |h_2(v) - h_2(w)| = 1$, this contradicts our assumption
that $\mt(h_1)(w) = \mt(h_2)(w)$.

We proceed to show that $\mt$ is onto. Let $f\in\col(\ZZ)$. Our goal
is to define an $h\in\hm(\ZZ)$ satisfying that $\mt(h) = f$. First,
define a spanning tree $\T$ of $\Z^d$, rooted at $\zero$, as
follows: Given $v=(v_1,\ldots, v_d)\in\ZZ\setminus\{\zero\}$ let
$k(v)$ equal the minimal $k$ for which $v_k\neq 0$. Define the
parent $v^*$ of $v$ in $\T$ by setting $v^*_j = v_j$ for all $j\neq
k(v)$ and setting $v^*_{k(v)} = v_{k(v)} - 1$ if $v_{k(v)}>0$ or
$v^*_{k(v)} = v_{k(v)} + 1$ if $v_{k(v)}<0$, noting that $v^*\sim v$
and $\dist(v^*,\zero) = \dist(v,\zero)-1$. Now define $h(v)$ by
induction on $\dist(v,\zero)$. Set $h(\zero) := 0$ and, for
$v\in\ZZ\setminus\{\zero\}$,
\begin{equation}\label{eq:h_from_f_def}
  \text{set $h(v)$ to be the unique integer satisfying $|h(v) -
  h(v^*)|=1$ and $h(v)\equiv f(v)\pmod 3$}.
\end{equation}
As we clearly have $\mt(h) = f$, it remains to verify that
$h\in\hm(\ZZ)$.

Let $v,w\in\ZZ$ be adjacent vertices. We need to show that
\begin{equation}\label{eq:hom_condition_for_bijection}
|h(v) - h(w)|=1.
\end{equation}
Assume without loss of generality that
$\dist(v,\zero)=\dist(w,\zero)+1$. We proceed again by induction on
$\dist(v,\zero)$. If $w=\zero$ or $v_j\neq w_j$ for some $j\le k(w)$
then necessarily $v^* = w$, whence
\eqref{eq:hom_condition_for_bijection} follows from
\eqref{eq:h_from_f_def}. Otherwise, observe that $v^*\sim w^*$. By
the induction assumption, $|h(v^*) - h(w^*)| = 1$. Using also the
fact that $|h(v) - h(v^*)| = |h(w) - h(w^*)| = 1$ and
$h(v)\not\equiv h(w)\pmod 3$ by \eqref{eq:h_from_f_def}, it follows
that \eqref{eq:hom_condition_for_bijection} holds, as required.
\end{proof}

This bijection does not extend to $\TT$, as there are colorings in
$\col(\TT)$ which are not the image of any HHF through $\mt$.
Nonetheless, $\col(\TT)$ is still in bijection with a subclass of
quasi-periodic functions, as the following proposition states.

\begin{propos}\label{prop: QP2}
The mapping
$\pi\circ\mt\colon\QP\to\col(\TT)$ is a bijection.
\end{propos}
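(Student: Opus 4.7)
The plan is to leverage the known bijection $\mt\colon\hm(\ZZ)\to\col(\ZZ)$ together with the lift/projection bijection $\pi$ between periodic objects on $\ZZ$ and objects on $\TT$. I will verify well-definedness, injectivity, and surjectivity of $\pi\circ\mt$ restricted to $\QP$ separately.

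\emph{Well-definedness.} Fix $h\in\QP_m$ with $m\in 6\ZZ\cap[-n,n]^d$. Since $h(v+ne_i)-h(v)=m_i$ for all $v$ and $i$, reducing modulo $3$ gives $\mt(h)(v+ne_i)\equiv \mt(h)(v)\pmod 3$; as $m_i\in 6\Z\subseteq 3\Z$, this means $\mt(h)$ is periodic, hence descends through $\pi$ to a proper $3$-coloring of $\TT$ normalized at $\zero$.

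\emph{Injectivity.} If $h_1,h_2\in\QP$ satisfy $\pi(\mt(h_1))=\pi(\mt(h_2))$, then $\mt(h_1)$ and $\mt(h_2)$ coincide as periodic colorings in $\col(\ZZ)$; since $\mt$ is a bijection on $\hm(\ZZ)$ (the second fact in Section~\ref{sec:idea_of_proof}), we conclude $h_1=h_2$.

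\emph{Surjectivity.} Given $g\in\col(\TT)$, lift it to the unique periodic coloring $\tilde g\in\col(\ZZ)$ with $\pi(\tilde g)=g$. Let $h\in\hm(\ZZ)$ be the unique height function with $\mt(h)=\tilde g$. The task is to show $h\in\QP_m$ for some $m\in 6\ZZ\cap[-n,n]^d$. Fix $i$ and define $d_i(v):=h(v+ne_i)-h(v)$. The key observation is that, on any edge $(u,w)$, the value $h(w)-h(u)\in\{-1,+1\}$ is determined by the values of $\mt(h)$ at $u,w$, since the reduction map $\{-1,+1\}\to\{-1,+1\}\bmod 3$ is a bijection. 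Applying this to the edges $(v,v+e_j)$ and $(v+ne_i,v+ne_i+e_j)$, and using periodicity of $\tilde g$ (so $\tilde g$ takes the same pair of values on both edges), yields
\begin{equation*}
h(v+ne_i+e_j)-h(v+ne_i)=h(v+e_j)-h(v),
\end{equation*}
i.e.\ $d_i(v+e_j)=d_i(v)$ for every $j$. Connectedness of $\ZZ$ then forces $d_i$ to be a constant $m_i$, so $h$ is quasi-periodic with slope $m$. Finally, $|m_i|\le n$ because $h$ differs by $\pm 1$ between neighbors and $\dist(\zero,ne_i)=n$, while $m_i\equiv 0\pmod 2$ (since $\zero$ and $ne_i$ have the same parity) and $m_i\equiv 0\pmod 3$ (since $\tilde g(ne_i)=\tilde g(\zero)=0$), so $m_i\in 6\Z$. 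Thus $h\in\QP$ and $\pi(\mt(h))=\pi(\tilde g)=g$.

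\emph{Main obstacle.} None of the steps is deep; the only point that requires a small argument is the constancy of the local difference $d_i(v)=h(v+ne_i)-h(v)$ in the surjectivity step. Everything else is a direct consequence of the bijection $\mt\colon\hm(\ZZ)\to\col(\ZZ)$ and the standard correspondence between periodic functions on $\ZZ$ and functions on $\TT$. The parity/divisibility bookkeeping needed to place the resulting slope inside $6\ZZ\cap[-n,n]^d$, as recorded in~\eqref{eq:QP_m_restrictions} and~\eqref{eq: bound on slopes}, completes the argument.
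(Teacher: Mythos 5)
Your proof is correct and follows essentially the same route as the paper's: well-definedness from $m\in 6\ZZ\subseteq 3\ZZ$, injectivity from the two underlying bijections, and surjectivity by showing the local difference $h(v+ne_i)-h(v)$ is constant in $v$ via the fact that a $\pm 1$ step is determined by its reduction mod $3$, followed by the same parity and divisibility bookkeeping (the paper delegates part of the bookkeeping to its earlier observation~\eqref{eq:QP_m_restrictions}, while you rederive it inline).
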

\begin{proof}

We first show that the mapping is well-defined. Let $h\in\QP_m$ for
some $m\in 6\ZZ$. By quasi-periodicity, $h(v)\equiv h(v+n e_i) \pmod
3$, for all $1\le i\le d$ and $v\in\ZZ$. Consequently $\mt(h)\in\PC$
and hence $\pi$ may be applied to $\mt(h)$ to produce an element of
$\col(\TT)$.

Since $\mt$ is a bijection between $\hm(\ZZ)$ and $\col(\ZZ)$ and
$\pi$ is a bijection between $\PC$ and $\col(\TT)$, we deduce that
$\pi\circ\mt$ is one-to-one on $\QP$. All that remains in order to
show that this mapping is a bijection, is to prove that it is onto.

Let $f\in\col(\TT)$. Define $g:=\pi^{-1}(f)\in\PC$ and an HHF $h$ by
$h:=\mt^{-1}(g)$. We need to show that $h\in \QP_m$ for some $m\in
6\ZZ\cap [-n,n]^d$. We first show that for any $v,w\in \ZZ$ and
$1\le i\le d$,
\begin{equation*}
  h(v + ne_i) - h(v) = h(w + ne_i) - h(w).
\end{equation*}
For this it suffices to show that for any $v\in\ZZ$ and $1\le i,j\le
d$,
\begin{equation}\label{eq:quasi_periodic_cond}
  h(v + ne_i) - h(v) = h(v + e_j + ne_i) - h(v + e_j).
\end{equation}
Since $h(v+e_j) - h(v)$ and $h(v+e_j + ne_i) - h(v+ne_i)$ are both
in $\{-1,1\}$ by the definition of homomorphism height function, the
equality \eqref{eq:quasi_periodic_cond} follows upon recalling that
$g = \mt(h)$ and noting that
\begin{equation*}
g(v + e_j) - g(v) = g(v + e_j + ne_i) - g(v + ne_i),
\end{equation*}
since $g$ is periodic. Thus $h\in \QP_m$ for some $m\in \ZZ$.

It remains to show that $m\in 6\ZZ\cap [-n,n]^d$. By
\eqref{eq:QP_m_restrictions} it suffices to show that $m\in 3\ZZ$.
This follows from the fact that
\begin{equation*}
m_i=h(n e_i)\equiv g(n e_i) = g(\zero) = 0 \pmod3.\qedhere
\end{equation*}
\end{proof}

Proposition~\ref{prop: QP2} enables us to define the following
partition of $\col(\TT)$,
\begin{equation}\label{eq:col_m_def}
\col_m(\TT):= (\pi\circ\mt)(\QP_m).
\end{equation}
It also implies the important fact that $\col_\zero(\TT)$ and
$\hm(\TT)$ are in bijection via $\pi \circ
\mt^{-1}\circ\pi^{-1}$. In other words,
\begin{equation}\label{eq:col_0_characterization}
  \col_\zero(\TT) = \{f\in \col(\TT)\,:\, f\text{ is the modulo $3$ of some }
h\in\hm(\TT)\}.
\end{equation}

The relations between $\col(\TT), \hm(\TT), \QP$ and $\PC$ are
summarized in Figure~\ref{fig: QP COL}.

\begin{figure}[htb!]
\centering%
    \rput(4.48,9.9){$\col_{(6,0)}(\TT)$}
    \rput(2.45,8.55){$\col_{\zero}(\TT)$}
    \rput(14.55,4){$\QP_{(6,0)}$}
    \rput(11.25,1){$\QP_{\zero}$}
    \rput(3.35,11.2){\large$\col(\TT)$}
    \rput(12.85,11.2){\large$\hm(\TT)$}
    \rput(3.35,5.15){\large$\PC$}
    \rput(12.85,5.15){\large$\QP$}
    \rput(8.2,10.2){\large$\mt$}
    \rput(8.2,2.28){\large$\mt^{-1}$}
    \rput(3.49,6.5){\huge$\pi^{-1}$}
    \rput(12.85,6.5){\huge$\pi$}
\includegraphics[scale=0.19]{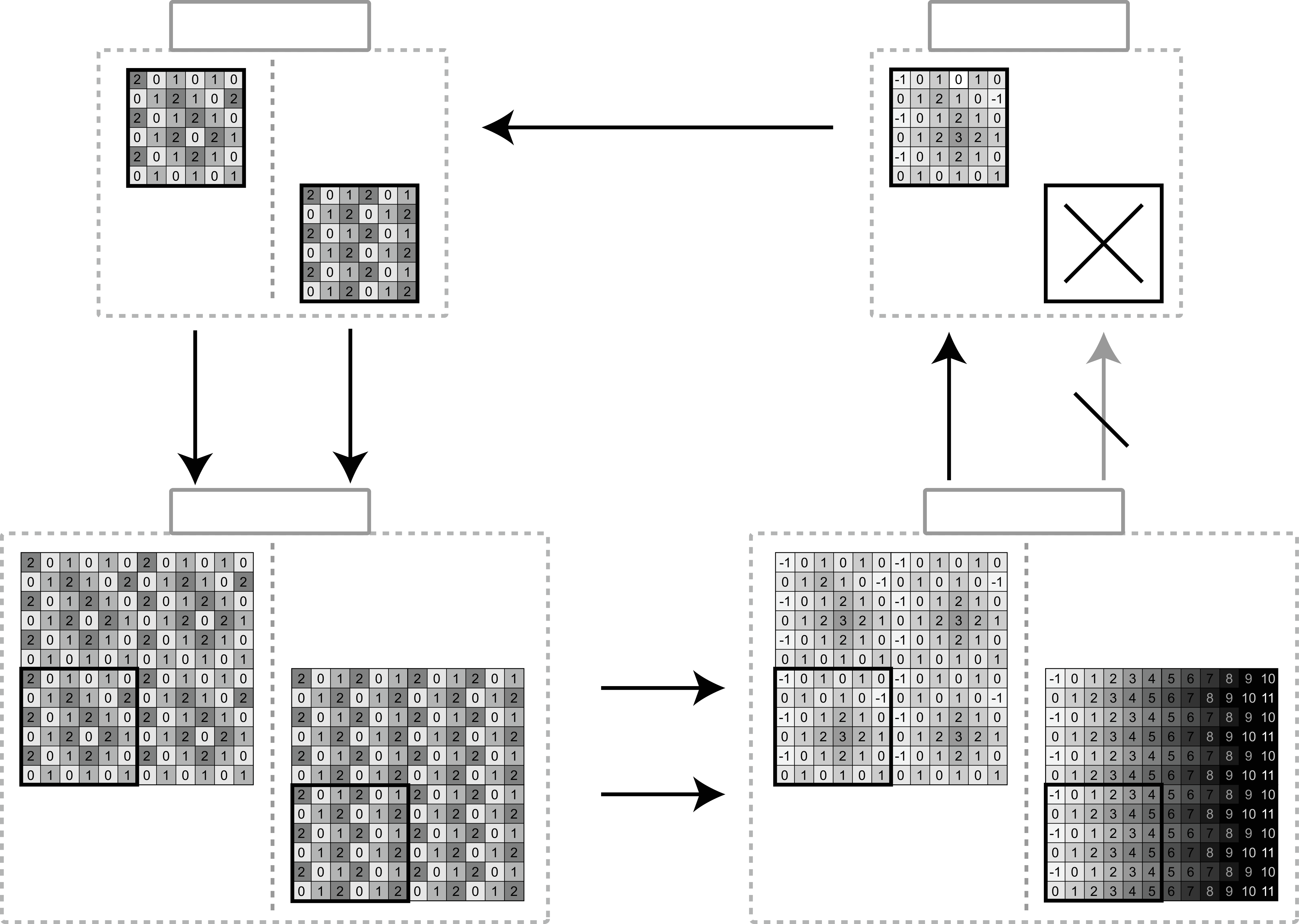}\\
\caption{The relations between $\col(\TT)$ and $\hm(\TT)$ through
periodic colorings and quasi-periodic HHFs on $\ZZ$. Notice that for
$\PC$ and $\QP$ only a small region of the infinite lattice is
illustrated. All functions are normalized at $\zero$, at the lower
left corner of the displayed region. The illustrations depict the
case $n=6$, $d=2$.} \label{fig: QP COL}
\end{figure}

\subsection{Most elements of $\QP$ are in $\QP_{\zero}$}

The following theorem states that most elements of $\QP$ have slope
$\zero$. This is equivalent to stating that most elements of
$\col(\TT)$ are in $\col_\zero(\TT)$.

\begin{theorem}\label{thm: QPm QP0 embedding}
There exist $d_0$ and $c>0$ such that in all dimensions $d\ge d_0$,
for every $m\in6\ZZ\setminus\{\zero\}$ we have
\begin{equation}\label{eq:QP_m_QP_0_rel}
\frac{|\QP_m|}{|\QP_\zero|}\le \exp(-c_d n^{d-1}),
\end{equation}
with $c_d = \frac{c}{d\log^2 d}$.
\end{theorem}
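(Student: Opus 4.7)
The plan is to prove the bound by a slope-reducing injection. For every nonzero $m\in 6\ZZ\cap [-n,n]^d$, I would construct a map assigning to each $h\in\QP_m$ a family $\mathcal L(h)$ of ``admissible level surfaces'' in $\TT$, and an associated modification $\Phi_m(h,L)\in\QP_{m'}$ for each $L\in\mathcal L(h)$, where $m'\in 6\ZZ$ has strictly smaller $\ell_1$-norm. If the combined map $(h,L)\mapsto \Phi_m(h,L)$ is injective and $|\mathcal L(h)|\ge \exp(c_d n^{d-1})$ for every $h$, then $|\QP_m|\cdot\exp(c_d n^{d-1})\le|\QP_{m'}|$; iterating this along a path of decreasing slopes from $m$ down to $\zero$ (of length at most $dn$) yields the claimed ratio bound, since each step contributes the same exponential factor while the polynomial overhead is negligible.

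The surfaces in $\mathcal L(h)$ arise from level sets of $h$. Assume WLOG $m_1>0$. Because $h(v+ne_1)=h(v)+m_1$, each level set $L_k(h):=\{v\in\ZZ:h(v)\ge k\}$ satisfies $L_{k+m_1}(h)=L_k(h)+n e_1$, so its projection to $\TT$ has a boundary that wraps nontrivially around the torus in direction $e_1$, and by the standard isoperimetric inequality on $\TT$ this boundary contains at least $n^{d-1}$ edges. Distinct heights $k$ in an appropriate range yield distinct candidate surfaces per $h$. The modification $\Phi_m(h,L)$ would be a surgical operation that reflects or folds $h$ across the surface $L$, producing a valid HHF whose slope in direction $e_1$ is strictly smaller than $m_1$ in absolute value, while preserving the slope in all other coordinates.

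For the reflected function to actually be quasi-periodic, the surface $L$ must be of a very specific topological type: winding only in the $e_1$ direction, with parity compatible with the translation subgroup $n\ZZ$, and free of pinch points or stray handles. This is where the \emph{trichotomy theorems} of Section~\ref{sec: structure of sets}, namely Theorem~\ref{thm: pair_trich} and Theorem~\ref{thm: main trichotomy}, come in: they furnish a fully discrete classification of codimension-one subsets into three types, only one of which is ``good'' for the flip, and they allow us to extract from each $h$ a family $\mathcal L(h)$ of good-type surfaces of the required exponential cardinality. Injectivity of $(h,L)\mapsto\Phi_m(h,L)$ would be established by showing that $L$ persists as a distinguished codimension-one ``scar'' in the level structure of $\Phi_m(h,L)$, from which one recovers both $L$ and, by undoing the flip, the original $h$. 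The lower bound $|\mathcal L(h)|\ge \exp(c_d n^{d-1})$ would be obtained by combining the abundance of admissible heights $k$ provided by the trichotomy with the high-dimensional cutset-counting estimates from \cite{PHom}, which are precisely what produces the quantitative factor $c_d=c/(d\log^2 d)$.

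The main obstacle, and evidently the bulk of the paper's technical content, is the proof of the trichotomy theorems themselves: a careful discrete analog of codimension-one classification from algebraic topology, needed to rule out pathological level-set geometries on the torus. A secondary but still substantial difficulty is verifying that the reflection/folding operation genuinely yields a quasi-periodic HHF with slope reduced in the expected manner — this tightly constrains the admissible surfaces $L$ and is likely where the restriction from slopes in $2\ZZ$ to slopes in $6\ZZ$ becomes essential, since only the coarser lattice allows the flip to interact cleanly with the translation group $n\ZZ$ near $\partial L$. Once the trichotomy and the cutset counting are in place, the slope-reducing injection assembles them into the theorem.
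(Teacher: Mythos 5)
There is a genuine gap at the heart of your counting scheme. Your plan needs, for every single $h\in\QP_m$, a family $\mathcal L(h)$ of admissible level surfaces of cardinality at least $\exp(c_d n^{d-1})$, so that the injection $(h,L)\mapsto\Phi_m(h,L)$ forces $|\QP_m|\exp(c_d n^{d-1})\le|\QP_{m'}|$. But no such exponential multiplicity is available: for the folded function to remain an HHF, the surface $L$ is essentially forced to be (the boundary of) an actual level component of $h$, and a single fixed $h$ has only polynomially many relevant ones — the admissible heights $k$ range over an interval of length at most $m_1\le n$, and translates project to the same surface on $\TT$. The cutset-counting estimates of \cite{PHom} cannot rescue this: they bound how \emph{rare} it is for a uniformly random HHF on $\TT$ to possess a level component with long boundary (Theorem~2.8 there), i.e.\ they count across the ensemble of functions; they do not manufacture exponentially many distinct surfaces inside one function. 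So the inequality $|\mathcal L(h)|\ge\exp(c_d n^{d-1})$ is unobtainable and the iteration never gets off the ground.

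The paper's argument reverses the direction in which the exponential factor is harvested. One constructs a \emph{single} injection $\Psi_m\colon\QP_m\to\QP_\zero$ (gradient reversal between two carefully chosen type-$0$ level components $U_0\subsetneq W_0$ and their $\Delta$-translates, with invertibility coming from recovering $W_0,U_0,\Delta,\delta$ from the image via Proposition~\ref{prop: two HHFs, same LC 2}), and then shows that every function in the image, once projected to $\TT$, has a sublevel component whose boundary contains at least $n^{d-1}$ edges (Lemma~\ref{lem: projection of LC boundary} combined with Lemma~\ref{lem: type 0 steep has long boundary in increasing direction} — this is where your correct observation that type-$0$ surfaces wind around the torus is used). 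The probabilistic bound from \cite{PHom} then says that such functions form a fraction at most $2d^2n^d\exp\bigl(-cn^{d-1}/(d\log^2 d)\bigr)$ of $\hm(\TT)\cong\QP_\zero$, which gives \eqref{eq:QP_m_QP_0_rel} in one step, with no slope-by-slope iteration. In short: the exponential gain comes from the \emph{rarity of the image} under the uniform measure on periodic HHFs, not from an entropic multiplicity of flips per function; your use of the trichotomy theorems to single out the ``good'' winding surfaces and your reflection mechanism are in the right spirit, but the quantitative engine of the proof is the \cite{PHom} level-set estimate applied to $\Psi_m(\QP_m)$, and without it (or with the multiplicity claim as stated) the argument does not close.
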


Our techniques also allow us to obtain a stronger version of
Theorem~\ref{thm: QPm QP0 embedding}. This version is not required
for the proof of Theorem~\ref{thm: almost bijection}, but is of
independent interest as it significantly improves the bound on the
size of $\QP_m$ when $m$ has a large coordinate. For clarity of
presentation, most of the paper is devoted to the proof of
Theorem~\ref{thm: QPm QP0 embedding} and the necessary modifications
required to obtain Theorem~\ref{thm: QP steep QP0 embedding} are
then detailed in Section~\ref{sec: Steep slopes}.

\begin{theorem}\label{thm: QP steep QP0 embedding}
There exist $d_0$ and $c>0$ such that in all dimensions $d\ge d_0$,
for every $m\in6\ZZ\setminus\{\zero\}$ we have
\begin{equation}\label{eq:QP_m_steep_QP_0_rel}
\frac{|\QP_m|}{|\QP_\zero|}\le \exp\left(-c_d n^{d-1}\cdot\max_{1\le
i\le d} |m_i|\right),
\end{equation}
with $c_d = \frac{c}{d\log^2 d}$.
\end{theorem}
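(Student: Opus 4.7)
The plan is to iterate the injection underlying the proof of Theorem~\ref{thm: QPm QP0 embedding}. By symmetry we may assume $m_1 = \max_i |m_i| =: M > 0$. The original argument extracts from any $h\in\QP_m$ a canonical essential codimension-one cutset in $\TT$ transverse to the $e_1$-direction, and uses its data together with a local modification of $h$ near it to build an injection $\QP_m \hookrightarrow \QP_{m'}\times\Omega$, where $m'$ agrees with $m$ outside the first coordinate, satisfies $m'_1 < m_1$, and $|\Omega|\le\exp(c_d n^{d-1})$. This bound on $|\Omega|$ is the principal entropic output of the paper, obtained via the trichotomy theorems of Section~\ref{sec: structure of sets}.

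Because any $h\in\QP_m$ gains $M$ in height along a straight $e_1$-path of length $n$, it admits at least $\Theta(M)$ essential $e_1$-cutsets, sitting at well-separated level values, hence in disjoint slabs of $\TT$. I would therefore perform the single-cutset extraction and modification $\Theta(M)$ times in succession in the $e_1$-direction, obtaining a chain of injections
\begin{equation*}
\QP_m \;\hookrightarrow\; \QP_{m^{(1)}}\times\Omega \;\hookrightarrow\; \QP_{m^{(2)}}\times\Omega^2 \;\hookrightarrow\;\cdots\;\hookrightarrow\; \QP_{\tilde m}\times\Omega^{\Theta(M)},
\end{equation*}
where each $m^{(j)}$ agrees with $m$ outside the first coordinate and $\tilde m_1 = 0$. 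Since $\tilde m$ has vanishing first coordinate, Theorem~\ref{thm: QPm QP0 embedding} (applied as an outer induction on the remaining coordinates, or directly if $\tilde m = \zero$) reduces further to $\QP_\zero$ at an additional multiplicative cost of $\exp(c_d n^{d-1})$. Composing all losses yields a total factor of $\exp(c_d n^{d-1}\cdot M)$, establishing \eqref{eq:QP_m_steep_QP_0_rel} after absorbing numerical constants into $c$.

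The main obstacle will be verifying that the iterated extraction is genuinely injective, i.e.\ that the full sequence of cutsets can be uniquely decoded from the final function together with the recorded tuple in $\Omega^{\Theta(M)}$. Three points must be checked: (i) a canonical rule, such as \emph{always extract the cutset at the lowest available essential level}, must select the next cutset in the sequence unambiguously; (ii) the local modification performed at one level must leave the essentialness of higher cutsets intact and create no new spurious essential cutsets, which should follow from the disjointness of the level sets and the fact that each modification is supported in a thin slab around its cutset; and (iii) each intermediate function must still lie in the correct $\QP_{m^{(j)}}$, so that the machinery of Theorem~\ref{thm: QPm QP0 embedding} can be reapplied. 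These verifications should be largely routine given the topological framework already developed for Theorem~\ref{thm: QPm QP0 embedding}; the substantive mathematical content — the topological analysis and entropic estimate on a single cutset — is already present there, and the steep-slope bound is bought essentially for free by repeating that analysis level-by-level.
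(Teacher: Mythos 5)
Your proposal takes a genuinely different route from the paper, and unfortunately it has real gaps — not routine verifications.

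First, you misdescribe the injection underlying Theorem~\ref{thm: QPm QP0 embedding}. The map $\Psi_m$ is not a ``local modification of $h$ near'' a single cutset, nor does it land in some $\QP_{m'}\times\Omega$ with $m'_1<m_1$. It reverses the gradient of $h$ in the region $U_{i+1}\setminus W_i$ for \emph{every} $i\in\Z$ simultaneously --- i.e.\ in roughly half of $\ZZ$ --- and its target is $\QP_\zero$ directly, in one step. There is no intermediate slope reduction, no auxiliary set $\Omega$, and no thin slab. Consequently, your point~(ii) (that the modification is ``supported in a thin slab around its cutset'' and hence leaves higher cutsets intact) starts from a false premise, and there is no off-the-shelf one-step map $\QP_m\hookrightarrow\QP_{m'}\times\Omega$ to iterate. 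Building such a partial slope-reducer (say, reversing in only one slab per period) is plausible but would be new work: one must fix which slab, check it yields a quasi-periodic HHF, and set up a fresh decoding argument at each stage; none of this is supplied by the existing framework.

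Second, and more fundamentally, even if one had the injections, simply composing them does not produce the claimed bound. Theorem~\ref{thm: QPm QP0 embedding} is not proved by exhibiting a factor $\Omega$ with $|\Omega|\le\exp(c_dn^{d-1})$; it is proved by showing $\Psi_m(\QP_m)$ lands inside the set of HHFs on the torus possessing a sublevel component with boundary $\ge n^{d-1}$, and then invoking a probabilistic estimate for that event (Corollary~\ref{cor: long level set estimate}). For a composed chain to gain a \emph{multiplicative} $\exp(-c_d n^{d-1})$ per step, the image of each successive map would have to be small \emph{relative to the image of the previous one}, not merely small in $\QP_{m^{(j)}}$. Without a joint estimate this does not follow. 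The paper handles exactly this issue without iterating: it applies the \emph{same} single map $\Psi_m$, but shows (via the nested components $V_0^0\supsetneq\cdots\supsetneq V_0^{p-1}$ sandwiched between $W_0-\Delta$ and $W_0$, together with Lemma~\ref{lem: type 0 steep has long boundary in increasing direction}) that the image $r=\pi\circ\Psi_m(h)$ has $p=\delta/6$ nested sublevel components separating two fixed vertices, each with projected boundary $\ge\ell n^{d-1}$, for a total $\ge\sigma n^{d-1}$ with $\sigma=\ell p=m_1/6$. It then uses the \emph{joint} large-deviation bound of Proposition~\ref{propos: steep long level set estimate} (derived from equation~(72) of \cite{PHom}) on the probability that $p$ nested level components have total boundary $\ge L$. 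The heavy lifting is precisely in this joint estimate; it is not ``bought essentially for free'' by repeating the single-component analysis level by level.
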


In thermodynamic terms, a consequence of this theorem is that the
surface tension is non-differentiable at $\zero$ as a function of
the normalized slope $s=\frac{1}{n}\cdot m$. More precisely, for
each $s\in\RR$, the theorem implies that
\begin{equation*}
\limsup_{n\to\infty} \frac{1}{|\TT|} \log\left(\frac{|\col_{\lfloor
s\cdot n\rfloor}(\TT)|}{|\col(\TT)|}\right) = \limsup_{n\to\infty}
\frac{1}{|\TT|} \log\left(\frac{|\QP_{\lfloor s\cdot
n\rfloor}|}{|\QP|}\right)\le -c_d \max_{1\le i\le d} |s_i|
\end{equation*}
while
\begin{equation*}
\lim_{n\to\infty} \frac{1}{|\TT|}
\log\left(\frac{|\col_{\zero}(\TT)|}{|\col(\TT)|}\right) =
\lim_{n\to\infty} \frac{1}{|\TT|}
\log\left(\frac{|\QP_{\zero}|}{|\QP|}\right)=0.
\end{equation*}

Given \eqref{eq: bound on slopes}, we observe that Theorem~\ref{thm:
QPm QP0 embedding} and Theorem~\ref{thm: QP steep QP0 embedding} are
trivial for $n\le 4$, as in those cases $\QP_m$ is empty for
$m\neq\zero$. Thus, we shall assume $n\ge 6$ in the proofs of these
theorems.

Theorem~\ref{thm: almost bijection} is an immediate consequence of
(and is, in fact, equivalent to) Theorem~\ref{thm: QPm QP0
embedding}.

\begin{proof}[Proof of Theorem~\ref{thm: almost bijection} from Theorem~\ref{thm: QPm QP0 embedding}]
By symmetry, it is enough to prove Theorem~\ref{thm: almost
bijection} for colorings normalized at $\zero$. That is, to
establish that for sufficiently large $d$, if $f$ is uniformly
sampled from $\col(\TT)$ then
\begin{equation}\label{eq:almost_bijection_reformulation}
  \P\left(f\text{ is not the modulo $3$ of some }h\in\hm(\TT)\right)\le \exp\left(-\frac{c}{d\log^2 d}
n^{d-1}\right).
\end{equation}

Suppose then that $f$ is uniformly sampled from $\col(\TT)$. By
Proposition~\ref{prop: QP2}, \eqref{eq: bound on slopes},
\eqref{eq:col_m_def} and \eqref{eq:col_0_characterization},
\begin{multline*}
\P\left(f\text{ is not the modulo $3$ of some }
h\in\hm(\TT)\right)=\frac{\Big|\bigcup_{m\in (6\ZZ\cap
[-n,n]^d)\setminus\{\zero\}}
\col_m(\TT)\Big|}{|\col(\TT)|} =\\
= \frac{\Big|\bigcup_{m\in (6\ZZ\cap [-n,n]^d)\setminus\{\zero\}}
\QP_m\Big|}{|\QP|}\le (2n+1)^d \max_{m\in 6\ZZ\setminus \{\zero\}}
\frac{|\QP_m|}{|\QP_\zero|}.
\end{multline*}
Thus \eqref{eq:almost_bijection_reformulation} follows from
Theorem~\ref{thm: QPm QP0 embedding}.
\end{proof}

\subsection{Proof overview}\label{subs: Prem: Proof overview}

Most of the remainder of the paper is dedicated to proving
Theorem~\ref{thm: QPm QP0 embedding}. Our proof can be divided into
two parts. First we construct a set of one-to-one mappings, $\Psi_m
: \QP_m \to \QP_\zero$ for $m\in6\ZZ\setminus\{\zero\}$. We then apply
results from \cite{PHom} to show that the image of $\QP_m$ under
$\Psi_m$ is relatively small. Theorem~\ref{thm: QPm QP0 embedding}
follows. In this section we present for the reader a rough sketch of
the idea behind the construction of $\Psi_m$.

Let us first explain (a minor variant of) the construction of
$\Psi_m$ in dimension $d=1$, where it is rather simple. Suppose that
$h$ is a $1$-dimensional quasi-periodic HHF with slope $6\cdot
\ell>0$ (the case that the slope is negative is treated
analogously). One can look for the minimal $w\ge 0$ such that
$h(w)=2$ and for the maximal $u\le 0$ such that $h(u)=-3 \ell+2$.
Since $h$ has slope $6\ell$ it follows that $w-u<n$. Thus, we may
partition $\mathbb{Z}$ to segments of the form $(u+in, w+in]$ and
$(w+in, u+(i+1)n]$, $i\in\mathbb{Z}$. We may then define, for $v\in
\mathbb{Z}$,
$$\Psi_{6\ell}(h)(v)=\begin{cases}
h(v) - 6 i \ell & \text{$u+in \le v \le w+in$ for some $i\in \mathbb{Z}$} \\
4 - h(v) - 6 i \ell & \text{$w+in \le v \le u+(i+1)n$ for some $i\in
\mathbb{Z}$}.
\end{cases}$$
An example is shown in Figure~\ref{fig: 1 dim examp}.

It is not difficult to check that $\Psi_{6\ell}(h)$ is still an HHF,
noting that the action of $\Psi_{6\ell}$ can be seen as reversing
the gradient of $h$ between $w$ and $u+n$ and each of their
translations by multiples of $n$. Moreover, the resulting HHF will
be periodic in the sense that $\Psi_{6\ell}(h)(v+n) =
\Psi_{6\ell}(h)(v)$ for all $v\in \Z$. To see that $\Psi_{6\ell}$ is
one-to-one, one may check that $w$ is the minimal in $\mathbb{Z}_+$
satisfying $\Psi_{6\ell}(h)(w)=2$ and $u$ is the maximal in
$\mathbb{Z}_-$ satisfying $\Psi_{6\ell}(h)(u)=-3 \ell+2$. Given
$\ell$, one can thereby recover $u$ and $w$ from $\Psi_{6\ell}(h)$
and use them to recover $h$.

\begin{figure}[htb!]
\centering%
    \rput(12.4,4.3){$\Psi_6(h)$}
    \rput(3.5,4.3){$h$}
    \rput(1.1,0.8){$u$}
    \rput(2.07,0.8){$w$}
    \rput(10,0.8){$u$}
    \rput(10.93,0.8){$w$}
\includegraphics[scale=0.25]{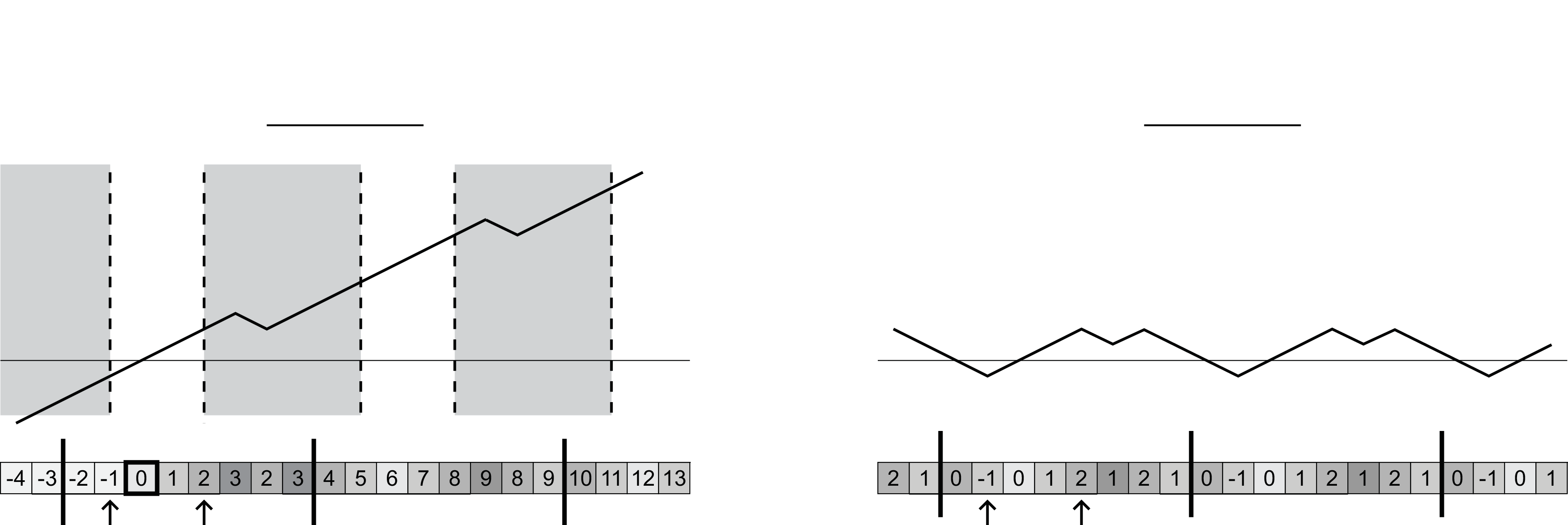}\\
\caption{On the left - an example of a one-dimensional quasi
periodic HHF with $n=8$ and slope $6$. The gray regions are the
regions where $\Psi_6$ reverses the gradient of the function. On the
right - the image of the same HHF through $\Psi_6$.} \label{fig: 1
dim examp}
\end{figure}

Generalizing this technique to higher dimensions is not immediate.
The general idea is to use the given HHF $h$ to carefully define two
sets $U,W\subseteq\ZZ$ and a vector $\Delta\in n\ZZ$ suitable for
our purposes. The set $U$ is the analog of the interval $(-\infty,
u]$ and the set $W$ is the analog of the interval $(-\infty, w]$.
Among the properties which these sets satisfy is the fact that if we
define $U_i:=U+i\Delta$ and $W_i:=W + i\Delta$ then the sets
$(W_i\setminus U_i)$ and $(U_{i+1}\setminus W_i)$ form a partition of
$\Z^d$. We then define $\Psi_m$, analogously to the above
one-dimensional case, by reversing the gradient of $h$ in the
regions $(U_{i+1}\setminus W_i)$, see \eqref{eq:Psi_m_def}. The main
difficulty is to find such sets $W,U$, and vector $\Delta$, for
which this operation yields a \emph{periodic} HHF, and is moreover invertible given $m$.

To show that that the size of the image of $\Psi_m$ is small compared to
$|\QP_\zero|$, we find an additional set $V$, sandwiched between $U$ and $W$ such that $\partial V$
is a level set of both $h$ and $\Psi_m(h)$.
We recall that $\pi$ defines a bijection between
\emph{periodic} HHFs on $\ZZ$ and HHFs
on $\TT$, and show that $\pi \circ \Psi_m(h)$
has a level set which contains $\pi(\partial V)$ ($\pi$ extends naturally to a mapping of the edges of $\ZZ$ to the edges of $\TT$).
After proving that $|\pi(\partial V)|\ge n^{d-1}$, we use a result from \cite{PHom}
to show that the probability that an HHF on $\TT$ contains such a long level
set is exponentially small in $n^{d-1}$. It follows that $|\QP_m|/|\QP_\zero|$ is tiny for all $m\neq \zero$.

The sets $U,V,W$ which we define are closely related to the level sets
of the function $h$ in the sense that $h$ is constant on $\intb U,\intb V,
\intb W, \extb U,\extb V$ and $\extb W$. In addition, they satisfy special
topological properties. The boundaries $\partial U$, $\partial V$ and $\partial
W$, regarded as a collection of plaquettes in $\mathbb{R}^d$, are
analogs of continuous hypersurfaces. Furthermore, the projection of
these boundaries to the torus are analogs of hypersurfaces whose
removal does not disconnect the torus.

The existence of sets $U,V,W$ satisfying all the required properties
is far from obvious. The intuition for it comes from algebraic
topology, specifically de Rham cohomology theory, and some of the
connections are explained in the next section. However, our proof
proceeds by developing the theory fully in the discrete setup. This
is achieved in sections~\ref{sec: structure of sets} and~\ref{sec:
QP properties}. This theory is then applied in Section~\ref{sec:
proof of 2.1} to define $\Psi_m$ and prove that it satisfies the
required properties.

To get a feeling of why the sets $U$ and $W$ exist, it may help to
think first of continuous linear functions on $\mathbb{R}^d$. A
multidimensional linear function is always simply a projection on
its gradient vector. Such a linear function could be made periodic
by periodically reversing its gradient between two hyperplanes which
are perpendicular to the gradient vector. These hyperplanes are the
analogs of $\partial W$ and $\partial U$. This case is therefore
very similar to the one-dimensional case. Algebraic topology tells
us that every continuous function is a deformation of a linear
function. Thus, a guiding intuition may be that for more general
functions, the above hyperplanes are deformed into some
hypersurfaces, and hence should still exist.

\subsection{Relation with topology}\label{subs: topology}

The proof of Theorem~\ref{thm: QPm QP0 embedding} is motivated by
ideas from algebraic topology. One element of the proof that might
puzzle a reader who lacks topological background is our ability to
find a domain, bounded by two hypersurfaces, such that reversing the
gradient in translated copies of this domain suffices to make our
HHF periodic. We dedicate this short section to highlight some of
the analogies between concepts of the proof and their continuous
topological counterparts and shed some light on this particular
point.

We begin with a brief review of concepts from de Rham cohomology
theory. A $0$-form on a manifold is simply a smooth function. A
$1$-form is a differential form which can be integrated against
paths. On Riemannian manifolds a $1$-form can be identified with a
vector field through the Riemannian metric. A $1$-form is called
\emph{closed} if it satisfies that its integral over contractible
loops is $0$. The gradient of a $0$-form is always a closed
$1$-form, and, locally, the converse is also true. Globally,
however, on non-contractible manifolds such as the torus, there are
many closed $1$-forms which are not the gradient of any $0$-form.
The group of closed $1$-forms modulo the gradients of the $0$-forms
is called the first de Rham cohomology group of the manifold.

In the context of our work, $0$-forms correspond to HHFs on the
torus. Closed $1$-forms correspond to proper $3$-colorings of the
torus, in the sense that, locally, they describe the discrete
gradient of an HHF. In the continuous torus every closed $1$-form is
locally the gradient of a $0$-form. Similarly, in the discrete
torus, every 3-coloring is locally the gradient of an HHF. However,
the local information does not always add up to form the global
structure of an HHF.

Algebraic topology tells us that the first de Rham cohomology
measures this global obstruction, in the sense that a $1$-form
corresponds to the zero class of the cohomology group if and only if
it is globally the gradient of a $0$-form. The first de Rham
cohomology of the $d$-dimensional torus is $\RR$. The class of any
given $1$-form can be identified by the integral of the form
over a loop in each of the standard basis directions. In the
terminology of this paper, this vector of integrals is called the
\emph{slope} of the form. Another way to represent the slope of a
$1$-form is to look at its pullback to what is called
the \emph{universal cover} of our
space. In the case of the torus we look at quasi-periodic functions
over $\RR$. Taking this point of view, the slope is the vector of
differences between the quasi-periodic function at standard basis
points and at $0$.

Poincar\'{e} duality identifies $H^1$, the first cohomology group of
the torus, with $H_{d-1}$, the $(d-1)$-th homology group of the
torus, which corresponds, if the slope consists of integers, to a
class of hypersurfaces of codimension~1. The duality further tells
us that for every nice enough $1$-form in a class of $H^1$, there
exist hypersurfaces in the dual class in $H_{d-1}$, orthogonal to
the gradient of the form and with the following property. Cutting
the torus along such a hypersurface leaves the torus connected, but
nullifies the cohomology class, i.e., on the cut torus the $1$-form
becomes the gradient of a $0$-form.

Much of the above description carries over to the discrete case.
Here too, we match proper 3-colorings with quasi-periodic HHFs, and
classify them according to their slope. We find ``level sets'',
corresponding to the above hypersurfaces, along which one may cut
the torus, that is, remove the corresponding edges, to make the
coloring the gradient of an HHF. We consider two such level sets
with a specific height difference. Deleting the edges of these level
sets splits the torus into two connected components such that on
each component, the coloring is the gradient of an HHF. Since the
height of the HHF is constant along each boundary of the cut torus
(as we have cut along level sets), we may reverse the gradient of
the coloring on one of the connected components of the cut torus to
obtain a coloring which is globally the gradient of an HHF (here,
our specific choice of the height difference of the level sets
enters). This illustrates the operation of $\Psi_m$. In practice, we
transfer most of the topological part of the proof to statements
involving HHFs on $\ZZ$, the universal cover of the torus. This
gives us more direct access to the level sets.

The main difficulties in our task are to define the level sets in
the discrete setup and to do so in such a way that would allow their
recovery after applying the gradient-reversal operation. As
mentioned above, the topological arguments are applicable to nice
functions, with nice level sets. In the discrete setting the level
sets are made out of plaquettes that can have complicated
intersections, of various dimensions. Proving that discrete level
sets still possess a nice structure requires the theory developed in
sections~\ref{sec: structure of sets} and~\ref{sec: QP properties}.

It remains unclear whether it is possible to avoid any combinatorial
argument in our proof, and use only topology. One can hope to
achieve this either by defining a clever discrete variant of the de
Rham cohomology, or by mapping the discrete problem to an analogous
question in $\RR$ with the hope of tackling it there. This, however,
is a path we did not pursue.

\section{Closed Hypersurfaces in $\ZZ$}\label{sec: structure of sets}

In this section we introduce a class of subsets of $\ZZ$ and discuss the
topological properties of its members. The definitions and results are
inspired by continuous topological analogs in $\RR$ but are given directly in
the discrete setting without requiring knowledge of the continuous notions
(see Section~\ref{subs: topology} for more on the connection). We make no
mention of neither colorings nor height functions here and thus the section
may be read using only the definitions regarding set operations in
Section~\ref{sec: prelim}. The tools developed here are applied to the study
of colorings and height functions in the following section, but we believe
that they are also of independent interest and may be of use for other
purposes.

The ultimate conclusion of the discussion here, Theorem~\ref{thm:
main trichotomy} below, is a certain trichotomy for systems of
translates in $\ZZ$. This trichotomy is later applied to level sets
of quasi-periodic HHFs.

We remind the reader that in the beginning of section~\ref{sec:
prelim} we fixed an even integer $n$ for the remainder of the paper.
This integer plays the role of the side length of the torus $\TT$ in
later sections. In this section $n$ will also play a role, though
the torus $\TT$ will not be explicitly mentioned. We point out that, unlike
the rest of the paper, the results and proofs presented in this
section remain valid regardless of whether $n$ is even or odd.

The structure of the section is as follows. In Section~\ref{subs:
trich: Topology} we present the fundamental properties of the sets
that we investigate and state our two main results, in the form of
certain trichotomies. Section~\ref{subs: Trich-corr} describes
corollaries of the main results, which will be of use in our
application. The proofs of the main results are given in
Sections~\ref{sec: proof of pair tric} and~\ref{sec: proof of trans
tric}.

\subsection{Topology of $\ZZ$}\label{subs: trich: Topology}
We begin by defining three properties of sets in $\ZZ$: \emph{\bicon
ness}, \emph{boundary disjointness}, and \emph{translation
respecting}. These are repeatedly used throughout the paper.

\heading{\Bicon ness} A set $U\subseteq\ZZ$ is called \emph{\bicon}
if $U\neq \emptyset$, $U\neq \ZZ$ and $U$ and $U^c$ are connected.

A useful property of \bicon{ }sets is that their boundaries are, in a sense,
connected. Namely,

\begin{propos}\label{prop: Timar II}
If $A$ is a {\bicon} set in $\ZZ$ then $\intb A \cup \extb A$,
$A^{++}\setminus A$ and $A\setminus A^{--}$ are all connected sets.
\end{propos}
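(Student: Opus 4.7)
My plan is to prove the three connectedness assertions hierarchically. First I would establish the most basic one, that $\intb A \cup \extb A$ is connected, and then deduce the other two from it.

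The connectedness of $\intb A \cup \extb A$ for a \bicon{} set $A$ is a discrete analog of the Jordan--Brouwer statement that the boundary of a \bicon{} region in $\RR$ is connected; it is attributable to Tim\'ar (the ``Tim\'ar II'' in the label). I would prove it by the following shortcut scheme. Given $u, v \in \intb A \cup \extb A$, after moving each endpoint by at most one step (which is allowed since every $\intb A$-vertex has an $\extb A$-neighbor and vice versa) I may assume both lie in $\extb A$; by connectedness of $A^c$ join them by a path $\gamma \subseteq A^c$, and then iteratively shortcut each excursion of $\gamma$ away from $\extb A$ by a detour along the boundary. For each excursion with endpoints $x, x' \in \extb A$, pick $\intb A$-vertices $a, a'$ adjacent to $x, x'$; by connectedness of $A$ there is a path in $A$ from $a$ to $a'$, and this path can be shadowed along $\intb A \cup \extb A$ using the fact that each vertex of $\intb A$ has a neighbor in $\extb A$.

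The same scheme adapts to $A^{++} \setminus A$: every vertex of this set lies in $\extb A$ or at distance exactly $2$ from $A$ (and so has an $\extb A$-neighbor). Given two such vertices I join them by a path in $A^c$ and reroute any excursion leaving $A^{++}$ by shadowing a path in $A$ through $A^{++} \setminus A$---passing through $\extb A$ whenever the $A$-path lies in $\intb A$, and through vertices at distance $2$ from $A$ when the $A$-path dips into $A^-$. Finally, for $A \setminus A^{--}$ I note that $A^{--} = \{u : \dist(u, A^c) > 2\}$, so $A \setminus A^{--} = (A^c)^{++} \setminus A^c$; since $A^c$ is \bicon{} whenever $A$ is, this reduces to the previous case.

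\textbf{Main obstacle.} The main obstacle is the first step, the discrete Jordan--Brouwer-type result. Its proof requires carefully realizing the shortcut scheme within the nearest-neighbor adjacency of $\ZZ$ and crucially uses the connectedness of both $A$ and $A^c$ to rule out pathological configurations where no detour can be made. The remaining two claims then follow naturally, one by duality and the other by mirroring the first argument with a thickened boundary layer.
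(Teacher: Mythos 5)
Your duality step for $A\setminus A^{--}$ (via $A\setminus A^{--}=(A^c)^{++}\setminus A^c$ and the fact that $A^c$ is \bicon) matches the paper, but the ``shortcut/shadow'' scheme at the core of your plan for $\intb A\cup\extb A$ (and its adaptation to $A^{++}\setminus A$) does not work as described. The problematic claim is that a path $P\subset A$ from $a$ to $a'$ ``can be shadowed along $\intb A\cup\extb A$ using the fact that each vertex of $\intb A$ has a neighbor in $\extb A$.'' That fact does not justify the shadow: once $P$ dips into $A^-$---and in general it must, since $\intb A$ alone need not be connected in $\ZZ$ (e.g.\ for the $\ell^1$-ball $A=\{(x,y):|x|+|y|\le 10\}\subset\Z^2$, the apex $(10,0)\in\intb A$ has no $\intb A$-neighbor, yet $A$ is \bicon)---there is no well-defined way to project the interior segments of $P$ onto $\intb A\cup\extb A$, let alone to ensure the projection is a $\ZZ$-path. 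The assertion that such a shadow exists is essentially equivalent to the boundary-connectedness you are trying to prove, so the argument is circular, and the ``adaptation'' to $A^{++}\setminus A$ inherits the same circularity.

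The paper's proof is genuinely different and contains the structural input your plan is missing. For the first assertion it proves the stronger Lemma~\ref{lem: edge connectivity}: $\intb A$ is connected in the auxiliary graph $G_A$ whose edges join $u,v$ whenever some $4$-cycle of $\ZZ$ contains $\partial A$-edges through both. The key fact is that $4$-cycles span the cycle space of $\ZZ$: given a nontrivial partition $S_1\uplus S_2$ of $\intb A$, connectedness of $A$ and of $A^c$ gives a cycle $\sigma$ crossing $\partial A$ exactly once on each side, and decomposing $\sigma$ into $4$-cycles together with the even-crossing property of $\partial A$ forces some $4$-cycle to contain both an $S_1$- and an $S_2$-boundary edge, i.e.\ a $G_A$-edge across the partition. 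Connectedness of $\intb A\cup\extb A$ in $\ZZ$ follows because each $G_A$-edge is witnessed by a $4$-cycle whose boundary-touching vertices all lie in $\intb A\cup\extb A$. For $A^{++}\setminus A$ the paper does not re-run this scheme but invokes Tim\'ar's Theorem~4 directly for the set $B\subset A^{++}\setminus A$ of diagonal $A^c$-neighbors of $A$, and observes that every vertex of $A^{++}\setminus A$ has a neighbor in $B$. Your third step agrees with the paper; the first two need this $4$-cycle/cycle-space idea (or an equivalent appeal to Tim\'ar), which the shortcut plan does not supply.
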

We delay the proof of this proposition to Section~\ref{sec: proof of pair
tric}, as it requires the tools developed there.

In order to get a more intuitive grasp of the theorems and
definitions of this section the reader might find it useful to
regard $\ZZ$ as a lattice of $d$-dimensional cubes where the edges
between adjacent vertices represent plaquettes of codimension $1$.
Taking this continuous view, {\bicon} sets are analogous to
continuous sets whose boundary is a connected, oriented, closed
hypersurface. A set and its complement should be thought of as
defining opposite orientations on the same surface.

\heading{Boundary disjointness} Two sets $U_1,U_2\subseteq\ZZ$ are called
\emph{boundary disjoint}
    if
    \begin{enumerate}
    \item $\partial U_1\cap\partial U_2=\emptyset$,
    \item there is no $4$-cycle in $\ZZ$ whose vertices, in order, are
    $(v_{00},v_{01},v_{11},v_{10})$ such that $v_{00}\in U_1^c\cap U_2^c$, $v_{01}\in U_1^c\cap U_2$,
        $v_{11}\in U_1\cap U_2$ and $v_{10}\in U_1\cap U_2^c$.
    \end{enumerate}
Here and below, by a cycle in $\ZZ$ we mean a finite set $\{(u_1,
v_1),\ldots, (u_k,v_k)\}$ of distinct edges of $\ZZ$ satisfying that
$u_{i+1}=v_i$, $1\le i\le k-1$, and $u_1=v_k$. A \emph{$4$-cycle} is
a cycle with $k=4$, and by its vertices, in order, we mean $(u_1,
u_2, u_3, u_4)$.

Continuing the analogy with hypersurfaces, two sets are boundary
disjoint if their boundaries neither overlap nor intersect
transversally.

When both $U_1$ and $U_2$ are odd, as will always be the case from
Section~\ref{sec: QP properties} and on, the second condition for
boundary disjointness is trivially fulfilled, yielding the simpler
relation:
\begin{equation}\label{eq: odd disjoint}
\text{odd }U_1,U_2\text{ are boundary disjoint iff }\partial
U_1\cap\partial U_2=\emptyset.
\end{equation}

Observe that, by definition, boundary disjointness is preserved under taking
complements, i.e., if $U_1,U_2$ are boundary disjoint sets, then each of the pairs $\{U_1^c,U_2\}$,
$\{U_1,U_2^c\}$ and $\{U_1^c,U_2^c\}$ are also boundary disjoint.

The containment relations between two \bicon{ }boundary disjoint
sets are restricted by the following theorem.
\begin{theorem}\label{thm: pair_trich}\emph{(Pair trichotomy)}
If $U_1,U_2\subseteq\ZZ$ are \bicon{ }and boundary disjoint sets,
then exactly one of the following alternatives holds:
\begin{itemize}
\item $U_1 \cap U_2=\emptyset$,
\item $U_1^c \cap U_2^c=\emptyset$,
\item $U_1 \subsetneq U_2$ or $U_2 \subsetneq U_1$.
\end{itemize}
\end{theorem}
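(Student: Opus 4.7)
The plan is to prove the trichotomy in its contrapositive form: I will show that if all four ``quadrants'' $A_{ij}:=U_1^{(i)}\cap U_2^{(j)}$ (using $U^{(1)}:=U$ and $U^{(0)}:=U^c$, $i,j\in\{0,1\}$) are nonempty, a contradiction arises. The three alternatives correspond exactly to $A_{11}=\emptyset$, $A_{00}=\emptyset$, and either $A_{10}=\emptyset$ (yielding $U_1\subsetneq U_2$) or $A_{01}=\emptyset$ (yielding $U_2\subsetneq U_1$). The mutual exclusivity of these alternatives is a short check: two simultaneous inclusions would force $U_1$ or $U_2$ to be empty (violating \bicon ness), while $A_{00}=A_{11}=\emptyset$ would force $U_2=U_1^c$ and hence $\partial U_1=\partial U_2\ne\emptyset$, contradicting boundary disjointness.

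The central idea is to encode the pair $(\ind_{U_1},\ind_{U_2})$ as a single $\Z/4\Z$-valued function and to lift it to an integer-valued function on $\ZZ$. I define $\phi\colon\ZZ\to\Z/4\Z$ by assigning $0,1,2,3$ to $A_{00},A_{10},A_{11},A_{01}$ respectively. Condition~(1) of boundary disjointness guarantees that adjacent vertices of $\ZZ$ have $\phi$-values at $\Z/4\Z$-graph-distance exactly one, so $\phi$ is a graph homomorphism into the $4$-cycle graph $\Z/4\Z$. The critical observation is that condition~(2) says precisely that no $4$-cycle of $\ZZ$ surjects onto the vertex set of $\Z/4\Z$ under $\phi$; equivalently, the winding of $\phi$ around every $4$-cycle of $\ZZ$ is zero. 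Since $\ZZ$ is simply connected --- every oriented $1$-cycle is a $\Z$-linear combination of unit $4$-cycles --- the winding of $\phi$ vanishes on every cycle, and $\phi$ lifts to a graph homomorphism $\tilde\phi\colon\ZZ\to\Z$ (with $\Z$ viewed as the infinite path graph) satisfying $\tilde\phi\equiv\phi\pmod{4}$.

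With $\tilde\phi$ in hand, the contradiction reduces to interval arithmetic. Since $\tilde\phi$ is a graph homomorphism into the path $\Z$, the image of any connected subset of $\ZZ$ is an interval of $\Z$. Hence $\tilde\phi(U_1)$ is an interval contained in $\{j\in\Z:j\equiv 1\text{ or }2\pmod{4}\}$, which forces its length to be at most $2$; the analogous bound holds for $\tilde\phi(U_1^c)$, $\tilde\phi(U_2)$, $\tilde\phi(U_2^c)$ in their respective pairs of residue classes. If all four $A_{ij}$ are nonempty, $\tilde\phi(\ZZ)=\tilde\phi(U_1)\cup\tilde\phi(U_1^c)$ must meet all four residue classes while itself being an interval (by connectedness of $\ZZ$); this forces $\tilde\phi(\ZZ)$ to be either $\{4k-1,4k,4k+1,4k+2\}$ or $\{4k+1,4k+2,4k+3,4k+4\}$ for some $k\in\Z$. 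In the first case $\tilde\phi(U_2)=\{4k-1,4k+2\}$ has a gap of $3$ in $\Z$, contradicting the connectedness of $U_2$; in the second $\tilde\phi(U_2^c)=\{4k+1,4k+4\}$ similarly contradicts the connectedness of $U_2^c$.

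The main obstacle will be executing the lift rigorously in the purely discrete setting: one must justify that $\ZZ$ is simply connected in the combinatorial sense --- every oriented cycle is a formal $\Z$-linear sum of $4$-cycles --- and verify additivity of winding along such a decomposition. Once this discrete Poincar\'e-type lemma is secured, the passage to $\tilde\phi$ and the subsequent case analysis on the possible forms of $\tilde\phi(\ZZ)$ are entirely routine.
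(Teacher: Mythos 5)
Your proposal is correct, but it proves the trichotomy by a genuinely different route than the paper. The paper's proof goes through Lemma~\ref{lem: edge connectivity} (a Tim\'ar-style statement that $\intb U_2$ is connected in the auxiliary graph $G_{U_2}$, proved from the fact that $4$-cycles span the cycles of $\ZZ$ mod~$2$), and then derives a contradiction by locating a single $4$-cycle crossing the partition of $\intb U_2$ into $U_1\cap\intb U_2$ and $U_1^c\cap\intb U_2$ and checking by cases that any such cycle violates boundary disjointness. You instead package the pair $(\ind_{U_1},\ind_{U_2})$ into a $\Z/4\Z$-valued phase $\phi$, observe that condition~(1) forbids jumps of $2$ across edges and that, given~(1), condition~(2) is exactly the vanishing of the winding of $\phi$ on unit squares, and then lift $\phi$ to $\Z$ using the oriented ($\Z$-linear) spanning of cycles by unit squares; the contradiction is pure interval arithmetic on the images of the four connected sets $U_1,U_1^c,U_2,U_2^c$. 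This is a clean cohomological argument very much in the spirit of the paper's own mod-$3$-to-HHF lifting, and it is self-contained; what it does not give you is the reusable combinatorial infrastructure (Lemma~\ref{lem: edge connectivity} and Proposition~\ref{prop: Timar II}) that the paper needs again elsewhere, and it requires the slightly stronger (though standard) fact that cycles of $\ZZ$ are $\Z$-linear combinations of oriented unit squares, not merely $\mathbb{F}_2$-combinations as in \eqref{eq: cycles decomposition}. Two small slips to fix, neither of which harms the argument: adjacent vertices have $\phi$-values at distance \emph{at most} one (not exactly one), so $\phi$ and its lift $\tilde\phi$ are $1$-Lipschitz maps rather than graph homomorphisms --- the lift and the ``image of a connected set is an interval'' step only need the Lipschitz property; and $A_{10}=\emptyset$ gives only $U_1\subseteq U_2$, with strictness following from the one-line remark that $U_1=U_2$ is impossible since then $\partial U_1=\partial U_2\neq\emptyset$ would contradict boundary disjointness (the paper's proof implicitly uses the same remark).
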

The proof of this theorem is postponed to Section~\ref{sec: proof of
pair tric}.

The following proposition relates containment of boundary disjoint
sets and their distance from a third set.
\begin{propos}\label{prop: dist prop}
If $U_1,U_2\subseteq\ZZ$ are non-empty, boundary disjoint sets
satisfying $U_1\subset U_2$ then for every non-empty set $V$
satisfying $V\cap U_2=\emptyset$ we have
$\dist(U_1,V)>\dist(U_2,V)$.
\end{propos}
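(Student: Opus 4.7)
The plan is to deduce the proposition as an almost immediate consequence of the basic distance relations \eqref{eq: Basic dist eq1}--\eqref{eq: Basic dist eq3} recorded earlier. The key observation is that the first clause of boundary-disjointness, $\partial U_1 \cap \partial U_2 = \emptyset$, is tailored precisely to upgrade the set-theoretic inclusion $U_1 \subset U_2$ into the strictly thicker inclusion $U_1^+ \subseteq U_2$; the second clause of that definition (the $4$-cycle condition) plays no role here.

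Concretely, I would first invoke \eqref{eq: Basic dist eq1} with $U = U_1$ and $V = U_2$ to obtain $U_1^+ \subseteq U_2$. Feeding this new inclusion into \eqref{eq: Basic dist eq3} with test set $V$ yields $\dist(U_1^+, V) \ge \dist(U_2, V)$. Since $V$ is non-empty and $V \cap U_2 = \emptyset$, we have $\dist(U_2, V) \ge 1$, so \eqref{eq: Basic dist eq2} gives
\[
\dist(U_1, V) - 1 \;=\; \max\bigl(\dist(U_1, V) - 1,\, 0\bigr) \;=\; \dist(U_1^+, V) \;\ge\; \dist(U_2, V),
\]
which rearranges to $\dist(U_1, V) \ge \dist(U_2, V) + 1$, the desired strict inequality.

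There is essentially no obstacle beyond recognizing \eqref{eq: Basic dist eq1} as the right tool. The only delicate point is to use the non-emptiness of $V$ together with $V \cap U_2 = \emptyset$ in order to certify that $\dist(U_2, V) \ge 1$; this is what lets us replace the $\max$ in \eqref{eq: Basic dist eq2} by its non-trivial branch. Without that step one would only recover the trivial non-strict inequality $\dist(U_1, V) \ge \dist(U_2, V)$ that already follows from $U_1 \subseteq U_2$ via \eqref{eq: Basic dist eq3}.
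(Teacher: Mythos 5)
Your proof is correct and follows essentially the same route as the paper's: both use \eqref{eq: Basic dist eq1} with boundary disjointness to upgrade $U_1\subset U_2$ to $U_1^+\subseteq U_2$, then combine \eqref{eq: Basic dist eq2} and \eqref{eq: Basic dist eq3} to obtain the strict inequality. You merely spell out more carefully why the $\max$ in \eqref{eq: Basic dist eq2} resolves to its non-trivial branch, a point the paper leaves implicit.
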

\begin{proof}
Using boundary disjointness and \eqref{eq: Basic dist eq1}, we have
$U_1^+\subseteq U_2$. By \eqref{eq: Basic dist eq2} and \eqref{eq:
Basic dist eq3} we thus have $\dist(U_1,V)>\dist(U_2,V)$ as
required.
\end{proof}

\heading{Translation respecting sets} For a set $U\subseteq\ZZ$, we
define $T_U = T_U^n$, the set of \emph{translates} of $U$ by
multiples of $n$ in each of the coordinate directions, as
\begin{equation*}
T_U:=\{U + x :\ x\in n \ZZ\},
\end{equation*}
recalling that $U+v:=\{u+v\,:\,u\in U\}$. We note that it may well be the
case that different translations of $U$ yield the same set.

A set $U\subseteq\ZZ$ is called \emph{translation respecting} if $U$ is
\bicon{ }and every \emph{distinct} $U_1,U_2\in T_U$ are boundary disjoint.
Observe that, by definition, if $U$ is translation respecting, then so is
$U^c$.

Continuing the analogy with hypersurfaces, a translation respecting set is analogous to a hypersurface in $\RR$,
which satisfies that the projection of $\RR$ to the continuous torus maps its boundary to a closed hypersurface.

The main result of this section is that the trichotomy of Theorem~\ref{thm:
pair_trich} extends to translation respecting sets in the following strong
sense.

\begin{theorem}\emph{(Translation trichotomy)} \label{thm: main trichotomy}
If $U\subseteq\ZZ$ is translation respecting and $|T_U|>1$, then exactly one
of the following alternatives holds:
\begin{itemize}
\item \emph{[Type 1]} If $U_1,U_2\in T_U$ and $U_1\neq U_2$ then $U_1\cap
    U_2 = \emptyset$.
\item \emph{[Type -1]} If $U_1,U_2\in T_U$ and $U_1\neq U_2$ then
    $U_1^c\cap U_2^c = \emptyset$.
\item \emph{[Type 0]} If $U_1,U_2\in T_U$ then $U_1\subseteq U_2$ or
    $U_2\subseteq U_1$.
\end{itemize}
Moreover, if $U$ satisfies the Type $0$ alternative of the theorem, then
    there exists a unique order-preserving \emph{bijection} $o\colon
    T_U\to\Z$ such that $o(U)=0$. Here, order preserving means that $o(U_1)< o(U_2)$ if and only if $U_1\subsetneq U_2$. Furthermore, there
    exists a $\Delta\in n\ZZ$ such that $o^{-1}(i+1)=o^{-1}(i)+\Delta$
    for all $i\in\Z$. We call any such $\Delta$ a \emph{minimal translation} of $U$.
\end{theorem}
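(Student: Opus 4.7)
The plan is to promote the pair trichotomy (Theorem~\ref{thm: pair_trich}) from local pairs of translates to a global statement about the entire family $T_U$, and then to extract the $\Z$-indexed chain structure in the Type~0 case from a discreteness argument based on Proposition~\ref{prop: dist prop}.

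Since $U$ is translation respecting, every distinct pair $(U+x_1, U+x_2) \in T_U \times T_U$ is boundary disjoint, so Theorem~\ref{thm: pair_trich} places it in one of three local modes: disjoint, co-disjoint, or comparable. By translation invariance the mode depends only on the coset of $x_2-x_1$ in $n\ZZ/H$, where $H := \{x \in n\ZZ : U+x=U\}$ is the stabilizer of $U$; this yields a mode function $\tau$ on the nonzero cosets. The key technical step is to show that $\tau$ is constant. The comparable mode is closed under composition by a direct translation argument: if $U \subsetneq U+x$ and $U \subsetneq U+y$, translating the second relation by $x$ gives $U+x \subsetneq U+x+y$, hence $U \subsetneq U+x+y$; the other sign combinations reduce to this case via pair trichotomy. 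To rule out the coexistence of disjoint and comparable modes, one assumes $U \cap (U+x)=\emptyset$ and $U \subsetneq U+y$ and iterates to obtain two correlated families of translates---a chain $\{U+ky\}$ growing in the $y$ direction and a parallel family $\{U+x+ky\}$ each member of which is disjoint from its $U+ky$ counterpart---and derives a contradiction using the connectivity of $U^c$ provided by \bicon{}ness (Proposition~\ref{prop: Timar II}) together with the $4$-cycle clause of boundary disjointness, which forbids transversal intersections of the translates' boundaries. An analogous argument on complements handles the co-disjoint/comparable mixture, and the disjoint/co-disjoint mixture is incompatible by direct containment considerations on translates of translates.

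In the Type~0 case $T_U$ is totally ordered by $\subseteq$. To see that the order is isomorphic to $\Z$, suppose for contradiction there is an infinite strictly descending chain $U_1 \supsetneq U_2 \supsetneq \cdots$ of translates all strictly containing $U$. Fix any $v \in U_1^c$, nonempty by \bicon{}ness of $U_1$. Applying Proposition~\ref{prop: dist prop} to the boundary-disjoint pairs $U_{k+1} \subsetneq U_k$ yields the strict increase $\dist(U_{k+1},\{v\}) > \dist(U_k,\{v\})$, producing an infinite strictly increasing sequence of non-negative integers bounded above by $\dist(U,\{v\}) < \infty$, a contradiction. A symmetric argument rules out infinite ascending chains of translates contained in $U$. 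Hence every translate has both an immediate successor and an immediate predecessor, and the countable linearly ordered set $T_U$ is order-isomorphic to $\Z$; the unique such isomorphism with $o(U)=0$ is the desired $o$.

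For the translation $\Delta$, define $\phi \colon n\ZZ \to \Z$ by $\phi(x) := o(U+x)$. Translation invariance of the containment order gives $\phi(x+y) = \phi(x) + \phi(y)$, so $\phi$ is a group homomorphism with $\ker\phi = H$ and image $\Z$ (by surjectivity of $o$). Taking any $\Delta \in \phi^{-1}(1)$ gives $o(U+i\Delta)=i$ and $o^{-1}(i+1) = U+(i+1)\Delta = o^{-1}(i)+\Delta$, as required. The hard part will be the coherence step (global constancy of $\tau$): pair trichotomy is purely local, and extending it uniformly across $T_U$ requires essential use of both clauses of boundary disjointness together with Proposition~\ref{prop: Timar II}. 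This is exactly where the translation-respecting hypothesis---substantially stronger than pairwise boundary disjointness of a single pair---does its work.
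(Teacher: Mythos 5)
Your overall skeleton (reduce to the pair trichotomy, show the ``mode'' of the relation is constant over $T_U$, then extract the $\Z$-chain structure) is the same as the paper's, and your Type~$0$ analysis via Proposition~\ref{prop: dist prop} is essentially the paper's finiteness argument. The genuine gap is exactly the step you yourself defer as ``the hard part'': ruling out coexistence of the disjoint and comparable modes (the co-disjoint/comparable case follows by complementation, and your disjoint/co-disjoint case also reduces to this one, since $U_1\cap U_2=\emptyset$ and $U_1^c\cap U_3^c=\emptyset$ force $U_2\subseteq U_1^c\subseteq U_3$). For this step you only gesture at iterating the families $\{U+ky\}$ and $\{U+x+ky\}$ and invoking connectivity of $U^c$ (Proposition~\ref{prop: Timar II}) together with the $4$-cycle clause of boundary disjointness; no mechanism is given for how a contradiction actually arises from $U\cap(U+x)=\emptyset$ together with $U\subsetneq U+y$, and it is not clear such a topological argument can be run at this level -- those tools were already consumed in proving Theorem~\ref{thm: pair_trich}. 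The paper closes this case with a short metric argument that needs no further topology: writing $U_2=U_1+\delta$, $U_3=U_1+\Delta$ with $U_1\subsetneq U_2$ and $U_1\cap U_3=\emptyset$, boundary disjointness and Proposition~\ref{prop: dist prop} give $\dist(U_1,U_3)>\dist(U_2,U_3)$, while $U_1+\Delta\subseteq U_1+\Delta+\delta$ and \eqref{eq: Basic dist eq3} give $\dist(U_1+\delta,U_1+\Delta)\ge\dist(U_1+\delta,U_1+\Delta+\delta)=\dist(U_1,U_1+\Delta)$, and chaining these yields $\dist(U_1,U_1+\Delta)>\dist(U_1,U_1+\Delta)$, a contradiction. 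Without some argument of this kind your trichotomy is unproved. (Your side remark that the comparable mode is ``closed under composition'' is also not fully justified for mixed signs, but it is not needed anyway.)

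A smaller repair is needed in the Type~$0$ part: the inference ``every translate has an immediate successor and an immediate predecessor, hence the countable total order $T_U$ is isomorphic to $\Z$'' is not valid as stated -- two lexicographically stacked copies of $\Z$ satisfy all of those properties. What you actually need, and what your two chain arguments \emph{do} deliver once you add translation invariance (an ascending chain bounded above by $U+z$ translates to one bounded above by $U$), is that every order interval of $T_U$ is finite; together with the absence of maximal and minimal elements this gives the isomorphism with $\Z$, and uniqueness of $o$ with $o(U)=0$. The paper sidesteps this by defining $o(V)$ directly as a signed count of intermediate translates and using Proposition~\ref{prop: dist prop} to show that count is finite; your homomorphism argument producing $\Delta$ from $\phi(x)=o(U+x)$ is fine and matches the paper's conclusion. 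So the second half of your plan is sound up to this presentational fix, but the first half is missing its key step.
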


The proof of this theorem is postponed to Section \ref{sec: proof of
trans tric}.

We remark regarding the assumption that $|T_U|>1$ that while in dimension $d=2$ any \bicon{ }set $U$ has $|T_U|>1$ (recalling that a \bicon{ }set is assumed to be different from $\emptyset$ and $\ZZ$), there do exist \bicon{ }sets $U$ in dimensions $d\ge 3$ having $|T_U|=1$ (for instance,
the set of vertices in $\Z^d$ having at most one coordinate
which is not a multiple of $n$).

Theorem \ref{thm: main trichotomy} allows us to assign a type to
every translation respecting set $U$ satisfying $|T_U|>1$. For
$i\in\{-1,0,1\}$, we write $\Ty(U)=i$ if $U$ satisfies the Type $i$
alternative of the theorem. The case $|T_U|=1$ has little bearing on
our application. However, for completeness, we say in this case,
with a slight abuse of notation, that both $\Ty(U)=1$,
$\Ty(U)=-1$ and $\Ty(U)\neq0$ hold. An illustration of sets of the various types is
given in Figure~\ref{fig: Trich}.

\begin{figure}[htb!]
\centering%
    \rput(1.6,-0.15){$n$}
\includegraphics[scale=0.25]{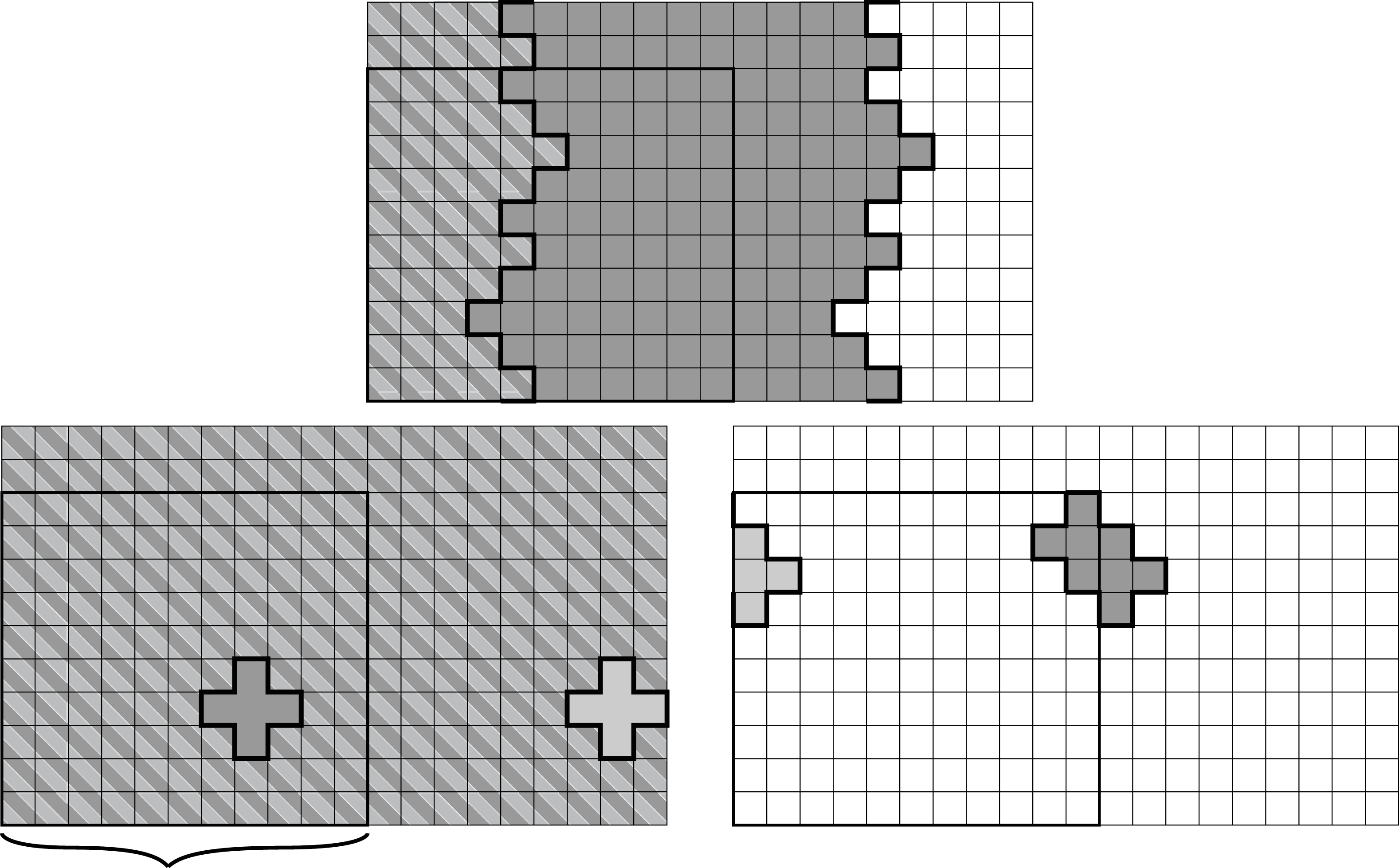}\\
\caption{Examples of translation respecting sets of the three types.
In each image a portion of the plane is depicted, on which a set $U$
and its translation $U+ne_1$ are emphasized in light gray and in
dark gray respectively. Vertices contained in both sets are striped.
In each image a different alternative of Theorem~\ref{thm: main
trichotomy} holds: At the top type $0$, at the bottom-left
 type $-1$ and at the bottom-right type $1$.}
\label{fig: Trich}
\end{figure}

\subsection{Corollaries of the trichotomy}\label{subs: Trich-corr}

In this section we state several useful corollaries of Theorem~\ref{thm: main
trichotomy}. The next proposition discusses how the type of translation
respecting sets is affected by taking complements.

\begin{propos}\label{prop: trans resp comp}
If $U$ is translation respecting of type $i$ then:
\begin{itemize}
\item $U^c$ is translation respecting of type $-i$.
\item If $U$ is of type 0 with minimal translation $\Delta$, then $-\Delta$ is a minimal translation of $U^c$.
\end{itemize}
\end{propos}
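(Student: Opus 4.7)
The plan is to reduce both parts of the proposition to the corresponding facts for $U$ itself by exploiting the symmetry between a set and its complement in the relevant definitions.

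First I would check that $U^c$ is translation respecting whenever $U$ is. \Bicon ness of a set is equivalent to that of its complement directly from the definition (both conditions are symmetric in $U$ and $U^c$), and the identity $(U+v)^c = U^c + v$ for $v \in n\ZZ$ shows that $V \mapsto V^c$ is a bijection $T_U \to T_{U^c}$ with $|T_{U^c}| = |T_U|$. Since boundary disjointness of two sets is preserved when one, and hence both, is replaced by its complement, as noted right after the definition of boundary disjointness, distinct elements of $T_{U^c}$ are boundary disjoint.

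Next I would trace through the three alternatives of Theorem~\ref{thm: main trichotomy} under the substitution $V_i \leftrightarrow V_i^c$. The elementary identities
\begin{equation*}
V_1^c \cap V_2^c = (V_1 \cup V_2)^c, \qquad (V_1^c)^c \cap (V_2^c)^c = V_1 \cap V_2, \qquad V_1^c \subseteq V_2^c \Longleftrightarrow V_2 \subseteq V_1
\end{equation*}
immediately convert each alternative for $U$ into the opposite alternative for $U^c$: the type $1$ and type $-1$ alternatives swap, while type $0$ is preserved (comparability is a symmetric condition, but the direction of inclusion is reversed). Invoking the uniqueness of the type stated in the theorem then yields $\Ty(U^c) = -\Ty(U)$.

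For the minimal translation assertion in the type $0$ case, let $o \colon T_U \to \Z$ be the canonical order-preserving bijection with $o(U) = 0$ and $o^{-1}(i+1) = o^{-1}(i) + \Delta$. I would define $o' \colon T_{U^c} \to \Z$ by $o'(V^c) := -o(V)$. Since complementation reverses inclusion, $o'$ is order-preserving, and $o'(U^c) = 0$, so by uniqueness $o'$ is the canonical bijection for $U^c$. A brief computation using the elementary identity $(A - \Delta)^c = A^c - \Delta$ then shows $(o')^{-1}(i+1) = (o')^{-1}(i) + (-\Delta)$, identifying $-\Delta$ as a minimal translation of $U^c$. No step poses a genuine obstacle here; the proposition is essentially a bookkeeping exercise, and the main care needed is to apply the definitions and the uniqueness clause of Theorem~\ref{thm: main trichotomy} consistently under complementation.
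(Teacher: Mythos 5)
Your proposal is correct and is essentially the detailed write-up of what the paper dismisses as ``straightforward from Theorem~\ref{thm: main trichotomy}.'' The complementation bijection $T_U \to T_{U^c}$, the swap of the type-$1$ and type-$-1$ alternatives under the elementary set identities, and the construction of $o'(V^c) := -o(V)$ together with the uniqueness clause of the theorem are precisely the right moves, and you have also correctly handled the degenerate case $|T_U| = 1$ by noting $|T_{U^c}| = |T_U|$.
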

The proof of this proposition is straightforward from Theorem~\ref{thm: main
trichotomy}.

The following proposition investigates the possible containment
relations between translation respecting sets.

\begin{propos}\label{prop: Type rules}
Let $U,V$ be two translation respecting sets satisfying that
$|T_U|,|T_V|>1$ and $U\subseteq V$. Then $\Ty(U)\ge\Ty(V)$.
\end{propos}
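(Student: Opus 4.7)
The plan is to rule out the three combinations giving $\Ty(U)<\Ty(V)$: namely (i) $\Ty(V)=1$ with $\Ty(U)=0$, (ii) $\Ty(V)=1$ with $\Ty(U)=-1$, and (iii) $\Ty(V)=0$ with $\Ty(U)=-1$. In each case my strategy is to locate a single translation vector $z'\in n\ZZ$ which is a non-period of both $U$ and $V$, and then extract a contradiction from the inclusions $U\subseteq V$ and $U+z'\subseteq V+z'$ by comparing the types of the two sets at $z'$.

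The existence of such a $z'$ rests on a preliminary observation. For any translation respecting $A$ with $|T_A|>1$, the stabilizer $\Lambda_A := \{z\in n\ZZ : A+z=A\}$ is a proper subgroup of $n\ZZ$ (a subgroup because periods compose and invert, and proper because $|n\ZZ/\Lambda_A| = |T_A|>1$). I would then invoke the basic group-theoretic fact that a group is never the union of two proper subgroups: if $g\in G\setminus H_2$ and $h\in G\setminus H_1$, then $g+h$ lies in neither $H_1$ nor $H_2$. Applied to $\Lambda_U$ and $\Lambda_V$ this yields the desired $z'\in n\ZZ\setminus(\Lambda_U\cup\Lambda_V)$.

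Cases (i) and (iii) would then dispatch quickly via Theorem~\ref{thm: main trichotomy}. In (i), type $0$ of $U$ at $z'$ makes $U$ and $U+z'$ distinct and comparable, so $U\cap(U+z')$ equals the smaller of the two and is therefore non-empty, whereas type $1$ of $V$ at $z'$ forces $V\cap(V+z')=\emptyset$, contradicting $U\cap(U+z')\subseteq V\cap(V+z')$. In (iii), type $0$ of $V$ makes $V$ and $V+z'$ comparable and distinct; after replacing $z'$ by $-z'$ if necessary (still outside $\Lambda_U\cup\Lambda_V$ since these are subgroups) I may assume $V\subsetneq V+z'$, and then type $-1$ of $U$ yields $\ZZ=U\cup(U+z')\subseteq V\cup(V+z')=V+z'$, contradicting the fact that $V+z'$ is a translate of the co-connected set $V$ and hence a proper subset of $\ZZ$.

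The main obstacle is Case (ii), since type $1$ disjointness and type $-1$ covering are mutually compatible: together with $U\subseteq V$ they only force $V+z'=V^c$ rather than giving an immediate contradiction. Here the plan is to bootstrap by shifting the identity a second time. Rewriting type $-1$ of $U$ at $z'$ as $U^c+z'\subseteq U$ and combining with $V+z'=V^c\subseteq U^c$ (which follows from $U\subseteq V$), shifting the latter by $z'$ gives $V+2z'\subseteq U^c+z'\subseteq U$. Because complement commutes with translation, $V+2z'=V^c+z'=(V+z')^c=V$, so $V\subseteq U$. Combined with $U\subseteq V$ this forces $U=V$, which is impossible as a single translation respecting set with $|T|>1$ cannot simultaneously be of types $1$ and $-1$ under Theorem~\ref{thm: main trichotomy}. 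This contradiction will complete the proof.
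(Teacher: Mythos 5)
Your proof is correct but travels a noticeably different road from the paper's. The paper reduces the statement to two one-way implications, ``$\Ty(U)=-1\Rightarrow \Ty(V)=-1$'' and ``$\Ty(V)=1\Rightarrow \Ty(U)=1$'', each dispatched in three lines by choosing a non-period $\Delta$ of the \emph{driving} set only (the one whose type is assumed): for instance, $U^c\subseteq U+\Delta$ sandwiches to give $V^c\subseteq V+\Delta$, and the fact that $V+\Delta\neq V$ comes for free, because $V^c\cap(V+\Delta)^c=\emptyset$ together with co-connectedness of $V$ already forces $V\neq V+\Delta$. You instead enumerate the three forbidden type pairs, and manufacture a \emph{common} non-period $z'$ of $U$ and $V$ via the fact that a group is never the union of two proper subgroups. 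Both routes work, but yours is heavier: in each of your three cases a non-period of the single ``active'' set would actually have sufficed (if $z'$ happened to be a period of the other set, the argument collapses even faster, e.g.\ $U\cup(U+z')=\ZZ\subseteq V$ directly contradicts co-connectedness), so the group-theoretic detour is avoidable. Your case (ii) is the most striking divergence: you use a genuinely clever two-step shift ($V+z'=V^c$, then $V+2z'=V\subseteq U$) to force $U=V$, whereas the paper kills it in one stroke via the first implication. Finally, a small imprecision in your statement of the group lemma: you say that for $g\in G\setminus H_2$ and $h\in G\setminus H_1$ the sum $g+h$ lies outside $H_1\cup H_2$, but that is only valid once you additionally know $g\in H_1$ and $h\in H_2$ (which holds in the proof by contradiction where $G=H_1\cup H_2$, but not in general: with $G=\Z$, $H_1=2\Z$, $H_2=3\Z$, $g=8$, $h=1$ one gets $g+h=9\in H_2$). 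The conclusion you need --- that \emph{some} common non-period exists --- is nonetheless correct, exactly because the group cannot equal $\Lambda_U\cup\Lambda_V$.
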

\begin{proof}
Our goal is to show that
$(\Ty(U),\Ty(V))\notin\{(-1,0),(-1,1),(0,1)\}$. Equivalently, we
need to show that
\begin{align}
&\text{if $\Ty(V) = 1$ then $\Ty(U) = 1$},\label{eq:V_1_U_1}\\
&\text{if $\Ty(U) = -1$ then $\Ty(V) = -1$}.\label{eq:U_-1_V_-1}
\end{align}

Suppose first that $\Ty(V) = 1$. Let $\Delta\in n\ZZ$ be such that
$V+\Delta\neq V$ (which exists as $|T_V|>1$). As $\Ty(V)=1$, $V\cap
(V+\Delta) = \emptyset$. Thus, as $U\subseteq V$ and
$U+\Delta\subseteq V + \Delta$ we deduce that
\begin{equation}\label{eq:U_disjoint_U_Delta}
U\cap (U + \Delta) = \emptyset.
\end{equation}
In particular, $U\neq U + \Delta$ whence $U$ and $U+\Delta$ are
boundary disjoint (as $U$ is translation respecting) and the pair
trichotomy, Theorem~\ref{thm: pair_trich}, implies that
\begin{equation}\label{eq:U_comp_intersect_U_Delta_comp}
U^c\cap (U+\Delta)^c\neq \emptyset.
\end{equation}
The translation trichotomy, Theorem~\ref{thm: main trichotomy}, and
the relations \eqref{eq:U_disjoint_U_Delta} and
\eqref{eq:U_comp_intersect_U_Delta_comp} imply that $\Ty(U) = 1$,
establishing \eqref{eq:V_1_U_1}.

Now observe that $U^c,V^c$ are also translation respecting and
satisfy $V^c\subseteq U^c$. Thus, we may apply \eqref{eq:V_1_U_1}
with $(U,V)$ replaced by $(V^c, U^c)$ and deduce from
Proposition~\ref{prop: trans resp comp} that \eqref{eq:U_-1_V_-1}
holds.
\end{proof}

\heading{Translation respecting sets of type $0$} These have a
unique structure, as the following proposition indicates.

\begin{propos}\label{prop: type 0 union is all} If $U$ is
translation respecting of type $0$ then:
\begin{itemize}
\item $\displaystyle\bigcup_{V\in T_U} V=\ZZ$.
\item There exists $1\le i\le d$ such that for every $v\in\ZZ$,
    $\{v+k e_i\ :\ k\in\Z\}$ intersects both $U$ and $U^c$.
\item If $U+ne_1\neq U$ then for every $v\in\ZZ$,
    $\{v+k e_1\ :\ k\in\Z\}$ intersects both $U$ and $U^c$.
\end{itemize}
\end{propos}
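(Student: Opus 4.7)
My plan is to reduce all three parts to a single ``core claim'': if $U$ is translation respecting of type $0$ and $U+ne_1\neq U$, then $\bigcup_{j\in\Z}(U+jne_1)=\ZZ$. To prove the core claim I would invoke Theorem~\ref{thm: main trichotomy}: since $U,U+ne_1\in T_U$ are distinct type-$0$ translates, one contains the other; after possibly relabeling $j\mapsto -j$, we may assume $U\subsetneq U+ne_1$. Writing $U_j:=U+jne_1$, this yields a bi-infinite strictly nested chain $\ldots\subsetneq U_{-1}\subsetneq U_0\subsetneq U_1\subsetneq\ldots$, whose distinct members are pairwise boundary disjoint by the translation-respecting hypothesis. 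Suppose for contradiction that some $w\in\ZZ$ lies outside every $U_j$. Then Proposition~\ref{prop: dist prop} applied to each pair $U_j\subsetneq U_{j+1}$ with the singleton $\{w\}$ (disjoint from $U_{j+1}$) gives
\begin{equation*}
\dist(U_0,\{w\})>\dist(U_1,\{w\})>\dist(U_2,\{w\})>\cdots,
\end{equation*}
a strictly decreasing infinite sequence of positive integers---a contradiction. Hence the core claim holds.

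Given the core claim, part (3) follows quickly. For any $v\in\ZZ$ the core claim provides $j$ with $v\in U+jne_1$, equivalently $v-jne_1\in U$; since the $e_1$-line through $v$ contains its entire $ne_1$-orbit, that line meets $U$. To see it also meets $U^c$ I would apply the core claim to $U^c$, which by Proposition~\ref{prop: trans resp comp} is translation respecting of type $0$ and satisfies $U^c+ne_1=(U+ne_1)^c\neq U^c$. For part (2) I observe that the stabilizer $H:=\{x\in n\ZZ:U+x=U\}$ is a proper subgroup of $n\ZZ\cong\Z^d$ (since $|T_U|>1$), hence some generator $ne_i$ lies outside $H$, i.e., $U+ne_i\neq U$; the proof of (3) is coordinate-symmetric, so applying it in direction $e_i$ yields (2). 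Part (1) is then immediate: $\bigcup_{V\in T_U}V\supseteq\bigcup_{j\in\Z}(U+jne_i)=\ZZ$ by the core claim applied in direction $e_i$.

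I anticipate no serious obstacle: the heavy lifting is packaged into Proposition~\ref{prop: dist prop} (strict distance-monotonicity along a boundary disjoint nesting) and Theorem~\ref{thm: main trichotomy} (the total-order structure on $T_U$). The only subtlety is checking that the nesting $U_j\subsetneq U_{j+1}$ is strict at every step; this follows from $ne_1\notin H$ (otherwise $U+ne_1=U$, contradicting the hypothesis), which rules out any coincidence $U_{j_0}=U_{j_0+1}$.
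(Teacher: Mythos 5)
Your proposal is correct and runs on the same engine as the paper's own proof: the type-$0$ alternative of Theorem~\ref{thm: main trichotomy} yields a strictly nested chain of pairwise boundary disjoint translates, and the strict distance decrease of Proposition~\ref{prop: dist prop} is iterated (your infinite-descent phrasing) to show the chain exhausts $\ZZ$, with Proposition~\ref{prop: trans resp comp} handling $U^c$. The only difference is organizational --- the paper descends along a minimal translation $\Delta$, proves the first item first, and gets the third via $U+ne_1=U+\ell\Delta$, whereas you descend along $ne_1$ directly and deduce the first item last, incidentally avoiding the ``moreover'' part of the trichotomy theorem --- a harmless reshuffle.
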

\begin{proof}
Let $v\in \ZZ$ and let $\Delta$ be a minimal translation of $U$.
Observe that by definition, $U\subsetneq U+\Delta$, and $U,
U+\Delta$ are \bicon{ }and boundary disjoint. Applying
Proposition~\ref{prop: dist prop} we get $\dist(U + \Delta
,\{v\})\le\max(\dist(U,\{v\})-1,0)$. Iterating, we obtain that there
exists some $k$ such that $v\in U+k\Delta$. We deduce the first item
of the proposition.

The second item follows from the third by symmetry, as the fact that
$U+\Delta\neq U$ (using that $U$ is of type $0$) implies that there
exists some $1\le i\le d$ for which $U+ne_i\neq U$. We proceed to
prove the third item. Observe that by the last part of
Theorem~\ref{thm: main trichotomy}, there exists some $\ell\in
\Z\setminus\{0\}$ such that $U+ne_1=U+\ell\Delta$. It follows also
that $U^c+ne_1=U^c+\ell\Delta$. Notice that both $U$ and $U^c$ are
translation respecting of type $0$ with $-\Delta$ being a minimal
translation for $U^c$ (by Proposition~\ref{prop: trans resp comp}).
Thus, the first item of the proposition and the last part of
Theorem~\ref{thm: main trichotomy} show that for every $v\in\Z^d$
there exist $k_1,k_2\in\mathbb{Z}$ such that
\[v\in (U+k_1 \ell \Delta)\cap (U^c + k_2 \ell \Delta).\]
Equivalently $v-k_1ne_1 \in U$ while $v-k_2ne_1 \notin U$, as
required.
\end{proof}

Recall that $\pi$ is the projection of $\ZZ$ onto $\TT$. It naturally extends to a mapping of the edges of $\ZZ$ to the edges of $\TT$.
The projection of the boundary of translation respecting sets of type $0$ through $\pi$ is very long, as the following lemma shows.

\begin{lemma}\label{lem: type 0 steep has long boundary in increasing direction}
If $V$ is a translation respecting set of type $0$ with minimal translation $\Delta$ satisfying $V+\ell\Delta=V+n e_1$ then
$$|\{(w_0,w_1)\in\pi(\partial {V})\,:\, w_0-w_1=e_1\}|\ge \ell n^{d-1}.$$
\end{lemma}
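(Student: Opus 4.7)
The plan is to encode the chain of translates $V+k\Delta$ by an integer-valued ``layer function'' $\mu\colon\ZZ\to\Z$ and to count boundary edges through the increments of $\mu$ along $e_1$-cycles of the torus. I would define $\mu(v):=\min\{k\in\Z\,:\, v\in V+k\Delta\}$; this is well defined because Proposition~\ref{prop: type 0 union is all} gives $\bigcup_k(V+k\Delta)=\ZZ$, while applying the same proposition to $V^c$ (which is translation respecting of type $0$ with minimal translation $-\Delta$, by Proposition~\ref{prop: trans resp comp}) gives $\bigcap_k(V+k\Delta)=\emptyset$, so $\{k:v\in V+k\Delta\}$ is a halfline in $\Z$.

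I would then establish two elementary properties of $\mu$. First, if $u\sim v$ then $|\mu(u)-\mu(v)|\le 1$: from $u\in V+\mu(u)\Delta$ we get $v\in(V+\mu(u)\Delta)^+\subseteq V+(\mu(u)+1)\Delta$, the inclusion coming from boundary disjointness of $V+k\Delta$ and $V+(k+1)\Delta$ together with \eqref{eq: Basic dist eq1}. Second, $\mu(v+ne_1)=\mu(v)+\ell$, an immediate consequence of the hypothesis $V+ne_1=V+\ell\Delta$.

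Next I would characterize which $e_1$-edges of $\TT$ lie in $\pi(\partial V)$. By Theorem~\ref{thm: main trichotomy}, every $y\in n\ZZ$ determines a unique $m(y)\in\Z$ with $V+y=V+m(y)\Delta$, and $y\mapsto m(y)$ is surjective since each translate $V+k\Delta$ is itself of the form $V+y$ for some $y\in n\ZZ$. A direct computation gives $\mu(v+y)=\mu(v)+m(y)$. The torus edge $(\pi(v),\pi(v+e_1))$ lies in $\pi(\partial V)$ iff some lift $(v+y,v+y+e_1)$ lies in $\partial V$, equivalently iff there is $y$ for which $\mu(v)+m(y)$ and $\mu(v+e_1)+m(y)$ fall on opposite sides of the threshold ``$\le 0$ versus $>0$''. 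By the Lipschitz property this is possible iff $\mu(v)\neq\mu(v+e_1)$, in which case the surjectivity of $m$ supplies a suitable $y$.

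To finish, I would partition the $e_1$-edges of $\TT$ into $n^{d-1}$ disjoint cycles of length $n$. Picking a lift $v$ of any vertex on such a cycle and letting $p:=|\{0\le k<n:\mu(v+(k+1)e_1)-\mu(v+ke_1)=1\}|$ and $q$ the analogous count for $-1$, the Lipschitz property forces every increment into $\{-1,0,1\}$ and the slope property yields $p-q=\ell$; by the characterization above, the cycle contributes exactly $p+q\ge \ell$ edges to $\pi(\partial V)$. Summing over the $n^{d-1}$ cycles gives the stated bound. I expect the characterization of $\pi(\partial V)$ to be the main obstacle, as a torus edge has infinitely many lifts and one must argue that every ``locally consistent'' crossing of the $0$-level is realized by some lift; the key input is the surjectivity of $m$, which is itself a direct consequence of the translation trichotomy.
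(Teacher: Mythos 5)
Your proof is correct, and it takes a route that is organized differently from the paper's. The paper's proof slices the count by translate: for each $k\in\{0,\dots,\ell-1\}$ it shows that the translate $V+k\Delta$ contributes at least $n^{d-1}$ distinct $e_1$-direction edges to $\pi(\partial V)$ (using the third item of Proposition~\ref{prop: type 0 union is all} to find a crossing on each $e_1$-line in a fundamental domain $X$), and then shows these contributions are disjoint across $k$. You slice the same double count the other way: by introducing the ``layer function'' $\mu(v)=\min\{k:v\in V+k\Delta\}$, establishing that it is $1$-Lipschitz (via boundary disjointness and \eqref{eq: Basic dist eq1}) and has increment $\ell$ across $ne_1$, and characterizing $e_1$-edges of $\pi(\partial V)$ precisely as the jump edges of $\mu$ (using the additivity $\mu(v+y)=\mu(v)+m(y)$ and surjectivity of $m$ from the translation trichotomy). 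Each of the $n^{d-1}$ $e_1$-cycles of the torus then contributes at least $\ell$ jumps. The underlying combinatorial inputs are the same --- the translate structure, boundary disjointness of translates, and the line-crossing property --- but your $\mu$ packages them cleanly and makes the lift-independence of the edge characterization transparent, whereas the paper reasons more directly on the sets $\intb(V+k\Delta)$. One minor omission: you should note at the start that the claim is trivial for $\ell\le 0$, so that the inequality $p+q\ge p-q=\ell$ is the one you need; the paper makes the same reduction explicit with a ``WLOG $\ell>0$''.
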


\begin{proof}
The lemma holds trivially if $\ell=0$. Assume without loss of generality that $\ell>0$.
We write
\begin{equation*}
X:=\{x\in\ZZ\ :\ \forall j\in\{2,\dots,d\}\ \; 0\le x_j<n\}.
\end{equation*}
Observe that $\pi(\partial  V)=\pi(\partial V+k\Delta)$ for all $k\in\Z$.
Thus, to obtain the lemma it would suffice to show the following two claims:
\begin{equation}\label{eq: many level lines}
\text{$\{\pi(X\cap (\intb V+k\Delta))\}_{k\in\{0,\dots,\ell-1\}}$
are disjoint}.
\end{equation}
\begin{equation}\label{eq: level lines are long}
\text{For each $k\in\{0,\dots,\ell-1\}$ we have $|\pi(E_1(X)\cap
(\partial V+k\Delta))|\ge n^{d-1}$},
\end{equation}
where $E_1(X):=\{(x,x+e_1)\,:\,x\in X\}.$

We begin by showing \eqref{eq: many level lines}. Since $V$ is translation respecting,
$\{X\cap (\intb V+k\Delta)\}_{k\in\{0,\dots,\ell-1\}}$ are disjoint.
    Thus, to obtain \eqref{eq: many level lines}, all that remains is to
show that for all pairs of distinct $k_1,k_2\in\{0,\dots,\ell-1\}$, there are no two
elements $x_1\in \intb V+k_1\Delta$, $x_2\in \intb V+k_2\Delta$, such that
$x_1+ane_1=x_2$ for some $a\in\N$.
Indeed, in such a case, we would have $x_2=x_1+ane_1\in \intb V+k_1\Delta+ane_1=\intb
V+(k_1+a\ell)\Delta$, which would imply, by boundary disjointness, that
$V+k_2\Delta = V+(k_1+a\ell)\Delta$, and hence $k_2=k_1+a\ell$ which contradicts our assumption.
\eqref{eq: many level lines} follows.

To see \eqref{eq: level lines are long}, observe that by the third
item of Proposition~\ref{prop: type 0 union is all}, for every
$x\in\ZZ$, $k\in \{0,1,\dots,\ell-1\}$, there exists a
$q\in\Z$ such that
\begin{equation}\label{eq:t0 steep boundary intersection}
(x+q e_1,x+(q+1) e_1)\in \partial V+k\Delta.
\end{equation}
Using \eqref{eq:t0 steep boundary intersection} for all $x\in X$
satisfying that $x_1=0$, we obtain \eqref{eq: level lines are long}.
\end{proof}

\subsection{Proof of the pair trichotomy}\label{sec: proof of pair tric}
In this section we prove Proposition~\ref{prop: Timar II} and
Theorem~\ref{thm: pair_trich} using the approach of Tim\'ar in
\cite{Tim}. To do so, we make use of the well-known fact that
$4$-cycles span the cycles of $\ZZ$, i.e., every cycle $\sigma$ in
$\ZZ$ can be written as
\begin{equation}\label{eq: cycles decomposition}
\sigma = \displaystyle \sum_{c\in\C} c,
\end{equation}
where $\C$ is a set of $4$-cycles, and we interpret the sum as
meaning that an edge is in $\sigma$ if it appears in an odd number
of cycles in $\C$.

To aid our proof we introduce the following family of graphs.
\begin{defin}\label{def: G_u connected} Given $U\subseteq \ZZ$, a
set of vertices, we define a graph $G_U$ as follows. The vertices of
$G_U$ are the vertices of $\ZZ$. Two vertices $u,v$ are adjacent in
$G_U$ if there exist $e_u,e_v\in \partial U$ and a $4$-cycle $c$,
such that $u\in e_u, v\in e_v$, and $e_u,e_v\in c$.
\end{defin}
The following lemma connects this definition with \bicon{ }sets.
\begin{lemma}\label{lem: edge connectivity}
If $U\subset\ZZ$ is a \bicon{ }set of vertices, then $\intb U$ is connected
in $G_U$.
\end{lemma}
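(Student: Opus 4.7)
The plan is to adapt Tim\'ar's boundary-connectivity argument from~\cite{Tim}, exploiting the decomposition~\eqref{eq: cycles decomposition} of cycles of $\ZZ$ into $4$-cycles together with the \bicon ness of $U$. Fix $x,y\in\intb U$. Since $x,y\in U\setminus U^-$, each has a neighbor in $U^c$; pick such neighbors $x',y'$, yielding edges $e_x:=(x,x')$ and $e_y:=(y,y')$ in $\partial U$. Using the connectedness of $U$, I pick a path $P_U$ from $y$ to $x$ within $U$, and using the connectedness of $U^c$, a path $P_{U^c}$ from $x'$ to $y'$ within $U^c$. Concatenating $e_x$, $P_{U^c}$, the reverse of $e_y$, and $P_U$ produces a closed cycle $\sigma$ in $\ZZ$ whose only edges lying in $\partial U$ are $e_x$ and $e_y$.

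Next I introduce an auxiliary graph $H$ with vertex set $\partial U$, in which two edges are declared $H$-adjacent if they share a common $4$-cycle of $\ZZ$. The heart of the argument is to show that $e_x$ and $e_y$ lie in the same $H$-component. Two simple facts drive this: (i)~in any $4$-cycle $c$, the number of edges lying in $\partial U$ is even---labeling the four vertices of $c$ by $0$ or $1$ according to $U$-membership, the $\partial U$-edges of $c$ are exactly the label-flips around the cycle, which total an even number; and (ii)~any two $\partial U$-edges sharing a common $4$-cycle are $H$-adjacent by definition, so all $\partial U$-edges of a given $4$-cycle lie in a single $H$-component.

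The main step is a parity argument modulo $2$. Suppose for contradiction that $e_x$ and $e_y$ lie in distinct $H$-components, and write $[e_x]$ for the component of $e_x$. Decompose $\sigma=\sum_{c\in\C}c$ via~\eqref{eq: cycles decomposition}, and evaluate $S:=\sum_{e\in[e_x]}|\{c\in\C:e\in c\}|\pmod 2$ in two ways. Counting cycle-by-cycle, each $c\in\C$ contributes $|c\cap\partial U\cap[e_x]|$, which by (i) and (ii) is either $0$ or the full even size of $c\cap\partial U$; hence $S\equiv 0$. Counting edge-by-edge, the multiplicity of $e$ in $\sum c$ is odd iff $e\in\sigma$, and since $\sigma\cap\partial U=\{e_x,e_y\}$ while $e_y\notin[e_x]$ by assumption, only $e_x\in[e_x]$ contributes an odd summand; hence $S\equiv 1$. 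This contradiction forces $e_x\sim_H e_y$.

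Finally, to translate this connectivity back to $G_U$, take any $H$-path $e_x=f_0,f_1,\ldots,f_m=e_y$ and let $a_i\in\intb U$ be the $U$-endpoint of $f_i$. Then $a_0=x$, $a_m=y$, and each consecutive pair $a_i,a_{i+1}$ is adjacent in $G_U$ directly from Definition~\ref{def: G_u connected} applied to the $4$-cycle shared by $f_i$ and $f_{i+1}$, giving a $G_U$-path within $\intb U$ from $x$ to $y$. I expect the parity bookkeeping in the third paragraph to be the only genuinely subtle step; the construction of $\sigma$ and the translation back to $G_U$ are routine.
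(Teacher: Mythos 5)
Your proof is correct and takes essentially the same approach as the paper's: both hinge on the $4$-cycle decomposition \eqref{eq: cycles decomposition} applied to a cycle $\sigma$ meeting $\partial U$ in exactly two edges, followed by a parity argument. The paper phrases connectivity dually---for any nontrivial partition of $\intb U$ the cycle $\sigma$ must contain an odd number of boundary edges from one side, so some $4$-cycle in the decomposition carries an odd number of such edges and hence (since every cycle meets $\partial U$ evenly) also an edge from the other side, giving a crossing $G_U$-edge directly---whereas your version fixes $x,y$, introduces the auxiliary graph $H$ on $\partial U$, and double-counts $S$; these are the same argument in slightly different packaging.
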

\begin{proof}
The proof is heavily based on ideas developed in \cite{Tim}. It
suffices to show that for any non-trivial partition $S_1, S_2$ of
$\intb U$ there exists an edge of $G_U$ connecting $S_1$ and $S_2$.
Here, a non-trivial partition means that $S_1, S_2\neq \emptyset$,
$S_1\cap S_2=\emptyset$ and $S_1\cup S_2 = \intb U$. Let $S_1, S_2$
be such a partition. We set
\begin{align*}
   E_1&:=\{e\in\partial U \ :\ e\cap S_1 \neq \emptyset\},\\
   E_2&:=\{e\in\partial U \ :\ e\cap S_2 \neq \emptyset\}.
\end{align*}
By the connectedness of $U$ and $U^c$ in $\ZZ$, there exists some
cycle $\sigma$ in $\ZZ$ which contains exactly one edge of $E_1$ and
one edge of $E_2$ (in fact, we can even pick those boundary edges
arbitrarily). As $4$-cycles span the cycles of $\ZZ$, we write
$\sigma$ as a sum of such cycles
\begin{equation}
\sigma = \displaystyle \sum_{c\in\C} c,
\end{equation}
as in \eqref{eq: cycles decomposition}. We notice that as $\sigma$
contains an odd number of $E_1$ edges (in fact, just one), there
must also be a $4$-cycle $c_0\in\C$ containing an odd number of
$E_1$ edges. However as every cycle contains an even number of edges
from the boundary $\partial U=E_1\uplus E_2$, $c_0$ must contain an
edge of $E_2$ as well. Thus $S_1$ and $S_2$ are connected by an edge
of $G_U$, concluding the proof.
\end{proof}

Lemma \ref{lem: edge connectivity} allows us to prove
Proposition~\ref{prop: Timar II} and Theorem~\ref{thm: pair_trich}.
In this proof we will make use of \cite[Theorem 4]{Tim}. For
convenience, we state a special case of this theorem in the context
of our work.

\begin{theorem*}[Tim\'ar]
For any {\bicon} $A\subsetneq \ZZ$, the set $$\{y\in A^c\ :\ y\text{
differs from some point in }A\text{ by $\pm1$ in each of exactly one
or two coordinates}\}$$ is connected in $\ZZ$.
\end{theorem*}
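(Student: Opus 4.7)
I would adapt the proof of Lemma~\ref{lem: edge connectivity}, now replacing $\intb A$ by the ``thick external boundary''
\begin{equation*}
T := \{y \in A^c \,:\, \exists\, x \in A \text{ with } y - x \in \{-1,0,1\}^d \text{ having exactly 1 or 2 nonzero entries}\},
\end{equation*}
which is the set whose connectedness in $\ZZ$ we wish to establish. Assume for contradiction that $T$ admits a partition $T = S_1 \sqcup S_2$ with both $S_k$ nonempty and no $\ZZ$-edge between $S_1$ and $S_2$, and set
\begin{equation*}
E_k := \{(x,y) \in \partial A \,:\, x \in A,\, y \in S_k\}, \qquad k=1,2.
\end{equation*}
Since $\extb A \subseteq T$, every edge of $\partial A$ has its $A^c$-endpoint in $T$, so $\partial A = E_1 \sqcup E_2$.

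\textbf{A closed walk with odd crossings.} Pick $y_k \in S_k$ with a witness $x_k \in A$ (so $y_k - x_k$ has 1 or 2 nonzero entries, each $\pm 1$). I build a closed walk $\sigma$ in $\ZZ$ from $y_1$ to itself by concatenating: (a) a path of length $1$ or $2$ from $y_1$ to $x_1$; (b) a path in $A$ from $x_1$ to $x_2$ (using connectedness of $A$); (c) a path of length $1$ or $2$ from $x_2$ to $y_2$; (d) a path in $A^c$ from $y_2$ back to $y_1$ (using connectedness of $A^c$). Segments (b), (d) contribute no $\partial A$-edges. Segment (a) contributes exactly one $\partial A$-edge, and it lies in $E_1$: in the length-$1$ case this is $(y_1, x_1)$ directly; in the length-$2$ case $y_1 \to z \to x_1$, either $z \in A$ and $(y_1, z) \in E_1$ because $y_1 \in S_1$, or $z \in A^c$, in which case $z \in \extb A \subseteq T$ is $\ZZ$-adjacent to $y_1 \in S_1$, so the no-crossing assumption forces $z \in S_1$ and hence $(z, x_1) \in E_1$. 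Symmetrically, (c) contributes a single $\partial A$-edge which lies in $E_2$. Since $E_1 \cap E_2 = \emptyset$ and these two edges are distinct, the walk $\sigma$, viewed as a sum of edges mod $2$, has exactly one edge in $E_1$ and exactly one edge in $E_2$.

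\textbf{Reduction to a $4$-cycle.} Since $\ZZ$ is simply connected, its cycle space is spanned mod $2$ by $4$-cycles (cf.\ \eqref{eq: cycles decomposition}), so $\sigma = \sum_{c \in \C} c$ for some finite family $\C$ of $4$-cycles. Some $c \in \C$ must have $|c \cap E_1|$ odd, and since every cycle crosses $\partial A$ an even number of times, $|c \cap E_2|$ is odd as well, forcing $|c \cap \partial A| \in \{2,4\}$. If $|c \cap \partial A| = 4$, then $c$ alternates $A, A^c, A, A^c$, and each $A^c$-vertex of $c$ is incident to two $\partial A$-edges of $c$ contributing jointly $0$ or $2$ to $|c \cap E_1|$, so $|c \cap E_1|$ is even --- contradiction. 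If $|c \cap \partial A| = 2$ with the two $\partial A$-edges opposite on $c$, the two $A^c$-endpoints are consecutive on $c$ (hence $\ZZ$-adjacent), both lie in $\extb A \subseteq T$, and belong to distinct $S_k$'s --- a forbidden crossing. If the two $\partial A$-edges are adjacent on $c = (a_1, a_2, a_3, a_4)$ sharing vertex $a_2 \in A$, then $a_1, a_3 \in A^c$, and the remaining two edges of $c$ force $a_4 \in A^c$; the diagonal difference $a_4 - a_2 = \pm e_i \pm e_j$ has exactly two nonzero $\pm 1$ entries (with $i \neq j$), so $a_4 \in T$, and whichever $S_k$ contains $a_4$, the $\ZZ$-edge from $a_4$ to whichever of $a_1, a_3$ lies in the opposite class gives a forbidden $S_1$-$S_2$ crossing.

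\textbf{Main obstacle.} The core difficulty is guaranteeing the odd parities $|\sigma \cap E_k| \equiv 1 \pmod 2$ despite $\sigma$ not being a simple cycle; this is accomplished by confining segments (b), (d) to $A$ and $A^c$ respectively, which isolates all $\partial A$-contributions to the two transition segments (a), (c) and directs them into the disjoint classes $E_1, E_2$. The adjacent-edge subcase of the $4$-cycle analysis is where the definition of $T$ shows its force: the diagonal vertex $a_4$ lies in $T$ precisely because differing from $a_2 \in A$ in \emph{two} coordinates is permitted, so without the ``$2$-coordinate'' clause in the definition of $T$ the contradiction would collapse.
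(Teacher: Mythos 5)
Your proof is correct (up to one small unaddressed sub-case), and it takes a genuinely different route from the paper. The paper does not prove this statement at all: it simply cites it as a special case of Tim\'ar's Theorem~4 in~\cite{Tim}, specifying how to instantiate the parameters $G$, $G^+$, $C$, $x$. You instead give a self-contained proof by adapting the machinery the paper develops for Lemma~\ref{lem: edge connectivity} --- assume a no-crossing bipartition $T=S_1\sqcup S_2$, build a closed walk crossing $\partial A$ once into each class, decompose it over $4$-cycles, and rule out every $4$-cycle configuration. What your route buys is independence from the external reference and a uniform ``cycle-space'' style matching the rest of Section~\ref{sec: proof of pair tric}; what the paper's route buys is brevity and full generality (Tim\'ar's theorem is stated for general graphs and general ``extended'' adjacency).

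One small omission in the $4$-cycle case analysis: when the two $\partial A$-edges of $c$ are adjacent, you write ``sharing vertex $a_2\in A$'' without justifying that the shared vertex must lie in $A$. You should note the complementary sub-case: if the shared vertex $a_2$ lies in $A^c$, then $a_2$ is the $A^c$-endpoint of both $\partial A$-edges of $c$, so both lie in the same $E_k$ (whichever $S_k$ contains $a_2$), giving $|c\cap E_1|$ even --- which is already excluded. With that one sentence added, the case analysis is complete and the argument is sound.
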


To see that this is a special case of \cite[Theorem 4]{Tim}, take
$G=\ZZ$, and let $G^+$ be $G$ with an edge between every two
vertices who differ by $\pm1$ on each of exactly one or two
coordinates. Also, take $C=A$, and let $x$ be some arbitrary point
in $A^c$.

\begin{proof}[Proof of Proposition~\ref{prop: Timar II}]
Let $A$ be a {\bicon} set in $\ZZ$. The first part of the
proposition is an immediate result of Lemma~\ref{lem: edge
connectivity}, as connectivity of $\intb A \cup \extb A$ in $\ZZ$ is
weaker than connectivity of $\intb A$ in $G_A$. The proof of the
second part uses the above stated version of \cite[Theorem 4]{Tim}.
By the theorem,
$$B:=\{y\in A^c\ :\ y\text{ differs from some point in }A\text{ by $\pm1$ in
each of exactly one or two coordinates}\}$$ is connected in $\ZZ$.
In addition $B$ satisfies that $B\subset A^{++}\setminus A$ and that
every vertex in $A^{++}\setminus A$ has a neighbor in $B$ (as
$A^+\setminus A\subset B$). We therefore have that $A^{++}\setminus
A$ is connected in $\ZZ$ as required. To get the third part of the
proposition, we recall that if $A$ is \bicon, then so is $A^c$, and
that $A\setminus A^{--}=(A^c)^{++}\setminus A^c$. We can therefore
derive the third part of the proposition by applying the second part
to $A^c$.
\end{proof}

\begin{figure}[htb!]
\centering%
\includegraphics[scale=0.55]{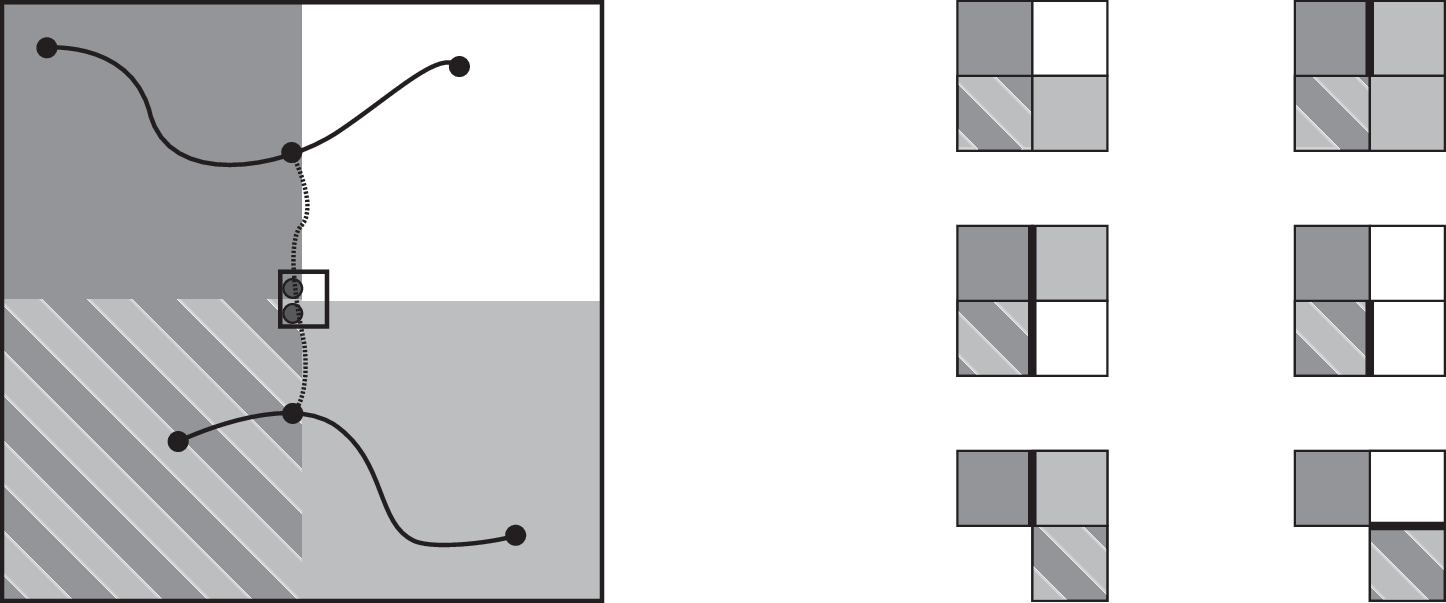}\\
    \rput(-6.3,3.8){$U_2$}
    \rput(-3.5,0.9){$U_1$}
    \rput(-6,0.9){$U_1\cap U_2$}
    \rput(-2,5.6){$u_{00}$}
    \rput(-1.5,1.3){$u_{10}$}
    \rput(-5.5,2.2){$u_{11}$}
    \rput(-6.2,5.4){$u_{01}$}
    \rput(-4,5.15){$u_{0}$}
    \rput(-4,2){$u_{1}$}
    \rput(-4.4,3.2){$v_{11}$}
    \rput(-4.4,3.6){$v_{01}$}
\caption{Illustration accompanying the proof of Theorem~\ref{thm:
pair_trich}. On the left - the roles of $u_{00}$, $u_{10}$, $u_{11}$
and $u_{01}$ are illustrated, as well as those of $u_0$, $u_1$,
$v_{01}$ and $v_{11}$. On the right - all the possible
configurations of the $4$-cycle $c$, up to rotation and reflection,
are illustrated. Observe that $c$ must contain a dark vertex
$v_{01}$ and a striped vertex $v_{11}$ and that each of these must
be adjacent to a vertex in $c$ which is neither dark nor striped. In
the top four configurations $v_{01}$ and $v_{11}$ are next to each
other while in the bottom two they are in opposite corners of the
cycle. When the boundary disjointness is ruled out due to the
existence of an edge violating $\partial U_1\cap\partial
U_2=\emptyset$, this edge is marked. When no edge is marked, the
alternative is ruled out due to the existence of a ``forbidden
cycle'' (as in the definition of boundary disjointness).}
\label{fig: pair proof}
\end{figure}

\begin{proof}[Proof of Theorem~\ref{thm: pair_trich}]
We accompany the proof with Figure~\ref{fig: pair proof}. Assume to
the contrary all the alternatives in the theorem do not hold. We can
therefore pick $u_{11} \in U_1 \cap U_2$, $u_{10} \in U_1 \cap
U_2^c$, ${u}_{01} \in U_1^c \cap U_2$ and ${u}_{00} \in U_1^c \cap
U_2^c$. As $U_1$ is connected, there exists a path inside $U_1$
between ${u}_{10}$ and ${u}_{11}$. This path must contain a vertex
$u_1\in U_1 \cap \intb U_2 $. Similarly there exists a path outside
$U_1$ between ${u}_{00}$ and ${u}_{01}$ which contains a vertex
$u_0\in U_1^c\cap \intb U_2$.

By Lemma~\ref{lem: edge connectivity}, $\intb U_2$ is connected in
$G_{U_2}$. In particular, if we partition $\intb U_2$ into $U_1\cap
\intb U_2$ and $U_1^c\cap \intb U_2$, we must have an edge in
$G_{U_2}$ crossing this partition. In other words, there exists a
$4$-cycle $c$ which contains two edges $e_0,e_1\in\partial U_2$, and
two vertices $v_{01}\in e_0$ and $v_{11}\in e_1$ such that $v_{01}
\in
 U_1^c \cap \intb U_2$ and $v_{11}\in U_1\cap \intb U_2$. A careful case study of all the possible
configurations of such a cycle (see Figure~\ref{fig: pair proof})
yields that its existence must contradict the boundary disjointness
for $U_1$ and $U_2$. We conclude that at least one of the
alternatives in the theorem must hold.

Next we show that \emph{exactly} one of the alternatives holds. The
third alternative cannot co-exist with either of the first two
alternatives as a co-connected set is non-empty and has non-empty
complement. For the first two alternatives to hold together it must
be the case that $U_1 = U_2^c$, contradicting the boundary
disjointness of $U_1$ and $U_2$. The theorem follows.
\end{proof}

\subsection{Proof of the translation trichotomy}\label{sec: proof of trans tric}
This section is dedicated to the proof of Theorem~\ref{thm: main trichotomy}.

We begin by showing the trichotomy itself. The pair intersection
trichotomy, Theorem~\ref{thm: pair_trich}, guarantees that every two
sets $U_1,U_2\in T_U$ satisfy one of the three alternatives of the
theorem. Thus it is sufficient to show that for any three distinct
sets $U_1,U_2,U_3\in T_U$, the same alternative holds for both pairs
$U_1,U_2$ and $U_1,U_3$. In particular, the theorem is immediate if
$|T_U|=2$. Fix distinct $U_1, U_2, U_3\in T_U$. We shall rule out
three cases.
\begin{enumerate}
\item Alternatives 0 and 1 cannot coexist. Let $\delta,\Delta\in n\ZZ$ be such
    that $U_2 = U_1+\delta$ and $U_3=U_1+\Delta$. Assume, WLOG, that
    $U_1\cap U_3=\emptyset$ and $U_1\subsetneq U_2$. As $U_1$ and $U_2$
    are boundary disjoint, by Proposition~\ref{prop: dist prop} we get
    that $\dist(U_1,U_3)>\dist(U_2,U_3)$. We note that,
    $U_1+\Delta\subseteq U_1+\Delta+\delta$, as $U_1\subseteq
    U_1+\delta$. We deduce, using \eqref{eq: Basic dist eq3}, that
    $\dist(U_1+\delta, U_1+\Delta) \ge \dist(U_1+\delta,
    U_1+\Delta+\delta)$. Putting all of this together, we get:
\begin{equation*}
\dist(U_1,U_1+\Delta) > \dist(U_1+\delta, U_1+\Delta) \ge
\dist(U_1+\delta, U_1+\Delta+\delta) = \dist(U_1,U_1+\Delta),
\end{equation*}
which is a contradiction.
\item Alternatives 0 and -1 cannot coexist. The argument follows similarly to
    the previous part by passing from $U_1,U_2,U_3$ to
    $U_1^c,U_2^c,U_3^c$.
\item Alternatives 1 and -1 cannot coexist. To see this, assume, WLOG, that
    $U_1\cap U_2=\emptyset$ and $U_1^c\cap U_3^c=\emptyset$. It follows
    that $U_1\cup U_3=\ZZ$ and hence $U_2\subseteq U_3$. A contradiction
    follows since alternatives 0 and 1 cannot coexist.\qedhere
\end{enumerate}
Next, we show the second part of the theorem, i.e., that if
$\Ty(U)=0$, then there exists a translation $\Delta \in n \ZZ$ and
an order-preserving bijection $o\colon T_U\to\Z$, such that
$o^{-1}(i+1)=o^{-1}(i)+\Delta$ for all $i\in\Z$. Assume $\Ty(U)=0$.
Define $o(U):=0$ and for any $V\in T_U$ let
\begin{equation*}
o(V):=\begin{cases} \big|\{W\in T_U\ :\ U\subsetneq W\subseteq
V\}\big|&
U\subseteq V\\
-\big|\{W\in T_U\ :\ V\subseteq W\subsetneq U\}\big|& V\subseteq U
\end{cases}.
\end{equation*}
To see that this is well defined, let us explain why $\{W\in T_U\ :\
U\subsetneq W\subseteq V\}$ is finite. A similar argument will show
that $\{W\in T_U\ :\ V\subseteq W\subsetneq U\}$ is finite. Since
$T_U$ is ordered by inclusion, applying Proposition~\ref{prop: dist
prop} to the complements of two distinct sets in $\{W\in T_U\ :\
U\subsetneq W\subseteq V\}$, taking the $V$ of the proposition to be
our $U$, shows that each set $W$ in $\{W\in T_U\ :\ V\subseteq
W\subsetneq U\}$ is uniquely characterized by $\dist(W^c,U)$. Since
$\dist(W^c,U)\le\dist(V^c,U)$ we conclude that $\{W\in T_U\ :\
U\subsetneq W\subseteq V\}$ is finite, as we wanted to show.

To show that $o$ is one-to-one, suppose $V_1,V_2\in T_U$ satisfy
$o(V_1)=o(V_2)$. Assume WLOG that $o(V_1) \ge 0$ and $V_1\subseteq V_2$. This
implies that
$$\{W\in T_U\ :\ U\subsetneq W\subseteq V_1\} \subseteq \{W\in T_U\ :\ U\subsetneq W\subseteq V_2\}.$$
However, as $o(V_1)=o(V_2)$, we get
$$\{W\in T_U\ :\ U\subsetneq W\subseteq V_1\} = \{W\in T_U\ :\ U\subsetneq W\subseteq V_2\}$$
and, in particular, $V_2\subseteq V_1$. Thus $V_1 = V_2$.

Finally, we show that there is a $\Delta\in n\ZZ$ such that
$o^{-1}(i+1)=o^{-1}(i)+\Delta$ for all $i\in\Z$. We begin by
observing that $o^{-1}(1)$ is nonempty. To see this recall that
$|T_U|>1$ and therefore $U \subsetneq U+z$ for some $z \in n\ZZ$.
This implies that $o(U+z)\ge1$ and therefore there must exist some
$\Delta\in n\ZZ$ such that $o(U+\Delta)=1$. Equivalently, there is
no $W\in T_U$ for which $U\subsetneq W\subsetneq U+\Delta$. Since
this situation is preserved under translations it follows that
$o^{-1}(i) = U+i\Delta$ for all $i\in\Z$. \qed

\section{Sublevel Sets of HHFs}\label{sec: QP properties}

In this section we establish the theoretical basis for dealing with
quasi-periodic HHFs. Much of the intuition behind the theorems of this
section stems from algebraic topology, viewing quasi-periodic HHFs as a
discrete analogue of co-cycles on the torus, and periodic HHFs as a discrete
analogue of co-boundaries. Nonetheless, we avoid making any direct reference
to topology, and restrict ourselves to purely combinatorial proofs. The results of this section are central to our construction in Section~\ref{sec: proof of 2.1} of the one-to-one mapping $\Psi_m : \QP_m \to \QP_\zero$ and the analysis of its properties.

We begin by introducing the notions of \emph{sublevel sets} and
\emph{sublevel components} of HHFs. Sublevel sets are discrete
counterparts to sublevel sets of continuous functions. A sublevel
component augments a sublevel set to a {\bicon} set.

Let $G$ be either $\ZZ$ or $\TT$. Let $k\in\Z$, $h\in \hm(G)$ and
let $u,v\in V(G)$ satisfy
\begin{equation}\label{eq:h_u_k_h_v}
   h(u)\le k<h(v).
\end{equation}
We define the \emph{$k$-sublevel set} of $u$,
\begin{equation*}
  \LL_h^{k+}(u)\text{ is the connected component of $u$ in $G\setminus
\{w\in V(G)\ :\ h(w)=k+1 \}$}.
\end{equation*}
While the sublevel set is itself connected, by definition, its
complement may be disconnected. We wish to isolate a single
connected component of the complement and do this by enlarging the
sublevel set. Precisely, we define the \emph{$k$-sublevel component
from $u$ to $v$},
\begin{equation*}
  \LC_h^{k+}(u,v)\text{ is the complement of the connected
component of $v$ in $G\setminus \LL_h^{k+}(u)$}.
\end{equation*}
Figure~\ref{fig: LC} illustrates a sublevel component and a sublevel
set in $\ZZ$. In our applications sublevel sets are mostly used as a
part of the definition of sublevel components, without a significant
role of their own. To simplify our notation we write $\LL_h^+(u)$
for $\LL_h^{h(u)+}(u)$ and $\LC_h^+(u,v)$ for $\LC_h^{h(u)+}(u,v)$.

\begin{figure}[htb!]
\centering%
\includegraphics[scale=0.35]{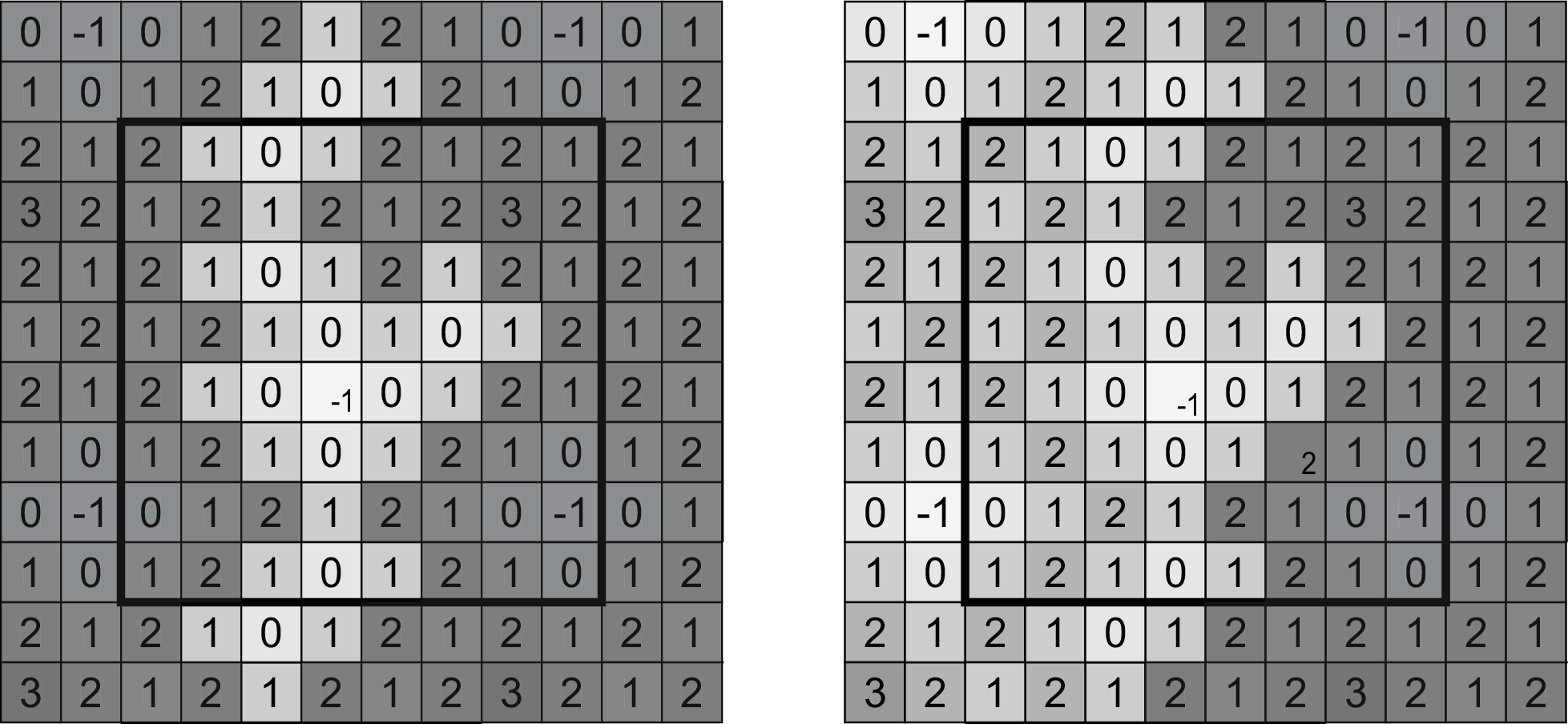}\\

 \rput(-3.4,3.1){$u$}
 \rput(2.83,3.1){$u$}
 \rput(3.7,2.65){$v$}
\caption{An illustration of sublevel components for a certain
periodic $h\in \hm(\ZZ)$, with respect to the two vertices
$u,v\in\ZZ$. On the left - a portion of $\LL_h^{1+}(u)$ is
highlighted. On the right - a portion of $\LC_h^{1+}(u,v)$. Observe
that $\LC_h^{1+}(u,v)$ is \bicon{ }while $\LL_h^{1+}(u)$ is not. }
\label{fig: LC}
\end{figure}

In the rest of the section we prove structure theorems for sublevel
components of HHFs, mainly on $\ZZ$. In Section~\ref{ss:sublevel
components and basic prop} we establish several basic properties of
sublevel components. In Section~\ref{subs: Level components of
height functions} we show that sublevel components on $\ZZ$ are
\bicon\ and boundary disjoint so that they satisfy the conditions of
the pair-trichotomy (Theorem~\ref{thm: pair_trich}). In
Section~\ref{subs: Level components determine height}, we give a
formula for computing the height difference between two vertices in
terms of the sublevel components separating them. In
Section~\ref{subs: Level components of QP functions} we show that
sublevel components of quasi-periodic HHFs are translation
respecting and hence satisfy the conditions of Theorem~\ref{thm:
main trichotomy} and can be assigned a type. We conclude there that
when $m\neq 0$, any HHF in $\QP_m$ has type-$0$ sublevel components.
In Section~\ref{subs: Decreasing Level components} we introduce
superlevel components and discuss their relationships with sublevel
components. Finally, Section~\ref{subs: Level components of two
height functions} gives a condition for two HHFs to share the same
sublevel component.

\subsection{Basic properties of sublevel components}\label{ss:sublevel components and basic prop}
Let $G$ be either $\ZZ$ or $\TT$. Let $h\in\hm(G)$ and suppose
$u,v\in G$ satisfy \eqref{eq:h_u_k_h_v}. Let
\begin{equation*}
  U:=\LC_h^{k+}(u,v).
\end{equation*}
The next proposition collects several basic properties of sublevel
components of $h$.
\begin{propos}\label{prop: basic LC prop} The
sublevel component $U$ satisfies:
\begin{enumerate}
\item $u\in U$ and $v\notin U$.
\item $h(x)=k$ for all $x\in\intb U$, and $h(x)=k+1$ for all $x\in\extb
    U$. In particular, $U$ is odd.
\item $U$ is \bicon.
\item $\intb U \subseteq \LL^{k+}_h(u)\subseteq U.$
\end{enumerate}
\end{propos}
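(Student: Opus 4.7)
The plan is to work through the four items essentially in the order 1, 4, 2, 3, since each subsequent item reuses information established by the previous ones.

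First I would nail down item 1 together with the half-inclusion $\LL_h^{k+}(u)\subseteq U$ from item 4. The point is that $v\notin \LL_h^{k+}(u)$: any path in $G$ from $u$ to $v$ sees the height jump from $h(u)\le k$ to $h(v)\ge k+1$, and because $h$ is an HHF it changes by $\pm1$ along each edge and therefore cannot skip the value $k+1$. Hence $v$ really sits in some connected component $C_v$ of $G\setminus \LL_h^{k+}(u)$, and by definition $U=V(G)\setminus C_v$. In particular $v\in C_v=U^c$, and $u\in \LL_h^{k+}(u)\subseteq U$, which also gives the easy half of item 4.

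Next I would establish $\intb U\subseteq \LL_h^{k+}(u)$, which is the workhorse for items 2 and 4. Suppose $x\in\intb U$ but $x\notin \LL_h^{k+}(u)$. Then $x$ lies in the induced subgraph on $V(G)\setminus \LL_h^{k+}(u)$, and since $x$ is adjacent to some $y\in U^c=C_v$ there, $x$ would join $C_v$, contradicting $x\in U$. So $x\in \LL_h^{k+}(u)$, which completes item 4. For item 2, once $x\in \LL_h^{k+}(u)$ I would use the fact that a path from $u$ to $x$ avoiding $\{h=k+1\}$ cannot leave the half-space $\{h\le k\}$ (again because $h$ cannot skip $k+1$), so $h(x)\le k$. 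Now $x$ has a neighbor $y\in U^c\subseteq V(G)\setminus \LL_h^{k+}(u)$: if $h(y)\neq k+1$, then $y$ would be joined to $\LL_h^{k+}(u)$ through $x$ in $G\setminus\{h=k+1\}$, contradicting $y\notin \LL_h^{k+}(u)$; hence $h(y)=k+1$ and then $h(x)=k$. For $z\in\extb U$ any neighbor in $U$ lies in $\intb U$, hence has height $k$, so $h(z)\in\{k-1,k+1\}$; ruling out $h(z)=k-1$ by the same "would-extend-$\LL_h^{k+}(u)$" contradiction forces $h(z)=k+1$. Oddness is then immediate since $\intb U$ is contained in one parity class.

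Finally for item 3, $U\neq \emptyset$ and $U\neq V(G)$ come from item 1, and $U^c=C_v$ is connected by construction. To see $U$ is connected, I would observe that $\LL_h^{k+}(u)$ is a connected component of $G\setminus\{h=k+1\}$, hence connected in $G$; and every other connected component $C'$ of $G\setminus \LL_h^{k+}(u)$ that lies in $U$ must contain a vertex adjacent in $G$ to $\LL_h^{k+}(u)$ (otherwise $C'$ would be a full connected component of $G$, impossible as $G$ is connected). Unioning all such $C'$ with $\LL_h^{k+}(u)$ gives $U$ as a connected set.

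The single recurring obstacle is that $\LL_h^{k+}(u)$ and $C_v$ are defined by two nested connectivity requirements (first in $G\setminus\{h=k+1\}$, then in $G\setminus \LL_h^{k+}(u)$), so one has to keep track of which ambient graph a given path lives in. The trick that unlocks everything is the same "extension" argument: any putative violation of items 2 or 4 immediately yields a path that would enlarge $\LL_h^{k+}(u)$, contradicting its maximality as a connected component. Once that observation is isolated, the remaining bookkeeping is routine.
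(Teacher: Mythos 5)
Your proof is correct. The paper itself omits the proof of this proposition, dismissing all four items as ``straightforward from the definition,'' so there is no argument in the paper to compare against; you are supplying the omitted details. Your organization (establishing $\LL_h^{k+}(u)\subseteq U$ and $v\notin\LL_h^{k+}(u)$ first, then $\intb U\subseteq\LL_h^{k+}(u)$, then deriving the boundary heights and co-connectedness from these) is efficient and correctly handles the two-layer definition of the sublevel component. One very minor point worth making explicit in a written version: the discrete intermediate-value step (an HHF that changes by $\pm 1$ along edges cannot go from a value $\le k$ to a value $\ge k+2$ without hitting $k+1$) is the single fact driving both your ``$v\notin\LL_h^{k+}(u)$'' step and your ``heights in $\LL_h^{k+}(u)$ stay $\le k$'' step, and it deserves one sentence of justification. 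Also, when you conclude oddness you should say in a word that for a homomorphism height function on a bipartite graph, vertices of equal height lie in the same bipartition class, which is why $\{h=k\}$ is monochromatic in parity. With those two sentences inserted, the argument is complete.
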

All of these properties are straightforward from the definition and we omit
their proof.

In view of the second item of the proposition, we write, with a
slight abuse of notation, $h(\intb U)$ and $h(\extb U)$ for the
common height of all vertices in $\intb U$ and $\extb U$,
respectively.

In the next corollary, we give useful criteria for containment relations between a connected set
in $G$ and a sublevel component.
\begin{cor}\label{cor: outside LC crit}
The sublevel component $U$ satisfies:
\begin{itemize}
\item If $V\subseteq V(G)$ is connected and satisfies $v\in V$, $u\notin V$ and
    $h(w)>k$ for all $w\in\intb V$, then $V\subseteq U^c$.
\item If $V\subseteq V(G)$ is connected and satisfies $V\cap U\neq \emptyset$,
    $\extb U\subseteq V^c$, then $V\subseteq U$.
\end{itemize}
\end{cor}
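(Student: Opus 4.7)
The plan is to prove both items by contradiction, in each case locating an edge that crosses a relevant boundary and deriving a conflict with one of the hypotheses. The common tool is the elementary observation that if a connected subset of $V(G)$ meets both a set $S$ and its complement $S^c$, then it contains an edge with one endpoint in $S$ and one in $S^c$.

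For item 1, I would first establish the auxiliary observation that $\LL_h^{k+}(u) \subseteq \{w : h(w) \le k\}$. Indeed, $\LL_h^{k+}(u)$ consists of vertices reachable from $u$ by paths in $G$ that avoid every vertex of height $k+1$, and since $h(u) \le k$ and $h$ changes by $\pm 1$ along each edge, such a path cannot cross level $k+1$ and must remain in $\{h \le k\}$. With this in hand, assume for contradiction that $\LL_h^{k+}(u) \cap V \neq \emptyset$. Since $u \in \LL_h^{k+}(u)$ while $u \notin V$, the connected set $\LL_h^{k+}(u)$ meets both $V$ and $V^c$, so it contains an edge $(w_1, w_2)$ with $w_1 \in V$ and $w_2 \in V^c$. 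Then $w_1 \in V$ has a neighbor outside $V$, so $w_1 \in \intb V$; yet $h(w_1) \le k$, contradicting the hypothesis $h(w) > k$ on $\intb V$. Hence $V \cap \LL_h^{k+}(u) = \emptyset$, and because $V$ is connected and contains $v$, it lies in the connected component of $v$ in $G \setminus \LL_h^{k+}(u)$, which is $U^c$ by definition of $\LC_h^{k+}(u,v)$.

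For item 2, suppose for contradiction that $V \not\subseteq U$. Since $V$ is connected and $V \cap U \neq \emptyset$, there is an edge $(w_1, w_2)$ with $w_1 \in V \cap U$ and $w_2 \in V \setminus U$. Then $w_2 \in V$ is adjacent to $w_1 \in U$ but does not belong to $U$, so $w_2 \in U^+ \setminus U = \extb U$, contradicting $\extb U \subseteq V^c$. I do not expect a significant obstacle in either part: both arguments rely on the crossing-edge fact together with elementary manipulations of the boundary operators $\intb$ and $\extb$. The only nontrivial input is the height-trapping observation $\LL_h^{k+}(u) \subseteq \{h \le k\}$ used in item 1, which is immediate from the HHF property.
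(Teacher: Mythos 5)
Your proof is correct and follows essentially the same route as the paper. For the first item the paper likewise observes that $\LL_h^{k+}(u)\subseteq\{w:h(w)\le k\}$, deduces $\intb V\cap\LL_h^{k+}(u)=\emptyset$ by the same height argument, and then uses connectivity to conclude $V\subseteq(\LL_h^{k+}(u))^c$ and hence $V\subseteq U^c$; for the second item the paper simply declares it straightforward and omits the proof, and your crossing-edge argument correctly fills it in.
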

\begin{proof}
To get the first item, observe that, by definition of the
$k$-sublevel set $\LC_h^{k+}(u)$ and the fact that an HHF changes by
one between neighbors, $\intb V\subset \LC_h^{k+}(u)^c$. As $\intb
V$ separates $V$ from $u$, we have $V\subset \LC_h^{k+}(u)^c$.
Together with the fact that $V$ is a connected set containing $v$,
the first item follows. The second item is straightforward and we
omit its proof.
\end{proof}

\subsection{Sublevel components on $\ZZ$}\label{subs: Level components of height
functions} Until the end of Section~\ref{sec: QP properties} we
discuss the structure of the set of sublevel components of a single
HHF on $\ZZ$. Throughout the rest of Section~\ref{sec: QP
properties}, we denote by $h$ an arbitrary function in $\hm(\ZZ)$.
In the beginning of Section~\ref{subs: Level components of QP
functions} we shall impose additional restrictions on $h$. Note that
dependence on $h$ will often be implicit in our notation.

\heading{Boundary disjointness} The following proposition implies
that sublevel components on $\ZZ$ satisfy the conditions of the pair
trichotomy, Theorem~\ref{thm: pair_trich}.
\begin{propos}\label{prop: LC boundaries are disjoint}
Distinct sublevel components of a function $h\in\hm(\ZZ)$ are boundary
disjoint.
\end{propos}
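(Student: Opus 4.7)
First I would reduce boundary disjointness to the single condition $\partial U_1 \cap \partial U_2 = \emptyset$. By item 2 of Proposition~\ref{prop: basic LC prop}, every sublevel component $U$ satisfies that $\intb U$ lies in a single level set of $h$, and since the value of an HHF on a vertex determines its parity, $U$ is odd. Thus by \eqref{eq: odd disjoint} it suffices, for two distinct sublevel components, to rule out a shared boundary edge. Suppose for contradiction that $U_1=\LC^{k_1+}_h(u_1,v_1)$ and $U_2=\LC^{k_2+}_h(u_2,v_2)$ are distinct and that there is an edge $e=(x,y)\in \partial U_1\cap\partial U_2$.

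Next I would use heights to force $k_1=k_2$. Every edge in $\partial U_i$ runs between a vertex of $\intb U_i$, which has $h$-value $k_i$, and a vertex of $\extb U_i$, which has $h$-value $k_i+1$, again by item 2 of Proposition~\ref{prop: basic LC prop}. Applying this to $e$ for $i=1,2$ gives $\{h(x),h(y)\}=\{k_1,k_1+1\}=\{k_2,k_2+1\}$, so $k_1=k_2=:k$, and without loss of generality $x\in \intb U_1\cap\intb U_2$ with $h(x)=k$, while $y\in \extb U_1\cap\extb U_2$ with $h(y)=k+1$.

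Finally I would show $U_1=U_2$ by tracing the definitions. By item 4 of Proposition~\ref{prop: basic LC prop}, $x\in\intb U_i\subseteq \LL^{k+}_h(u_i)$ for $i=1,2$. Since $\LL^{k+}_h(u_i)$ is, by definition, the connected component of $u_i$ in $\ZZ\setminus\{w:h(w)=k+1\}$, both $\LL^{k+}_h(u_1)$ and $\LL^{k+}_h(u_2)$ are connected components of this same subgraph containing the common vertex $x$, hence they are equal; call this common set $L$. Now $y\in \extb U_i\subseteq U_i^c$, and $U_i^c$ is by definition the connected component of $v_i$ in $\ZZ\setminus L$; so again two connected components of the same subgraph share the vertex $y$ and therefore coincide. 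Hence $U_1=U_2$, contradicting distinctness. The main (mild) obstacle is making sure the bookkeeping between the two base points $(u_i,v_i)$ is transparent; once one notices that the shared edge forces a shared vertex first in the $h^{-1}\ne k+1$ graph and then in its appropriate subgraph, both connected components collapse.
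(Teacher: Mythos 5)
Your proof is correct and takes essentially the same route as the paper's: reduce via oddness and \eqref{eq: odd disjoint} to edge-disjointness of boundaries, use the heights on a putative shared boundary edge to force $k_1=k_2$ and to produce a common vertex in $\intb U_1\cap\intb U_2$, then conclude $\LL^{k+}_h(u_1)=\LL^{k+}_h(u_2)$ via item~4 of Proposition~\ref{prop: basic LC prop} and finally equality of the complements (hence of the components) from the shared external endpoint $y$. The only cosmetic difference is that the paper dispatches the case $k\neq\ell$ up front, whereas you derive $k_1=k_2$ inside the contradiction argument.
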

\begin{proof}
Consider $U:=\LC_h^{k+}(u,v)$ and $V:=\LC_h^{\ell+}(x,y)$, where
$k,\ell\in\Z$ and $u,v,x,y\in\ZZ$ satisfy $h(u)\le k<h(v)$ and
$h(x)\le \ell<h(y)$. Observe that if $k\neq\ell$, the proposition
holds trivially, by the second item of Proposition~\ref{prop: basic
LC prop} and \eqref{eq: odd disjoint}. We thus assume $k=\ell$.
Suppose $U$ and $V$ are not boundary disjoint and let us show that
this implies them being equal. From the second item of
Proposition~\ref{prop: basic LC prop}, and using \eqref{eq: odd
disjoint}, we get that there exists $e=(w_1,w_2)\in\partial
U\cap\partial V$, such that $w_1\in\intb U\cap\intb V$. By the
fourth item of Proposition~\ref{prop: basic LC prop} we have
$w_1\in\LL_h^{k+}(u)\cap\LL_h^{k+}(x)$ and thus $\LL_h^{k+}(u) =
\LL_h^{k+}(x)$, by the definition of sublevel sets. Since $w_2$ is
in the connected component of both $v$ and $y$ in $\ZZ \setminus
\LL_h^{k+}(u)$, then these connected components are equal and we get
$\LC_h^{k+}(u,v)=\LC_h^{k+}(x,y)$, as required.
\end{proof}

From Proposition~\ref{prop: LC boundaries are disjoint} we derive
the following corollary.

\begin{cor}\label{cor: LC neighbors}
Every edge $(u,v)\in \ZZ$ is contained in the boundary of a unique sublevel
component.
\end{cor}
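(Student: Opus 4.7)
The plan is to obtain existence directly from the definition of a sublevel component and to derive uniqueness as a one-line consequence of Proposition~\ref{prop: LC boundaries are disjoint}.

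Fix an edge $(u,v)\in E(\ZZ)$. Since $h\in\hm(\ZZ)$, we have $|h(u)-h(v)|=1$, so one endpoint has strictly smaller height than the other; relabelling if necessary, assume $h(u)<h(v)$, and set $k:=h(u)$, so that $h(u)\le k<h(v)$. This is precisely the hypothesis \eqref{eq:h_u_k_h_v} required to define $U:=\LC_h^{k+}(u,v)$. By item~1 of Proposition~\ref{prop: basic LC prop} we have $u\in U$ and $v\notin U$, so the edge $(u,v)$ lies in $\partial U$. This establishes existence.

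For uniqueness, suppose $U_1,U_2$ are sublevel components of $h$ each containing $(u,v)$ in its boundary. By Proposition~\ref{prop: LC boundaries are disjoint}, if $U_1\neq U_2$ then $U_1$ and $U_2$ are boundary disjoint, which by definition requires $\partial U_1\cap \partial U_2=\emptyset$; this contradicts the common edge $(u,v)$. Hence $U_1=U_2$.

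I expect no serious obstacle here: existence is immediate from the definition of $\LC_h^{k+}(u,v)$ once one picks the right value of $k$, and uniqueness is an application of the already-established boundary disjointness of distinct sublevel components. The only small subtlety is the reduction to the case $h(u)<h(v)$, which is trivial since $h$ changes by exactly one across every edge.
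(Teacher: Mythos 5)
Your proof is correct and follows essentially the same route as the paper's: choose $k=h(u)$ so that $\LC_h^{k+}(u,v)=\LC_h^{+}(u,v)$ is defined, use item~1 of Proposition~\ref{prop: basic LC prop} to place $(u,v)$ in its boundary, and get uniqueness directly from Proposition~\ref{prop: LC boundaries are disjoint}.
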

\begin{proof}
Assume WLOG that $h(v)=h(u)+1$. By definition,
$(u,v)\in\partial\LC_h^{+}(u,v)$. By Proposition~\ref{prop: LC boundaries are
disjoint} no other sublevel component has $(u,v)$ in its edge boundary.
\end{proof}

The next proposition shows that in $\ZZ$, the fact that $A$ is a
sublevel component of $h$ depends only on a certain neighborhood of
the boundary of $A$.

\begin{propos}\label{prop: two HHFs, same LC 1}
Let $h_1,h_2\in\hm(\ZZ)$ be two HHFs. Let $A$ be a sublevel
component of $h_1$ and let $u\in\intb A$. Suppose there exists
$S\supseteq \intb A \cup \extb A$ satisfying that $h_1(w)=h_2(w)$
for all $w\in S$ and that $\LL_{h_1}^{+}(u)\cap S$ is a connected
set. Then $A$ is also a sublevel component of $h_2$.
\end{propos}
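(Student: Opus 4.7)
The plan is to exhibit $A$ explicitly as a sublevel component of $h_2$ with the same starting vertex $u$. Since $A$ is a sublevel component of $h_1$, write $A = \LC_{h_1}^{k+}(u_0, v_0)$ for some $u_0, v_0, k$. By items~2 and~4 of Proposition~\ref{prop: basic LC prop}, $h_1 \equiv k$ on $\intb A$, $h_1 \equiv k+1$ on $\extb A$, and $\intb A \subseteq \LL_{h_1}^{k+}(u_0)$. In particular the given vertex $u \in \intb A$ lies in the same connected component of $\ZZ \setminus \{w : h_1(w) = k+1\}$ as $u_0$, so $\LL_{h_1}^{k+}(u) = \LL_{h_1}^{k+}(u_0)$ and hence $A = \LC_{h_1}^{k+}(u, v_0)$. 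Since $\intb A \cup \extb A \subseteq S$ and $h_1 = h_2$ on $S$, the same identities $h_2 \equiv k$ on $\intb A$ and $h_2 \equiv k+1$ on $\extb A$ hold. Picking any $v' \in \extb A$ (nonempty because a path from $u \in A$ to $v_0 \in A^c$ must cross $\partial A$), one has $h_2(u) = k < k+1 = h_2(v')$, so $(u, v', k)$ defines a valid sublevel component of $h_2$, and the plan is to show $A = \LC_{h_2}^{k+}(u, v')$.

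Write $L_i := \LL_{h_i}^{k+}(u)$ for $i = 1, 2$. The identification will follow from two inclusions. First, $L_2 \subseteq A$: any path from $u$ inside $\ZZ \setminus \{w : h_2(w) = k+1\}$ that left $A$ would have to pass through $\extb A$, contradicting $h_2 \equiv k+1$ there. Second, $\intb A \subseteq L_2$: for $w \in \intb A$, item~4 of Proposition~\ref{prop: basic LC prop} gives $u, w \in L_1$, and both lie in $S$ by hypothesis, hence in $L_1 \cap S$. The assumption that $L_1 \cap S$ is connected provides a path from $u$ to $w$ inside $L_1 \cap S$; along this path $h_1 \neq k+1$ (being in $L_1$) and $h_1 = h_2$ (being in $S$), so $h_2 \neq k+1$, and therefore $w \in L_2$.

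Given these inclusions, $A^c$ is connected (since $A$ is \bicon, item~3), contains $v'$, and is contained in $\ZZ \setminus L_2$. Moreover, any path in $\ZZ \setminus L_2$ from a vertex of $A \setminus L_2$ to $A^c$ would cross $\partial A$, with the last vertex inside $A$ lying in $\intb A \subseteq L_2$, which is impossible. Therefore the connected component of $v'$ in $\ZZ \setminus L_2$ equals $A^c$, and $\LC_{h_2}^{k+}(u, v') = A$, as required.

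The main obstacle is the second inclusion $\intb A \subseteq L_2$; it is the unique place where the connectedness hypothesis on $L_1 \cap S$ is used essentially, allowing a path that certifies membership in the sublevel set of $h_1$ to be transported entirely inside $S$ and thereby reused for $h_2$. Without this assumption a path in $L_1$ could wander into regions where $h_2$ is unknown, and the transfer would fail.
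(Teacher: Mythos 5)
Your proof is correct and follows essentially the same route as the paper's: the crucial step in both is to use the connectedness of $\LL_{h_1}^{+}(u)\cap S$ together with $h_1=h_2$ on $S$ to transfer the sublevel set of $h_1$ into that of $h_2$, yielding $\intb A\subseteq\LL_{h_2}^{k+}(u)$. Your endgame is slightly more hands-on (you establish $\LL_{h_2}^{k+}(u)\subseteq A$ directly and then identify the component of $v'$ in $\ZZ\setminus\LL_{h_2}^{k+}(u)$ by a path-crossing argument), where the paper instead cites the first item of Corollary~\ref{cor: outside LC crit} to get $A^c\subseteq U^c$ and then uses connectedness of $U^c$; both closings are minor variants of the same idea.
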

\begin{proof}
By our assumption $h_1(w)=h_2(w)$ for all $w\in S$, and by
definition $h_1(w)\le h_1(u)$ for all $w\in\LL_{h_1}^{+}(u)$. We get
that $h_2(w)\le h_2(u)$ for all $w\in \LL_{h_1}^{+}(u)\cap S$. Putting
this together with our assumptions that $u\in \intb A \subseteq S$,
and that $\LL_{h_1}^{+}(u)\cap S$ is connected, we get that
\begin{equation}\label{eq:set_and_S_containment}
\LL_{h_1}^{+}(u)\cap S \subseteq \LL_{h_2}^{+}(u),
\end{equation}
by the definition of sublevel sets.

Next, let $v\in \extb A$ be such that $u\sim v$. Observe that by
Corollary~\ref{cor: LC neighbors}, we have $A=\LC^{+}_{h_1}(u,v)$.
Let $U:=\LC^{+}_{h_2}(u,v)$. We shall show that $A=U$, establishing
the proposition. By the fourth item of Proposition~\ref{prop: basic
LC prop} we have that $\intb A\subseteq \LL_{h_1}^{+}(u)$ so that,
using \eqref{eq:set_and_S_containment} and our assumption that
$\intb A\subseteq S$, we get that $\intb A\subseteq
\LL_{h_2}^{+}(u)$. Thus, using the fourth item of
Proposition~\ref{prop: basic LC prop} again yields that
\begin{equation}\label{eq:A_h_2_relation}
\intb A\subseteq U.
\end{equation}

By our assumptions and Proposition~\ref{prop: basic LC prop}, $A^c$
is connected and satisfies $v\in A^c$, $u\notin A^c$ and $h_2(\extb
A)=h_1(\extb A)=h_2(u)+1$. Thus, the first item of Corollary~\ref{cor:
outside LC crit} implies that $A^c\subseteq U^c$. Thus, using
\eqref{eq:A_h_2_relation} and the fact that $U^c$ is connected by
Proposition~\ref{prop: basic LC prop}, shows that $A^c=U^c$. Hence
$U=A$ as we wanted to show.
\end{proof}

\subsection{Expressing height differences in terms of sublevel components}\label{subs: Level components determine height}

In this section we develop a formula expressing the difference
between the height assigned to a pair of vertices $u$ and $v$ in
terms of sublevel components. The formula is similar to the
Newton-Leibniz formula in that it expresses the global height
difference in terms of local increments. A visual depiction of this
similarity is given in Figure~\ref{fig: hightformula}.

\begin{figure}[htb!]
\centering%
\includegraphics[scale=0.3]{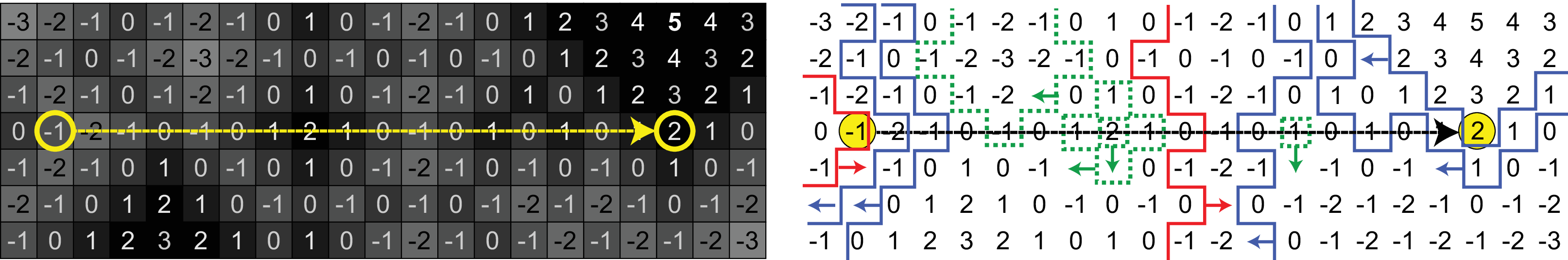}\\
\caption{Left: an HHF on $\ZZ$ and a path between two vertices $u$
and $v$ with height difference 3. Right: the same HHF with the
boundaries of all sublevel components (the arrow on each boundary
points towards its sublevel component) intersecting the path colored
according to which endpoints of the path they contain. Red - the two
sublevel components containing $v$ and not containing $u$. Blue -
the five sublevel components containing $u$ and not containing $v$.
Dashed green - sublevel components either containing both $u$ and
$v$ or neither. Proposition~\ref{prop: formula for hight diff.}
shows that the height difference between $u$ and $v$ equals the
number of blue boundaries minus the number of red boundaries.}
\label{fig: hightformula}
\end{figure}

Let $u,v\in\ZZ$. We define the set of sublevel components \emph{separating}
$u$ from $v$ by
\begin{equation}\label{eq:L_u_v_def}
\L_{(u,v)}:=\{ A \ :\ \exists u',v',k \ \ \text{\bf s.t.}\ h(u')\le
k<h(v')\text{ and }A=LC_h^{k+}(u',v')\text{ satisfies }u\in A,\,
v\notin A\}.
\end{equation}

\begin{propos}\label{prop: formula for hight diff.}
Let $u,v \in \ZZ$. $\L_{(u,v)}$ is finite and ordered by inclusion.
Furthermore, the following formula holds:
\[ h(v)-h(u)= \big |\L_{(u,v)}\big | - \big |\L_{(v,u)}\big |. \]
\end{propos}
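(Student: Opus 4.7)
The plan is to prove the three assertions in order. First, to show that $\L_{(u,v)}$ is totally ordered by inclusion, take any $A,B\in\L_{(u,v)}$; by Proposition~\ref{prop: basic LC prop}(3) both are \bicon{}, and by Proposition~\ref{prop: LC boundaries are disjoint} they are boundary disjoint. The pair trichotomy (Theorem~\ref{thm: pair_trich}) then offers three alternatives, of which $A\cap B=\emptyset$ is excluded by $u\in A\cap B$, and $A^c\cap B^c=\emptyset$ is excluded by $v\in A^c\cap B^c$. Hence $A\subseteq B$ or $B\subseteq A$, establishing the total order.

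For finiteness, fix a nearest-neighbor path $u=u_0,u_1,\ldots,u_\ell=v$ in $\ZZ$. Every $A\in\L_{(u,v)}$ contains $u$ and excludes $v$, so along the path there must be at least one edge $(u_i,u_{i+1})\in\partial A$. Corollary~\ref{cor: LC neighbors} says that every edge of $\ZZ$ lies in the boundary of exactly one sublevel component, so the assignment $A\mapsto (u_i,u_{i+1})$ is injective, yielding $|\L_{(u,v)}|\le\ell$; the argument for $\L_{(v,u)}$ is identical.

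For the formula, fix $u$ and set $D(v):=|\L_{(u,v)}|-|\L_{(v,u)}|$. I will prove $D(v)=h(v)-h(u)$ by induction on $\dist(u,v)$, with the base case $v=u$ being trivial. For the inductive step, pick a neighbor $w$ of $v$ with $\dist(u,w)=\dist(u,v)-1$ and assume $D(w)=h(w)-h(u)$; it suffices to prove $D(v)-D(w)=h(v)-h(w)=\pm 1$. By Corollary~\ref{cor: LC neighbors} the edge $(w,v)$ lies in the boundary of a unique sublevel component $A_0$, so for every other sublevel component $B$ the vertices $w$ and $v$ lie on the same side of $B$; a direct check then shows that the net contribution of any such $B$ to $D(v)-D(w)$ is zero. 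Assuming WLOG $h(v)=h(w)+1$, so $w\in A_0$ and $v\notin A_0$, the two cases $u\in A_0$ and $u\notin A_0$ each yield that $A_0$ contributes precisely $+1$ to $D(v)-D(w)$, which matches $h(v)-h(w)$.

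The main obstacle will be the bookkeeping in the inductive step: one must track membership in all four sets $\L_{(u,v)},\L_{(u,w)},\L_{(v,u)},\L_{(w,u)}$ simultaneously and verify by case analysis that every sublevel component other than $A_0$ cancels out in $D(v)-D(w)$, while $A_0$ itself supplies exactly the $\pm 1$ needed to match the unit change in $h$ along the edge $(w,v)$.
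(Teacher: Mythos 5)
Your proof is correct and takes essentially the same route as the paper: total ordering via the pair trichotomy exactly as in the paper, and the height-difference formula by induction on $\dist(u,v)$ using Corollary~\ref{cor: LC neighbors} to isolate the one sublevel component crossed by the last edge. The only (cosmetic) differences are that you establish finiteness by a separate path-counting argument rather than folding it into the induction, and you take the intermediate vertex $w$ adjacent to $v$ rather than to $u$ as the paper does.
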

\begin{proof}
Let $U,V$ be distinct elements of $\L_{(u,v)}$. We begin by showing
that $\L_{(u,v)}$ is ordered by inclusion. By Proposition~\ref{prop:
basic LC prop}, $U$ and $V$ are {\bicon } and by
Proposition~\ref{prop: LC boundaries are disjoint} they are boundary
disjoint. Thus, $U$ and $V$ satisfy the conditions of
Theorem~\ref{thm: pair_trich}. By the definition of $\L_{(u,v)}$, we
have $u\in U\cap V$ and $v\in U^c \cap V^c$. We deduce that either
$U\subseteq V$ or $V\subseteq U$. As containment relations are
transitive we deduce that $\L_{(u,v)}$ is ordered by inclusion.

To prove the remaining claims we use induction on the distance
between $u$ and $v$. Indeed, the case $u=v$ is trivial. Assume that the
proposition holds for every pair of vertices exactly at distance
$\rho$ and suppose $u,v$ satisfy $\dist(u,v)=\rho+1$. Next, let $w$
be a vertex satisfying $w\sim u$ and $\dist(w,v)=\rho$. By our
assumption
\[ h(v)-h(w)= \big |\L_{(w,v)}\big | - \big |\L_{(v,w)}\big |, \]
and thus
\begin{equation}\label{eq: v-u induction through w}
h(v)-h(u)= \big |\L_{(w,v)}\big | - \big |\L_{(v,w)}\big | + h(w) -
h(u).
\end{equation}

Suppose that $h(w)=h(u)+1$. Thus $U=\LC_h^+(u,w)$ is well defined. By
Corollary~\ref{cor: LC neighbors}, it is the only sublevel component
containing $u$ and not containing $w$, and there is no sublevel component
which contains $w$ and does not contain $u$. If $v\in U$, we get that
$\L_{(u,v)}=\L_{(w,v)}$ and that $\L_{(v,w)}=\L_{(v,u)}\uplus \{U\}$. If
$v\not\in U$, we get that $\L_{(u,v)}=\L_{(w,v)}\uplus \{U\}$ and that
$\L_{(v,u)}=\L_{(v,w)}$. In either case, by \eqref{eq: v-u induction through
w},
\[ h(v)-h(u)= \big|\L_{(u,v)}\big| - \big|\L_{(v,u)}\big|. \]
The case $h(u)=h(w)+1$ follows similar lines.
\end{proof}

\subsection{Sublevel components of quasi-periodic HHFs on $\ZZ$}\label{subs: Level components
of QP functions}

In this subsection we impose the further requirement that $h$ is quasi-periodic, that is, that $h\in \QP_m$ for some
$m\in\Z^d$. We show that sublevel components of such functions are translation
respecting and are thus classified into types according to the translation trichotomy, Theorem~\ref{thm: main trichotomy}.
We conclude that when $m\neq 0$, any such function has a type-$0$ sublevel component.

The first property we observe is that the set of sublevel components of $h$
is invariant under translations in $n\ZZ$.

\begin{propos}\label{prop: quasi per LC}
Let $k\in\Z$ and $u,v\in\ZZ$ be such that $h(u)\le k<h(v)$. For any $x\in
n\ZZ$ we have $\LC_h^{(k+\delta_x)+}(u+x,v+x)=\LC_h^{k+}(u,v)+x$ where
$\delta_x:=h(x)-h(\zero)$.
\end{propos}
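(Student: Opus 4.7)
The plan is to exploit the fact that translation by any $x\in n\ZZ$ is a graph automorphism of $\ZZ$ which, via the quasi-periodicity of $h$, shifts every height value by the constant $\delta_x$. The proposition should then follow by unwinding the definitions of $\LL$ and $\LC$, since both are built from the level set $\{w:h(w)=k+1\}$ and from connected components, and both operations commute with the translation automorphism.

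The first step is to upgrade $\delta_x=h(x)-h(\zero)$ from a statement at $\zero$ to the pointwise relation $h(w+x)-h(w)=\delta_x$ for every $w\in\ZZ$. This follows by iterating the defining relation $h(\cdot+ne_i)-h(\cdot)=m_i$ of $\QP_m$ along any path in $n\ZZ$ from $\zero$ to $x$ and checking that the increment is independent of the chosen path. In particular, $h(u+x)\le k+\delta_x<h(v+x)$, so the left-hand side of the claimed equality is well-defined.

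The second step is to show that the ``forbidden'' level set of the shifted pair is the translate of the original one: $\{w\in\ZZ\,:\,h(w)=k+\delta_x+1\}=\{w\in\ZZ\,:\,h(w)=k+1\}+x$. Since translation by $x$ is a graph automorphism of $\ZZ$, the connected component of $u+x$ in the complement of the former set is exactly the translate by $x$ of the connected component of $u$ in the complement of the latter set. This gives $\LL_h^{(k+\delta_x)+}(u+x)=\LL_h^{k+}(u)+x$.

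The third and final step is to apply the same automorphism reasoning one level up. The connected component of $v+x$ in $\ZZ\setminus\LL_h^{(k+\delta_x)+}(u+x)=\bigl(\ZZ\setminus\LL_h^{k+}(u)\bigr)+x$ is the translate by $x$ of the connected component of $v$ in $\ZZ\setminus\LL_h^{k+}(u)$; taking complements commutes with translation, yielding the identity in the proposition. I expect no genuine obstacle here, as the argument is a direct definition-chase; the only point needing a line of justification is the path-independence underlying the identity $h(w+x)-h(w)=\delta_x$, which is immediate from quasi-periodicity.
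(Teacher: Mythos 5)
Your proof is correct and is exactly the definition-chase the paper has in mind; indeed, the paper omits the proof of this proposition as following directly from the definitions of sublevel component and quasi-periodicity, which is what your argument spells out (translation by $x\in n\ZZ$ is a graph automorphism shifting all heights by $\delta_x$, so it carries level sets, sublevel sets, and sublevel components to their translates).
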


The proposition follows directly from the definition of sublevel component
and quasi-periodic function and we omit its proof. A consequence of this proposition
is the following.

\begin{cor}\label{cor: LC trans.rep.}
Every sublevel component of $h$ is translation respecting.
\end{cor}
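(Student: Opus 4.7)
The plan is to combine the three ingredients already established: sublevel components are co-connected (Proposition~\ref{prop: basic LC prop}, item 3), distinct sublevel components are boundary disjoint (Proposition~\ref{prop: LC boundaries are disjoint}), and the set of sublevel components is invariant under the $n\ZZ$-translation action (Proposition~\ref{prop: quasi per LC}).

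More concretely, I would fix a sublevel component $U=\LC_h^{k+}(u,v)$ of $h$, so $u,v\in\ZZ$ satisfy $h(u)\le k<h(v)$. By item 3 of Proposition~\ref{prop: basic LC prop}, $U$ is co-connected, which handles the first condition in the definition of translation respecting. For the second condition, take any $x\in n\ZZ$ and observe that Proposition~\ref{prop: quasi per LC} gives
\begin{equation*}
  U+x = \LC_h^{(k+\delta_x)+}(u+x,v+x),\qquad \delta_x:=h(x)-h(\zero).
\end{equation*}
Since $h\in\QP_m$, the shift satisfies $h(u+x)=h(u)+\delta_x\le k+\delta_x$ and $h(v+x)=h(v)+\delta_x>k+\delta_x$, so this is indeed a valid sublevel component of $h$. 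Consequently every element of $T_U$ is itself a sublevel component of $h$.

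Now let $U_1,U_2\in T_U$ be distinct. Because both are sublevel components of the same $h$, Proposition~\ref{prop: LC boundaries are disjoint} applies and yields that $U_1$ and $U_2$ are boundary disjoint. This verifies the remaining part of the definition and completes the argument. I do not foresee any real obstacle here; the statement is essentially a packaging of the earlier propositions, and the only thing to be careful about is the bookkeeping with the height shift $\delta_x$ when invoking Proposition~\ref{prop: quasi per LC}, which is immediate from quasi-periodicity.
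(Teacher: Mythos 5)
Your proof is correct and follows exactly the approach the paper takes: co-connectedness from Proposition~\ref{prop: basic LC prop}(3), translation invariance of the set of sublevel components from Proposition~\ref{prop: quasi per LC}, and boundary disjointness of distinct sublevel components from Proposition~\ref{prop: LC boundaries are disjoint}. The paper's proof is a one-sentence remark citing these same three propositions; you have simply written out the bookkeeping explicitly.
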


To see this recall that sublevel components are {\bicon} by the
third item of Proposition~\ref{prop: basic LC prop}, and apply Proposition~\ref{prop:
LC boundaries are disjoint} together with Proposition~\ref{prop: quasi per LC}.

Corollary~\ref{cor: LC trans.rep.} tells us that sublevel components
of quasi-periodic HHFs may be assigned a type, as in
Section~\ref{subs: trich: Topology}. We remark that it is possible
that a sublevel component $A$ will be invariant under all
translations in $n\ZZ$, in which case we follow the convention of
Section~\ref{sec: structure of sets} by assigning to it both type
$-1$ and type $1$. However, we note that this cannot happen when the
slope $m$ of the quasi-periodic function $h$ is non-zero, the case
of most interest to us, as follows from Proposition~\ref{prop: quasi
per LC} and the second item of Proposition~\ref{prop: basic LC
prop}.

The following corollary provides a formula for the height difference
between translates of a type~0 sublevel component.

\begin{cor}\label{cor: type 0 height formula}
Let $U$ be a type 0 sublevel component of $h$ with minimal
translation $\Delta$ and write $\delta:=h(\intb U+\Delta)-h(\intb
U)$. Then for any $z\in n\Z^d$ we have
$$h(\intb (U+z))-h(\intb U)= \delta \cdot o_{U}(U+z),$$ where $o_{U}$ is the order function on translates of
$U$, given by Theorem~\ref{thm: main trichotomy}.
\end{cor}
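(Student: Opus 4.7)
The plan is to reduce the corollary to the (purely lattice-theoretic) statement that $h$ restricted to $n\Z^d$ is additive, and then to identify the two additive functions $z \mapsto h(z)$ and $z \mapsto \delta \cdot o_U(U+z)$ by matching them on the generator $\Delta$.

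First I would record two preliminary observations. By Proposition~\ref{prop: quasi per LC}, each translate $U+z$ with $z\in n\ZZ$ is again a sublevel component of $h$, so by item (2) of Proposition~\ref{prop: basic LC prop} the quantity $h(\intb(U+z))$ makes sense (i.e.\ $h$ is constant on $\intb(U+z)$). Second, from the definition of $\QP_m$, iterating the quasi-periodicity relation along each of the standard basis directions shows that for any $u\in\ZZ$ and any $z\in n\ZZ$,
\begin{equation*}
  h(u+z)-h(u) = h(z),
\end{equation*}
independently of $u$; in particular $z\mapsto h(z)$ is an additive homomorphism on $n\ZZ$. Applying this with $u\in\intb U$ and noting $\intb(U+z) = \intb U + z$ yields $h(\intb(U+z))-h(\intb U) = h(z)$, so it suffices to prove $h(z) = \delta\cdot o_U(U+z)$.

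Next, let $k:=o_U(U+z)$. By Theorem~\ref{thm: main trichotomy} the map $o_U$ is a bijection with $o_U^{-1}(i) = U + i\Delta$, so $U+z = U + k\Delta$ as sets; equivalently, the vector $w:=z-k\Delta \in n\ZZ$ is a period of $U$ in the sense that $U+w=U$, hence $\intb U + w = \intb U$. Pick any $u\in \intb U$; then $u+w\in\intb U$, so $h(u+w) = h(u)$. Combining this with the additivity above gives $h(w)=0$, and additivity then yields $h(z) = k\,h(\Delta) = k\delta$, where I used $\delta = h(\intb U + \Delta)-h(\intb U) = h(\Delta)$ from the same reasoning applied to $z=\Delta$. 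Substituting $k = o_U(U+z)$ finishes the proof.

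I do not expect a serious obstacle here: everything is bookkeeping once one notices that $h$ is additive on the period lattice $n\ZZ$ and that $U+z$ depends on $z$ only through its class modulo periods of $U$. The one subtlety worth double-checking is that $o_U(U+z)$ is well defined even when several vectors $z$ yield the same translate of $U$, but this is precisely the content of Theorem~\ref{thm: main trichotomy}: $o_U$ is defined on $T_U$, not on $n\ZZ$, and the argument above shows that any redundancy in the choice of $z$ lies in the kernel of $h\mid_{n\ZZ}$ as well.
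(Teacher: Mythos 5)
Your proof is correct and is essentially the argument the paper has in mind; the paper simply declares the corollary an immediate consequence of Theorem~\ref{thm: main trichotomy} and quasi-periodicity, and you spell out exactly that computation. The key step — additivity of $z\mapsto h(z)$ on $n\ZZ$, the identification $o_U^{-1}(k)=U+k\Delta$, and the observation that any period $w$ of $U$ has $h(w)=0$ — is precisely what ``quasi-periodic'' together with the translation trichotomy buys you.
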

This is an immediate consequence of Theorem~\ref{thm: main trichotomy} and the fact that $h$ is quasi-periodic.

The next proposition establishes a duality between $\L_{(u,v)}$ and
$\L_{(v,u)}$ when $u-v\in n\ZZ$.

\begin{propos}\label{prop: L duality and structure}
Let $u,z\in \ZZ$ with $z\neq\zero$. If $A\in \L_{(u,u+n z)}$ has $\Ty(A)\neq
0$ then the type of $A$ is uniquely defined and $$A + \Ty(A)\cdot n z \in \L_{(u+n z ,u)}.$$
\end{propos}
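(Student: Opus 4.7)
The plan is to verify directly that $A' := A + \Ty(A)\cdot n z$ lies in $\L_{(u+nz,u)}$, meaning that $A'$ is a sublevel component of $h$, that $u+nz\in A'$, and that $u\notin A'$. The fact that $A'$ is a sublevel component is immediate from Proposition~\ref{prop: quasi per LC}, since any translate of a sublevel component by an element of $n\ZZ$ is a sublevel component (the value $k$ and base vertices merely get shifted by the appropriate height and coordinate offsets). So the real content is the membership statements, which I plan to extract from the translation trichotomy, Theorem~\ref{thm: main trichotomy}, applied to the pair $(A, A+nz)\in T_A\times T_A$.

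The key preliminary observation is that $A\neq A+n z$. Indeed, if they were equal then iterating would force $u+nz\in A$ from $u\in A$, contradicting $A\in \L_{(u,u+nz)}$. In particular, $|T_A|>1$, so Theorem~\ref{thm: main trichotomy} applies in its stated form and, since $A$ is a sublevel component, Corollary~\ref{cor: LC trans.rep.} guarantees that $A$ is translation respecting and thus has a well-defined type.

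With that in hand, I would split into two cases. If $\Ty(A)=1$, the trichotomy gives $A\cap (A+nz)=\emptyset$; together with $u\in A$ this yields $u\notin A+nz$, while $u+nz\in A+nz$ is a tautological rewriting of $u\in A$. This is exactly $A+nz\in\L_{(u+nz,u)}$. If $\Ty(A)=-1$, the trichotomy gives the dual statement $A\cup(A+nz)=\ZZ$. The condition $u\notin A-nz$ is a rewriting of $u+nz\notin A$, which is part of the hypothesis. For the remaining condition $u+nz\in A-nz$, equivalent to $u+2nz\in A$, I apply the union relation $A\cup (A+nz)=\ZZ$ to the point $u+2nz$: one has either $u+2nz\in A$ (the desired conclusion) or $u+2nz\in A+nz$, the latter unraveling to $u+nz\in A$, which contradicts $A\in\L_{(u,u+nz)}$. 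Thus $u+2nz\in A$ and the claim follows.

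I expect no serious obstacle in this proof beyond correctly identifying, in the $\Ty(A)=-1$ case, that the right pair of translates to feed into the trichotomy is $(A,A+nz)$ evaluated at the point $u+2nz$, so that the hypothesis $u+nz\notin A$ can be used to discard the unwanted alternative. Once this is noticed the argument is a direct check from the definitions, with Proposition~\ref{prop: quasi per LC}, Corollary~\ref{cor: LC trans.rep.}, and Theorem~\ref{thm: main trichotomy} doing all the heavy lifting.
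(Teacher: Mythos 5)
Your proof is correct and takes essentially the same route as the paper: first note that $A\neq A\pm nz$, invoke Corollary~\ref{cor: LC trans.rep.} and Proposition~\ref{prop: quasi per LC} to see that $A$ is translation respecting and its translate is again a sublevel component, and then read off the membership conditions from the appropriate alternative of Theorem~\ref{thm: main trichotomy}. The only cosmetic difference is in the Type~$-1$ case, where the paper applies the disjoint-complements alternative to the pair $(A,A-nz)$ at the point $u+nz$ while you apply it to $(A,A+nz)$ at $u+2nz$; these are the same argument up to a translation by $nz$.
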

\begin{proof}
Let $A\in \L_{(u, u+ nz)}$ for $u,z\in \ZZ$, and observe
that since $u\in  A$  but $u+ nz\notin A$ we have that $A$ is not invariant under translations
in $n\ZZ$ and hence $\Ty(A)$ is uniquely defined.
Suppose that $\Ty(A)\neq 0$, i.e.,
$\Ty(A)\in\{-1,1\}$. Recall that by definition, $u\in A$ and
$u+nz\notin A$. Since $A$ is a sublevel component then, by
Proposition~\ref{prop: quasi per LC}, $A \pm n z$ are also sublevel
components. Both are distinct from $A$ since $u+nz \in A+nz$ and $u
\notin A-nz$. If $\Ty(A)=1$, then by the trichotomy of
Theorem~\ref{thm: main trichotomy}, $u \in A$ implies that $u \notin
A+nz$. Similarly if $\Ty(A)=-1$, then by the same trichotomy $u+nz
\notin A$ implies $u+nz \in A-nz$. In either case the proposition
holds.
\end{proof}

An important corollary of the above proposition is the following:
\begin{cor}\label{cor: there is slope}
If $h\in \QP_m$ for $m=(m_1,\dots, m_d)$ satisfying $m_1>0$, then there exists a
sublevel component of type 0 which contains $\zero$ and does not
contain $ne_1$.
\end{cor}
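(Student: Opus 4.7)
The plan is to apply Proposition~\ref{prop: formula for hight diff.} together with the duality given by Proposition~\ref{prop: L duality and structure}, and deduce by a pigeonhole argument that a type $0$ sublevel component must occur in $\L_{(\zero, ne_1)}$.

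First, I would observe that by quasi-periodicity, $h(ne_1) - h(\zero) = m_1 > 0$. Applying Proposition~\ref{prop: formula for hight diff.} with $u = \zero$ and $v = ne_1$ then yields
\begin{equation*}
|\L_{(\zero, ne_1)}| - |\L_{(ne_1, \zero)}| = m_1 > 0.
\end{equation*}
In particular, $\L_{(\zero, ne_1)}$ is strictly larger than $\L_{(ne_1, \zero)}$. Every element of $\L_{(\zero, ne_1)}$ is a sublevel component and hence translation respecting by Corollary~\ref{cor: LC trans.rep.}, so (modulo the caveat about $|T_A|=1$, where by convention $\Ty(A)\in\{-1,1\}$) each such element has a well-defined type in $\{-1,0,1\}$.

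Next, I would use Proposition~\ref{prop: L duality and structure} (with $z=e_1$) to define the map
\begin{equation*}
\phi\colon\{A\in\L_{(\zero,ne_1)}\,:\,\Ty(A)\neq 0\}\to\L_{(ne_1,\zero)},\qquad \phi(A):=A+\Ty(A)\cdot ne_1.
\end{equation*}
The main small verification is that $\phi$ is injective. If $\phi(A_1)=\phi(A_2)$ with $\Ty(A_1)=\Ty(A_2)$ then trivially $A_1=A_2$. Otherwise one of them has type $1$ and the other type $-1$, forcing $A_2=A_1+2ne_1\in T_{A_1}$; but since type is an invariant of $T_{A_1}$, this contradicts $\Ty(A_1)\neq\Ty(A_2)$. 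Hence $\phi$ is injective.

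Combining the two steps, the number of elements of $\L_{(\zero,ne_1)}$ of nonzero type is at most $|\L_{(ne_1,\zero)}|<|\L_{(\zero,ne_1)}|$, so at least one element $A\in\L_{(\zero,ne_1)}$ must satisfy $\Ty(A)=0$. By the definition \eqref{eq:L_u_v_def} of $\L_{(u,v)}$, such an $A$ contains $\zero$ and does not contain $ne_1$, which is exactly the desired sublevel component. I do not anticipate any serious obstacle; the only point requiring minor care is the injectivity of $\phi$, which uses the fact that the type is an invariant of the orbit $T_U$ under translations by $n\ZZ$.
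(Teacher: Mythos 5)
Your proof is correct and follows essentially the same approach as the paper: combine the height-difference formula of Proposition~\ref{prop: formula for hight diff.} with the duality of Proposition~\ref{prop: L duality and structure}, using that type is a translation invariant to get injectivity of the map $A\mapsto A+\Ty(A)\cdot ne_1$. The paper phrases it as a proof by contradiction (assume every element of $\L_{(\zero,ne_1)}$ has nonzero type, deduce $h(ne_1)\le h(\zero)$), whereas you argue directly by pigeonhole; these are the same argument, and your explicit verification of injectivity is a harmless elaboration of the point the paper handles in one sentence.
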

\begin{proof}
Suppose to the contrary that every sublevel component in
$\L_{(\zero,n e_1)}$ is either of type $1$ or of type $-1$. By
Proposition~\ref{prop: L duality and structure} we get that
$|\L_{(\zero, n e_1)}|\le|\L_{( n e_1,\zero)}|$. By Proposition~\ref{prop:
formula for hight diff.} this implies $h(ne_1)\le h(\zero)$, in
contradiction to our premise. Here, we have also used the fact that
the type of a sublevel component is preserved under translation, thus
distinct sublevel components $A\in\L_{(\zero, ne_1)}$ are mapped to
distinct sublevel components in $\L_{(ne_1, \zero)}$ by the mapping
$A\mapsto A+\Ty(A)\cdot ne_1$.
\end{proof}

\subsection{Superlevel components and sublevel components of type $0$}\label{subs: Decreasing Level components}

In the construction of our embedding (in Section~\ref{sec: proof of
2.1}) we make use of superlevel components. These are counterparts
of sublevel components, in which the role of the sublevel set is
replaced by a superlevel set. The main reason that superlevel components are necessary
for our construction is that in order to guarantee invertibility of the mapping $\Psi_m$,
we wish to define it through an exploration process in a region which is left unchanged by the mapping. Exploration
in one direction is done by finding sublevel components while exploration in the other direction is done
through superlevel components.

While superlevel components could be defined in an
analogous way to that of sublevel components, as given at the
beginning of Section~\ref{sec: QP properties}, we rather define them
through a duality.
\begin{defin}\label{def: dec dual}
For any $u,v\in\ZZ$ and $k\in \Z$ satisfying $h(v)<k\le h(u)$, we
define
\begin{equation*}
  \LC_h^{k-}(u,v):=\LC_{-h}^{(-k)+}(u,v).
\end{equation*}
\end{defin}

This definition allows us to apply propositions dealing with
sublevel components to superlevel components. For instance,
combining the definition with Corollary~\ref{cor: LC trans.rep.} and
Theorem~\ref{thm: main trichotomy} we can assign a type to every
superlevel component. In addition, by Proposition~\ref{prop: basic
LC prop}, a superlevel component $U=\LC_h^{k-}(u,v)$ satisfies
$h(x)=k$ for all $x\in\intb U$, and $h(x)=k-1$ for all $x\in\extb
U$. However, to avoid confusion, we remark that the complement of a
superlevel component is not necessarily a sublevel component.

The next lemma shows that certain sublevel and superlevel components
which are ``sandwiched'' between two type 0 sublevel components must
also be of type 0.

\begin{lemma}\label{lem: increasing decreasing containment}
Let $U\subsetneq W$ be a pair of type $0$ sublevel components, such that
$h(\extb U)< h(\extb W)$ and let $u \in \intb U$, $w\in \intb W$ and
$k\in\Z$. Then:
\begin{itemize}
\item If $h(u)\le k< h(w)$ then $V_+:=\LC_h^{k+}(u,w)$ is a sublevel
    component of type $0$, satisfying $U\subseteq V_+ \subsetneq W$.
\item If $h(u)< k\le h(w)$ then $V_-:=(\LC_h^{k-}(w,u))^c$ satisfies that
    $(V_-)^c$ is a superlevel component of type $0$ and $U\subseteq V_-
    \subsetneq W$.
\end{itemize}
\end{lemma}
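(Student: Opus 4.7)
The plan is to establish both items in parallel by reducing them to the pair trichotomy of Theorem~\ref{thm: pair_trich}, which applies since distinct sub/superlevel components are boundary disjoint by Proposition~\ref{prop: LC boundaries are disjoint} (applied to $h$ or $-h$ respectively); the type $0$ conclusion will then follow from Proposition~\ref{prop: Type rules} together with a nontriviality check on the translate sets.

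For item 1, I would proceed in four steps. \emph{Step 1.} Applying Proposition~\ref{prop: dist prop} to the boundary disjoint pair $U \subsetneq W$ with $V = \{\bar w\}$ for a neighbor $\bar w$ of $w$ in $\extb W$ yields $\dist(U,\bar w) \ge 2$, so $w \notin U$. \emph{Step 2.} Since $k \ge h(u)$, every path from $u$ avoiding $\{h = h(u)+1\}$ stays at heights $\le h(u) \le k$ and hence also avoids $\{h = k+1\}$; this gives $\LL_h^{h(u)+}(u) \subseteq \LL_h^{k+}(u)$, so the component of $w$ in $\ZZ \setminus \LL_h^{k+}(u)$ is contained in its component in $\ZZ \setminus \LL_h^{h(u)+}(u)$. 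The latter equals $U^c$, because $U^c$ is connected ($U$ being {\bicon}) and disjoint from $\LL_h^{h(u)+}(u) \subseteq U$ (by Proposition~\ref{prop: basic LC prop}), and contains $w$. Hence $V_+^c \subseteq U^c$, i.e., $U \subseteq V_+$. \emph{Step 3.} Apply the pair trichotomy to the boundary disjoint sublevel components $V_+$ and $W$: the alternative $V_+ \cap W = \emptyset$ is ruled out by $u \in V_+ \cap W$; the alternative $V_+ \cup W = \ZZ$ by taking $\bar w' \in \extb W$ adjacent to $w$, whose height $h(w)+1$ exceeds $k+1$, so $\bar w' \notin \LL_h^{k+}(u)$ and is connected to $w$ in $\ZZ \setminus \LL_h^{k+}(u)$, placing it in $V_+^c \setminus W$; and the alternative $W \subseteq V_+$ by $w \in W \setminus V_+$. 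The trichotomy thus forces $V_+ \subseteq W$, and strictness follows from $h(\extb V_+) = k+1 < h(w)+1 = h(\extb W)$. \emph{Step 4.} To invoke Proposition~\ref{prop: Type rules} on $U \subseteq V_+ \subseteq W$ and conclude $\Ty(V_+)=0$, I verify $|T_{V_+}|>1$: otherwise $V_+$ would be invariant under all $n\ZZ$-translations, hence contain every translate of $U$, and Proposition~\ref{prop: type 0 union is all} applied to the type $0$ set $U$ would force $V_+=\ZZ$, contradicting that $V_+$ is {\bicon}.

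Item 2 follows the same template applied to the superlevel component $A := V_-^c = \LC_h^{k-}(w,u)$, regarded as a sublevel component of $-h$. The only substantive new step is the analog of Step 2, showing $U \subseteq V_-$: any path from $w$ in $\ZZ \setminus \{h = k-1\}$ stays at heights $\ge k$ (since $h(w)\ge k$ and heights change by $1$), so it cannot cross any edge of $\partial U$, whose endpoints have heights $h(u)$ and $h(u)+1$ --- both at most $k$, with the lower endpoint at height $k-1$ only in the borderline case $h(u)=k-1$, which is also forbidden. Therefore $\LL_{-h}^{-k+}(w) \cap U = \emptyset$, and the connectedness of $U$ together with $u \in U$ places $U$ inside the component of $u$ in $\ZZ \setminus \LL_{-h}^{-k+}(w) = V_-$. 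The nesting $W^c \subsetneq A$ then follows from the trichotomy applied to the boundary disjoint superlevel components $A$ and $W^c$, using $\bar w \in A \cap W^c$ (for $\bar w \in \extb W$ adjacent to $w$), $u \in A^c \cap W$, $w \in A \setminus W^c$, and $h(\intb A)=k \ne h(w)+1 = h(\intb W^c)$ for strictness. Finally $\Ty(A)=0$ follows from Proposition~\ref{prop: Type rules} applied to $W^c \subseteq A \subseteq U^c$, the condition $|T_A|>1$ being verified exactly as in Step 4 with the type $0$ set $W^c$ in place of $U$.

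The principal obstacle is the connected-component analysis of Step 2 and its analog in item 2: one must compare the sublevel sets $\LL_h^{h(u)+}(u)$ and $\LL_h^{k+}(u)$ (respectively, analyze how the level surface $\{h=k-1\}$ interacts with $\partial U$) carefully enough to identify the correct component with $U^c$. A secondary subtle point is checking $|T_{V_+}|, |T_A|>1$ before applying Proposition~\ref{prop: Type rules}; this nontriviality is extracted from Proposition~\ref{prop: type 0 union is all} through the observation that an $n\ZZ$-invariant sublevel component containing a type $0$ set would have to be all of $\ZZ$, contradicting {\bicon}ness.
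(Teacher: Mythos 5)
Your proof is correct and follows essentially the same strategy as the paper: establish the containment chain $U \subseteq V_\pm \subsetneq W$ (using boundary disjointness and co-connectedness of level components), then invoke Proposition~\ref{prop: Type rules} together with a $|T_{V_\pm}|>1$ check via Proposition~\ref{prop: type 0 union is all}. The local tooling differs a bit: where the paper routes both containments through Corollary~\ref{cor: outside LC crit} and the Tim\'ar connectedness of $\intb W \cup \extb W$, you apply the pair trichotomy (Theorem~\ref{thm: pair_trich}) directly to $(V_+,W)$ and $(A,W^c)$ with explicit witnesses in each cell, and your nontriviality check proceeds by contradiction (an $n\ZZ$-invariant set containing $U$ would have to be all of $\ZZ$) instead of the paper's witness construction --- both valid variants.

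One small imprecision in your Step~2 is worth flagging: you argue that the component $C$ of $w$ in $\ZZ \setminus \LL_h^{h(u)+}(u)$ equals $U^c$ on the grounds that $U^c$ is connected, disjoint from $\LL_h^{h(u)+}(u)$, and contains $w$, but as written this only yields $U^c \subseteq C$. The equality (which you genuinely need, since the conclusion is $V_+^c \subseteq U^c$) requires knowing that $U^c$ is itself a full connected component of $\ZZ \setminus \LL_h^{h(u)+}(u)$; this follows from $U$ being a sublevel component with $u \in \intb U$ (so $\LL_h^{h(u)+}(u)$ coincides with the sublevel set defining $U$), which is precisely the paper's preliminary observation \eqref{eq:U_LC_u_w} that $U = \LC_h^+(u,w)$. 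Stating that identity first, as the paper does, closes the gap cleanly. Likewise, when you apply the trichotomy to $(A, W^c)$ it is worth recording explicitly why $A$ and $W^c$ are boundary disjoint: $\partial A$ lives between heights $k-1$ and $k$, $\partial W$ between $h(w)$ and $h(w)+1$, and $k \le h(w)$ rules out a shared edge, so by \eqref{eq: odd disjoint} (both sets being odd) and the complement-stability of boundary disjointness the hypothesis holds.
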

\begin{proof}
We start by proving the first item and let $V_+$ be as in the
lemma. We first show that
\begin{equation}\label{eq:U_LC_u_w}
U = \LC_{h}^+(u,w).
\end{equation}
By our assumptions, $U = \LC_{h}^{h(u)+}(u',v')$ for some $u',v'$.
By the fourth item of Proposition~\ref{prop: basic LC prop} we have
$\LC_{h}^{+}(u) = \LC_{h}^{h(u)+}(u')$. Next, $w\notin U$ since
$U\subsetneq W$ and $U$ and $W$ are boundary disjoint by
Proposition~\ref{prop: LC boundaries are disjoint}. Hence
\eqref{eq:U_LC_u_w} follows.

Now observe that by applying \eqref{eq:U_LC_u_w},
Proposition~\ref{prop: basic LC prop} and the first item of
Corollary \ref{cor: outside LC crit} to $U$ and $(V_+)^c$, we get
that $(V_+)^c\subseteq U^c$, i.e., $U\subseteq V_+$. Similarly, by
Proposition~\ref{prop: Timar II},
\begin{equation}\label{eq: w in win + wout}
\text{$\intb W \cup \extb W$ is a connected set containing $w$, whose vertices are of height greater
than $k$,}
\end{equation}
and hence $u\notin \intb W \cup \extb W$. Thus, applying \eqref{eq:
w in win + wout} and the first item of Corollary~\ref{cor: outside LC
crit}, we deduce that $(\intb W \cup \extb W)\subseteq V_+^c$. We
can now use the second item of Corollary~\ref{cor: outside LC crit}
to deduce that $V_+\subseteq W$. Consequently, $U\subseteq
V_+\subsetneq W$, where we have used also that $w\in W\setminus
V_+$. It remains to show that $V_+$ is of type $0$. All that we need
in order to draw this conclusion from Proposition~\ref{prop: Type
rules} is to show that $|T_W|,|T_{V_+}|, |T_U|>1$. To see this first
observe that since $\Ty(U)=\Ty(W)=0$, we have by definition
$|T_W|,|T_U|>1$. By Proposition~\ref{prop: type 0 union is all}
there exists some $\Delta\in n\ZZ$ satisfying $(U+\Delta)\cap
(V_+)^c \neq \emptyset$ while $U+\Delta\subseteq V_++\Delta$. We
deduce that $|T_{V_+}|>1$, so that $V_+$ is of type $0$.

The second item is proved similarly. Let $V_-$ be as in the
lemma. By the definition of superlevel components and
Proposition~\ref{prop: basic LC prop}, we have that $(V_-)^c$ is
connected, $u\notin (V_-)^c$, $w\in (V_-)^c$ and $h(\extb V_-)
> h(u)$. Applying \eqref{eq:U_LC_u_w} and the first item of Corollary~\ref{cor: outside LC
crit} to $(V_-)^c$ we deduce that $(V_-)^c\subseteq U^c$, i.e., $U
\subseteq V_-$.

Applying \eqref{eq: w in win + wout}, the definition of a superlevel
component, and the fourth item of Proposition~\ref{prop: basic LC prop} we
get that $\intb W \cup \extb W\subseteq (V_-)^c$, as it is contained in the
corresponding superlevel set. We deduce that $V_-$ is a connected set
satisfying $u\in V_-$ and $\extb W\subseteq (V_-)^c$. Therefore by the second
item of Corollary~\ref{cor: outside LC crit}, we have $V_-\subseteq W$.
Consequently, $U\subseteq V_-\subsetneq W$, where we have used also that
$w\in W\setminus V_-$. It remains to show that $V_-$ is of type $0$. All that
we need in order to draw this conclusion from Proposition~\ref{prop: Type
rules} is to show that $|T_{V_-}|>1$. This is done in exactly the same way as
in the proof of the first part of the lemma.
\end{proof}

\subsection{Locality property of sublevel components}\label{subs: Level components of two height
functions}
We conclude Section~\ref{sec: QP properties} with a useful criterion for applying Proposition~\ref{prop:
two HHFs, same LC 1}, to show that two HHFs in $\hm(\ZZ)$ share the same sublevel component.

\begin{propos}\label{prop: two HHFs, same LC 2}
Let $h_1,h_2\in\hm(\ZZ)$ be two HHFs and let $A$ be a sublevel
component of $h_1$. Suppose that
\begin{equation}\label{eq:h_1_h_2_equality_assumption}
h_1(w)=h_2(w)\text{ for all $w\in A^+\setminus B^-$},
\end{equation}
for some $B\subsetneq A$ which is either a sublevel component of
$h_1$ or the complement of a superlevel component of $h_1$. Then $A$
is also a sublevel component of $h_2$.
\end{propos}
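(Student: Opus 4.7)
The plan is to apply Proposition~\ref{prop: two HHFs, same LC 1} with $S:=A^+\setminus B^-$, so the task reduces to verifying its three hypotheses: that $S\supseteq\intb A\cup\extb A$, that $h_1=h_2$ on $S$ (already given), and that $\LL_{h_1}^+(u)\cap S$ is connected for some $u\in\intb A$. The inclusion of the boundary is immediate. Since $B\subseteq A$ one has $A^c\subseteq B^c$, so any vertex at distance greater than one from $B^c$ is also at distance greater than one from $A^c$; thus $B^-\subseteq A^-$. As $\intb A\cup\extb A=A^+\setminus A^-$ is disjoint from $A^-$, it is disjoint from $B^-$ as well, and it is automatically contained in $A^+$, giving $\intb A\cup\extb A\subseteq S$.

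For the connectivity claim I would pick $u\in\intb A$ together with a neighbor $v\in\extb A$; by Corollary~\ref{cor: LC neighbors}, $A=\LC_{h_1}^+(u,v)$, and the fourth item of Proposition~\ref{prop: basic LC prop} then gives $\LL_{h_1}^+(u)\subseteq A\subseteq A^+$, so the intersection of interest simplifies to $\LL_{h_1}^+(u)\setminus B^-$. If this set does not meet $B^-$ there is nothing to prove, so assume otherwise and fix a path in $\LL_{h_1}^+(u)$ from $u$ to a vertex of $B^-$. Because $u\in\intb A\subseteq A\setminus A^-$ while $B\subseteq A^-$, the starting vertex lies in $B^c$; the path therefore crosses from $B^c$ into $B$ at some step, passing from a vertex $y\in\extb B$ to a vertex $y'\in\intb B$, both lying in $\LL_{h_1}^+(u)$. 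This pins down the constant boundary heights $h_1(\extb B),h_1(\intb B)\le h_1(u)$; in particular neither equals $h_1(u)+1$. Since $B$ is \bicon{ }(directly from Proposition~\ref{prop: basic LC prop} when $B$ is a sublevel component, and because \bicon ness is preserved under complementation when $B^c$ is a superlevel component), Proposition~\ref{prop: Timar II} shows that $\intb B\cup\extb B$ is connected in $\ZZ$. Being connected, avoiding the forbidden height $h_1(u)+1$, and meeting $\LL_{h_1}^+(u)$ via $y$, it sits entirely inside $\LL_{h_1}^+(u)$, while manifestly avoiding $B^-$.

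The re-routing step then finishes the argument. Given any path $\gamma$ in $\LL_{h_1}^+(u)$ joining two points of $\LL_{h_1}^+(u)\setminus B^-$, each maximal subpath of $\gamma$ contained in $B^-$ is sandwiched between two $\intb B$-vertices on $\gamma$: any neighbor of a $B^-$-vertex lies in $B$ (by the definition of $B^-$), and by maximality of the subpath it cannot lie in $B^-$ itself. Replacing each such subpath by a detour inside $\intb B\cup\extb B\subseteq\LL_{h_1}^+(u)\setminus B^-$ produces a path in $\LL_{h_1}^+(u)\setminus B^-$, so this set is connected. Proposition~\ref{prop: two HHFs, same LC 1} then immediately yields that $A$ is a sublevel component of $h_2$. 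The main obstacle throughout is the height control in the connectivity step: without the observation that an actual path from $u$ to a $B^-$-vertex must first cross from $\extb B$ into $\intb B$ inside $\LL_{h_1}^+(u)$, one could not guarantee that $\intb B\cup\extb B$ avoids the forbidden level, and the detours around $B^-$ would be unavailable.
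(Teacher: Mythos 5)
Your overall strategy — apply Proposition~\ref{prop: two HHFs, same LC 1} with $S=A^+\setminus B^-$, reduce to connectivity of $\LL_{h_1}^+(u)\setminus B^-$, and handle this via the connectivity of $B^+\setminus B^-$ coming from Proposition~\ref{prop: Timar II} — is the same as the paper's (the paper phrases the re-routing step as a partition argument, but the two are interchangeable, and your split on whether $\LL_{h_1}^+(u)$ meets $B^-$ is a harmless reorganization of the paper's height dichotomy).

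There is, however, a genuine gap: you assert ``while $B\subseteq A^-$'' and use it to conclude that the starting vertex $u$ lies in $B^c$. You proved $B^-\subseteq A^-$ earlier, but $B\subseteq A^-$ is a strictly stronger statement and does not follow from $B\subsetneq A$ alone; it is exactly equivalent to $\intb A\cap B=\emptyset$, i.e.\ to the key fact $u\notin B$, and this is precisely where the paper expends nontrivial effort. The paper argues by contradiction: if $u\in B$ then (using $B\subsetneq A$) $u\in\intb B$, so $B^c$ is a connected set containing $v$ but not $u$ whose internal boundary $\extb B$ has height $h_1(u)+1 > h_1(\intb A)$; Corollary~\ref{cor: outside LC crit} (first item) then forces $B^c\subseteq A^c$, contradicting $B\subsetneq A$. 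Without such an argument your path-crossing step (``the path therefore crosses from $B^c$ into $B$'') is unjustified, and the height bounds on $\intb B$ and $\extb B$ — on which the rest of your connectivity argument depends — cannot be extracted. Supplying this step would make your proof complete and essentially identical in substance to the paper's.
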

\begin{proof}
Let $u\in \intb A$. Let $v\in \extb A$ be such that $u\sim v$. By
Corollary~\ref{cor: LC neighbors},
\begin{equation}\label{eq:A_h_1_component}
A = \LC_{h_1}^+(u,v).
\end{equation}
Let us show that $u\notin B$. Suppose to the contrary that $u\in B$.
Hence $u\in\intb B$ by our assumption that $B\subsetneq A$. Then, by
Proposition~\ref{prop: basic LC prop} and the definition of
superlevel component, $B^c$ is a connected set satisfying $v\in B^c$
and satisfying $h_1(\extb B)=h_1(u)+1>h_1(\intb A)$. Thus, by the
first item of Corollary~\ref{cor: outside LC crit}, we have that
$B^c\subseteq A^c$. However, this contradicts the fact that
$B\subsetneq A$.

We continue by considering separately two cases. First, assume that
\begin{equation}\label{eq:first_case_h_1_h_2}
  \text{either $h_1(\intb B)>h_1(u)$ or $h_1(\extb B)>h_1(u)$}.
\end{equation}
Since $u\notin B$, the definition of $\LL_{h_1}^{+}(u)$ and the
assumption~\eqref{eq:first_case_h_1_h_2} imply that
$\LL_{h_1}^{+}(u)\cap B = \emptyset$. Now, Proposition~\ref{prop:
basic LC prop} and \eqref{eq:A_h_1_component} imply that
$\LL_{h_1}^{+}(u)\subseteq A$. Thus, by
\eqref{eq:h_1_h_2_equality_assumption}, $h_1(w)=h_2(w)$ for all
$w\in(\LL_{h_1}^{+}(u))^+$. Hence the definition of sublevel set
yields that $\LL_{h_1}^{+}(u)=\LL_{h_2}^{+}(u)$, which, in turn,
implies that $\LL_{h_1}^{+}(u,v)=\LL_{h_2}^{+}(u,v)$. Thus,
recalling \eqref{eq:A_h_1_component}, $A$ is also a sublevel
component of $h_2$.

Second, let us assume that \eqref{eq:first_case_h_1_h_2} does not
hold. That is, that
\begin{equation}\label{eq:reverse_assumption_h_1_h_2}
  h_1(\intb B)\le h_1(u)\text{ and }h_1(\extb B)\le
h_1(u).
\end{equation}
Denote $S:=A^+\setminus B^-$. Recalling
\eqref{eq:h_1_h_2_equality_assumption} and observing that
$$A^+\setminus A^-=\intb A \cup \extb A \subseteq S,$$ all that we
need to show in order to apply Proposition~\ref{prop: two HHFs, same
LC 1} and derive the proposition, is that
\begin{equation}\label{eq: trunk LL conn.}
\LL_{h_1}^{+}(u)\cap S\text{ is connected.}
\end{equation}

Observe that, as $\LL_{h_1}^{+}(u)\subseteq\LC_{h_1}^{+}(u,v)= A$ by Proposition~\ref{prop: basic LC prop}, we have
\begin{equation*}
\LL_{h_1}^{+}(u)\cap S = \LL_{h_1}^{+}(u)\setminus B^-.
\end{equation*}
Let $H_0\uplus H_1$ be a non-trivial partition of
$\LL_{h_1}^{+}(u)\setminus B^-$. Assume for the sake of obtaining a
contradiction that there is no edge in $\ZZ$ connecting $H_0$ and
$H_1$ (that is an edge between a vertex in $H_0$ and a vertex in
$H_1$). Since $H_0\uplus H_1\uplus(\LL_{h_1}^{+}(u)\cap B^-) =
\LL_{h_1}^{+}(u)$, and $\LL_{h_1}^{+}(u)$ is a connected set, there
must be an edge of $\ZZ$ connecting $H_0$ and $\LL_{h_1}^{+}(u)\cap
B^-$, and an edge of $\ZZ$ connecting $H_1$ and
$\LL_{h_1}^{+}(u)\cap B^-$. The existence of these edges implies
that
\begin{equation}\label{eq: H nonempty}
\begin{split}
&(B^+\setminus B^-)\cap H_0\neq \emptyset\quad\text{and}\\
&(B^+\setminus B^-)\cap H_1\neq \emptyset.
\end{split}
\end{equation}
In particular,
\begin{equation}\label{eq:B_+_B_-_LL}
(B^+\setminus B^-)\cap( \LL_{h_1}^{+}(u)\setminus B^-)\neq\emptyset.
\end{equation}

By Proposition~\ref{prop: Timar II}, we have that
\begin{equation}\label{eq:B_boundaries are connected}
\text{$B^+\setminus B^-$ is a connected set.}
\end{equation}
Observe that $\LL_{h_1}^{+}(u)$ is a connected component of
$\{w\,:\,h_1(w)\le h_1(u)\}$, and, by
\eqref{eq:reverse_assumption_h_1_h_2}, $B^+\setminus
B^-\subseteq\{w\,:\,h_1(w)\le h_1(u)\}$. Thus, using
\eqref{eq:B_+_B_-_LL} and \eqref{eq:B_boundaries are connected} we
may deduce that
\begin{equation}\label{eq:B_boundary_is_in}
(B^+\setminus B^-) \subseteq \LL_{h_1}^{+}(u)\setminus B^- = H_0\cup H_1.
\end{equation}
Putting together \eqref{eq:B_boundary_is_in} and \eqref{eq: H
nonempty} we get that $H_0\uplus H_1$ induces a non-trivial
partition on $B^+\setminus B^-$ that is not crossed by any edge.
Since this contradicts \eqref{eq:B_boundaries are connected}, we
deduce that \eqref{eq: trunk LL conn.} holds.
\end{proof}

\section{Proof of the Embedding Theorem}\label{sec: proof of 2.1}

In this section we use the theory developed in the previous sections
to prove Theorem~\ref{thm: QPm QP0 embedding}. In Section~\ref{subs:
Flattening the slope} we present a one-to-one mapping from $\QP_m$,
the set of quasi-periodic HHFs with slope $m$, to $\QP_\zero$, the
set of periodic HHFs. In Section~\ref{sec: embedding proof} we prove
Theorem~\ref{thm: QPm QP0 embedding} using a probabilistic bound
taken from \cite{PHom} and an auxiliary lemma. This lemma, which
relates the boundaries of sublevel components in $\QP_{\zero}$ with
the boundaries of sublevel components of HHFs in $\hm(\TT)$, is then
proved in Section~\ref{subs: Projection of LC boundary}.

\subsection{Mapping quasi-periodic to periodic}\label{subs: Flattening the slope}

Throughout this section we fix some $m=(m_1,\dots,m_d)\in 6\ZZ$ such
that
\begin{equation*}
  \text{$m_1 > 0$ and $\QP_m\neq\emptyset$}.
\end{equation*}
We also fix $h\in\QP_m$. With the structural results of
Sections~\ref{sec: structure of sets} and~\ref{sec: QP properties}
in our toolkit, we are ready to construct $\Psi_m$, our one-to-one
mapping from $\QP_m$ into $\QP_{\zero}$. We start by defining three
sets, $U_0, V_0$ and $W_0$. The definition relies on the fact that
by Corollary~\ref{cor: LC trans.rep.}, sublevel and superlevel
components of $h$ are translation respecting and can therefore be
assigned a type by Theorem~\ref{thm: main trichotomy}. The first and
the third sets will be used to construct $\Psi_m$. The second set
will be used in Section~\ref{sec: embedding proof} to show that the
image of $\Psi_m$ is small. Proposition~\ref{prop: U and V exist}
below shows that the three sets are well defined.

In the following definition, and throughout the entire section, we
say that a set $S\subset\ZZ$ is the \emph{minimal set} with a given
property, if $S$ is contained in every other set with that property.

\begin{itemize}
\item $W_0=W_0(h)$ is the minimal type $0$ \emph{sublevel} component
    satisfying
    \begin{equation}\label{eq:W_0_prop}
      \zero \in W_0\text{ and }n e_1 \notin W_0.
    \end{equation}
    We let $\Delta$ be a minimal translation of $W_0$ as in Theorem~\ref{thm: main
    trichotomy}. We choose $\Delta$ in some prescribed manner, e.g., as the minimal
    translation which is first in lexicographic order among the
    minimal translations with smallest $\ell_1$ norm. Write
    \begin{equation*}
      \delta := h(\Delta)-h(\zero)=h(\Delta).
    \end{equation*}
\item $V_0=V_0(h)$ is the maximal type $0$ \emph{sublevel} component
    satisfying
    \begin{equation}\label{eq:V_0_prop}
      \text{$h(\intb V_0)=h(\intb W_0)-1$,\, $W_0-\Delta\subseteq V_0\subseteq W_0$,\, $\zero
    \notin V_0$ and $-n e_1 \in V_0$}.
    \end{equation}
\item $U_0=U_0(h)$ is defined by the property that its complement
    $U_0^c$ is the minimal type $0$ \emph{superlevel} component
    such that
    \begin{equation}\label{eq:U_0_prop}
      \text{$h(\intb U_0)= h(\extb W_0)-\delta/2$,\, $W_0-\Delta\subseteq U_0\subseteq W_0$,\, $\zero \notin U_0$
    and $-n e_1 \in U_0$}.
    \end{equation}
\end{itemize}

We remark that the third and fourth properties
in~\eqref{eq:V_0_prop} and~\eqref{eq:U_0_prop} in fact follow from
the first two properties. Nonetheless, to simplify our arguments we
include them as part of the definition. The sets $U_0$, $V_0$ and
$W_0$ of a certain $h\in \QP_{(6,0)}$ are illustrated in
Figure~\ref{fig: uvw}.

\begin{figure}[htb!]
\centering%
\includegraphics[scale=0.3]{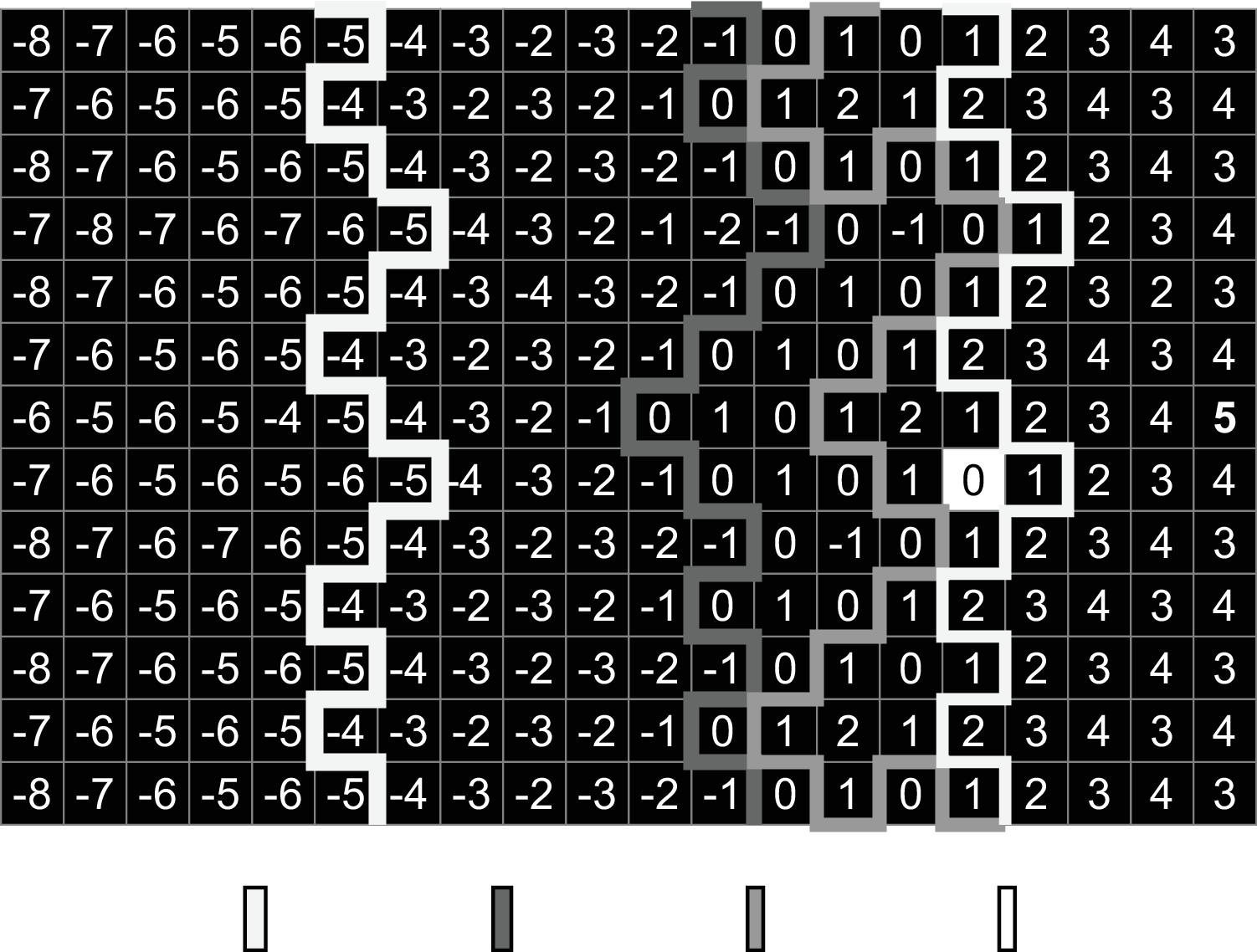}\\
    \rput(-1.58,0.75){$U_0=$}
    \rput(0,0.75){$V_0=$}
    \rput(1.58,0.75){$W_0=$}
    \rput(-3.16,0.75){$W_{-1}=$}
\caption{The boundaries of $U_0$, $V_0$, $W_0$
and $W_{-1}=W_0-\Delta$ for $\Delta=n e_1$ and $\delta=6$. The sets themselves are in all cases to the
left of the boundary. $\zero$ is marked in white.} \label{fig: uvw}
\end{figure}

\begin{figure}[htb!]
\centering%
\includegraphics[scale=0.3]{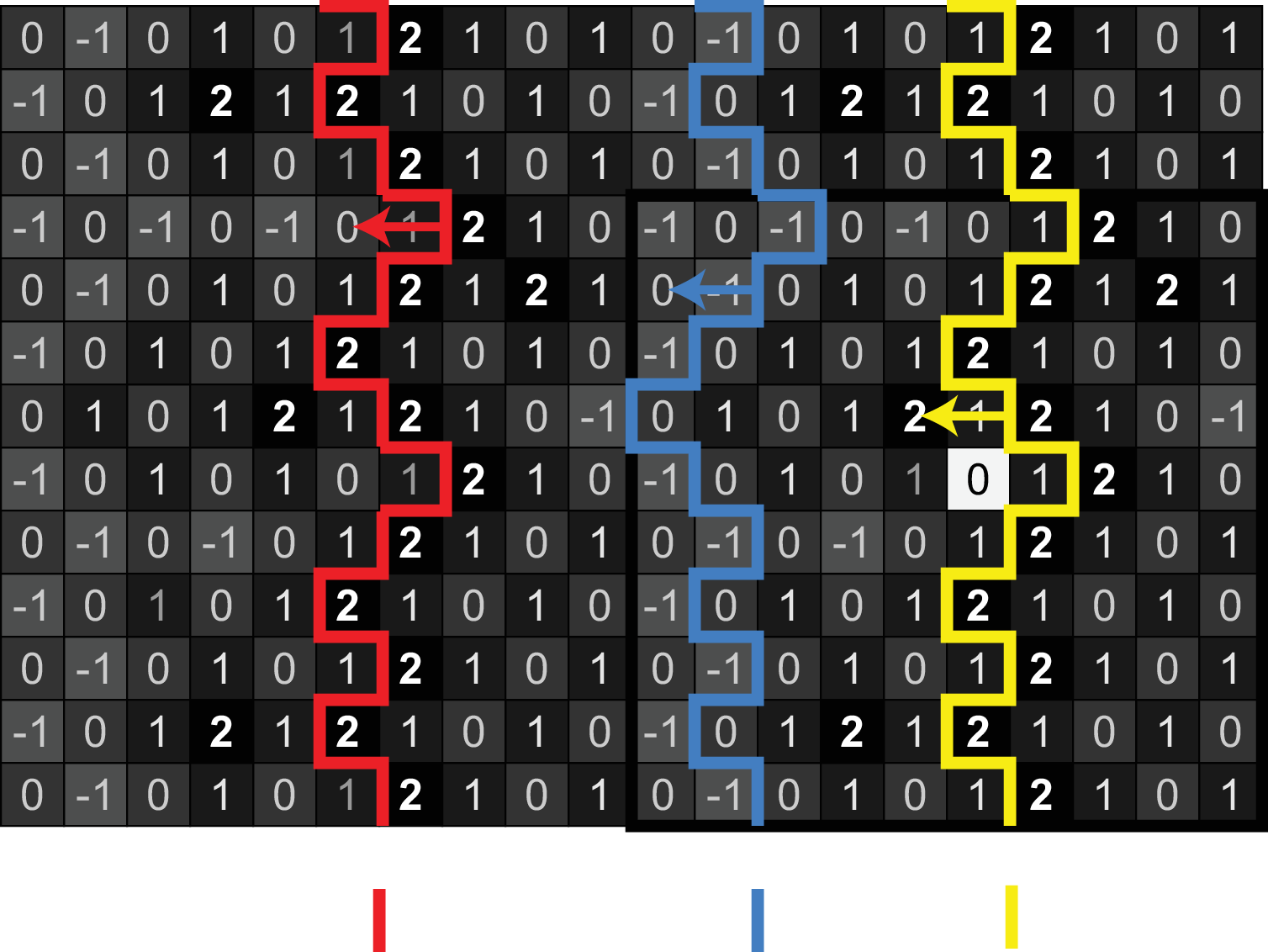}\\
    \rput(0,0.75){$U_0=$}
    \rput(1.58,0.75){$W_0=$}
    \rput(-2.41,0.75){$W_{-1}=$}
\caption{The image through $\Psi$ of the HHF illustrated in figure \ref{fig: uvw}. The boundaries of $U_0$, $W_0$ and
$W_{-1}$ are highlighted to allow the reader to follow the behavior of $\Psi$ in different regions.
$\zero$ is marked in white.} \label{fig: uvw-ref}
\end{figure}

\begin{propos}\label{prop: U and V exist}
$W_0$, $V_0$ and $U_0$ are well-defined, and satisfy
\begin{equation}\label{eq:W_U_V_containment_relations}
W_0-\Delta\subsetneq U_0 \subsetneq V_0 \subsetneq W_0.
\end{equation}
\end{propos}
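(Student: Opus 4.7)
The plan is to handle the three sets successively: first $W_0$ from Corollary 3.16 and the pair-trichotomy, then the key sign condition $\delta>0$, and finally $V_0$ and $U_0$ via two applications of Lemma 3.27 (increasing--decreasing containment), with the minimality of $W_0$ feeding back to guarantee the membership conditions.

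Existence and minimality of $W_0$: Corollary 3.16 provides at least one type $0$ element of $\L_{(\zero, ne_1)}$, using that $m_1>0$. Any two type $0$ elements of $\L_{(\zero,ne_1)}$ are \bicon{ }and boundary-disjoint (Propositions 4.2 and 4.4); since they both contain $\zero$ and both miss $ne_1$, the pair trichotomy (Theorem 3.3) forces them to be nested. Together with finiteness of $\L_{(\zero, ne_1)}$ (Proposition 3.14), the candidate family is finite and totally ordered, so the minimum $W_0$ is well-defined. Corollary 3.17 makes it translation respecting, and Theorem 3.7 supplies the unique minimal translation $\Delta$ with $W_0 - \Delta \subsetneq W_0 \subsetneq W_0 + \Delta$.

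Next I would show $\delta>0$. Propositions 3.11 and Theorem 3.7 give $W_0 + ne_1 = W_0 + \ell\Delta$ for some $\ell\in \Z$, and the height formula (Corollary 3.21) yields $\ell\delta = m_1>0$. If $\delta<0$ then $\ell<0$ and $W_0 + ne_1 \subsetneq W_0$; but $\zero\in W_0$ forces $ne_1 \in W_0 + ne_1 \subseteq W_0$, contradicting $ne_1\notin W_0$. Hence $\delta>0$, so $\delta\ge 6$ since $\delta\in 6\Z$. In particular the four heights $k-\delta$, $k-\delta/2$, $k-1$, $k$ (with $k:=h(\intb W_0)$) are pairwise distinct, which will force the strict containments. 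Since $h(\extb(W_0-\Delta)) = k-\delta+1 < k+1 = h(\extb W_0)$, Lemma 3.27 applies to the pair $W_0-\Delta\subsetneq W_0$: its first item at level $k-1$ produces a type $0$ sublevel component $V$ with $W_0-\Delta\subseteq V\subsetneq W_0$. I define $V_0$ as the maximal such obeying the prescribed membership conditions; a second application of Lemma 3.27 to $W_0-\Delta\subsetneq V_0$ (whose hypothesis $h(\extb(W_0-\Delta)) < h(\extb V_0)$ still holds since $\delta>1$), now using the second item at level $k+1-\delta/2$, produces $U_0$ with $W_0-\Delta\subseteq U_0\subsetneq V_0$ and $U_0^c$ a type $0$ superlevel component; here $U_0$ is the set extracted with $U_0^c$ chosen minimal. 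Existence of the extrema in both definitions follows from repeating the pair-trichotomy argument within the totally ordered family of candidates.

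The membership conditions are the delicate part. For $-ne_1$: using $\ell\ge 1$ one has $W_0 - ne_1 = W_0 - \ell\Delta \subseteq W_0-\Delta$, and combining $\zero\in W_0$ with translation by $-ne_1$ gives $-ne_1 \in W_0 - ne_1 \subseteq W_0 - \Delta \subseteq U_0 \subseteq V_0$. The main obstacle is $\zero\notin V_0, U_0$, and here I would invoke the minimality of $W_0$: any type $0$ sublevel component $V\subsetneq W_0$ containing $\zero$ satisfies $ne_1\notin V$ (since $V\subsetneq W_0\not\ni ne_1$), so $V$ is itself a candidate in the family defining $W_0$, contradicting $W_0=\min$. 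Hence every type $0$ sublevel component strictly contained in $W_0$ avoids $\zero$, giving $\zero\notin V_0$, and then $\zero\notin U_0$ follows from $U_0\subseteq V_0$. Strictness of $V_0\subsetneq W_0$ and $U_0\subsetneq V_0$ is built into Lemma 3.27, while $W_0-\Delta\subsetneq U_0$ follows from comparing the interior heights: equality would force $k-\delta+1 = k+2-\delta/2$, i.e., $\delta=2$, incompatible with $\delta\ge 6$.
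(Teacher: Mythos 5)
Your outline for the existence of $W_0$, the derivation of $\delta\ge 6$, and the constructions populating $\V$ and $\U$ via Lemma~\ref{lem: increasing decreasing containment} all match the paper's approach, and the arguments for $-ne_1\in V_0\cap U_0$, for $\zero\notin V_0$ via minimality of $W_0$, and for $W_0-\Delta\subsetneq U_0$ via comparing boundary heights are all sound (modulo a harmless sign slip, $\delta=-2$ rather than $\delta=2$).

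However, there is a genuine gap at the final containment $U_0\subsetneq V_0$. You assert that strictness is ``built into Lemma~\ref{lem: increasing decreasing containment},'' but the lemma only produces \emph{some} $U'\in\U$ with $W_0-\Delta\subseteq U'\subsetneq V_0$; it says nothing about the extremal element. The set $U_0$ is defined by $U_0^c$ being \emph{minimal} among type-$0$ superlevel components subject only to $W_0-\Delta\subseteq U_0\subseteq W_0$ (together with $h(\intb U_0)=h(\extb W_0)-\delta/2$, $\zero\notin U_0$, $-ne_1\in U_0$); there is no constraint $U_0\subseteq V_0$ in the definition. So $U_0\supseteq U'$, and a priori $U_0$ could be larger than $V_0$. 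Since $U_0$ and $V_0$ are \bicon{ }and boundary disjoint (different boundary heights since $\delta\neq 4$), the pair trichotomy forces them to be nested, but nothing you wrote rules out the possibility $V_0\subsetneq U_0$. The paper fills this gap by picking $u\in\intb U_0$, $w\in\extb W_0$, forming $V':=\LC_h^{(h(\intb W_0)-1)+}(u,w)$, and proving $U_0\subseteq V'$ using the connectedness of $\intb U_0\cup\extb U_0$ (Proposition~\ref{prop: Timar II}) together with the observation that those boundary heights lie at or below $h(\intb W_0)-1$; after verifying $V'\in\V$, the maximality of $V_0$ gives $V'\subseteq V_0$ and hence $U_0\subsetneq V_0$. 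This extra argument is essential and is missing from your proposal.
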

\begin{proof}
For brevity we write $U$, $V$ and $W$, for $U_0$, $V_0$ and $W_0$
respectively. We begin by showing that $W$ is well defined. Write
$\W$ for the set of type $0$ \emph{sublevel} components which
contain $\zero$ and do not contain $n e_1$. Recalling
\eqref{eq:L_u_v_def} we observe that $\W\subseteq \L_{(0,n e_1)}$.
Thus, by Proposition~\ref{prop: formula for hight diff.}, $\W$ is
ordered by inclusion and finite. By Corollary~\ref{cor: there is
slope}, $\W\neq\emptyset$, and thus $W$, the minimal element of
$\W$, is well defined.

Next, towards showing that $V$ is well defined, we write $\V$ for the set of
type $0$ \emph{sublevel} components $V'$ satisfying $h(\intb V')=h(\intb
W)-1$, $W-\Delta \subsetneq V'\subsetneq W$, $\zero\notin V'$ and $-n e_1 \in
V'$. We observe that $\V\subseteq \L_{(-n e_1,0)}$, and thus by
Proposition~\ref{prop: formula for hight diff.}, $\V$ is ordered by inclusion
and finite. To derive the existence of $V$, all that remains is to show that
$\V\neq \emptyset$.

To see that $\V\neq \emptyset$, we make some observations about $\Delta$ and
$\delta$. Since $h\in\QP_m$, $m\in6\ZZ$ and $\Delta\in n\ZZ$, it follows that
\begin{equation}\label{eq:delta_div_6}
\delta\equiv0\pmod6.
\end{equation}
Since $W$ is of type $0$, $\zero\in W$ and $ne_1\notin W$ we get
that $W\subsetneq W+ne_1$ and therefore, by Theorem~\ref{thm: main
trichotomy},
\begin{equation}\label{eq:W_ne1_W_k_relation}
  \text{$W+ne_1=W+k\Delta$ for some positive $k$.}
\end{equation}
We deduce, using Proposition~\ref{prop: quasi per LC}, that $h(\intb
W+ne_1)=h(\intb W)+h(ne_1)=h(\intb W)+m_1$, and therefore that
$m_1=k\delta$. In particular, since $m_1>0$, we see that
\begin{equation}\label{eq:delta_at_least_6}
  \delta\ge6.
\end{equation}
By subtracting $ne_1$ and $k\Delta$ from both sides of
\eqref{eq:W_ne1_W_k_relation} we have that $W-ne_1 = W-k\Delta$.
Thus, recalling that $0\in W$ and $W-k\Delta\subseteq W-\Delta$, we
obtain that
\begin{equation}\label{eq:ne_1_W_Delta}
  -ne_1 \in W-\Delta.
\end{equation}

By Proposition~\ref{prop: quasi per LC} and
\eqref{eq:delta_at_least_6} we get that $W-\Delta$ is a sublevel
component satisfying $h(\extb (W-\Delta))=h(\extb W)-\delta\le
h(\extb W) - 6$. Thus, the first item of Lemma~\ref{lem: increasing
decreasing containment} guarantees the existence of a type $0$
sublevel component $V'$ satisfying $h(\intb V')=h(\intb W)-1$,
$W-\Delta \subsetneq V'\subsetneq W$. Since $-n e_1\in W-\Delta$ by
\eqref{eq:ne_1_W_Delta} we get that $-n e_1 \in V'$. By the
minimality of $W$, we get that $\zero \notin V'$ implying that
$V'\in\V$ so that $\V\neq\emptyset$.

To show that $U$ is well defined, we write $\U$ for the set
containing all $U'$ such that $(U')^c$ is a type $0$
\emph{superlevel} component such that $h(\intb U')= h(\extb
W)-\delta/2$, $W-\Delta\subseteq U'\subseteq W$, $\zero \notin U'$
and $-n e_1 \in U'$. Recalling Definition~\ref{def: dec dual} of
superlevel sets we use Proposition~\ref{prop: formula for hight
diff.} to deduce that the set of superlevel sets containing $\zero$
and not containing $-n e_1$ is finite and ordered by inclusion, and
therefore $\U$ is also finite and ordered by inclusion. All that
remains in order to deduce the existence of $U$ is to show that
$\U\neq\emptyset$.

This time we apply \eqref{eq:delta_at_least_6} and the second item
of Lemma~\ref{lem: increasing decreasing containment}, to $h$, $V$
and $W-\Delta$, to show the existence of $U'$ satisfying that
$(U')^c$ is a superlevel component of type $0$, $W-\Delta\subseteq
U'\subsetneq V$ and $h(\intb U')= h(\extb W)-\delta/2$. Since
$\zero\notin V$ by definition and $-n e_1\in W-\Delta$ by
\eqref{eq:ne_1_W_Delta} we get that $\zero \notin U'$ and $-n e_1
\in U'$. Thus $U'\in \U$, $\U\neq\emptyset$ so that $U$ is well
defined.

To conclude the proof we must show that $U\subsetneq V$ as with the definitions of $U$ and $V$
this will imply \eqref{eq:W_U_V_containment_relations}.
Let $u\in \intb U$, $w\in\extb W$ and write $V'=\LC_h^{(h(\intb W)-1)+}(u,w)$. Our goal is to show that $V'\in\V$.
By the first item of Corollary~\ref{cor: outside LC crit} applied to $V'$ and $W^c$ we have $V'\subseteq W$.
Since $U$ is \bicon, by Proposition~\ref{prop: Timar II} we have that $\intb U\cup \extb U$ is a connected set of vertices. Moreover, $h(\intb U),h(\extb U) \le h(\extb W)-\delta/2+1 \le h(\intb W)-1$. Therefore $\intb U\cup \extb U\subseteq \LC_h^{(h(\intb W)-1)+}(u)$ and since $w\notin U$
we have $U\subseteq V'$ and hence that $-ne_1\in V'$. Since $W-\Delta\subsetneq V' \subsetneq W$ and since $W$ is of type $0$, we have by Proposition~\ref{prop: Type rules}
that $V'$ is of type $0$. By the minimality of $W$ we get that $0\notin V'$. Thus $V'\in\V$. Since $U\subsetneq V'$, and since $V'\subseteq V$ by the maximality of $V$, we obtain \eqref{eq:W_U_V_containment_relations} as required.
\end{proof}

For $i\in \Z$, we write
\begin{equation}\label{eq:U_i_V_i_W_i_def}
U_i:=U_0+i\Delta,\, V_i:=V_0+i\Delta\,\text{ and
}\,W_i:=W_0+i\Delta.
\end{equation}
\begin{propos}\label{prop: absolute order}
For every $z\in n\ZZ$ and $i\in\mathbb{Z}$ the following are equivalent:
\begin{itemize}
\item $U_0+z=U_i$,
\item $V_0+z=V_i$,
\item $W_0+z=W_i$.
\end{itemize}
\end{propos}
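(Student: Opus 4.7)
My plan is to show that, for each of the three sets $X\in\{W_0,V_0,U_0\}$, the condition $X+z=X+i\Delta$ is equivalent to the single arithmetic criterion $h(z)=i\delta$. Since this criterion is independent of $X$, the three-way equivalence in the proposition follows at once.

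The first ingredient is the identity
\[
h(\intb(X+z))-h(\intb X)=h(z)\qquad\text{for all }z\in n\ZZ,
\]
valid for each of $X\in\{W_0,V_0,U_0\}$. This follows from two observations: (a) iterating the quasi-periodicity relation (together with $h(\zero)=0$) shows that $h$ restricted to $n\ZZ$ is additive and that translation by $z\in n\ZZ$ shifts $h$ globally by $h(z)$; (b) the interior boundary of $X$ has constant height---directly for the sublevel components $W_0,V_0$ by Proposition~\ref{prop: basic LC prop}, and for $U_0$ because $\intb U_0=\extb U_0^c$ with $U_0^c$ a superlevel component, so the constant-height property passes through Definition~\ref{def: dec dual}.

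The second ingredient is Corollary~\ref{cor: type 0 height formula}, which applies to $W_0$ and $V_0$ directly and to $U_0$ via its complement $U_0^c$ viewed as a type $0$ sublevel component of $-h$ (using Proposition~\ref{prop: trans resp comp} both to see that $U_0$ is itself type $0$ translation respecting and to identify $-\Delta_{U_0}$ as a minimal translation of $U_0^c$; tracking the signs yields the identical formula for $U_0$). Combining the two ingredients gives $o_X(X+z)=h(z)/\delta_X$, where $\delta_X:=h(\Delta_X)$ for a minimal translation $\Delta_X$ of $X$. Since $o_X$ is a bijection, $X+z=X+i\Delta$ iff $o_X(X+z)=o_X(X+i\Delta)$, which, using the additivity of $h$ on $n\ZZ$, rearranges to $h(z)=i\,h(\Delta)=i\delta$. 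This condition is the same for all three $X$, completing the plan. The only delicate point is adapting Corollary~\ref{cor: type 0 height formula} to the superlevel-derived set $U_0$, handled via complementation and $-h$ as described above.
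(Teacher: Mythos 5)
Your argument is correct in structure but takes a genuinely different route from the paper's, and it has one small gap worth closing. The paper's proof first shows directly that $\Delta$, the chosen minimal translation of $W_0$, is \emph{also} a minimal translation of $V_0$ and of $U_0$, by combining the containment chain $W_0-\Delta\subsetneq U_0\subsetneq V_0\subsetneq W_0$ with Proposition~\ref{prop: dist prop}; once all three sets share the minimal translation $\Delta$, the proof concludes with a short squeeze: if $U_0+z=U_i$, $V_0+z=V_j$, $W_0+z=W_k$, then translating the containment chain by $z$ gives $W_{k-1}\subsetneq U_i\subsetneq V_j\subsetneq W_k$, forcing $k-1<i\le j\le k$ and hence $i=j=k$. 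Your route instead converts each statement $X+z=X+i\Delta$ into the arithmetic criterion $h(z)=i\delta$ via the additivity of $h$ on $n\ZZ$ and Corollary~\ref{cor: type 0 height formula}, and then observes that this criterion does not depend on $X$. That is a clean idea, and the sign-chase you describe for lifting Corollary~\ref{cor: type 0 height formula} through $U_0^c$ and $-h$ using Proposition~\ref{prop: trans resp comp} does go through. What the paper's route buys is an explicit, earlier statement of the fact that $\Delta$ is a shared minimal translation, which is re-invoked in Section~\ref{subs: Projection of LC boundary}; your criterion implies this too, but only after one more inference.

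The gap: Corollary~\ref{cor: type 0 height formula} gives $h(z)=\delta_X\,o_X(X+z)$, so passing from the premise $h(z)=i\delta$ to $o_X(X+z)=o_X(X+i\Delta)$, and thence to $X+z=X+i\Delta$, requires dividing by $\delta_X$, and you never verify $\delta_X\neq0$. Nothing about a type~$0$ translation-respecting set rules out a vanishing increment a priori. Here it does not vanish, but you should say why: if $\delta_X=0$ then Corollary~\ref{cor: type 0 height formula} combined with your first ingredient gives $h(z)=0$ for every $z\in n\ZZ$, so $m_1=h(ne_1)=0$, contradicting the standing assumption $m_1>0$ of Section~\ref{subs: Flattening the slope}. (Equivalently, once one knows, as the paper establishes, that $\Delta_X=\Delta$, one has $\delta_X=\delta\ge 6$ by \eqref{eq:delta_at_least_6}.) Adding this one line makes the argument complete.
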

\begin{proof}
We begin by showing that $U_0,V_0$ and $W_0$ all have $\Delta$ as a minimal
translation. For $W_0$, this is the case by the definition of $\Delta$. We
now show this for $V_0$. The proof for $U_0$ is similar. Let $\Delta_V$ be a
minimal translation of $V_0$. Since
$$V_0-\Delta \subsetneq W_0-\Delta \subsetneq V_0\subsetneq W_0$$
by \eqref{eq:W_U_V_containment_relations}, we have $V_0-k\Delta_V=V_0-\Delta$
for some integer $k\ge 1$. We need to show that $k=1$. By Proposition~\ref{prop: dist prop}, we have
$$\dist(V_0-\Delta_V,W_0^c)>\dist(V_0,W_0^c)=\dist(V_0-\Delta_V,W_0^c-\Delta_V).$$
We deduce that $W_0-\Delta_V\subsetneq W_0$, and thus $W_0-\Delta_V\subseteq
W_0-\Delta$ (by the minimality of $\Delta$). Suppose to the contrary that
$W_0-\Delta_V\subsetneq W_0-\Delta$. Since $\Delta$ is a minimal translation
of $W_0$, we get that
$$V_0-\Delta_V \subsetneq W_0-\Delta_V \subseteq W_0-2\Delta \subsetneq V_0-\Delta,$$
contradicting  the minimality of $\Delta_V$. We conclude that
$W_0-\Delta=W_0-\Delta_V$. From this, using \eqref{eq:W_U_V_containment_relations} again, we have that
\begin{equation*}
  V_0 - 2\Delta_V\subsetneq W_0-2\Delta_V = W_0-2\Delta \subsetneq V_0 - \Delta = V_0 - k\Delta_V,
\end{equation*}
so that $k\le 1$, implying that $k=1$ as we wanted to show.

Fix $z\in n\ZZ$. Since $U_0, V_0$ and $W_0$ are of type $0$ with $\Delta$ as
a minimal translation, there exist $i,j,k$ for which $U_0+z=U_i$,
$V_0+z=V_j$, $W_0+z=W_k$. Translating \eqref{eq:W_U_V_containment_relations}
by $z$, we have
\begin{equation}\label{eq:W_U_V_i_j_k}
W_{k-1} \subsetneq  U_i \subsetneq V_j \subsetneq W_k.
\end{equation}
However, \eqref{eq:W_U_V_containment_relations} and
\eqref{eq:U_i_V_i_W_i_def} imply that
\begin{equation*}
  W_{-1}\subsetneq U_0 \subsetneq V_0\subsetneq W_0\subsetneq
  U_1\subsetneq V_1.
\end{equation*}
Hence we conclude from \eqref{eq:W_U_V_i_j_k} and the fact that
$(U_i), (V_i)$ and $(W_i)$ are ordered by inclusion that
$$ k-1 < i \le j \le k$$ and therefore that $i=j=k$.
\end{proof}

We define the mapping $\Psi_m\colon\QP_m\to\QP_{\zero}$ by
\begin{equation}\label{eq:Psi_m_def}
\Psi_m(h)(v):=\begin{cases}
h(v-i\Delta)=h(v)-i\delta,                 & v\in W_i\setminus U_i\text{ for some $i\in\Z$}\\
2h(\extb W_0)-h(v-i\Delta)=2h(\extb W_0)-h(v)+i\delta,     & v\in
U_{i+1}\setminus W_i\text{ for some $i\in\Z$}
\end{cases}.
\end{equation}
The remainder of the section is dedicated to showing that $\Psi_m$
is well defined and has the required properties.

By Theorem~\ref{thm: main trichotomy}, for every $i\in\Z$ we have
$W_{i}\subsetneq W_{i+1}$. Thus, applying Proposition~\ref{prop:
type 0 union is all} to $W_0$, we have that every $v\in\ZZ$ belongs
to exactly one set of the form $W_{i+1}\setminus W_i$. Hence
$\Psi_m(h)(v)$ is defined for every $v\in\ZZ$. The image through
$\Psi$ of the HHF illustrated in Figure~\ref{fig: uvw} is depicted
in Figure~\ref{fig: uvw-ref}.

By definition, $\Psi_m(h)$ is $\Delta$-periodic , i.e., it satisfies
$\Psi_m(h)(v)=\Psi_m(h)(v+\Delta)$ for every $v\in\ZZ$. Thus to
understand $\Psi_m(h)$ it suffices to understand its values on $v\in
W_0\setminus W_{-1}$. As a first step to this end we point out that
on the region $W_0\setminus U_0$, $\Psi_m$ is the identity while on
the region $U_0\setminus W_{-1}$ it is a reflection with respect to
height $h(\extb W_0)-\delta/2 = h(\intb U_{0})$.

\begin{propos}\label{prop: psi_m one to one}
$\Psi_m$ is a one-to-one mapping from $\QP_m$ to $\QP_{\zero}$.
\end{propos}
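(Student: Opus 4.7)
The plan is to verify three properties in turn: that $\Psi_m(h)$ is a homomorphism height function on $\ZZ$, that it belongs to $\QP_\zero$, and that the map $h\mapsto \Psi_m(h)$ is injective. For the HHF property, inside each region of the partition $\{W_i\setminus U_i,\, U_{i+1}\setminus W_i\}_{i\in\Z}$ the definition of $\Psi_m(h)$ is an affine function of $h$, so the increment-one property is inherited inside each region. The delicate points are the boundaries $\partial W_i$ and $\partial U_{i+1}$, where the two branches of \eqref{eq:Psi_m_def} must paste consistently. At $\partial W_i$, both branches produce $h(\intb W_0)$ on $\intb W_i$ and $h(\extb W_0) = h(\intb W_0) + 1$ on $\extb W_i$. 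At $\partial U_{i+1}$, the crucial ingredient is the identity $h(\intb U_0) = h(\extb W_0) - \delta/2$ built into the definition of $U_0$: this forces the reflection $x \mapsto 2h(\extb W_0) - x$ to fix $h(\intb U_0)$, so the two branches agree with value $h(\intb U_0)$ on $\intb U_{i+1}$ and with value $h(\intb U_0) - 1$ on $\extb U_{i+1}$, yielding the required difference of one.

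For $\Psi_m(h) \in \QP_\zero$, the normalization $\Psi_m(h)(\zero) = 0$ is immediate because $\zero \in W_0 \setminus U_0$ falls under the first branch. The map is $\Delta$-periodic by construction, and to extend to periodicity under each generator $ne_j$ I would invoke that $W_0$ is translation respecting of type $0$ with minimal translation $\Delta$, so Theorem~\ref{thm: main trichotomy} produces $k_j \in \Z$ with $W_0 + ne_j = W_0 + k_j\Delta$, and Proposition~\ref{prop: absolute order} provides the same shift for $U_0$. Combined with quasi-periodicity $h(v+ne_j) = h(v) + m_j$ and the identity $m_j = k_j\delta$ (obtained from Corollary~\ref{cor: type 0 height formula} applied to $W_0$ with $z = k_j\Delta$, matched against $h(\intb W_0 + ne_j) = h(\intb W_0) + m_j$), direct substitution into either branch of \eqref{eq:Psi_m_def} gives $\Psi_m(h)(v+ne_j) = \Psi_m(h)(v)$.

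For injectivity, the strategy is to show that the triple $(W_0(h), \Delta, U_0(h))$ is recoverable from $(\Psi_m(h), m)$; once known, \eqref{eq:Psi_m_def} is explicitly invertible on each region. The key tool is Proposition~\ref{prop: two HHFs, same LC 2}, applied with $h_1 = h$, $h_2 = \Psi_m(h)$, $A = W_0$, $B = U_0$; the hypothesis on $B$ is fulfilled because $U_0^c$ is a superlevel component of $h$. The required equality $h = \Psi_m(h)$ on $W_0^+ \setminus U_0^-$ holds on $W_0 \setminus U_0$ by definition, on $\intb U_0 \subseteq U_0 \setminus W_{-1}$ because the reflection fixes $h(\intb U_0)$, and on $\extb W_0 \subseteq U_1 \setminus W_0$ because the reflection fixes $h(\extb W_0)$. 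Hence $W_0$ is itself a sublevel component of $\Psi_m(h)$, and a dual application (to $-h$ and $-\Psi_m(h)$) shows that $U_0^c$ is a superlevel component of $\Psi_m(h)$. The final step, which I expect to be the main obstacle, is to characterize $W_0(h)$ and $U_0(h)$ intrinsically from $\Psi_m(h)$ and $m$ alone --- for instance, by showing $W_0(h)$ is the minimal type $0$ sublevel component of $\Psi_m(h)$ separating $\zero$ from $ne_1$ with the appropriate boundary height. Ruling out spurious smaller candidates requires applying Proposition~\ref{prop: two HHFs, same LC 2} in the reverse direction, transferring any hypothetical smaller type $0$ sublevel component of $\Psi_m(h)$ back to a sublevel component of $h$, which would contradict the minimality built into the definition of $W_0(h)$; an analogous argument then pins down $U_0(h)$, completing the reconstruction of $h$ from $\Psi_m(h)$.
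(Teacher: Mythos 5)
Your overall strategy matches the paper's proof: verify that $\Psi_m(h)$ is an HHF, verify periodicity, and show injectivity by reconstructing $(W_0,\Delta,U_0)$ from $(\Psi_m(h),m)$ via Proposition~\ref{prop: two HHFs, same LC 2}. Two issues are worth flagging, one minor and one substantive.

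The minor issue is in the HHF-pasting step at $\partial U_{i+1}$. The bare reflection $x \mapsto 2h(\extb W_0)-x$ does \emph{not} fix $h(\intb U_0)$; it sends it to $h(\intb U_0)+\delta$. What saves the computation is the extra $+i\delta$ correction in the second branch of \eqref{eq:Psi_m_def}: with $i=-1$ one gets $2h(\extb W_0)-h(\intb U_0)-\delta=h(\intb U_0)$. Similarly, the value on $\extb U_{i+1}$ is $h(\extb U_0)=h(\intb U_0)+1$ (since $U_0^c$ is a superlevel component), not $h(\intb U_0)-1$. The conclusion (difference of one) survives, but the stated reason does not, and this matters because the choice $h(\intb U_0)=h(\extb W_0)-\delta/2$ is exactly what makes these two facts compatible.

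The substantive gap is in the reconstruction step, and you correctly flag it yourself as the main obstacle. You propose applying Proposition~\ref{prop: two HHFs, same LC 2} ``in the reverse direction'' to transfer a hypothetical smaller type-$0$ sublevel component $W_t$ of $t=\Psi_m(h)$ back to a sublevel component of $h$, but Proposition~\ref{prop: two HHFs, same LC 2} requires exhibiting a set $B\subsetneq W_t$ (a sublevel component, or complement of a superlevel component, of $t$) for which $h=t$ on $W_t^+\setminus B^-$. You never name such a $B$, and the obvious candidate $B=U_0$ does not work because you only know $W_t\subseteq W_0$, so $U_0\subsetneq W_t$ is not available. This is precisely why the paper introduces the third set $V_0$ in Proposition~\ref{prop: U and V exist}. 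The argument runs: first, a forward application of Proposition~\ref{prop: two HHFs, same LC 2} with $A=V_0$, $B=U_0$ shows that $V_0$ is a sublevel component of $t$; second, since $W_t$ is of type $0$ with $\zero\in W_t$ and $ne_1\notin W_t$, Theorem~\ref{thm: main trichotomy} gives $-ne_1\in W_t$, and then Theorem~\ref{thm: pair_trich} (via boundary disjointness of sublevel components of $t$) yields $V_0\subsetneq W_t$; only then can the reverse application with $A=W_t$, $B=V_0$ be made. Without $V_0$ and the verification $V_0\subsetneq W_t$, the hypothesis \eqref{eq:h_1_h_2_equality_assumption} cannot be checked, and the minimality argument does not close. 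An analogous issue, resolved the same way, arises for recovering $U_0$.
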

\begin{proof}
Write $t:=\Psi_m(h)$. We need to show that $t$ is periodic in $n
e_i$ for every $1\le i\le d$, that it is a height function, and that
$\Psi_m$ is one-to-one.

\heading{$t$ is Periodic} First we show that for every $\Delta'\in
n\ZZ,\, a\in\Z$ such that $W_0+\Delta'=W_a$, we have
\begin{equation}\label{eq: zero deltas}
h(v)=h(v+\Delta'-a\Delta)\text{ for all }v\in\ZZ.
\end{equation}
By quasi-periodicity, for all $v\in\ZZ$, we have
$h(v+\Delta'-a\Delta)=h(v) + (h(\Delta'-a\Delta)-h(\zero))$. Hence
it suffices to prove \eqref{eq: zero deltas} for a single $v\in\ZZ$.
Next, note that since $W_0=W_a-\Delta'=W_0+a\Delta-\Delta'$ we have
that if $v\in\intb W_0$, then $v+\Delta'-a\Delta$ is also a member
of $\intb W_0$, implying, by the definition of $W_0$ that $h(v)=h(v+\Delta'-a\Delta)$. This
establishes \eqref{eq: zero deltas}.

Now, let $1\le j\le d$, and suppose that $o_{W_0}(W_0+n e_j)=a\in\Z$
where $o_{W_0}$ is the order function of $W_0$ given by
Theorem~\ref{thm: main trichotomy}. Observe that $W_0+ne_j=W_a$.
Note that if $v\in W_i\setminus U_i$ then, by Proposition~\ref{prop:
absolute order}, $v+ne_j\in W_{i+a}\setminus U_{i+a}$. Thus, using
\eqref{eq: zero deltas}, if $v\in W_i\setminus U_i$ then
$$t(v)= h(v-i\Delta) = h(v+n e_j-(i+a)\Delta) = t(v+ne_j).$$
Similarly, if $v\in U_{i+1}\setminus W_i$ then, using
Proposition~\ref{prop: absolute order}, we have
$$t(v)=2h(\extb W_0)-h(v-i \Delta)=2h(\extb W_0)-h(v+ne_j-(i+a) \Delta)=t(v+ne_j).$$

\heading{$t$ is an HHF} We claim that $t\in\hm(\ZZ)$, i.e., that the values
which $t$ assigns to adjacent vertices differ by exactly $1$. Let $u,v$ be
adjacent vertices in $\ZZ$. We need to show that
\begin{equation}
\label{eq: being HHF}|t(u)-t(v)|=1.
\end{equation}
Since $t$ is $\Delta$-periodic, for every vertex $w\in\ZZ$ there
exists $j\in \Z$ such that $w+j\Delta\in U_1\setminus U_0$ and
$t(w)=t(w+j\Delta)$. We may therefore assume WLOG $u\in U_1\setminus
U_0$, and $v\in U_1$. We consider three cases separately.

First, if both $u,v\in U_1\setminus W_0$ or both $u,v\in
W_0\setminus U_0$ then \eqref{eq: being HHF} follows directly from
the definition of $\Psi_m$.

Second, note that
\begin{equation*}
t(\extb W_0)-t(\intb W_0)=2h(\extb W_0)-h(\extb W_0)-h(\intb
W_0)=h(\extb W_0)-h(\intb W_0)=1.
\end{equation*}
Hence  \eqref{eq: being HHF} holds if either $u\in\extb W_0$ and
$v\in\intb W_0$ or vice versa.

Third,
\begin{equation*}
t(\extb U_0)-t(\intb U_0)= h(\extb U_0) - (2h(\extb W_0) - h(\intb
U_0)-\delta),
\end{equation*}
and plugging the relation $h(\extb W_0)=h(\intb U_0)+\delta/2$ from
\eqref{eq:U_0_prop} yields
\begin{equation*}
t(\extb U_0)-t(\intb U_0)=h(\extb U_0)-h(\intb U_0)=1.
\end{equation*}
Thus \eqref{eq: being HHF} holds if $u\in\extb U_0$ and $v\in\intb
U_0$.

\heading{$\Psi_m$ is one-to-one} To show that $\Psi_m$ is
one-to-one, we explain how to construct an inverse for it. Suppose
that we are able to recover $U_0, W_0, \Delta$ and $\delta$ from $t$
and $m$. Then we may define $U_i=U_0+i\Delta$, $W_i=W_0+i\Delta$ and
the mapping
\begin{equation*}
\Psi^{-1}_m(t)(v):=\begin{cases}
t(v)+i\delta,                 & v\in W_i\setminus U_i\text{ for some $i\in\Z$}\\
2t(\extb W_0)-t(v)+i\delta,     & v\in U_{i+1}\setminus W_i\text{
for some $i\in\Z$}
\end{cases}.
\end{equation*}
It is simple to check that this $\Psi_m^{-1}$ is indeed an inverse
to $\Psi_m$. It is therefore sufficient to show that $U_0, W_0,
\Delta$ and $\delta$ may be recovered from $t$ and $m$.

We begin by recovering $W_0$. To do this we follow the lines of the
proof of proposition~\ref{prop: U and V exist}. Write $\W_t$ for the
set of type $0$ \emph{sublevel} components of $t$ which contain
$\zero$ and do not contain $n e_1$. Again we recall
\eqref{eq:L_u_v_def} and observe that $\W_t\subseteq \L_{(0,n
e_1)}$, where $\L$ is defined with respect to $t$. Thus, by
Proposition~\ref{prop: formula for hight diff.}, $\W_t$ is ordered
by inclusion and finite. We now argue that $\W_t$ is a non-empty set
whose minimal element is $W_0$.

The definition \eqref{eq:Psi_m_def} of $\Psi_m$ and the relation
$h(\extb W_0)=h(\intb U_0)+\delta/2$ from \eqref{eq:U_0_prop} imply
that
\begin{equation}\label{eq:t_h_equality}
  t(x)=h(x)\quad\text{ for $x\in W_0^+\setminus U_0^-$}.
\end{equation}
We can therefore apply Proposition~\ref{prop: two HHFs, same LC 2}
with $h_1=h$, $h_2=t$, $A=W_0$ and $B = U_0$ to get that
\begin{equation}\label{eq:W_0_in_W_t}
W_0\in \W_t.
\end{equation}
 Applying the same proposition with $A=V_0$ yields that
\begin{equation}\label{eq: V_0 is LC t}
\text{$V_0$ is a sublevel component of $t$.}
\end{equation}
Let us write $W_t$ for the minimal element of $\W_t$. Since $W_0\in
\W_t$ we conclude that
\begin{equation}\label{eq:W_t_W_0_inclusion}
  W_t\subseteq W_0.
\end{equation}
To obtain the opposite inclusion we now show that $W_t$ is also a
sublevel component of $h$. Observe that since $W_t$ is of type 0,
and since $\zero \in W_t$ and $ne_1\notin W_t$ we have by
Theorem~\ref{thm: main trichotomy} that $W_t - ne_1\subsetneq W_t$.
We deduce that $-ne_1\in W_t\cap V_0$. In addition, our definitions
imply that $ne_1\in (W_t)^c\cap (V_0)^c$ and $\zero \in
(W_t\setminus V_0)$. By Theorem~\ref{thm: pair_trich}, using that
distinct sublevel components of $t$ are boundary disjoint by
Proposition~\ref{prop: LC boundaries are disjoint}, we deduce that
$V_0\subsetneq W_t$. Applying Proposition~\ref{prop: two HHFs, same
LC 2} with $h_1=t$, $h_2=h$, $A=W_t$ and $B=V_0$, using
\eqref{eq:t_h_equality} and \eqref{eq:W_t_W_0_inclusion} to check
the condition \eqref{eq:h_1_h_2_equality_assumption}, we get that
$W_t$ is a sublevel component of $h$. Together with
\eqref{eq:W_t_W_0_inclusion}, the minimality of $W_0$ now implies
that $W_t=W_0$, allowing the recovery of $W_0$ from $t$. After
recovering $W_0$, we can recover $\Delta$ and $\delta$ using the
fact that $\Delta$ is a minimal translation of $W_0$ chosen in a
prescribed manner and the fact that by Corollary~\ref{cor: type 0 height formula} we have
$\delta \cdot o_{W_0}(W_0+ne_1)= m_1$, where $o_{W_0}$ is the order function on translations of
$W_0$, given by Theorem~\ref{thm: main trichotomy}.

All that remains is to recover $U_0$. Following again the lines of
the proof of Proposition~\ref{prop: U and V exist}, we write $\U_t$
for the set containing all $U'$ such that $(U')^c$ is a type $0$
\emph{superlevel} component of $t$ and $t(\intb U')= t(\extb
W_0)-\delta/2$, $W_0-\Delta\subseteq U'\subseteq W_0$, $\zero \notin
U'$ and $-n e_1 \in U'$. Recalling Definition~\ref{def: dec dual} of
superlevel sets we again use Proposition~\ref{prop: formula for
hight diff.} to deduce that the set of superlevel components
containing $\zero$ and not containing $-n e_1$ is finite and ordered
by inclusion, implying that $\U_t$ is also finite and ordered by
inclusion. We now use \eqref{eq:t_h_equality} and
Proposition~\ref{prop: two HHFs, same LC 2}, with $h_1=-h$,
$h_2=-t$, $A=(U_0)^c$ and $B =(W_0)^c$, to get that $U_0^c$ is a
superlevel component of $t$ (again, using Definition~\ref{def: dec
dual} of superlevel components). It follows from
\eqref{eq:t_h_equality} that $U_0\in \U_t$. Write $U_t$ for the
maximal element of $\U_t$, i.e., the complement of the minimal
element amongst complements of elements in $\U_t$. Since $U_0\in
\U_t$ we conclude that
\begin{equation}\label{eq:U_t_U_0_inclusion}
  U_t^c\subseteq U_0^c.
\end{equation}
Recall that, by the definition of $\U_t$, we have $(W_0)^c\subsetneq
(U_t)^c$ and that, by \eqref{eq:W_0_in_W_t}, $W_0$ is also a
sublevel component of $t$. Applying Proposition~\ref{prop: two HHFs,
same LC 2} to $h_1=-t$, $h_2=-h$, $A=(U_t)^c$ and $B =(W_0)^c$,
using \eqref{eq:t_h_equality} and \eqref{eq:U_t_U_0_inclusion} to
check the condition \eqref{eq:h_1_h_2_equality_assumption}, we get
that $U_t^c$ is also a superlevel component of $h$. We also have
$h(\intb U_t)= h(\extb W_0)-\delta/2$ by \eqref{eq:t_h_equality}.
Thus, together with \eqref{eq:U_t_U_0_inclusion}, the minimality of
$U_0^c$ now implies that $U_0=U_t$. As $W_0, U_0, \Delta$ and
$\delta$ can be recovered from $t$ and $m$, we deduce that $\Psi_m$
is one-to-one.
\end{proof}

\subsection{Proof of Theorem~\ref{thm: QPm QP0 embedding}}\label{sec: embedding
proof} In this section we prove Theorem~\ref{thm: QPm QP0 embedding}
using a bound on the probability for a uniformly chosen HHF on the
torus to have a sublevel component with long boundary. Here, for the
first time, we use sublevel components on $\TT$ (defined in
Section~\ref{sec: QP properties}). To clarify our proof we will
always denote HHFs in $\hm(\TT)$ by $r$, HHFs in $\QP_{\zero}$ by
$t$ and HHFs in $\QP_m$, for arbitrary $m=(m_1,\ldots, m_d)$ with
$m_1>0$, by $h$.

Recall that for $u\in\TT$ we denoted by $\hm(\TT, u)$ the set of all
homomorphism height functions on $\TT$ which are zero at $u$. We use
the following theorem of \cite{PHom} to derive the estimates of
Theorem~\ref{thm: QPm QP0 embedding}.

\begin{theorem}[{\cite[special case of Theorem~2.8]{PHom}}]\label{thm: long level set estimate}
There exist $c>0$ and $d_0$ such that in all dimensions $d\ge d_0$,
for all even $n$, all $u,v\in\TT$ and all $L\ge 1$, if $h$ is
uniformly sampled from $\hm(\TT,u)$ then
\begin{equation*}
\P\left(|\partial \LC_h^{0+}(u,v)|\ge L\right)\le d\exp \left(
-\frac{cL}{d\log^2 d} \right),
\end{equation*}
where we mean that $\LC_h^{0+}(u,v)=\emptyset$ if $h(v)\le 0$.
\end{theorem}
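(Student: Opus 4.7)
The plan is to establish the tail bound by the standard cutset-plus-shift strategy pioneered in \cite{GK04, PS13} and carried out for homomorphism height functions in \cite{PHom}. First I would observe that $\gamma := \partial \LC_h^{0+}(u,v)$ is an \emph{odd} edge-cutset separating $u$ from $v$ in $\TT$: by Proposition~\ref{prop: basic LC prop} every edge of $\gamma$ joins a vertex of height $0$ on the $u$-side to a vertex of height $1$ on the $v$-side, so in particular the interior boundary vertices all share the parity of $u$. Hence the event $\{|\partial \LC_h^{0+}(u,v)| \ge L\}$ is contained in the union $\bigcup_\gamma \{\partial \LC_h^{0+}(u,v) = \gamma\}$ taken over odd edge-cutsets $\gamma$ of size at least $L$ separating $u$ from $v$, each of which must contain at least one of the $2d$ edges incident to $u$ (since $\LC_h^{0+}(u,v)$ is nonempty only when $h(v)>0$).

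Next I would bound $\P(\partial \LC_h^{0+}(u,v) = \gamma)$ for each fixed $\gamma$ through a many-to-one shift transformation on $\hm(\TT, u)$. Given $h$ realising $\gamma$, one selects a family of $\Omega(|\gamma|/(d \log^2 d))$ pairwise non-interfering ``shift sites'' along $\gamma$, each a localised neighbourhood around an edge of $\gamma$ together with a few parallel directions in which a $\pm 2$ perturbation of the heights preserves the HHF constraint. Flipping each site independently produces $2^{c|\gamma|/(d \log^2 d)}$ distinct HHFs in $\hm(\TT,u)$, so by counting preimages
\begin{equation*}
\P\bigl(\partial \LC_h^{0+}(u,v)=\gamma\bigr) \le \exp\!\left(-\frac{c|\gamma|}{d\log^2 d}\right).
\end{equation*}

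Finally I would perform the cutset enumeration. Using a Peierls-type count that exploits the high-dimensional isoperimetric profile of $\TT$ (as in \cite{GK04, PS13}), the number of odd edge-cutsets of size $\ell$ containing a fixed edge is at most $\exp(C\ell \log d / d)$. A union bound over the $2d$ choices of incident edge at $u$ and over $\ell \ge L$ then yields
\begin{equation*}
\P\bigl(|\partial \LC_h^{0+}(u,v)|\ge L\bigr) \le 2d \sum_{\ell \ge L} \exp\!\left( \frac{C\ell \log d}{d} - \frac{c\ell}{d\log^2 d} \right) \le d\exp\!\left(-\frac{cL}{d\log^2 d}\right),
\end{equation*}
after absorbing the enumeration factor into the constant — which is precisely why the denominator in the exponent must contain $\log^2 d$ rather than $\log d$.

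The principal obstacle will be the shift step: producing $\Omega(|\gamma|/(d\log^2 d))$ simultaneously non-interfering shift sites along an arbitrary odd cutset, and verifying that each local modification stays in $\hm(\TT, u)$ and can be distinguished from the base function under a suitable encoding. The density $1/(d\log^2 d)$ rather than $1/d$ is forced by cutsets whose local orientation leaves few ``free'' directions in which one can perturb; handling these regions requires a multi-scale decomposition (the source of the extra $\log d$), and this is exactly the technical heart of the argument in \cite{PHom}.
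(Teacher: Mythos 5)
Note first that the paper does not supply a proof of this statement at all: it is quoted verbatim from \cite[Theorem~2.8]{PHom} and used as a black box. So there is no ``paper's own proof'' to compare against; what you are reconstructing is the argument of \cite{PHom}. Your broad plan --- treat $\partial\LC_h^{0+}(u,v)$ as an odd edge-cutset, bound the probability of realising a fixed cutset via a many-to-one shift transformation, then union-bound with a Peierls-type enumeration of odd cutsets --- is indeed the correct strategy from that paper and its predecessors \cite{GK04, PS13, GKRS}.

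Two concrete steps in your sketch do not hold up. The anchoring claim is false: $\partial\LC_h^{0+}(u,v)$ need not contain any edge incident to $u$. If $h(w)=-1$ for every $w\sim u$ (which is entirely compatible with $h(v)>0$ when $v$ is not adjacent to $u$), then all neighbours of $u$ lie in $\LL_h^{0+}(u)\subseteq\LC_h^{0+}(u,v)$, so $u$ is an interior vertex of the level component and the cutset stays at distance at least two from $u$. The parenthetical justification you offer (``since $\LC_h^{0+}(u,v)$ is nonempty only when $h(v)>0$'') does not address this. Obtaining an anchoring prefactor that is polynomial in $d$ rather than $n^d$ is itself one of the genuine difficulties here, not a freebie.

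More seriously, the arithmetic of your closing union bound diverges. You quote an enumeration cost $\exp(C\ell\log d/d)$ per cutset of size $\ell$ against a shift gain $\exp(-c\ell/(d\log^2 d))$. Since $\log d/d$ dominates $1/(d\log^2 d)$, the summand $\exp\bigl(C\ell\log d/d - c\ell/(d\log^2 d)\bigr)$ grows in $\ell$ rather than decays, so the geometric series does not converge, let alone give $d\exp(-cL/(d\log^2 d))$. Your remark that the enumeration factor ``can be absorbed into the constant --- which is precisely why the denominator must contain $\log^2 d$'' has the logic backwards: one cannot absorb the larger exponent into the smaller. In the actual argument the shift gain must dominate the enumeration cost term-by-term, and the residual exponent $\ell/(d\log^2 d)$ is what remains after that cancellation; making the two balance is the technical heart of \cite{PHom} and \cite{PS13}, and your sketch mis-identifies where the $\log^2 d$ comes from precisely at the point where the hard work begins.
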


We adapt Theorem~\ref{thm: long level set estimate} to our setting
through the following corollary.
\begin{cor}\label{cor: long level set estimate}
There exist $c>0$ and $d_0$ such that in all dimensions $d\ge d_0$,
for all even $n$ and all $L\ge 1$, denoting
$$A:=\left\{r\in\hm(\TT)\ :\ \text{ there exists a sublevel component $A$ such that $|\partial A|\ge L$} \right\},$$ the following holds,
\begin{equation*}
\frac{\left|A\right|}{\left|\hm(\TT)\right|}\le 2d^2 n^d
\exp\left(-\frac{cL}{d\log^2 d}\right).
\end{equation*}
\end{cor}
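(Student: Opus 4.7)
The plan is to reduce Corollary 5.2 to Theorem 5.1 via a straightforward union bound over ordered pairs of adjacent vertices of $\TT$. The overall argument consists of three steps: (i) associate to each sublevel component an edge in its boundary, (ii) re-normalize at an endpoint of this edge so that Theorem 5.1 becomes applicable, and (iii) count.

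First, I would verify the combinatorial reduction. Suppose $r \in \hm(\TT)$ admits a sublevel component $A$ with $|\partial A| \geq L \geq 1$. Then $\partial A$ contains at least one edge $(u,v)$, and by Proposition 4.2 we may take $u \in \intb A$, $v \in \extb A$, so $r(v) = r(u)+1$. A short check from the definition of sublevel component (together with the fact that $u$ and $v$ lie in different components of $\TT$ after removing the height $r(u)+1$ vertices, since $A$ separates them) gives $A = \LC_r^{r(u)+}(u,v)$. Therefore
\begin{equation*}
\{r : \exists \text{ sublevel component with } |\partial A| \ge L\} \subseteq \bigcup_{(u,v) : u \sim v} \bigl\{ r : |\partial \LC_r^{r(u)+}(u,v)| \ge L \bigr\},
\end{equation*}
where the union runs over ordered pairs of adjacent vertices of $\TT$, and where we adopt the convention that $\LC_r^{r(u)+}(u,v)$ is taken to have empty boundary when $r(u) \ge r(v)$.

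Next, for each fixed ordered pair $(u,v)$, I would use the bijection $T_u \colon \hm(\TT) \to \hm(\TT, u)$ defined by $T_u(r)(w) := r(w) - r(u)$, which was already used in the proof of Lemma~1.3. Since sublevel components depend only on the HHF's values up to an additive constant, one has $\LC_r^{r(u)+}(u,v) = \LC_{T_u(r)}^{0+}(u,v)$ as sets, with identical boundaries. Hence, by Theorem~5.1,
\begin{equation*}
\P_{r \sim \hm(\TT)}\bigl(|\partial \LC_r^{r(u)+}(u,v)| \ge L\bigr) = \P_{r' \sim \hm(\TT,u)}\bigl(|\partial \LC_{r'}^{0+}(u,v)| \ge L\bigr) \le d \exp\!\left(-\frac{cL}{d \log^2 d}\right).
\end{equation*}

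Finally, the number of ordered pairs of adjacent vertices in $\TT$ is $2 d n^d$, so the union bound yields $|A|/|\hm(\TT)| \le 2 d^2 n^d \exp(-cL/(d\log^2 d))$, matching the stated bound. There is no real obstacle in this argument; the entire proof is a bookkeeping exercise, with the only mildly delicate point being the identification $A = \LC_r^{r(u)+}(u,v)$ in step (i), which follows immediately once the adjacent pair is chosen from $\partial A$.
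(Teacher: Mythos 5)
Your proof is correct and takes essentially the same approach as the paper: reduce to Theorem~5.1 by identifying the boundary edge $(u,v)$ with $A=\LC_r^{r(u)+}(u,v)$, re-normalize via a shift bijection, and union bound over the $2dn^d$ ordered adjacent pairs. The paper merely phrases the bookkeeping slightly differently (first fixing the base point $u=\zero$ and union-bounding over the $2d$ neighbours, then spreading over all $n^d$ base points with the torus-translation bijections $\eta_w$), arriving at the identical $2d^2n^d\exp(-cL/(d\log^2 d))$ bound.
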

\begin{proof}
Fix $L\ge 1$ and let $B:=\{r\in \hm(\TT)\ : \ \exists v\in\TT, v\sim
\zero,\text{ s.t. }|\partial \LC_r^+(\zero,v)|\ge L\}$. By Theorem~\ref{thm:
long level set estimate} with $u=\zero$, and using a union bound on all
$v\sim \zero$, we have
$$|B|\le 2d \cdot d\exp\left(-\frac{cL}{d\log^2 d}\right)\big|\hm(\TT)\big|\quad\text{ for all $d$ greater then some fixed $d_0$.}$$
Now, for every $w\in \TT$ define the mapping $\eta_w :\hm(\TT)\to
\hm(\TT)$ by
$$\eta_w(r)(v):=r(v+w)-r(w).$$
It is not difficult to check that this mapping is well defined and
is a bijection. Moreover, for every $r\in A$ there exists a
$w\in\TT$ such that $\eta_w(r)\in B$. The corollary follows.
\end{proof}

In order to apply Corollary~\ref{cor: long level set estimate}, we
must show that HHFs in the image of $\Psi_m$, when projected to the
torus, contain a sublevel component with a long boundary. We proceed
in two steps. First, we claim that the projection of the boundary of
the set $V_0$ from Proposition~\ref{prop: U and V exist} is
contained in the boundary of a sublevel component of the projection
of $\Psi_m(h)$. Then we claim that this boundary is long. Recall
that $\pi$ was defined in Section~\ref{subs: predef} to be the
natural projection from $\ZZ$ to $\TT$. Here we use also the natural
extension of $\pi$ to edges of $\ZZ$.

\begin{lem}\label{lem: projection of LC boundary}
Let $h\in \QP_m$ for $m\in6\ZZ$ satisfying $m_1 > 0$. Let
$r=\pi\circ \Psi_m(h)$ and $V_0$ be as in Proposition~\ref{prop: U
and V exist}. There exists a sublevel component $R$ of $r$ such that
$\pi (\partial V_0)\subseteq
\partial R$.
\end{lem}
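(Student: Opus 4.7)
The plan is to choose adjacent vertices $x_0\in\intb V_0$ and $y_0\in\extb V_0$, set $v_0:=\pi(x_0)$, $w_0:=\pi(y_0)$, $k:=h(\intb V_0)$, and define $R:=\LC_r^{k+}(v_0,w_0)$. Since $\intb V_0\subseteq V_0\setminus U_0$ and $\extb V_0\subseteq W_0\setminus U_0$ (both being regions where $t:=\Psi_m(h)$ coincides with $h$ by the definition \eqref{eq:Psi_m_def}), we have $t(\intb V_0)=k$ and $t(\extb V_0)=k+1$; combined with \eqref{eq: V_0 is LC t}, this shows that $V_0$ is a sublevel component of $t$ at level $k$. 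Hence $r(v_0)=k$, $r(w_0)=k+1$, and $R$ is well defined. I then aim to show, for every $(x,y)\in\partial V_0$ with $x\in\intb V_0$, that $\pi(x)\in R$ and $\pi(y)\in R^c$, which together give $(\pi(x),\pi(y))\in\partial R$.

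The first inclusion is immediate: Proposition~\ref{prop: basic LC prop} gives $\intb V_0\subseteq\LL_t^{k+}(x_0)$, and because $\pi\colon\ZZ\to\TT$ is a graph covering map, a path on $\ZZ$ inside $\LL_t^{k+}(x_0)$ projects to a path on $\TT$ inside $\LL_r^{k+}(v_0)$, yielding $\pi(\LL_t^{k+}(x_0))=\LL_r^{k+}(v_0)\subseteq R$ and hence $\pi(x)\in R$. For the second inclusion, observe that $r(\pi(y))=k+1$ already places $\pi(y)$ in $\TT\setminus\LL_r^{k+}(v_0)$; what needs to be verified is that $\pi(y)$ and $w_0$ lie in the \emph{same} connected component. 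Lifting this to $\ZZ$, where $\pi^{-1}(\LL_r^{k+}(v_0))=\bigcup_{z\in n\ZZ}(\LL_t^{k+}(x_0)+z)$, the task reduces to constructing a path on $\ZZ$ from $y_0$ to $y$ (or to some $n\ZZ$-translate of $y$) that avoids this lifted sublevel set. I would build such a path inside the slab $V_1\setminus V_0$, where $V_1:=V_0+\Delta$ and $\Delta\in n\ZZ$ is the minimal translation of $V_0$ supplied by Proposition~\ref{prop: absolute order}; both $y_0$ and $y$ sit in $\extb V_0\subseteq V_1\setminus V_0$ via $V_0^+\subseteq V_1$ (which comes from boundary disjointness together with $V_0\subsetneq V_1$).

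The hard part is arranging the path to miss every translate $\LL_t^{k+}(x_0)+z$. Only those $z$ with $V_0+z=V_1$ contribute an obstacle in $V_1\setminus V_0$: a sublevel set of $t$ rooted at $\intb V_1$ cannot cross the $(k+1)$-level set of $t$ separating $V_0$ from $V_0^c$, so it is confined to $V_1\setminus V_0$; and each such obstacle contains $\intb V_1$ by Proposition~\ref{prop: basic LC prop} item~4. Consequently the union of the relevant obstacles is a connected subset of the slab, attached to its $\intb V_1$-side at $t$-values $\le k$, whereas $\extb V_0$ lies on the opposite side at $t$-value $k+1$, entirely disjoint from these obstacles. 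The main combinatorial step is to show that the slab $V_1\setminus V_0$ is itself connected and that removing this obstacle union still leaves $y_0$ and $y$ in a common component; I would argue this by combining Proposition~\ref{prop: Timar II} applied to both $V_0$ and $V_1$ (the connected "shells" $V_0^{++}\setminus V_0$ and $V_1\setminus V_1^{--}$ lie in $V_1\setminus V_0$) with a rerouting of paths in $V_1$ that dodges $V_0$ and the obstacles using height-based considerations for $t$. Once this connectivity is established the projection of the resulting path yields the desired path on $\TT$ from $w_0$ to $\pi(y)$ inside $\TT\setminus\LL_r^{k+}(v_0)$, completing the proof.
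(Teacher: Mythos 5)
Your setup (choosing adjacent $x_0\in\intb V_0$, $y_0\in\extb V_0$, noting $t=\Psi_m(h)$ agrees with $h$ there, using \eqref{eq: V_0 is LC t}, projecting sublevel sets, and reducing $\pi(\extb V_0)\subseteq R^c$ to a connectivity statement in $\ZZ$ about avoiding the $n\ZZ$-translates of $\LL_t^{k+}(x_0)$) is sound and parallels the paper's Proposition~\ref{prop: Torus plain bdry relation}. The identification of the only relevant obstacle as the translate rooted at $\intb V_1$ is also correct. But the decisive step is exactly the one you defer: ``removing this obstacle union still leaves $y_0$ and $y$ in a common component,'' to be argued by ``rerouting \dots{} using height-based considerations.'' Disjointness of $\extb V_0$ from the obstacle does not give connectivity of the complement, and the natural connected set containing all of $\extb V_0$, namely the shell $V_0^{++}\setminus V_0$ from Proposition~\ref{prop: Timar II}, contains vertices at distance $2$ from $V_0$ whose $t$-height is $k$; nothing in generic height considerations prevents such vertices from lying in $\LL_t^{k+}(x_0+\Delta)$, i.e.\ prevents the obstacle from reaching within distance $2$ of $V_0$ and pinching the slab against $\extb V_0$ (which need not be connected by itself). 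So as written there is a genuine gap at the heart of the proof.

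What closes this gap in the paper is not a generic fact about periodic HHFs but the special structure of $\Psi_m(h)$: one constructs (using $\delta\ge 6$, see \eqref{eq:delta_at_least_6}, and Lemma~\ref{lem: increasing decreasing containment}) an auxiliary type~$0$ sublevel component $T$ with $W_0-\Delta\subsetneq T\subseteq U_0$ and $h(\intb T)=h(\extb W_0)-\delta+1$; after the reflection in \eqref{eq:Psi_m_def} this becomes $t(\intb T)=t(\extb V_0)=k+1$, so $\intb T$ is a second level-$(k+1)$ wall and $\LL_t^{+}(x_0)\subseteq V_0\setminus T$. Combined with $V_0-\Delta\subsetneq W_0-\Delta\subsetneq T$ and boundary disjointness, this yields $V_0^{++}-\Delta\subseteq T$, hence the translated obstacle $\LL_t^{k+}(x_0+\Delta)$ is disjoint from the shell $V_0^{++}\setminus V_0$, which is precisely condition \eqref{eq: condition of projections} and immediately gives $\pi(\extb V_0)\subseteq R^c$. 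Your proposal never invokes the reflection beyond ``$t=h$ on $W_0\setminus U_0$'' and never uses $\delta\ge 6$, yet some such input is indispensable: the separation of the obstacle from a connected set carrying all of $\extb V_0$ is the actual content of the lemma, and it is established only through this barrier construction (or an equivalent substitute), not by Tim\'ar connectivity of $V_0$ and $V_1$ alone.
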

We delay the proof of this lemma to Section~\ref{subs: Projection of LC
boundary}.

At last we are ready to prove the theorem.
\begin{proof}[Proof of Theorem \ref{thm: QPm QP0 embedding}]
Let $m\in6\ZZ\setminus\{\zero\}$. Using an appropriate rotation we may
assume without loss of generality that $m_1>0$. Fixing $h\in \QP_m$ and
applying Lemma~\ref{lem: projection of LC boundary} and Lemma~\ref{lem: type 0 steep has long boundary in increasing direction}, we obtain the existence of a
sublevel component $R$ of $\pi\circ \Psi_m(h)$ such that $|\partial R|\ge
n^{d-1}$. Thus
$$\pi (\Psi_m(\QP_m)) \subset \{r\in\hm(\TT)\,:\,\text{there exists a sublevel component $A$ of $r$ such that $|\partial A|\ge
n^{d-1}$ }\}.$$ Recall that $\pi$ is a bijection from $\QP_\zero$ to
$\hm(\TT)$. Thus, applying Corollary~\ref{cor: long level set
estimate}, we get that for large enough $d$,
$$|\Psi_m(\QP_m)|\le 2d^2 n^d\exp\left(-\frac{cn^{d-1}}{d\log^2 d}\right) |\hm(\TT)|
\le \exp\left(-\frac{c'n^{d-1}}{d\log^2 d}\right)|\QP_\zero|$$ for some
$c,c'>0$. Thus, since $\Psi_m$ is one-to-one, the theorem follows.
\end{proof}

\subsection{Projecting type $0$ sublevel components to the torus}\label{subs: Projection of LC
boundary} In this section we prove Lemma~\ref{lem: projection of LC boundary}
connecting sublevel components on $\QP_{\zero}$ with those on $\hm(\TT)$. While
the relation between sublevel \emph{components} of HHFs on the integer
lattice and those of HHFs on the torus is non-trivial, the relation between
sublevel \emph{sets} of the two spaces is much simpler. In particular,
\begin{equation}\label{eq: LC(1) Property}
\pi(\LL_{t}^+(u))=\LL_{\pi(t)}^+(\pi(u))\text{ for all } t\in\QP_{\zero}\text{ and }u\in\ZZ.
\end{equation}
This can be easily verified from the definition of sublevel sets.

Next, we prove a proposition relating the boundaries of sublevel
components on $\ZZ$ to those of sublevel components on $\TT$. We then show that
this proposition applies to the set $V_0$ from
Proposition~\ref{prop: U and V exist}, and use this fact to prove
Lemma~\ref{lem: projection of LC boundary}. We remind the reader
that $A^+$ and $A^{++}$ were introduced in Section~\ref{subs:
predef}.

\begin{propos}\label{prop: Torus plain bdry relation}
Let $t\in\QP_{\zero}$ and $r=\pi(t)\in\hm(\TT)$. Suppose
$V:=\LC_{t}^+(u,v)$ for adjacent vertices $u,v\in\ZZ$ satisfying
$t(v)=t(u)+1$. If
\begin{equation}\label{eq: condition of projections}
\pi(V^{++}\setminus V) \cap \pi(\LL^+_t(u))=\emptyset
\end{equation}
then
\begin{equation}\label{eq: pi boundary}
\pi(\partial V)\subseteq \partial
\LC^+_{r}(\pi(u),\pi(v)).
\end{equation}

\end{propos}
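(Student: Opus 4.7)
The plan is to set $R := \LC^+_r(\pi(u),\pi(v))$ and verify directly that every edge $(x,y) \in \partial V$ projects to an edge of $\TT$ lying in $\partial R$. Without loss of generality, $x \in \intb V$ and $y \in \extb V$, so by Proposition~\ref{prop: basic LC prop}, $t(x) = t(u)$ and $t(y) = t(u)+1$; in particular $\pi(x) \neq \pi(y)$, so the projected pair is a genuine edge of $\TT$. It then suffices to show $\pi(x) \in R$ while $\pi(y) \notin R$.

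The containment $\pi(x) \in R$ is essentially tautological. By the very definition of $R$ as the complement of the connected component of $\pi(v)$ in $\TT \setminus \LL^+_r(\pi(u))$, we have $\LL^+_r(\pi(u)) \subseteq R$. Proposition~\ref{prop: basic LC prop} gives $\intb V \subseteq \LL^+_t(u)$, and \eqref{eq: LC(1) Property} yields $\pi(\LL^+_t(u)) = \LL^+_r(\pi(u))$. Composing these, $\pi(x) \in \LL^+_r(\pi(u)) \subseteq R$, as desired.

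To show $\pi(y) \notin R$, I must exhibit a walk in $\TT$ from $\pi(y)$ to $\pi(v)$ which avoids $\LL^+_r(\pi(u)) = \pi(\LL^+_t(u))$. Here is where the hypothesis \eqref{eq: condition of projections} enters. Since $V$ is {\bicon} by Proposition~\ref{prop: basic LC prop}, Proposition~\ref{prop: Timar II} applies and tells us that $V^{++} \setminus V$ is connected in $\ZZ$. Both $v$ and $y$ lie in $\extb V = V^+ \setminus V \subseteq V^{++} \setminus V$, so there is a path $y = w_0, w_1, \ldots, w_k = v$ in $V^{++} \setminus V$. Projecting this path to $\TT$ produces a walk from $\pi(y)$ to $\pi(v)$, and every vertex on it lies in $\pi(V^{++} \setminus V)$, which by assumption is disjoint from $\pi(\LL^+_t(u))$. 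The walk therefore stays in $\TT \setminus \LL^+_r(\pi(u))$, so $\pi(y)$ lies in the same connected component of that set as $\pi(v)$, and hence $\pi(y) \notin R$.

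The only place the argument is non-trivial is in producing the bypass walk from $y$ to $v$, and the subtlety is that one cannot simply walk in $\extb V$ alone — nothing guarantees $\extb V$ is connected. The point of thickening to $V^{++} \setminus V$ is precisely to gain connectedness via Proposition~\ref{prop: Timar II} while still remaining within the set whose projection, by hypothesis, avoids $\pi(\LL^+_t(u))$. Once this matching between the Tim\'ar thickness and the form of the hypothesis is noticed, the rest is an unpacking of definitions.
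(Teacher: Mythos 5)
Your proof is correct and follows essentially the same route as the paper's: you reduce to showing $\pi(\intb V)\subseteq R$ and $\pi(\extb V)\subseteq R^c$, use \eqref{eq: LC(1) Property} for the former, and invoke Proposition~\ref{prop: Timar II} together with the hypothesis \eqref{eq: condition of projections} to bypass via $V^{++}\setminus V$ for the latter. The only cosmetic difference is that you phrase the second part as a path from $y$ to $v$ rather than observing directly that $\pi(V^{++}\setminus V)$ is a connected set containing $\pi(v)$ and disjoint from $\LL^+_r(\pi(u))$; these are the same argument.
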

\begin{proof}
Let $R := \LC^+_{r}(\pi(u),\pi(v))$. We first note that \eqref{eq: pi
boundary} follows from the following two claims,
\begin{align}
\pi(\intb V)&\subseteq R,\label{eq:intb is ok}\\
\pi(\extb V)&\subseteq R^c \label{eq:extb
is ok}.
\end{align}
We begin by showing \eqref{eq:intb is ok}. Indeed, we have:
$$\pi(\intb V)\subseteq\pi(\LL_t^+(u)) = \LL_{r}^+(\pi(u)) \subseteq R,$$
where the equality follows from \eqref{eq: LC(1) Property}, and the two
containment relations follow from Proposition~\ref{prop: basic LC prop}.

Next we show \eqref{eq:extb is ok}. By Proposition \ref{prop: Timar II},
using the fact that $V$ is \bicon{ }by Proposition~\ref{prop: basic LC prop},
we get that $\pi(V^{++}\setminus V)$ is a connected set which contains $\pi(v)$
(recall that $u\sim v$). By \eqref{eq: LC(1) Property} and \eqref{eq:
condition of projections}, $\pi(V^{++}\setminus V)$ is disjoint from
$\LL_{r}^+(\pi(u))$. By the definition of sublevel component this implies
that $\pi(V^{++}\setminus V)\subseteq R^c$. Since $\pi(\extb V)\subseteq
\pi(V^{++}\setminus V)$, we deduce \eqref{eq:extb is ok}.
\end{proof}

At last, we prove Lemma~\ref{lem: projection of LC boundary}. Let
$h\in\QP_m$ for $m\in6\ZZ$ satisfying $m_1 > \zero$. Let $U=U_0$,
$V=V_0$, $W=W_0$ and $\Delta$ be as in Proposition~\ref{prop: U and
V exist}. Let also $t:=\Psi_m(h)$ and $r := \pi(t)$. Our goal is to
show that $V$ satisfies the conditions of Proposition~\ref{prop:
Torus plain bdry relation}, from which Lemma~\ref{lem: projection of
LC boundary} will follow.

Write $\T$ for the set of type $0$ sublevel components $T'$
satisfying $h(\intb T')=h(\extb W)-\delta+1$ and $W-\Delta
\subsetneq T' \subsetneq V$. Recall that $W-\Delta\subsetneq V$ by
\eqref{eq:W_U_V_containment_relations}, $h(\intb (W-\Delta))=h(\intb
W)-\delta$, $h(\intb V) = h(\intb W)-1$ by \eqref{eq:V_0_prop} and
that $\delta\ge 6$ by \eqref{eq:delta_at_least_6}. Hence, by
Lemma~\ref{lem: increasing decreasing containment}, we conclude that
$\T$ is non-empty. Write $T$ for the minimal element of $\T$.

Let us show that $T \subseteq U$. By Lemma~\ref{lem: increasing
decreasing containment} applied to $T\subsetneq V$, using that
$h(\intb T)=h(\extb W)-\delta+1$ and $h(\intb V) = h(\intb W)-1$,
there exists a $U'$ satisfying that $(U')^c$ is a type $0$
\emph{superlevel} component such that $h(\intb U')= h(\extb
W)-\delta/2$ and $T\subseteq U'\subsetneq V$. Next, observe that
$\zero \notin U'$, since $0\notin V$ by \eqref{eq:V_0_prop}, and
that $-n e_1 \in U'$, since $-ne_1\in W-\Delta\subsetneq T$ by
\eqref{eq:ne_1_W_Delta}. Thus,
\eqref{eq:W_U_V_containment_relations} and the definition of $U$ (in
particular, the fact that $U^c$ is minimal), imply that $U'\subseteq
U$. We conclude that
\begin{equation}\label{eq:W_T_U_relation}
  W-\Delta\subsetneq T\subseteq U.
\end{equation}

Next, the definition \eqref{eq:Psi_m_def} of $\Psi_m$,
\eqref{eq:W_T_U_relation} and the definition of $T$ imply that
$$t(\intb T)=2h(\extb W)-h(\intb T)-\delta = h(\extb W) - 1.$$
Now, since $U\subsetneq V\subsetneq W$ by
\eqref{eq:W_U_V_containment_relations}, the definition of $\Psi_m$
implies that
\begin{equation*}
  h(\extb V) = t(\extb V).
\end{equation*}
Thus, by \eqref{eq:V_0_prop},
\begin{equation}\label{eq:T_t_boundary}
  t(\intb T) = t(\extb V).
\end{equation}

We now check that $V$ satisfies the conditions of
Proposition~\ref{prop: Torus plain bdry relation}. Recall that by
\eqref{eq: V_0 is LC t}, $V$ is a sublevel component of $t$. Let
$u\in\intb V$, $v\in\extb V$ be two adjacent vertices. By
Corollary~\ref{cor: LC neighbors} we have $V=\LC_{t}^+(u,v)$.
Observe that the condition \eqref{eq: condition of projections} is
equivalent to
\begin{equation*}
  ((V^{++}\setminus V) + z) \cap \LL^+_t(u)=\emptyset\quad\text{for
  all $z\in n\Z^d$}.
\end{equation*}
Since $(V^{++}\setminus V) + z = (V^{++}+z)\setminus (V+z)$ and
since $V$ is of type $0$ having, by Proposition~\ref{prop: absolute
order}, $\Delta$ as a minimal translation, this is equivalent to
\begin{equation}\label{eq:V++_cond}
  ((V^{++} + k\Delta)\setminus (V+ k\Delta)) \cap \LL^+_t(u)=\emptyset\quad\text{for
  all $k\in \Z$}.
\end{equation}
We note that $T\subsetneq V$ by the definition of $T$. It follows
from \eqref{eq:T_t_boundary} that the set $S:=V\setminus T$
satisfies $t(s)=t(\extb V)$ for all $s\in \extb S$. As $u\notin T$, this implies
that $\LC_t^+(u)\subseteq S$. Thus, to check
condition~\eqref{eq:V++_cond} it suffices to show that
\begin{equation*}
  ((V^{++} + k\Delta)\setminus (V+ k\Delta)) \cap S=\emptyset\quad\text{for
  all $k\in \Z$},
\end{equation*}
which, since $S=V\setminus T$, is itself implied by
\begin{equation}\label{eq:V++_second_cond}
\begin{aligned}
&V^{++} + k\Delta\subseteq T&&\text{for all $k\le -1$,}\\
&V + k\Delta\supseteq V&&\text{for all $k\ge 0$.}
\end{aligned}
\end{equation}
Since $\Delta$ is a minimal translation for $V$, the second part of
\eqref{eq:V++_second_cond} follows trivially and it suffices to
check the first part for $k=-1$. Finally, the condition that
$(V^{++} - \Delta)\subseteq T$ follows from the fact that $V-\Delta,W-\Delta$ and $T$ are boundary disjoint and satisfy $V-\Delta
\subsetneq W-\Delta \subsetneq T$. This is a consequence of
\eqref{eq:W_U_V_containment_relations}, Proposition~\ref{prop: LC boundaries are disjoint} and the definition of $T$.
Thus the condition of Proposition~\ref{prop: Torus plain bdry
relation} is satisfied. Lemma~\ref{lem: projection of LC boundary}
follows from \eqref{eq: pi boundary}.\qed

\section{Steep slopes are extremely unlikely}\label{sec: Steep slopes}
In this section we detail how to modify the proof of
Theorem~\ref{thm: QPm QP0 embedding} to prove Theorem~\ref{thm: QP steep QP0 embedding}.

A main ingredient in the proof of the theorem is the following
proposition which is a consequence of the results of \cite{PHom}.

\begin{propos}[Enhanced version of Corollary~\ref{cor: long level set estimate}]\label{propos: steep long level set estimate}
There exist $c>0$ and $d_0$ such that in all dimensions $d\ge d_0$,
for all even $n$, all integer $k\ge 1$ and $L\ge n^{d-1}$, denoting
$$A:=\left\{r\in\hm(\TT)\colon\text{$\exists u,v\in\TT$ s.t. $r(v)\ge r(u)+ k$ and $\sum_{j=0}^{k-1}|\partial\LC^{(r(u)+j)+}_r(u,v)|\ge L$} \right\},$$ the following holds,
\begin{equation*}
\frac{\left|A\right|}{\left|\hm(\TT)\right|}\le
\exp\left(-\frac{cL}{d\log^2 d}\right).
\end{equation*}
\end{propos}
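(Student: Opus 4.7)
The plan is to mirror the proof of Corollary~\ref{cor: long level set estimate}, strengthening the input from \cite{PHom} from a single level component to $p$ nested ones, and widening the union bound on $v$ to range over all of $\TT$ rather than just the neighbors of $\zero$. The argument proceeds in three steps: (i) fix the base vertex at $\zero$ and, for a single $v$, bound the probability that the relevant sum of boundaries exceeds $L$; (ii) union bound over $v\in\TT$; (iii) transfer the fixed base vertex at $\zero$ to an arbitrary $u\in\TT$ via the measure-preserving translation of the HHF.

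For step (i), I would invoke the full statement of Theorem~2.8 of \cite{PHom}, of which Theorem~\ref{thm: long level set estimate} is only the $p=1$ special case. That general statement should read: there exist $c>0$ and $d_0$ such that for all $d\ge d_0$, even $n$, $u_0,v\in\TT$, $p<n$ and $L\ge d^2$, if $h$ is uniformly sampled from $\hm(\TT,u_0)$ then
\begin{equation*}
\P\!\left(h(v)>3p-3\ \text{ and }\ \sum_{j=0}^{p-1}|\partial \LC_h^{3j+}(u_0,v)|\ge L\right)\le d^{3p-2}\exp\!\left(-\frac{cL}{d\log^2 d}\right).
\end{equation*}
Specializing $u_0=\zero$ and applying a union bound over the $n^d$ vertices $v\in\TT$ then shows that the set
\begin{equation*}
B:=\left\{r\in\hm(\TT)\colon \exists v\in\TT\text{ with } r(v)>3p-3\text{ and }\sum_{j=0}^{p-1}|\partial \LC_r^{3j+}(\zero,v)|\ge L\right\}
\end{equation*}
satisfies $|B|/|\hm(\TT)|\le n^d d^{3p-2}\exp(-cL/(d\log^2 d))$.

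For step (iii), I would reuse the bijection $\eta_w\colon\hm(\TT)\to\hm(\TT)$ defined by $\eta_w(r)(v):=r(v+w)-r(w)$ from the proof of Corollary~\ref{cor: long level set estimate}. If $r\in A$ is witnessed by a pair $(u,v)$, then a direct check from the definitions shows that $\LC^{3j+}_{\eta_u(r)}(\zero,v-u)$ is the translate by $-u$ of $\LC^{(r(u)+3j)+}_r(u,v)$, so the two have equal boundary sizes and $\eta_u(r)\in B$ with witness $v-u$. Since each $\eta_w$ is a bijection and every $r\in A$ arises this way for some $u\in\TT$, union bounding over $u$ gives $|A|\le n^d|B|$, yielding the claimed bound $|A|/|\hm(\TT)|\le n^{2d}d^{3p-2}\exp(-cL/(d\log^2 d))$. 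The main obstacle is the multi-level input from \cite{PHom}: its proof there proceeds through an odd-cutset breakup, and the factor $d^{3p-2}$ reflects the loss coming from the $3p-2$ intermediate heights $h(u_0)+1,\ldots,h(u_0)+3p-2$ at which a breakup may be anchored, each contributing roughly a multiplicative factor of $d$. Once that estimate is granted, steps (ii) and (iii) are straightforward adaptations of the $p=1$ argument.
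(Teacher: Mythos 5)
Steps (ii) and (iii) of your plan match the paper's argument: a union bound over $v$, followed by translating the base point from $\zero$ to general $u$ via the measure-preserving bijections $\eta_w$, is exactly how the final $n^{2d}$ factor is produced, and your witness computation with $\eta_u$ is correct.

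The gap is in step (i). You posit that Theorem~2.8 of \cite{PHom} directly delivers a tail bound of the form $\P\bigl(\text{sum of } p \text{ boundary sizes} \ge L\bigr) \le d^{3p-2}\exp(-cL/(d\log^2 d))$; no such multi-level tail bound is available off the shelf. What \cite{PHom} actually provides (Equation~(72) in the proof of its Proposition~5.15) is a \emph{point-probability} estimate: for a prescribed vector $(L_j)_{j<k}$ of boundary sizes, the probability that $|\partial\LC^{j+}_r(\zero,v)|=L_j$ for all $j<k$ simultaneously is at most $n^d d^k\exp(-c'\sum_j L_j/(d\log^2 d))$. To turn this into the needed tail bound one must: (a) first replace the sum over the $p$ levels at multiples of $3$ by the sum over \emph{all} $3p-2$ consecutive levels $j=0,\dots,3p-3$, which only increases the sum and lets the event in $A$ be absorbed into the event addressed by Eq.~(72); (b) sum the point-probability bound over all vectors $(L_j)$ with $\sum_j L_j=\ell\ge L$, which introduces a factor $\binom{\ell-1}{3p-3}$; and (c) control that binomial factor by the exponential, which is only possible when $p$ is small relative to $L/d^2$. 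This last constraint is supplied by the isoperimetric inequality \cite[Theorem~5.1]{PHom}, which forces $|\LC^{k+}_r(\zero,v)|\ge d^2$ at every intermediate level and hence makes $A$ empty unless $p\le c''L/d^2$. Without (a)--(c) the exponent in your claimed estimate does not come out as stated, so granting the estimate ``once that is given'' hides the real work of the proposition.
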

We make the assumption that $L\ge n^{d-1}$ in order to simplify the
proof of the proposition and as it suffices for our purposes in this
section but we remark that similar estimates may be established for
all $L\ge 1$ using additional arguments (e.g., the isoperimetric
estimates of \cite[Theorem~5.1]{PHom}).
\begin{proof}[Proof of Proposition~\ref{propos: steep long level set
estimate}] Define
\begin{equation*}
B:=\left\{r\in\hm(\TT)\ :\ \exists v\in\TT\text{ s.t. $r(v)\ge k$
and $\sum_{j=0}^{k-1}|\partial\LC^{j+}_r(\zero,v)|\ge L$} \right\}.
\end{equation*}
Using a similar argument as in the proof of Corollary~\ref{cor: long
level set estimate} (defining the mapping $\eta_w$) it suffices to
show that
\begin{equation*}
  \frac{\left|B\right|}{\left|\hm(\TT)\right|}\le n^{-d}\exp\left(-\frac{cL}{d\log^2 d}\right).
\end{equation*}

Equation~(72) in \cite[Proof of Proposition~5.15]{PHom} implies that
for each $(L_j)$, $0\le j< k$, with $L_j\ge 1$ we have
\begin{equation*}
   \frac{\left|\left\{r\in\hm(\TT)\,:\,\exists v\in\TT\text{ s.t. $|\partial\LC^{j+}_r(\zero,v)|=L_j$ for all $0\le j< k$}\right\}\right|}{\left|\hm(\TT)\right|}\le n^d d^{k}\exp\left(-\frac{c'\sum_{j=0}^{k-1} L_j}{d\log^2 d}\right),
\end{equation*}
for some $c'>0$. We deduce that
\begin{align*}
\frac{\left|B\right|}{\left|\hm(\TT)\right|}
&\le \sum_{\bar{L}=L}^\infty n^d d^{k}\exp\left(-\frac{c'\bar{L}}{d\log^2 d}\right)\big|\{(L_0,\ldots, L_{k-1})\colon L_j\ge 1, L_0+\cdots+L_{k-1}=\bar{L}\}\big|\\
& = n^d d^{k}\sum_{\bar{L}=L}^\infty
\binom{\bar{L}-1}{k-1}\exp\left(-\frac{c'\bar{L}}{d\log^2
d}\right)\le n^d d^{k}\sum_{\bar{L}=L}^\infty
\bar{L}^k\exp\left(-\frac{c'\bar{L}}{d\log^2 d}\right).
\end{align*}
Next observe that if $B$ is non-empty then $k$ is at most the
diameter of $\TT$. Thus we assume without loss of generality that
$k\le dn$. Recalling also our assumption that $L\ge n^{d-1}$, we see
that the ratio of consecutive terms in the last sum equals
\begin{equation*}
  \left(1+\frac{1}{\bar{L}}\right)^k
\exp\hspace{-1pt}\left(-\frac{c'}{d\log^2 d}\right)\le
\exp\hspace{-1pt}\left(-\frac{c'}{d\log^2 d}+\frac{k}{\bar{L}}\right)\le
\exp\hspace{-1pt}\left(-\frac{c'}{d\log^2 d}+\frac{d}{n^{d-2}}\right)\le
\exp\hspace{-1pt}\left(-\frac{c'}{2d\log^2 d}\right)
\end{equation*}
for large enough $d$. Thus we conclude that
\begin{align*}
  \frac{\left|B\right|}{\left|\hm(\TT)\right|}
\le n^d (dL)^{k} \exp\left(-\frac{c'L}{d\log^2 d}\right)
\sum_{\bar{L}=L}^\infty \exp\left(-\frac{c'(\bar{L}-L)}{2d\log^2
d}\right)\\
\le n^d (dL)^{k} d^2\exp\left(-\frac{c'L}{d\log^2 d}\right)\le
n^{-d} \exp\left(-\frac{c'L}{2d\log^2 d}\right)
\end{align*}
for large enough $d$, where in the last step we used again that
$k\le dn$ and $L\ge n^{d-1}$.
\end{proof}

We begin by formulating various properties of functions in $\QP_m$
which are needed for the proof of Theorem~\ref{thm: QP steep QP0
embedding}. Fix $m\in6\ZZ\setminus\{\zero\}$. Assume that
$\QP_m\neq\emptyset$ as the theorem is trivial otherwise. Assume
also, without loss of generality, that the first coordinate of $m$
is positive and is the largest in absolute value among all
coordinates, and write
\begin{equation*}
\sigma:=\frac{m_1}6 = \frac{h(ne_1)}{6}.
\end{equation*}
Fix $h\in\QP_m$ and recall our definition of $W_0$, $\Delta$,
$\delta$ and $\Psi_m$ from Section~\ref{subs: Flattening the slope}.
We write
\begin{equation}\label{eq:p_ell_def}
  p:=\frac{\delta}6\quad\text{and}\quad\ell:=\frac{\sigma}{p}=\frac{h(ne_1)}{\delta}.
\end{equation}
We remark that it is possible to show that $\delta, p$ and $\ell$
depend only on $m$, but since this fact will not be of use to us, we
do not prove it. Observe that $\sigma$ is a positive integer as
$m\in6\ZZ$ and $m_1>0$. In addition, by \eqref{eq:delta_div_6},
\eqref{eq:W_ne1_W_k_relation}, \eqref{eq:delta_at_least_6} and the
argument in the paragraph there, we have that $p$ and $\ell$ are
also positive integers and that
\begin{equation}\label{eq: what is w}
W_0 + n e_1= W_0 + \ell \Delta.
\end{equation}

We wish to find a single pair of vertices separated by $p$ sublevel
components of $\pi\circ\Psi_m(h)$, each with boundary size at least
$\ell n^{d-1}$. We remark that more components may be found, at
least as many as $\frac\delta2-2$, but this is not required for our
results.
We proceed by defining additional type~$0$ sublevel components of
$h$, whose roles are similar to the role of $V_0$ in the proof of
Theorem~\ref{thm: QPm QP0 embedding} (Section~\ref{sec: proof of 2.1}).

Define $V_0^{p-1}$, as the maximal type $0$ sublevel
component of $h$ satisfying
\begin{equation*}
\text{$h(\intb V^{p-1}_0)=h(\intb W_0)-3p+2$,\,
$W_0-\Delta\subseteq V^{p-1}_0\subseteq W_0$,\, $\zero
    \notin V_0^{p-1}$ and $-n e_1 \in V^{p-1}_0$}.
\end{equation*}
Further define $V_0^i$, $0\le i< p-1$, as the minimal type $0$ sublevel
component of $h$ satisfying
\begin{equation*}
\text{$h(\intb V^i_0)=h(\intb W_0)-3i-1$,\,
$V_0^{p-1}\subseteq V^i_0\subseteq W_0$}.
\end{equation*}

We recall the set $U_0$ defined in Section~\ref{subs: Flattening the slope}. The following proposition is a
generalization of the part of Proposition~\ref{prop: U and V exist}
which pertains to $V_0$.

\begin{propos}\label{prop: U and V steep exist}
$\{V^i_0\}_{i=0}^{p-1}$ are well defined, and satisfy
\begin{equation}\label{eq:W_U_V_steep containment_relations}
U_0 \subsetneq V^{p-1}_0 \subsetneq\dots \subsetneq V^0_0 \subsetneq
W_0.
    \end{equation}
\end{propos}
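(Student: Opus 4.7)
The argument parallels that of Proposition~\ref{prop: U and V exist}, extended to handle the full chain $\{V_0^i\}_{i=0}^{p-1}$. I abbreviate $W := W_0$. Since $\delta = 6p \ge 6$ by \eqref{eq:delta_at_least_6}, all the height gaps appearing below are large enough for Lemma~\ref{lem: increasing decreasing containment} to apply.

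To show well-definedness of $V_0^{p-1}$, I would introduce the family $\V^{p-1}$ of type~$0$ sublevel components at height $h(\intb W)-3p+2$ sandwiched between $W-\Delta$ and $W$, containing $-ne_1$ and not containing $\zero$. Lemma~\ref{lem: increasing decreasing containment} applied to $W-\Delta \subsetneq W$ at the level $k = h(\intb W)-3p+1$ produces a candidate, and the minimality of $W$ forces $\zero \notin V'$ for any such candidate (otherwise $V'$ would be a strictly smaller candidate for $W$'s defining property). So $\V^{p-1} \subseteq \L_{(-ne_1,0)}$, which is finite and ordered by inclusion by Proposition~\ref{prop: formula for hight diff.}, and the maximum $V_0^{p-1}$ exists. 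For $V_0^i$ with $0 \le i < p-1$, one applies Lemma~\ref{lem: increasing decreasing containment} instead to $V_0^{p-1} \subsetneq W$ at level $h(\intb W)-3i-2$, checking the height hypothesis using $p \ge i+2$; minimality of $W$ again excludes $\zero$, making the candidate family a finite subset of $\L_{(-ne_1,0)}$ whose minimum is $V_0^i$.

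For the containment chain, the strict inclusions $V_0^i \subsetneq V_0^{i-1}$ with $0 < i < p-1$ would follow by applying Lemma~\ref{lem: increasing decreasing containment} to $V_0^{p-1} \subsetneq V_0^{i-1}$ at the appropriate level to build an element of the family $\V^i$ strictly inside $V_0^{i-1}$; minimality of $V_0^i$ then places it inside. The inclusions $V_0^{p-1} \subsetneq V_0^{p-2}$ and $V_0^0 \subsetneq W$ follow from the definitions together with the height mismatches, while $W-\Delta \subsetneq U_0$ is already part of Proposition~\ref{prop: U and V exist}.

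The main obstacle is the inclusion $U_0 \subsetneq V_0^{p-1}$, which I would handle by adapting the final paragraph of the proof of Proposition~\ref{prop: U and V exist}. Fix $u \in \intb U_0$ and $w \in \extb W$ and set $V' := \LC_h^{(h(\intb W)-3p+1)+}(u,w)$, a sublevel component at the height of $V_0^{p-1}$. Corollary~\ref{cor: outside LC crit} applied to the connected set $W^c$ (whose inner boundary has the strictly higher height $h(\extb W)$) yields $V' \subseteq W$. The subtlety versus the original argument is that $\extb U_0$ has height $h(\intb W)-3p+2$, which equals the forbidden level $k+1$, so $\extb U_0$ need not lie in $\LL_h^{k+}(u)$. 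To bypass this, Proposition~\ref{prop: Timar II} makes $\intb U_0$ connected at the admissible height $k$, so $\intb U_0 \subseteq \LL_h^{k+}(u) \subseteq V'$; then the connected set $(V')^c$ contains $w \notin U_0$ and misses $\intb U_0$, and since every path from $U_0$ to $U_0^c$ must cross $\intb U_0$, we conclude $(V')^c \subseteq U_0^c$, i.e.\ $U_0 \subseteq V'$. Type-$0$-ness of $V'$ then follows from Proposition~\ref{prop: Type rules} (types $\pm 1$ are excluded because translates of $U_0 \subseteq V'$ intersect), and $\zero \notin V'$ from minimality of $W$. Hence $V' \in \V^{p-1}$, maximality of $V_0^{p-1}$ gives $V' \subseteq V_0^{p-1}$, and so $U_0 \subseteq V_0^{p-1}$, with strictness from the height mismatch.
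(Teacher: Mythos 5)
Your overall scaffolding — defining the families $\V^i$, invoking Lemma~\ref{lem: increasing decreasing containment} and Proposition~\ref{prop: formula for hight diff.} to get nonempty finite ordered families, and proving the chain by building intermediate sublevel components — matches the paper's approach. The chain argument via $V_0^{p-1}\subsetneq V_0^{i-1}$ is a valid variant of the paper's inductive route. However, the treatment of the key inclusion $U_0\subsetneq V_0^{p-1}$ contains a genuine error that stems from an arithmetic slip.

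You take $k = h(\intb W_0)-3p+1$, but the defining height of $V_0^{p-1}$ is $h(\intb V_0^{p-1}) = h(\intb W_0)-3p+2$, and since $\LC_h^{k+}$ has inner-boundary height exactly $k$, you should take $k = h(\intb W_0)-3p+2$ (similarly $k=h(\intb W_0)-3i-1$, not $-3i-2$, for $V_0^i$). With the correct $k$, the ``forbidden level'' is $k+1 = h(\intb W_0)-3p+3$, and since $h(\extb U_0)=h(\intb U_0)+1 = (h(\extb W_0)-\delta/2)+1 = h(\intb W_0)-3p+2 = k$, the set $\extb U_0$ is \emph{not} at the forbidden level. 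So the subtlety you identified is spurious, and the paper's original argument — that $\intb U_0\cup\extb U_0$ is connected by Proposition~\ref{prop: Timar II} and consists entirely of vertices of height at most $k$, hence lies in $\LL_h^{k+}(u)\subseteq V'$ — goes through unchanged. More seriously, the workaround you substitute is not valid: Proposition~\ref{prop: Timar II} guarantees connectivity of $\intb A\cup\extb A$ (and of $A^{++}\setminus A$ and $A\setminus A^{--}$), not of $\intb A$ alone, and there is no result in the paper asserting that $\intb U_0$ is connected. Since your bypass hinges on that unsupported claim, it does not constitute a proof of $U_0\subseteq V'$; once the $k$ is corrected, you should simply revert to the paper's argument using $\intb U_0\cup\extb U_0$.
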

\begin{proof}
Following the proof of Proposition~\ref{prop: U and V exist} we write $\V^{p-1}$ for the set of
type $0$ \emph{sublevel} components $V'$ satisfying $h(\intb
V')=h(\intb W_0)-3p+2$, $W_0-\Delta \subsetneq V'\subsetneq W_0$,
$\zero\notin V'$ and $-n e_1 \in V'$. We observe that $\V^{p-1}\subseteq
\L_{(-n e_1,0)}$, and thus, by Proposition~\ref{prop: formula for
hight diff.}, $\V^{p-1}$ is ordered by inclusion and finite. Moreover,
applying the first part of Lemma~\ref{lem: increasing decreasing
containment} with $h$, $W_0-\Delta$ and $W_0$ we obtain the existence of
a type $0$ sublevel component $V'_{p-1}$  which satisfies $h(\intb
V'_{p-1})=h(\intb W_0)-3p-2$ and $W_0-\Delta \subsetneq V'_{p-1}\subsetneq W_0$. By the minimality of $W_0$ we have
$\zero \notin V'_{p-1}$. As $-ne_1\in W_0-\Delta$ by \eqref{eq:ne_1_W_Delta}, we have $-ne_1\in V'_{p-1}$ and hence $V'_{p-1}\in\V^{p-1}$ and
$\V^{p-1}$ is not empty. We deduce that $V^{p-1}_0$ is well defined.

For $0\le i <p-1$ we now define $\V^{i}$ as the set of type $0$
\emph{sublevel} components $V'$ satisfying $h(\intb V')=h(\intb
W_0)-3i-1$, $V_0^{p-1} \subsetneq V'\subsetneq W_0$. Using the same
arguments as above we observe that $\V^{i}$ is ordered by inclusion
and finite. Applying once again the first part of Lemma~\ref{lem:
increasing decreasing containment}, now with $h$, $V_0^{p-1}$ and
$W_0$ we obtain the existence of a a type $0$ sublevel component
$V'_{i}$ which satisfies $h(\intb V'_{i})=h(\intb W_0)-3i-1$ and
$V_0^{p-1} \subsetneq V'_{i}\subsetneq W_0$. Thus $\V^{i}$ is not
empty and $V^{i}_0$ is well defined.

We proceed to show the inclusion relations \eqref{eq:W_U_V_steep
containment_relations}. We have $V_0^0\subsetneq W_0$ and
$V_0^{p-1}\subsetneq V_0^{p-2}$ (when $p>1$) by the definition of
$V_0^0$ and $V_0^{p-2}$. Next, for each $0<i<p-1$ the first part of
Lemma~\ref{lem: increasing decreasing containment} applied with $h$,
$V_0^{p-1}$ and $V_0^{i-1}$ shows that there exists an element
$V_i''\in\V^i$ satisfying $V_0^{p-1} \subsetneq V''_{i}\subsetneq
V_0^{i-1}$, whence the inclusion $V_0^i\subsetneq V_0^{i-1}$ follows
from the minimality of $V_0^i$. It remains to show that $U_0
\subsetneq V^{p-1}_0$ and this is done next by exhibiting an element
of $\V^{p-1}$ which strictly contains $U_0$.

To do so we repeat the arguments in the proof of
Proposition~\ref{prop: U and V exist} concerning the fact that
$U_0\subsetneq V_0$. Let $u\in \intb U_0$, $w\in\extb W_0$ and write
$V''_{p-1}=\LC_h^{(h(\intb W_0)-3p+2)+}(u,w)$. Our goal is to show
that $V''_{p-1}\in\V^{p-1}$. By the first item of
Corollary~\ref{cor: outside LC crit} applied to $V''_{p-1}$ and
$W_0^c$ we have $V''_{p-1}\subseteq W_0$. Since $U_0$ is \bicon, by
Proposition~\ref{prop: Timar II} we have that $\intb U_0\cup \extb
U_0$ is a connected set of vertices of height less or equal to
$h(\extb W)-\delta/2+1= h(\intb W_0)-3p+2$. Therefore $\intb U_0\cup
\extb U_0\subseteq \LC_h^{(h(\intb W_0)-3p+2)+}(u)$ and since
$w\notin U_0$ we have $U_0\subseteq V''_{p-1}$ and hence $-ne_1\in
V''_{p-1}$. Since $W_0-\Delta\subsetneq U_0\subsetneq V''_{p-1}
\subsetneq W_0$ and since $W_0$ is of type $0$, we get from
Proposition~\ref{prop: Type rules} that $V''_{p-1}$ is of type $0$.
By the minimality of $W_0$ we get that $0\notin V''_{p-1}$. Thus
$V''_{p-1}\in\V^{p-1}$, and by the maximality of $V^{p-1}_0$ we get
that $U_0\subsetneq V''_{p-1}\subseteq V^{p-1}_0$ as required.
\end{proof}

For each $0\le i\le p-1$, repeating the proof of Proposition~\ref{prop: absolute order} with
$V_0^i+j\Delta$ in the role of $V_j$ we get that $\Delta$ is a
minimal translation of $V_0^i$, and
that, moreover, for every $z\in n\Z^d$ and integer $k$,
\begin{equation}\label{eq:V_i_0_and_W_0_share_translations}
  V^i_0 + z = V^i_0 + k \Delta\text{ if and only if }W_0 + z = W_0 + k \Delta.
\end{equation}
In particular, \eqref{eq: what is w} implies that
\begin{equation*}
V^i_0 + n e_1= V_0^i + \ell \Delta\text{ for all $0\le i\le p-1$}.
\end{equation*}
Consequently, by Lemma~\ref{lem: type 0 steep has long boundary in
increasing direction},
\begin{equation*}
|\pi(\partial {V^i_0})|\ge \ell n^{d-1} \text{ for all
}0\le i\le p-1.
\end{equation*}
In particular, by the definition~\eqref{eq:p_ell_def} of $p$ and
$\ell$,
\begin{equation}\label{eq:long_sum_of_level_component_boundaries}
  \sum_{i=0}^{p-1} |\pi(\partial {V^i_0})| \ge p\cdot\ell n^{d-1}=\sigma n^{d-1}.
\end{equation}
Aiming to use Proposition~\ref{propos: steep long level set
estimate}, we proceed to find two vertices on the torus such that
the sublevel components of $\pi\circ\Psi_m(h)$ between these
vertices contain the sets $\pi(\partial {V^i_0})$ in their
boundaries.

\begin{propos}\label{prop:boundary_of_level_component_containment}
  Let $u\in \intb V^{p-1}_0$ and $v\in \extb V^0_0$. Then, for each $0\le i\le p-1$,
\begin{equation}\label{eq:V_i_0_as_sublevel_component}
  V^i_0 = \LC_h^{(h(u)+3(p-1-i))+}(u,v)
\end{equation}
and
\begin{equation}\label{eq:V_i_0_boundary_containment}
  \pi(\partial {V^i_0})\subseteq \partial \LC_{\pi\circ \Psi_m(h)}^{(h(u)+3(p-1-i))+}(\pi(u),\pi(v)).
\end{equation}
\end{propos}
\begin{proof}
As $u\in \intb V^{p-1}_0$, we have $h(u) = h(\intb W_0)-3p+2$ and as $v\in \extb V^0_0$ we have $h(v) = h(\intb W_0)$. Denote
\begin{equation*}
A^i:=\LC_h^{(h(\intb W_0)-3i-1)+}(u,v) = \LC_h^{(h(u)+3(p-1-i))+}(u,v),\quad 0\le i\le p-1.
\end{equation*}
We show that $V^i_0 = A^i$ for all $i$, proving
\eqref{eq:V_i_0_as_sublevel_component}. First, as $u\in \intb
V^{p-1}_0$ and $h(v)>h(u)$, we immediately have that $A^{p-1} =
V^{p-1}_0$ by Corollary~\ref{cor: LC neighbors}. Now fix $0\le i<
p-1$. It follows also that $A^i\supseteq V^{p-1}_0$. Moreover, as
$h(\intb V^{i}_0) = h(\intb A^i)$ and, by \eqref{eq:W_U_V_steep
containment_relations}, $u\in V^i_0$, it follows that
$\LL_h^{h(\intb A^i)+}(u) \subseteq V^i_0$. From this, as $v\notin
V^i_0$ by \eqref{eq:W_U_V_steep containment_relations} and
$(V^i_0)^c$ is connected, we conclude that $A^i\subseteq V^i_0$.
Since $A^i$ both contains and is contained in sublevel components of
type $0$, we conclude from Proposition~\ref{prop: Type rules} that
$A^i$ is of type $0$. The minimality of $V^i_0$ now implies that
$V^i_0\subseteq A^i$, leading to the equality $V^i_0 = A^i$ that we
wanted to prove.

We now proceed to prove \eqref{eq:V_i_0_boundary_containment}. Fix $0\le i\le p-1$. Let us prove first that
\begin{equation}\label{eq:V_i_boundary_in_r_level_set}
  \pi(\intb V^i_0)\subseteq \LL_{\pi\circ \Psi_m(h)}^{h(\intb V^i_0)+}(\pi(u)).
\end{equation}
To this end, it suffices to show that
\begin{equation}\label{eq:V_i_boundary_in_t_level_set}
  \intb V^i_0 \subseteq \LL_{\Psi_m(h)}^{h(\intb V^i_0)+}(u).
\end{equation}
Denote $B^i:=\LL_h^{h(\intb V^i_0)+}(u)$. By the definition \eqref{eq:Psi_m_def} of $\Psi_m$ we have (as in \eqref{eq:t_h_equality})
\begin{equation}\label{eq:h_t_equality}
  \text{$\Psi_m(h)(w) = h(w)$ for all $w\in W_0^+\setminus U_0^-$}.
\end{equation}
Therefore, as $\intb V^i_0 \subseteq B^i$ by
\eqref{eq:V_i_0_as_sublevel_component} and part 4 of
Proposition~\ref{prop: basic LC prop}, as $\intb V^i_0\subseteq
W_0^+\setminus U_0^-$ by \eqref{eq:W_U_V_steep
containment_relations} and as $u\in B^i\cap(W_0^+\setminus U_0^-)$,
the containment \eqref{eq:V_i_boundary_in_t_level_set} will follow
once we show that $B^i \cap (W_0^+\setminus U_0^-)$ is connected. To
see this, note that as $h(\intb W_0)>h(\intb V^i_0)$ and $u\in W_0$,
it follows from the definition of $B^i$ that $B^i\subseteq W_0$.
Moreover, as $\intb U_0\cup\extb U_0$ is connected by
Proposition~\ref{prop: Timar II} and as $h(\intb V^i_0)
> h(\intb U_0)$, it follows from the definition of $B^i$ that if
$B^i\cap U_0\neq\emptyset$ then $\intb U_0\cup\extb U_0\subseteq
B^i$. Since $B^i$ is connected by its definition, we conclude that
$B^i\setminus U_0^-$ is also connected. It follows that $B^i \cap
(W_0^+\setminus U_0^-)$ is connected, as we wanted to show, implying
\eqref{eq:V_i_boundary_in_t_level_set} and
\eqref{eq:V_i_boundary_in_r_level_set}.

To prove \eqref{eq:V_i_0_boundary_containment}, it remains to show
that
\begin{equation}\label{eq:V_i_ext_boundary_notin_r_level_set}
  \pi(\extb V^i_0)\subseteq \LL_{\pi\circ \Psi_m(h)}^{h(\intb V^i_0)+}(\pi(u),\pi(v))^c.
\end{equation}
Write $C^i:=\LL_{\pi\circ \Psi_m(h)}^{h(\intb V^i_0)+}(\pi(u))$. As
both $v\in W_0^+\setminus V^i_0$ and $\extb V^i_0 \subseteq
W_0^+\setminus V^i_0$ by \eqref{eq:W_U_V_steep
containment_relations}, the containment
\eqref{eq:V_i_ext_boundary_notin_r_level_set} is a consequence of
\begin{align}
  &\pi(W_0^+\setminus V^i_0)\text{ is connected},\label{eq:W_0_minus_V_i_0_connected}\\
  &C^i\cap\pi(W_0^+\setminus V^i_0) = \emptyset.\label{eq:C_i_no_intersection}
\end{align}
Let us show that $W_0^+\setminus V^i_0$ is connected, which will
imply \eqref{eq:W_0_minus_V_i_0_connected}. Indeed, the facts that
$V^i_0\subseteq W_0$ by \eqref{eq:W_U_V_steep containment_relations}
and $h(\extb W_0)\ge h(\intb V^i_0)+2$ imply that
$(V^i_0)^{++}\setminus V^i_0\subseteq W_0^+\setminus V^i_0$.
Proposition~\ref{prop: Timar II} shows that $(V^i_0)^{++}\setminus
V^i_0$ is connected. Thus, as $W_0^+$ is connected we obtain that
$W_0^+\setminus V^i_0$ remains connected.

Let us now prove \eqref{eq:C_i_no_intersection}. Since, by
\eqref{eq:h_t_equality} and \eqref{eq:W_U_V_steep
containment_relations}, we have the inequalities $\Psi_m(h)(\extb W_0)>h(\intb
V^i_0)$ and $\Psi_m(h)(\extb V^i_0)>h(\intb V^i_0)$, it suffices to
show that
\begin{equation*}
  \pi(u)\notin \pi(W_0^+\setminus V^i_0).
\end{equation*}
Suppose, in order to obtain a contradiction, that $u+z\in W_0^+\setminus V^i_0$ for some $z\in n\Z^d$. Recalling \eqref{eq:V_i_0_and_W_0_share_translations}, let $k$ be the integer satisfying $V^i_0 + z = V^i_0 + k \Delta$ and $W_0 + z = W_0 + k \Delta$. Necessarily $k\neq 0$ as otherwise, since $u\in V^i_0$ by \eqref{eq:W_U_V_steep containment_relations}, also $u + z \in V^i_0 + z = V^i_0$ and hence $u+z\notin W_0^+\setminus V^i_0$. For $k\neq 0$, as $u\in W_0\setminus ((W_0 - \Delta)^+)$ by \eqref{eq:W_U_V_steep containment_relations}, Proposition~\ref{prop: U and V exist} and the fact that $h(u)=h(\intb W_0)-\frac{\delta}{2}+2>h(\intb(W_0-\Delta))=h(\intb W_0)-\delta$, it follows that $u+z\in (W_0 + k \Delta)\setminus ((W_0 - (k-1)\Delta)^+)$. This contradicts our assumption that $u+z\in W_0^+\setminus V^i_0$ as $(W_0 + k \Delta)\setminus ((W_0 - (k-1)\Delta)^+)$ is disjoint from $W_0^+\setminus (W_0 - \Delta)$ and $(W_0 - \Delta)\subseteq V^i_0$ by \eqref{eq:W_U_V_steep containment_relations} and Proposition~\ref{prop: U and V exist}. This finishes the proof of \eqref{eq:C_i_no_intersection} and hence the proof of the proposition.
\end{proof}
We may now deduce Theorem~\ref{thm: QP steep QP0 embedding} in a
straightforward manner.
\begin{proof}[Proof of Theorem~\ref{thm: QP steep QP0 embedding}]
Let $m\in6\ZZ\setminus\{\zero\}$. Assume, without loss of generality (as $|\QP_m|$ does not change when permuting the coordinates of $m$ and replacing $m$ with $-m$), that the first coordinate of $m$ is positive and is the largest in absolute value among all
coordinates. Recall from \eqref{eq:p_ell_def} that for each $h\in\QP_m$ we may define $p = p(h)$ and $\ell = \ell(h)$. As both $p(h)$ and $\ell(h)$ are positive integers it follows from \eqref{eq:p_ell_def} that $1\le p(h)\le \frac{m_1}{6}$. Denote $\QP_{m,p}:=\{h \in \QP_m\,:\, p(h)=p\}$.

Proposition~\ref{prop:boundary_of_level_component_containment} together with \eqref{eq:long_sum_of_level_component_boundaries} show that for each $h\in\QP_{m,p}$, denoting $r:=\pi\circ \Psi_m(h)$, there exist $u,v\in\TT$ satisfying $r(v) = r(u) + 3p-2$ and $\sum_{j=0}^{3p-3}|\partial \LC_{r}^{(r(u)+j)+}(r(u),r(v))|\ge \frac{m_1}{6} n^{d-1}$. This allows us to apply Proposition~\ref{propos: steep long level set estimate}, using also that $\pi\circ\Psi_m$ is
one-to-one and that $|\hm(\TT)|=|\QP_{\zero}|$, to deduce that in high dimensions,
\begin{multline*}
  \frac{|\QP_m|}{|\QP_\zero|} = \frac{|(\pi\circ\Psi_m)(\QP_m)|}{|\hm(\TT)|}=\frac{\sum_{p=1}^{m_1/6} |(\pi\circ\Psi_m)(\QP_{m,p})|}{|\hm(\TT)|}\\
  \le \frac{m_1}{6} \exp\left(-\frac{c m_1 n^{d-1}}{d\log^2 d}\right)\le \exp\left(-\frac{c' m_1 n^{d-1}}{d\log^2 d}\right)
\end{multline*}
for some $c,c'>0$, as we wanted to show.
\end{proof}

\section{Near optimality of the bound}\label{sec:near_optimality}
In this section we prove
Proposition~\ref{prop:rigidity_lower_bound}.

Assume that the dimension $d$ is sufficiently large for the
following calculations and fix an even integer $n$. Let $f$ be a uniformly
chosen proper $3$-coloring of $\TT$. Define the events
\begin{equation*}
A_x:=\{f(y) = 0\text{ for all $y\in\TT$ with $\dist(x,y)=2$}\},\quad
x\in\TT.
\end{equation*}
The following claim is a consequence of our main theorem,
Theorem~\ref{thm: main}.
\begin{lemma}\label{lem:A_x_prob}
  For each $x\in\TT$ we have
  \begin{equation*}
    \P(A_x) \ge \frac{1}{7}.
  \end{equation*}
\end{lemma}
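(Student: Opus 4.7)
The plan is to combine the rigidity provided by Theorem~\ref{thm: main} with the color and translation symmetries of the uniform distribution on proper $3$-colorings. Let $i_x\in\{0,1\}$ denote the partite class containing $x$, so that every $y\in\TT$ with $\dist(x,y)=2$ lies in $V^{i_x}$. For $\eta>0$ to be chosen below and $j\in\{0,1\}$, $k\in\{0,1,2\}$, define $G^{j,k}:=\{\cp_{j,k}(f)\ge 1-2\eta\}$, and write $\eps_d:=\exp(-cd/\log^2 d)$.

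First, by Markov's inequality applied to Theorem~\ref{thm: main}, $\P(\min_i \cp_{i,k}(f)>\eta)\le \eps_d/\eta$ for each color $k$. If instead $\min_i \cp_{i,k}(f)\le\eta$ holds for all three colors, a pigeonhole argument (three colors, two partite classes) forces two colors to be rare on a common partite class, and hence the third color to exceed $1-2\eta$ there. Therefore $\P\bigl(\bigcup_{j,k} G^{j,k}\bigr)\ge 1-3\eps_d/\eta$.

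Next, permuting the three colors shows $\P(G^{j,k})=\P(G^{j,k'})$, while translating by $e_1$ swaps $V^0$ with $V^1$ and gives $\P(G^{0,k})=\P(G^{1,k})$; hence $\P(G^{j,k})$ is the same for all six pairs $(j,k)$, and consequently $\P(G^{i_x,0})\ge (1-3\eps_d/\eta)/6$. Moreover, $G^{i_x,0}$ is invariant under translation by any vector $v\in V^0$ (such translations preserve both partite classes), and $y-x\in V^0$ for every $y\in V^{i_x}$. It follows that for any $y\in V^{i_x}$, the random pair $(\mathbf{1}_{G^{i_x,0}},\mathbf{1}_{f(y)\ne 0})$ has the same distribution as $(\mathbf{1}_{G^{i_x,0}},\mathbf{1}_{f(x)\ne 0})$. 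Averaging the resulting conditional probabilities over $y\in V^{i_x}$ yields
\[
\P(f(y)\ne 0\mid G^{i_x,0})=\E[1-\cp_{i_x,0}(f)\mid G^{i_x,0}]\le 2\eta.
\]

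Finally, since $|\{y\in\TT:\dist(x,y)=2\}|\le 2d^2$, a union bound gives $\P(A_x^c\mid G^{i_x,0})\le 4d^2\eta$, and so
\[
\P(A_x)\ge (1-4d^2\eta)\cdot\frac{1-3\eps_d/\eta}{6}.
\]
Taking $\eta=1/(100d^2)$ gives $4d^2\eta=1/25$, while $3\eps_d/\eta=300d^2\eps_d$ tends to $0$ as $d\to\infty$, so for $d$ sufficiently large both factors are at least $24/25$, yielding $\P(A_x)\ge (24/25)^2/6=96/625>1/7$. The main obstacle is the passage from the average fraction bound on $\cp_{i_x,0}$ provided by Theorem~\ref{thm: main} to pointwise control at the specific vertices of $\{y:\dist(x,y)=2\}$; the translation-invariance observation in the third paragraph is what makes this step work.
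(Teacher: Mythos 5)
Your proof is correct and takes essentially the same approach as the paper's: Markov's inequality applied to Theorem~\ref{thm: main}, a pigeonhole/complement argument to identify a dominating color on one partite class, translation symmetry (``homogeneity of the torus'' in the paper's phrasing) to pass from the average bound on the minority-color proportion to a pointwise bound at each vertex, and a union bound over the $\le 2d^2$ vertices at distance $2$ from $x$. The only differences are presentational: you condition on the deterministic events $G^{j,k}$ and spell out the translation-invariance argument explicitly, whereas the paper conditions on $E^c$ together with the random dominant pair $(i_0,k_0)$ and invokes homogeneity more briefly; both yield the same estimate.
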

\begin{proof}
  Observe that Theorem~\ref{thm: main} and
  Markov's inequality imply that there exists a constant $c>0$ for
  which
  \begin{equation*}
    \P\bigg(\min_{i\in\{0,1\}} \cp_{i,k}(f)\ge \exp\left(-\frac{cd}{\log^2
    d}\right)\bigg) \le \exp\left(-\frac{cd}{\log^2 d}\right),\quad
    k\in\{0,1,2\}.
  \end{equation*}
  By the union bound,
  \begin{equation}\label{eq:E_event_small_probability}
    \P(E) \le 3\exp\left(-\frac{cd}{\log^2 d}\right),
  \end{equation}
  where
  \begin{equation*}
    E:=\left\{\exists k\in\{0,1,2\}, \min_{i\in\{0,1\}} \cp_{i,k}(f)\ge \exp\left(-\frac{cd}{\log^2
    d}\right)\right\}.
  \end{equation*}
  As the coloring $f$ is proper, on the event $E^c$ we necessarily have some
  random $k_0\in\{0,1,2\}$ and $i_0\in\{0,1\}$ such that
  \begin{equation*}
    \cp_{i_0,k_0}(f)\ge 1 - 2\exp\left(-\frac{cd}{\log^2
    d}\right).
  \end{equation*}
  By the homogeneity of the torus, it follows that for each $y\in
  V^{i_0}$,
  \begin{equation*}
    \P(f(y) \neq k_0\,|\, E^c, i_0) \le 2\exp\left(-\frac{cd}{\log^2
    d}\right).
  \end{equation*}
  We conclude that for each $x\in V^{i_0}$,
  \begin{equation*}
    \P(f(y) = k_0\text{ for all $y\in\TT$ with
    $\dist(x,y)=2$}\,|\, E^c, i_0) \ge 1 -
    8d^2 \exp\left(-\frac{cd}{\log^2
    d}\right).
  \end{equation*}
  Using the homogeneity of the torus again and symmetry between the
  $3$ colors, for each $x\in\TT$,
  \begin{equation*}
    \P(A_x\,|\, E^c) \ge \frac{1}{6}\left(1 -
    8d^2 \exp\left(-\frac{cd}{\log^2
    d}\right)\right),
  \end{equation*}
  from which the lemma follows for sufficiently large $d$
  using \eqref{eq:E_event_small_probability}.
\end{proof}
We note for later use that for each $x\in\TT$, by the domain Markov
property,
\begin{equation}\label{eq:prob_of_ones}
  \P(f(x) = 1\,|\, A_x) = \frac{1}{2^{2d}+2}\quad\text{and}\quad\P(f(x+e)=1\,|\, A_x)=\frac{1}{2},
\end{equation}
where $e$ is a standard basis vector.

Let $T\subset V^0$ be such that for any two distinct $x_1,x_2\in T$
we have $\dist(x_1,x_2)\ge 4$ and $|T|\ge \frac{|V^0|}{5d^2}$. Such
a set may be constructed greedily as for any $x\in V^0$, $|\{y\in
V^0\colon \dist(x,y)\le 3\}|\le 5d^2$. Let $T_0\subset T$ be the
collection of $x\in T$ for which the event $A_x$ occurs and write
$S=|T_0|$. We have $\E(S)\ge \frac{1}{7}|T|$ by
Lemma~\ref{lem:A_x_prob}, whence by Markov's inequality for $|T|-S$
we obtain
\begin{equation}\label{eq:S_prob_lower_bound}
 \P\left(S > \frac{1}{8}|T|\right)= 1-\P\left(|T|-S\ge \frac78 |T|\right)\ge 1-\frac{8}{7}\frac{|T|-\E(S)}{|T|} \ge \frac{1}{49}.
\end{equation}

For each $x\in \TT$ define the $1$-ball around $x$ by
\begin{equation*}
  B_x:=\{y\in\TT\colon \dist(x,y)\le 1\}.
\end{equation*}
Let $\mathcal{F}_T$ be the sigma algebra generated by the values
$f(y)$, where $y$ ranges over all vertices in $\TT$ satisfying that
$\dist(x,y)=2$ for some $x\in T$. Observe that $T_0$ is measurable with respect to
$\mathcal{F}_T$. Observe further that conditioned on $\mathcal{F}_T$, by the domain Markov property, the values that $f$ takes on each $B_x$,
$x\in T_0$, are uniformly sampled $3$-colorings of $B_x$ with
zero boundary conditions on $\extb B_x$, and these colorings are independent between the different $x\in T_0$.
Fix an arbitrary (measurable with respect to $\mathcal{F}_T$) partition $T_0^0 \uplus T_0^1 =T_0$ with $|T_0^0| = \lfloor\frac{S}{2}\rfloor$
and $|T_0^1|=\lceil\frac{S}{2}\rceil$.
We then have that, conditioned on $\mathcal{F}_T$,
\begin{equation*}
  \min_{i\in\{0,1\}} \cp_{i,1}(f) \ge \frac{\min(|\{x\in T_0^0\ :\ f(x)=1\}|,|\{x\in T_0^1\ :\ f(x+e_1)=1\}|)}{|V^0|} = \frac{\min(X,Y)}{|V^0|},
\end{equation*}
where $X,Y$ are independent binomial random variables satisfying, by \eqref{eq:prob_of_ones},
$X\sim\text{Bin}\left(\lfloor\frac{S}{2}\rfloor, \frac{1}{2^{2d}+2}\right)$ and
$Y\sim\text{Bin}\left(\lceil\frac{S}{2}\rceil, \frac{1}{2}\right)$.
The fact that $S$ is measurable with respect to $\mathcal{F}_T$, together with \eqref{eq:S_prob_lower_bound}, now allows to
conclude that
\begin{equation*}
  \E\bigg(\min_{i\in\{0,1\}} \cp_{i,1}(f)\bigg)\ge \frac{c'}{d^2}2^{-2d},
\end{equation*}
for some $c'>0$. As the color $k=1$ is arbitrary, this concludes the
proof of the proposition.

\section{Remarks and Open Problems}\label{sec: Rem Op}
In this section we discuss a few open problems and make a remark.
\begin{enumerate}
  \item (Tori with odd side length) In this work we consider a uniformly sampled proper $3$-coloring of a high-dimensional discrete torus with even
  side length. Our main result is that for such a coloring, with high probability, one of
  the two bipartition classes is dominated by a single color. How
  will this result change if we take $n$, the side length of the torus, to be odd?
  since tori with odd side length are no longer bipartite, some change must occur.
  We expect that in this situation, we will find in a typical coloring
  three `pure phase' regions. Each of these regions will have a distinct dominant color coloring one of its bipartition classes
  while the two remaining colors equally dominate the other bipartition class. Every two regions will be separated by a single
  long odd interface (of size roughly $n^{d-1}$), and the vertices on each side of the interface will be colored by
  the dominant color of their region.
    \item (Positive temperature) In physical terminology, a uniformly chosen proper $3$-coloring
  is the zero-temperature case of the antiferromagnetic 3-state
  Potts model. The positive temperature version of this model is
  defined as follows. A $3$-coloring $f$, not necessarily proper, of the
  underlying graph is sampled with probability proportional to
  $\exp(-\beta H(f))$, where $\beta>0$ is a parameter proportional
  to the inverse temperature and $H(f)$ is the number of edges
  $(u,v)$ for which $f(u)=f(v)$. We expect that the analog of
  Theorem~\ref{thm: main} continues to hold when the temperature is small, but
  positive (that is, when $\beta$ is sufficiently large). Proving
  this is complicated by the fact that non-proper $3$-colorings are no longer related to height
  functions.
  \item (Larger number of colors) As explained in Section~\ref{sec: BG},
      it is expected that Theorem~\ref{thm: main} has a natural extension
      to proper colorings of the torus with more than 3 colors.
      Specifically, that for each $q$ there is some $d_0(q)$ such that if
      $d\ge d_0(q)$ then a typical proper $q$-coloring of $\TT$ has the
      property that the $q$ colors split into two sets of sizes $\lfloor
      q/2 \rfloor$ and $\lceil q/2\rceil$ with each bipartition class
      dominated by colors from one of the two sets. Proving this is wide
      open even for the case $q=4$. A result of Vigoda \cite{V00} implies
      that $d_0(q)\ge \frac{3}{11} q$. In \cite[Conjecture 5.3]{EG12} it is conjectured
      that $d_0(q) = q/2$, at least in the sense that certain ``long range influences'' exist if and only if $d\ge q/2$. However, any result showing that
      $d_0(q)<\infty$ will constitute a major advance.
\end{enumerate}

We end with the following remark. Our work extends
certain results from \cite{PHom}. The results in \cite{PHom} were
proven in greater generality than simply for the torus $\TT$. There,
also tori with non-equal side lengths were considered, of the form
$\mathbb{T}^1_{n_1}\times\mathbb{T}^1_{n_2}\times\dots\times
\mathbb{T}^1_{n_d}$. These include, in particular,
``two-dimensional'' tori of the form $\mathbb{T}_n^2\times
\mathbb{T}_2^{d}$ for $d$ a fixed large constant. In our work, for
simplicity, we considered only the case of the torus $\TT$. However,
it seems that our arguments can be adapted with no difficulty to the
more general tori for which results were obtained in \cite{PHom}.

\subsection{Acknowledgments}\label{sec: Ack}

We thank Eugenii Shustin for useful discussions on the relation
between our work and algebraic topology. We also thank Naomi
Feldheim, Wojciech Samotij and Yinon Spinka for many useful comments
which greatly improved the quality of the paper. We are also
grateful to the anonymous referees whose insightful comments
significantly helped to improve the presentation.

\end{document}